\newcommand*{\covar}[1]{\overline{#1}}
\newcommand*{\sep}{\mid\ }
\newcommand*{\negF}[1]{\overline{#1}}
\newcommand*{\prove}{\triangleright_{\hr}}
\newcommand*{\proveNC}{\triangleright_{\hr\backslash\{\text{CAN}\}}}
\newcommand*{\proveM}{\triangleright_{\hmr}}
\newcommand*{\proveMNC}{\triangleright_{\hmr\backslash\{\text{CAN}\}}}
\newcommand*{\semProve}{\mathcal{A}_{\textnormal{Riesz}}\vdash}
\newcommand*{\sem}[1]{\llparenthesis #1 \rrparenthesis}
\newcommand*{\app}[2]{#1;#2}
\newcommand\nodeC[1]{*+[o][F]{#1}}
\newcommand{\hr}{\textbf{HR}}
\newcommand{\hmr}{\textbf{HMR}}
\newcommand{\gasystem}{\textbf{GA}}
\begin{document}

\title[Proof Theory of Riesz Spaces and Modal Riesz Spaces]{Proof Theory of Riesz Spaces\\ and Modal Riesz Spaces}

\author[C.~Lucas]{Christophe Lucas}	%required
\address{CNRS \& ENS Lyon}	%required
\email{\{\texttt{christophe.lucas},\texttt{matteo.mio}\}\texttt{@ens-lyon.fr}}  %optional
\thanks{This work has been supported by the European Research Council (ERC) under the European Union’s Horizon 2020 programme (CoVeCe, grant agreement No 678157),  by the LABEX MILYON (ANR-10-LABX-0070) of Université de Lyon, within the program ``Investissements d'Avenir'' (ANR-11-IDEX- 0007) and by the French project ANR-20-CE48-0005 QuaReMe.}

\author[M.~Mio]{Matteo Mio}	%optional
% \address{CNRS \& ENS-lyon}	%optional
% \email{matteo.mio@ens-lyon.fr}  %optional

\begin{abstract}
  \noindent We design hypersequent calculus proof systems for the theories of Riesz spaces and modal Riesz spaces and prove the key theorems: soundness, completeness and cut-elimination. These are then used to obtain completely syntactic proofs of some interesting results concerning the two theories. Most notably, we prove a novel result: the theory of modal Riesz spaces is decidable. This work has applications in the field of logics of probabilistic programs since modal Riesz spaces provide the algebraic semantics of the Riesz modal logic underlying the probabilistic $\mu$-calculus.
\end{abstract}

\maketitle
\tableofcontents

\section{Introduction}

Riesz spaces, also known as vector lattices, are real vector spaces equipped with a lattice order ($\leq$) such that the vector space operations of addition and scalar multiplication are compatible with the order in the following sense:
\begin{align*}
&x\leq y \Longrightarrow x + z \leq y + z & \\ &x\leq y \Longrightarrow  rx \leq ry &\textnormal{ for every positive scalar $r\in\mathbb{R}_{\geq 0}$}.
\end{align*}

The simplest example of Riesz space is the linearly ordered set of real numbers $(\mathbb{R},\leq)$ itself. More generally, for a given set $X$, the space of all functions $\mathbb{R}^X$ with operations and order defined pointwise is a Riesz space. If $X$ carries some additional structure, such as  a topology or a $\sigma$-algebra,  then the spaces of continuous and measurable functions both constitute Riesz subspaces of $\mathbb{R}^X$. For this reason, the study of Riesz spaces originated at the intersection of functional analysis, algebra and measure theory and was  pioneered in the 1930's by F.~Riesz, G.~Birkhoff, L.~Kantorovich and H.~Freudenthal among others. Today, the study of Riesz spaces constitutes a well-established field of research. We refer to~\cite{Luxemburg,JVR1977} as standard references.

The definition of Riesz spaces merges the notions of lattice order and that of real vector spaces. The former is pervasive in logic and the latter is at the heart of probability theory (e.g., convex combinations, linearity of the expected value operator, \emph{etc.}) Dexter~Kozen  was the first to observe in a series of seminal works (see, e.g.,~\cite{Kozen1981,Kozen1983}) that, for the above  reasons, the theory of Riesz spaces provides a convenient mathematical setting for the study and design of \emph{probabilistic logics}. Probabilistic logics are formal languages conceived to express correctness properties of probabilistic transition systems (e.g., Markov chains, Markov decision processes, \emph{etc}.) representing the formal semantics of computer programs using probabilistic operations such as random bit generation. In a series of recent works~\cite{MIO2012b,MioSimpsonFICS2013,Mio2012c,MIO2012a,MIO11,MioSimpsonFI2017,miolics2017,mio18,MIO2014a,FMM2020}, following Kozen's program, the second author has introduced a simple probabilistic modal logic called \emph{Riesz Modal Logic}. Importantly, once extended with fixed-point operators in the style of the modal $\mu$-calculus~\cite{Kozen83}, this logic is sufficiently expressive to interpret other popular probabilistic logics for verification such as \emph{probabilistic CTL} (see, e.g., chapter 8 in~\cite{BaierKatoenBook} for an introduction to this logic). One key contribution from~\cite{miolics2017,FMM2020} is a duality theory which provides a bridge between the probabilisitic transition system semantics of the Riesz modal logic and its algebraic semantics given in terms of so-called \emph{modal Riesz spaces}.

A \emph{modal Riesz space} is a structure $(V, \leq,\Diamond)$ such that $(V,\leq)$ is a Riesz space and $\Diamond$ is a unary operation $\Diamond : V\rightarrow V$ satisfying certain axioms (see Definition~\ref{defi:modal_riesz_space} for details). Terms without variables in the signature of modal Riesz spaces are exactly terms of the Riesz modal logic of~\cite{miolics2017,FMM2020}. As a consequence of the duality theory, two terms are equivalent in the transition semantics if and only if they are provably equal in the equational theory of modal Riesz spaces. This is a complete axiomatisation result (see~\cite{miolics2017,FMM2020} for details). 

One drawback of equational axiomatisations, such as that of~\cite{miolics2017,FMM2020}, is that the underlying proof system of \emph{equational logic} is not well-suited for proof search. It is indeed often difficult to find proofs for even simple equalities. The source of this difficulty lies in the transitivity rules of equational logic:

$$
    \infer[\text{Trans}]{A = C}{A = B \ \ \ \ & \ \ \ \ B = C} 
$$
For proving the equality $A=C$ it is sometimes necessary to come up with an additional term $B$ and prove the two equalities $A=B$ and $B=C$. Since $B$ ranges over all possible terms, the proof search endeavour faces an infinite branching in possibilities. It is therefore desirable to design alternative proof systems that are better behaved from the point of view of proof search, in the sense that the choices available during the proof construction process are reduced to the bare minimum. 

The mathematical field of \emph{structural proof theory} (see~\cite{Buss98} for an overview), originated with the seminal work of Gentzen on his \emph{sequent calculus} proof system LK for classical propositional logic~\cite{Gentzen1934}, investigates such proof systems. The key technical result regarding the sequent calculus, called the \emph{cut-elimination theorem}, implies that when searching for a proof of a statement, only certain terms need to be considered: the so-called \emph{sub-formula property}. This simplifies significantly, in practice, the \emph{proof search} endeavour. 

The original system LK of Gentzen has been extensively investigated and generalised. For example, sequent calculi for several substructural logics, linear logic, many modal logics and fixed-point temporal logics have been designed. One variant of sequent calculus, called \emph{hyper-sequent calculus}, originally introduced by Avron in~\cite{avron1987} and independently by Pottinger in~\cite{Pottinger1983}, allows for the manipulation of non-empty lists of sequents (hence the \emph{hyper} adjective) rather than just sequents. The machinery of hyper-sequent calculi has been then developed in several works (see, e.g.,~\cite{DBLP:conf/tableaux/CiabattoniLR19, DBLP:journals/logcom/BaazCF03, DBLP:conf/lics/CiabattoniGT08, DBLP:conf/aiml/Ciabattoni18, MOGbook, DBLP:journals/logcom/Ciabattoni01, DBLP:conf/lics/Lahav13,DBLP:journals/logcom/BaazCF03}).

\paragraph{\textbf{First Contribution: Proof Theory of Riesz Spaces}}

The first contribution of this work is the design of a \emph{hypersequent calculus} proof system \hr\ for the theory of Riesz spaces, together with the proof of the cut-elimination theorem. From this we obtain new proofs, based on purely syntactic methods, of well-known results  such as the fact that the equational theory of Riesz spaces is decidable and that  the equational theory of Riesz spaces with real ($\mathbb{R}$) scalars is a conservative extension of the theory of Riesz spaces with rational ($\mathbb{Q}$) scalars. These results are presented in Section~\ref{modal_free_section}.

Our hypersequent calculus \hr\ is based on, and extends, the hypersequent calculus $\textbf{GA}$ for the theory of \emph{lattice ordered Abelian groups} of~\cite{MOG2005,MOGbook}. From a technical point of view, the difficulty in extending their work to Riesz spaces lies mostly in the design of appropriate derivation rules for dealing with real $(\mathbb{R})$ scalars. Our design choices have been driven by the two main goals: prove the cut-elimination theorem and preserve as much as possible the sub-formula property of the system. It is our belief that this type of rules might be of general interest in the field of proof theory (see, e.g.,~\cite{DBLP:conf/ijcai/KulackaPS13}) and can be potentially re-used for designing proof systems for other quantitative logics.

Beside being the first structural proof system for Riesz spaces, a mathematically natural and well studied class of structures, the hypersequent calculus proof system \hr\  is the basis for our second and most technically challenging contribution.

\paragraph{\textbf{Second Contribution: Proof Theory of Modal Riesz Spaces}}

The second and more technically challenging contribution of this work is the design of the \emph{hypersequent calculus} proof system \hmr, together with the proof of the cut-elimination theorem, for the theory of modal Riesz spaces or, equivalently (via the duality theory of~\cite{miolics2017,FMM2020}) for the Riesz modal logic. To the best of our knowledge, this is the first sound, complete and structural proof system for a probabilistic logic designed to express properties of probabilistic transition systems. From the cut-elimination theorem for \hmr\ we derive a new result: the equational theory of modal Riesz spaces is decidable. Our proof is based on purely syntactical methods and does not rely, as it is often the case in decidability results of modal logics, on model theoretic properties (e.g., the finite model property) or on techniques from automata theory. These results are presented in Section~\ref{modal_section}.

The hypersequent calculus \hmr\ is based on and extends the hypersequent calculus \hr\ with new rules dealing with the additional connectives ($\Diamond$ and 1) available in the signature of modal Riesz spaces. While this extension might superficially seem simple, it introduces significant complications in the proof of the cut-elimination theorem. The proof technique adopted in~\cite{MOG2005,MOGbook} for proving the cut-elimination theorem of $\textbf{GA}$ does not seem to be applicable (as discussed in Section~\ref{subsec:main_results}). We therefore follow a different approach based on a technical result (the \emph{M-elimination} theorem) which states that one of the rules (M) of the system  \hmr\  can be safely removed from the system without affecting completeness. In order to simplify as much as possible the exposition of our cut-elimination proof for \hmr, we prove the cut-elimination theorem for the system $\hr$ also using the technique based on the M-elimination theorem even though the cut-elimination theorem for \hr\ could also be obtained following the approach of~\cite{MOG2005,MOGbook}. This will serve as a preparatory work for the more involved proof of cut-elimination for \hmr.

\emph{Earlier Work.} The content of this section is based on an earlier conference paper of the two authors~\cite{DBLP:conf/fossacs/LucasM19} where a restricted version of the present system \hmr, without rules dealing with real ($\mathbb{R}$) scalars, has been introduced.

\paragraph{\textbf{Third Contribution: Formalisation with the {Coq} Proof Assistant}} All the definitions, statements and proofs presented in this work have been formalised by the first author using the \href{https://coq.inria.fr/}{{Coq} proof assistant} and are freely available \href{https://github.com/clucas26e4/riesz-logic}{online}~\cite{coq-formalisation}. Since all proofs are constructive, the formalisation provides algorithms for, e.g., performing the cut-elimination procedure or for deciding the equational theory of modal Riesz spaces\footnote{The output of our decision procedure is a formula, in the first order theory of the real-closed field $(\mathbb{R}, +, \times)$. This theory is decidable~\cite{tarski1951} and can be verified for satisfiability using external tools.}. The  {Coq} code has been implemented during the process of peer-review of this article and, as such, it has not been scrutinised by the anonymous reviewers. It should be therefore taken as an external addendum of the present article.

\paragraph{\textbf{Organisation of this work}}
This paper is structured in the following main sections:
\begin{description}[font=\normalfont\itshape]
\item[Section~\ref{sec:background} - Technical Background] in this section we give the basic definitions and results regarding Riesz spaces and modal Riesz spaces and fix some notational conventions.
\item[Section~\ref{modal_free_section} - Hypersequent Calculus for Riesz Spaces] this section is devoted to our hypersequent calculus \hr\ proof system for the theory of Riesz spaces. This section is structured in several subsections, each presenting in details a result regarding \hr.
\item[Section~\ref{modal_section} - Hypersequent Calculus for Modal Riesz Spaces] this section is devoted to our hypersequent calculus \hmr\ proof system for the theory of modal Riesz spaces. The structure of this section matches exactly that of Section~\ref{modal_free_section}. This should allow for an easier comparison of the two systems and their technical differences.
\item[Section~\ref{conclusion:sec} - Conclusions] some final remarks and directions for future work.

Finally, for convenience, we have included the rules of the proof system \gasystem\ from~\cite{MOG2005,MOGbook} in Appendix~\ref{ga:section}.
\end{description}

%%% Local Variables:
%%% mode: latex
%%% TeX-master: "main"
%%% End:

\section{Technical Background}
\label{sec:background}

This section provides the necessary definitions and basic results regarding Riesz spaces (the books~\cite{Luxemburg,JVR1977} are standard references) and modal Riesz spaces from~\cite{miolics2017,FMM2020}, which play a key role in the the duality theory of the Riesz modal logic.

\subsection{Riesz Spaces}\label{sec:background:riesz}
This section contains the basic definitions and results related to Riesz spaces. We refer to~\cite{Luxemburg,JVR1977} for a comprehensive reference to the subject.

A Riesz space is an algebraic structure $(R,0,+,(r)_{r\in\mathbb{R}},\sqcup,\sqcap)$ such that 
$(R,0,+,(r)_{r\in\mathbb{R}})$ is a vector space over the reals, $(R,\sqcup,\sqcap)$ is a lattice and the induced order $(a\leq b \Leftrightarrow a\sqcap b = a)$ is compatible with addition and with the scalar multiplication, in the sense that: (i)  for all $a,b,c\in R$, if $a\leq b$ then $a+c\leq b+c$, and (ii) if $a\geq b$ and $r\in \mathbb{R}_{\geq 0}$ is a non-negative real, then $r a\geq rb$. Formally we have:

\begin{defi}[Riesz Space]\label{techback:definition:riesz_space}
The \emph{language} $\mathcal{L}_R$ of Riesz spaces is given by the (uncountable) signature $\{ 0,+, (r)_{r\in\mathbb{R}}, \sqcup, \sqcap\}$ where $0$ is a constant, $+$, $\sqcup$ and $\sqcap$ are binary functions and $r$ is a unary function, for all $r\in\mathbb{R}$. A \emph{Riesz space} is a $\mathcal{L}_R$-algebra satisfying the set $\mathcal{A}_{\textnormal{Riesz}}$ of equational axioms of Figure~\ref{axioms:of:riesz:spaces}. We use the standard abbreviations of $-x$ for $(-1)x$ and $x\leq y$ for $x\sqcap y = x$.
\end{defi}

\begin{figure}[ht]
\begin{center}
 \begin{enumerate}
\item Axioms of real vector spaces:
\begin{itemize}
\item Additive group: $x + (y + z) = (x + y) + z$, $x + y = y + x$, $x + 0 = x$, $x - x= 0$,
\item Axioms of scalar multiplication: $r_1(r_2 x) = (r_1\cdot r_2) x$, $1x = x$, $r(x+y) = (rx) + (ry)$, $(r_1 + r_2)x = (r_1 x) + (r_2 x)$,
\end{itemize}
\item Lattice axioms:    (associativity) $x \sqcup (y \sqcup z) = (x \sqcup y) \sqcup z$,  $x \sqcap (y \sqcap z) = (x \sqcap y) \sqcap z$, (commutativity) $z \sqcup y = y \sqcup z$, $z \sqcap y = y \sqcap z$,
(absorption) $z \sqcup (z \sqcap y) = z$, $z \sqcap (z \sqcup y) = z$. 
\item Compatibility axioms:  
\begin{itemize}
\item $(x \sqcap y) + z \leq  y + z $,
\item $r (x \sqcap y) \leq  ry$, for all scalars $r\geq 0$. 
\end{itemize}
\end{enumerate}
\end{center}
\caption{Set $\mathcal{A}_{\textnormal{Riesz}}$ of equational axioms of Riesz spaces.}
\label{axioms:of:riesz:spaces}
\end{figure}

\begin{rem}\label{remark_variety}
  Note how the compatibility axioms have been equivalently formalised in Figure~\ref{axioms:of:riesz:spaces} as inequalities and not as implications by using $(x\sqcap y)$ and $y$ as two general terms automatically satisfying the hypothesis $(x\sqcap y)\leq y$. Moreover the inequalities can be rewritten as equations using the lattice operations ($x\leq y \Leftrightarrow x\sqcap y = x$) as follows:
  \begin{itemize}
  \item $(x \sqcap y) + z \leq  y + z $ can be rewritten as $((x \sqcap y) + z) \sqcap (y + z) = (x \sqcap y) + z$ and
  \item $r (x \sqcap y) \leq  ry$ can be rewritten as $r(x\sqcap y) \sqcap ry = r(x\sqcap y)$.
  \end{itemize}
  Hence, since Riesz space are axiomatised by a set of equations, the family of Riesz spaces is a variety in the sense of universal algebra.
\end{rem}

\begin{exa}\label{example:techback1}
  The real numbers $\mathbb{R}$ together with their standard linear order $(\leq)$, expressed by taking $r_1 \sqcap r_2 = \min(r_1, r_2)$ and  $r_1 \sqcup r_2 = \max(r_1, r_2)$, is a Riesz space. This is a fundamental example also due to the following fact (see, e.g.,~\cite{LvA2007} for a proof): the real numbers $\mathbb{R}$ is complete for the quasiequational theory of Riesz spaces. In particuliar, this means that for any two terms $A,B$, we have that the equality $A=B$ holds in all Riesz spaces if and only if $A=B$ holds in the Riesz space $(\mathbb{R},\leq)$. This provides a practical method for establishing if an equality is derivable from the axioms of Riesz spaces and since the first order theory of the real numbers is decidable~\cite{tarski1951}, so is the equational theory of Riesz spaces. For example, $-( \max(r_1,r_2) ) = \min( -r_1, -r_2)$ holds universally in   $\mathbb{R}$ and therefore $ -(x \sqcup y) = (-x) \sqcap (-y)$ holds in all Riesz spaces.
\end{exa}

\begin{exa}\label{example:techback2}
For a given set $X$, the set $\mathbb{R}^X$ of functions $f: X\rightarrow \mathbb{R}$ is a Riesz space when all operations are defined pointwise: $(rf)(x) = r( f(x)) $, $ (f+g)(x) = f(x) + g(x)$, $ (f\sqcup g)(x) = f(x) \sqcup g(x)$, $ (f\sqcap g)(x) = f(x) \sqcap g(x)$. Thus, for instance, the space of $n$-dimensional vectors $\mathbb{R}^n$ is a Riesz space whose lattice order is not linear.
\end{exa}

\begin{conv}
We use the capital letters $A,B, C$ to range over terms built from a set of variables ranged over by $x, y, z$. We write $A[B/x]$ for the term, defined as expected, obtained by substituting all occurrences of the variable $x$ in the term $A$ with the term $B$.
\end{conv}

As observed in Remark~\ref{remark_variety}, the family of Riesz spaces is a variety of algebras. This means, by Birkhoff  completeness theorem,  that two terms $A$ and $B$ are equivalent in all Riesz spaces if and only if the identity $A=B$ can be derived using the familiar deductive rules of equational logic, written as  $\mathcal{A}_{\textnormal{Riesz}}\vdash A=B$.
\begin{defi}[Deductive Rules of Equational Logic]\label{eq_logic_rules}
Rules for deriving identities between terms from a set $\mathcal{A}$ of equational axioms:  

\begin{center}
   \hfill $\infer[\text{Ax}]
        {\mathcal{A}\vdash A = B}
        { (A=B)\in\mathcal{A}}$ \hfill
   $\infer[\text{refl}]
        {\mathcal{A}\vdash A = A}
        {}$ \hfill  
    $\infer[\text{sym}]
        {\mathcal{A}\vdash A = B}
        {\mathcal{A}\vdash B = A}$ \hfill 
         $\infer[\text{ctxt}]
        {\mathcal{A}\vdash C[A] = C[B]}
        {\mathcal{A}\vdash A = B}$ \hfill
\end{center}
\begin{center}
    \hfill $\infer[\text{trans}]
        {\mathcal{A}\vdash A = C}
        {\mathcal{A}\vdash A = B & \mathcal{A} \vdash B = C}$ \hfill 
  $\infer[\text{subst}]
  {\mathcal{A}\vdash A[C/x] = B[C/x]}
  {\mathcal{A}\vdash A  = B}$ \hfill
\end{center}
\noindent
where $A,B,C$ are terms of the algebraic signature under consideration built from a countable collection of variables and $C[\cdot ]$ is a context. \end{defi}

In what follows we denote with $\mathcal{A}_{\textnormal{Riesz}}\vdash A \leq B$ the judgment $\mathcal{A}_{\textnormal{Riesz}}\vdash A= A \sqcap B$.

The following elementary facts (see, e.g.,~\cite[\S 2.12]{Luxemburg} for proofs) imply that, in the theory of Riesz spaces, a proof system for deriving equalities can be equivalently seen as a proof system for deriving equalities with $0$ or inequalities. 
\begin{prop}The following assertions hold:
\begin{itemize}
\item $\mathcal{A}_{\textnormal{Riesz}} \vdash A = B$  $\ \Leftrightarrow\ $ $\mathcal{A}_{\textnormal{Riesz}} \vdash A - B = 0$,
\item $\mathcal{A}_{\textnormal{Riesz}}\vdash A = B$ $\ \Leftrightarrow\ $ $\big(\mathcal{A}_{\textnormal{Riesz}}  \vdash A \leq B \textnormal{ and }  \mathcal{A}_{\textnormal{Riesz}} \vdash  B \leq A\big)$.
\end{itemize}
\end{prop}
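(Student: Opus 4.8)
The plan is to establish both equivalences by routine manipulations inside equational logic (Definition~\ref{eq_logic_rules}), using only the additive group axioms and the lattice axioms; neither the scalar--multiplication axioms nor the compatibility axioms play any role. I would begin by deriving two auxiliary identities used repeatedly below. The first is the cancellation identity $\mathcal{A}_{\textnormal{Riesz}}\vdash (C-D)+D=C$, obtained from associativity, the group axiom $x-x=0$ (instantiated via \text{subst}), commutativity, and $x+0=x$. The second is idempotency of meet, $\mathcal{A}_{\textnormal{Riesz}}\vdash x\sqcap x=x$, obtained from the two absorption laws through the chain $x\sqcap x = x\sqcap(x\sqcup(x\sqcap y)) = x$ with $y$ a fresh variable, where the first equality is \text{ctxt} applied to the absorption law $x\sqcup(x\sqcap y)=x$ and the second is the absorption law $z\sqcap(z\sqcup y')=z$ instantiated with $z:=x$ and $y':=x\sqcap y$.

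For the first bullet, the left--to--right implication is obtained by applying \text{ctxt} with context $[\cdot]-B$ to $\mathcal{A}_{\textnormal{Riesz}}\vdash A=B$, yielding $A-B=B-B$, and then \text{trans} with the instance $B-B=0$ of $x-x=0$. Conversely, from $\mathcal{A}_{\textnormal{Riesz}}\vdash A-B=0$, applying \text{ctxt} with context $[\cdot]+B$ gives $(A-B)+B = 0+B$; by the cancellation identity the left side equals $A$, while $0+B=B$ by commutativity and $x+0=x$, so \text{sym} and \text{trans} assemble these into $A=B$.

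For the second bullet, recall that $\mathcal{A}_{\textnormal{Riesz}}\vdash X\leq Y$ abbreviates $\mathcal{A}_{\textnormal{Riesz}}\vdash X=X\sqcap Y$. The right--to--left implication is then the chain $A = A\sqcap B = B\sqcap A = B$, using the hypotheses $A=A\sqcap B$ and $B=B\sqcap A$ (the latter via \text{sym}), commutativity of $\sqcap$, and \text{trans}. For the left--to--right implication, from $\mathcal{A}_{\textnormal{Riesz}}\vdash A=B$ one obtains $A\sqcap B = A\sqcap A = A$ by \text{ctxt} (with context $A\sqcap[\cdot]$) followed by idempotency, hence $A = A\sqcap B$ after \text{sym}, i.e. $A\leq B$; the derivation of $B\leq A$ is symmetric.

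I expect no genuine obstacle here: the content is entirely the careful bookkeeping of applications of \text{refl}, \text{sym}, \text{trans}, \text{ctxt}, and \text{subst}, and the only mildly delicate points are the two auxiliary derivations of $(C-D)+D=C$ and $x\sqcap x=x$ from the axioms of Figure~\ref{axioms:of:riesz:spaces}.
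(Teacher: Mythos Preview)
Your argument is correct. The paper does not actually supply its own proof of this proposition; it simply records the two facts as ``elementary'' and cites \cite[\S 2.12]{Luxemburg}. Your direct derivation in equational logic is therefore more explicit than anything in the paper, and is entirely in keeping with the paper's syntactic spirit. One very minor remark: in the idempotency chain you need a \text{sym} before the \text{ctxt} step (since the absorption axiom is stated as $x\sqcup(x\sqcap y)=x$, not the reverse), but you clearly have this in mind when flagging the auxiliary derivations as ``mildly delicate''.
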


\begin{conv}
From now on, in the rest of this paper, it will be convenient to take the derived negation operation $(-A) = (-1)A$ as part of the signature and restrict all scalars $r$ to be strictly positive $(r>0$). The scalar $0\in\mathbb{R}$ can be removed by rewriting $(0)A$ as $0$. 
\end{conv}

\begin{defi}\label{nnf:definition}
A term $A$ is in \emph{negation normal form} (NNF) if the operator $(-)$ is only applied to variables. 
\end{defi}
For example, the term $(-x) \sqcap (-y)$ is in NNF, while the term $-(x\sqcup y)$ is not. 

\begin{lem}
  \label{lem:into_nnf}
Every term $A$ can be rewritten to an equivalent term in NNF.
\end{lem}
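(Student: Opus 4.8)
The plan is to prove Lemma~\ref{lem:into_nnf} by structural induction on the term $A$, pushing applications of the negation operation $(-)$ inwards until they reach the variables. Recall that by our conventions the signature now contains $0$, $+$, the unary negation $(-)$, strictly positive scalars $(r)_{r>0}$, $\sqcup$ and $\sqcap$; a term is in NNF when $(-)$ occurs only directly in front of variables.

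First I would record the equivalences that let negation commute with (or be absorbed by) each connective. Using the fact from Example~\ref{example:techback1} that $\mathbb{R}$ is complete for the (quasi)equational theory of Riesz spaces — so that any identity valid in $(\mathbb{R},\leq)$ is derivable from $\mathcal{A}_{\textnormal{Riesz}}$ — one checks the following provable equalities: $-0 = 0$; $-(-A) = A$; $-(A+B) = (-A)+(-B)$; $-(rA) = r(-A)$ for every scalar $r>0$; $-(A\sqcup B) = (-A)\sqcap(-B)$; and $-(A\sqcap B) = (-A)\sqcup(-B)$. Each of these holds pointwise in $\mathbb{R}$ (the last two being exactly $-\max(a,b)=\min(-a,-b)$ and $-\min(a,b)=\max(-a,-b)$), hence is derivable. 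I would also use the congruence rule $\text{ctxt}$ of equational logic so that rewriting can be performed inside arbitrary contexts.

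Then I would set up the induction. For the base cases, $0$ and each variable $x$ are already in NNF, and so is $-x$. For the inductive step on a term whose outermost operator is $+$, $r(\cdot)$, $\sqcup$ or $\sqcap$, apply the induction hypothesis to the immediate subterms to obtain equivalent NNF terms and recombine with the same operator — the result is still in NNF since we introduced no new negation. The only delicate case is a term of the form $-B$: here I first apply the induction hypothesis to $B$ to get an NNF term $B'$ with $\mathcal{A}_{\textnormal{Riesz}}\vdash B = B'$, hence $\mathcal{A}_{\textnormal{Riesz}}\vdash -B = -B'$ by $\text{ctxt}$, and then perform a secondary induction on the structure of $B'$ to push the leading $(-)$ through $B'$ using the equalities above: if $B'$ is a variable $x$ we are done ($-x$ is in NNF); if $B'=0$ rewrite to $0$; if $B'=-x$ rewrite $-(-x)$ to $x$; if $B'=C+D$, $rC$, $C\sqcup D$ or $C\sqcap D$, rewrite and recurse on the strictly smaller negated subterms.

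The main obstacle is ensuring termination of this negation-pushing process, i.e.\ that the secondary recursion is well-founded. Since $B'$ is already in NNF, each rewrite step either eliminates the negation entirely (at a variable, at $0$, or via $-(-x)\to x$) or replaces $-B'$ by a term whose negated subterms are proper subterms of $B'$; so a straightforward induction on $|B'|$ (the number of operator symbols in $B'$) closes this case. Everything else is routine bookkeeping with the equational-logic rules of Definition~\ref{eq_logic_rules}. The argument is fully constructive and yields an explicit rewriting procedure computing the NNF of any term.
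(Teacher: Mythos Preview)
Your proposal is correct and takes essentially the same approach as the paper: pushing negation inward via the rewritings $-(-A)=A$, $-(rA)=r(-A)$, $-(A+B)=(-A)+(-B)$, $-(A\sqcup B)=(-A)\sqcap(-B)$, $-(A\sqcap B)=(-A)\sqcup(-B)$. The paper's proof is a one-line sketch listing exactly these identities (deferring the lattice cases to Lemma~\ref{lem:equalities}[7]), whereas you spell out the inductive structure and termination argument more carefully; but the underlying idea is identical.
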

\begin{proof}
Negation can be pushed towards the variables by the following rewritings: $-(-A) = A$, $-( r A) = r(-A)$, $-(A+B) = (-A) + (-B)$,  $-(A\sqcup B) = (-A) \sqcap (-B)$ and $-(A\sqcap B) = (-A) \sqcup (-B)$ (see Lemma~\ref{lem:equalities}[7] below).
\end{proof}

Negation can be defined on terms in NNF as follows.

\begin{defi}
Given a term $A$ in NNF, the term $\overline{A}$ is defined as follows: $\overline{x} = -x$, $\overline{-x} = x$, $\overline{rA } = r\overline{A}$, $\overline{A+ B} = \overline {A}+ \overline{B}$, $\overline{A\sqcup B} = \overline {A}\sqcap \overline{B}$, $\overline{A\sqcap B} = \overline {A}\sqcup \overline{B}$. 
\end{defi}

The following are basic facts regarding negation of NNF terms.
\begin{prop}
For any term $A$ in NNF, the term $\overline{A}$ is also in NNF and it holds that $\mathcal{A}_{\textnormal{Riesz}}\vdash \overline{A} = -A$.
\end{prop}
\begin{proof}
  We prove the result by straightforward induction on $A$. See Lemma~\ref{lem:equalities}[7] below for the $\sqcup$ and $\sqcap$ cases.
\end{proof}

\begin{prop}
For any terms $A, B$ in NNF, it holds that $\negF{A}[B/x] = \negF{A[B/x]}$.
\end{prop}

\subsubsection{Technical lemmas regarding Riesz spaces}\label{some_small_lemmas_section}
We now list some useful facts that will be used throughout the paper. 
The following are useful derived operators frequently used in the theory of Riesz spaces:
\begin{center}
\begin{tabular}{| l |  l  | l |}
\hline
Symbol & Terminology & Definition\\
\hline
$A^+$ & The positive part & $A\sqcup 0$ \\
$A^-$ & The negative part & $(-A)\sqcup 0$ \\
$|A|$ & The absolute value & $A^+ + A^-$\\
\hline
\end{tabular}
\end{center}

\begin{lem}
  \label{lem:equalities}
  The following equations hold:
  \begin{itemize}
  \item (1) For all $A$ and $r>0$, $r(A^-) = (rA)^-$.
  \item (2) For all $A,B$, $A + B \leq 2(A \sqcup B)$
  \item (3) For all $A,B$, if $A \leq B$ then $B^- \leq A^-$.
  \item (4) For all $A,B$, $(A+B)^- \leq A^- + B^-$.
  \item (5) For all $r>0$, $0 \leq A$ if and only if $0 \leq rA$.
  \item (6) For all $A$,  $A = 0$ if and only if $-A = 0 $.
  \item (7) For all $A, B$, $-(A\sqcup B) = (-A) \sqcap (-B)$ and $-(A\sqcap B) = (-A) \sqcup (-B)$.
  \item (8) For all $A,B,C$, $A \sqcup (B \sqcap C) = (A \sqcup B) \sqcap (A \sqcup C)$ and $A \sqcap (B \sqcup C) =  (A \sqcap B) \sqcup (A \sqcap C)$.
  \item (9) For all $A,B,C$, $A + (B \sqcup C) = (A + B) \sqcup (A + C)$ and $A + (B \sqcap C) = (A + B) \sqcap (A + C)$. 
  \item (10) For all $A$, $0 \leq A \sqcup (-A)$.
  \end{itemize}
\end{lem}

Most notably, observe that Riesz spaces are distributive lattices (Lemma~\ref{lem:equalities}[8]), that sum distributes over lattice operations (Lemma~\ref{lem:equalities}[9]) and that the least upper bound of any element with its negation is always positive (Lemma~\ref{lem:equalities}[10]).

\begin{proof}
As mentioned in Example~\ref{example:techback1}, the Riesz space $\mathbb{R}$ is complete for the quasiequational theory of Riesz spaces. This means that a universally quantified Horn clause $\bigwedge_{i\in I} \mathcal{A}_{\textnormal{Riesz}}\vdash A_i=B_i \Rightarrow \mathcal{A}_{\textnormal{Riesz}}\vdash A=B$ holds in all Riesz spaces if and only if it holds in the Riesz space $(\mathbb{R},\leq)$. It is then straightforward to check the validity of all equations and equational implications in $\mathbb{R}$.
\end{proof}

\begin{lem}
  \label{lem:condMaxPos}
  For all $A,B$, $A \sqcup B \geq 0$ if and only if $A^- \sqcap B^- = 0$.
\end{lem}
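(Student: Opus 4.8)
The plan is to derive the biconditional from the single identity $(A\sqcup B)^-=A^-\sqcap B^-$. Granting this identity, recall first that for any term $X$ one has $X\geq 0$ if and only if $X^-=0$: unfolding, $X^-=(-X)\sqcup 0$, the equality $(-X)\sqcup 0=0$ is (by the lattice characterisation of the order, $a\sqcup b=b\Leftrightarrow a\leq b$) the same as $-X\leq 0$, and by order--reversal of negation in the group this is the same as $0\leq X$. Applying this with $X=A\sqcup B$ turns the left--hand side ``$A\sqcup B\geq 0$'' into ``$(A\sqcup B)^-=0$'', and the identity rewrites the latter as ``$A^-\sqcap B^-=0$'', which is exactly the right--hand side of the claim.

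It then remains to prove $(A\sqcup B)^-=A^-\sqcap B^-$. I would compute $(A\sqcup B)^- = \bigl(-(A\sqcup B)\bigr)\sqcup 0 = \bigl((-A)\sqcap(-B)\bigr)\sqcup 0$, using $-(A\sqcup B)=(-A)\sqcap(-B)$ from Lemma~\ref{lem:equalities}[7], and then distribute the join over the meet to obtain $\bigl((-A)\sqcup 0\bigr)\sqcap\bigl((-B)\sqcup 0\bigr)=A^-\sqcap B^-$. The only ingredient beyond Lemma~\ref{lem:equalities} is the distributivity of the lattice reduct of a Riesz space; this is a standard fact, and in any case it is an equation that holds in $(\mathbb{R},\leq)$, hence, by completeness of $\mathbb{R}$ for the quasiequational theory of Riesz spaces, in every Riesz space.

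Alternatively, and entirely in the style of Lemma~\ref{lem:equalities}, one observes that the statement is the conjunction of the two universally quantified Horn clauses ``$A\sqcup B\geq 0\ \Rightarrow\ A^-\sqcap B^-=0$'' and its converse, each an implication between equations; so by completeness of $(\mathbb{R},\leq)$ it suffices to check both directions in $\mathbb{R}$. There $A^-\sqcap B^-=\min(\max(-A,0),\max(-B,0))$, and since both arguments of the $\min$ are nonnegative this is $0$ exactly when one of them is $0$, i.e. exactly when $A\geq 0$ or $B\geq 0$, i.e. exactly when $\max(A,B)\geq 0$. I do not expect a real obstacle here: the first route uses only the well--known distributivity of the lattice, which the second route avoids altogether.
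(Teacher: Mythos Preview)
Your proof is correct and takes essentially the same approach as the paper: both establish the identity $(A\sqcup B)^- = A^-\sqcap B^-$ (the paper phrases it equivalently as $0\sqcap(A\sqcup B)=-(A^-\sqcap B^-)$) using de~Morgan (Lemma~\ref{lem:equalities}[7]) together with lattice distributivity, and then translate the condition $A\sqcup B\geq 0$ into an equality via the lattice characterisation of order. Your alternative via completeness of $\mathbb{R}$ is also fine and mirrors the style used for Lemma~\ref{lem:equalities}.
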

\begin{proof} For all $A,B$ we have:
  \[
    \begin{array}{lll}
      0 \sqcap (A \sqcup B) & = & (A \sqcap 0) \sqcup (B \sqcap 0) \\
      & = & - (((-A) \sqcup (-0)) \sqcap ((-B) \sqcup (-0))) \\
      & = & - (A^- \sqcap B^-)
    \end{array}
  \]
Hence $0 \sqcap (A \sqcup B) = 0$ if and only if $-(A^- \sqcap B^-) = 0$ if and only (by Lemma~\ref{lem:equalities}[6]) $(A^- \sqcap B^-) = 0$.  The proof is complete recalling that $0\leq A\sqcup B$ means, by definition, that  $ 0 = 0\sqcap (A\sqcup B) $.\end{proof}

\subsection{Modal Riesz Spaces}

This section contains the basic definitions and results related to modal Riesz spaces, as introduced in~\cite{miolics2017,FMM2020}.

The language of modal Riesz spaces extends that of Riesz spaces with two symbols: a constant $1$ and a unary operator $\Diamond$.

\begin{defi}[Modal Riesz Space]
  \label{defi:modal_riesz_space}
The \emph{language} $\mathcal{L}^{\Diamond}_R$ of modal Riesz spaces is $\mathcal{L}_R \cup \{ 1, \Diamond\}$ where $\mathcal{L}_R$ is the language of Riesz spaces as specified in Definition~\ref{techback:definition:riesz_space}. A modal Riesz space is a $\mathcal{L}^{\Diamond}_R$-algebra satisfying the set $\mathcal{A}^{\Diamond}_{\textnormal{Riesz}}$ of axioms of Figure~\ref{axioms:of:modal_riesz:spaces}.
\end{defi}

\begin{figure}[ht]
\begin{center}
\begin{tabular}{l l l}
Axioms of Riesz spaces  & &  see Figure~\ref{axioms:of:riesz:spaces}\\
\hspace{2cm} $+$ 
\\
Positivity of $1$:  &$ \ $  & $0\leq 1$\\
Linearity of $\Diamond$: &$ \ $  &  $\Diamond( r_1 A + r_2 B) = r_1\Diamond(A) + r_2\Diamond(B)$\\
Positivity of $\Diamond$: &$ \ $  &   $\Diamond(0\sqcup A) \geq 0$ \\
 $1$-decreasing property of $\Diamond$: &$ \ $  &   $\Diamond(1) \leq 1$\\
\end{tabular}
\end{center}
\caption{Set $\mathcal{A}^{\Diamond}_{\textnormal{Riesz}}$ of equational axioms of modal Riesz spaces.}
\label{axioms:of:modal_riesz:spaces}
\end{figure}

\begin{exa}
Every Riesz space $R$ can be made into a modal Riesz space by interpreting $1$ with any positive element and by interpreting $\Diamond$ as the identity function ($\Diamond(x) = x$) or the constant $0$ function $\Diamond(x) = 0$.
\end{exa}
\begin{exa}
The Riesz space $(\mathbb{R},\leq)$ of linearly ordered real numbers becomes a modal Riesz space by interpreting $1$ with the number $1$, and $\Diamond$ by any linear (due to the linearity axiom) function $x \mapsto r x$ for a scalar $r\in\mathbb{R}$ such that $r\geq 0$ (due to the positivity axiom) and $r\leq 1$ (due to the $1$-decreasing axiom).  
\end{exa}

\begin{exa}
Generalising the previous example, the Riesz space $\mathbb{R}^n$ (with operations defined pointwise, see Example~\ref{example:techback2}) becomes a modal Riesz space by interpreting $1$ with the constant $1$ vector and $\Diamond$ by a linear  (due to the linearity axiom) map $M:\mathbb{R}^n \rightarrow \mathbb{R}^n$, thus representable as a square matrix,
\begin{center}
$1 = \begin{pmatrix}
1 \\ 1 \\ \vdots \\ 1
\end{pmatrix} \ \ \ \ \ \ \Diamond = 
\begin{pmatrix}
r_{1,1} & r_{1,2} & \cdots & r_{1,n} \\
r_{2,1} & r_{2,2} & \cdots & r_{2,n} \\
\vdots  & \vdots  & \ddots & \vdots  \\
r_{n,1} & r_{n,2} & \cdots & r_{n,n} 
\end{pmatrix}$
\end{center}
such that all entries $r_{i,j}$ are non-strictly positive (due to the positivity axiom) and where all the rows sum up to a value $\leq1$, i.e., for all $1\leq i \leq n$ it holds that $ \sum^k_{j=1} r_{i,j} \leq 1$ (due to the $1$-decreasing axiom). Such matrices are known as sub-stochastic matrices. Each sub-stochastic matrix $M$ can be regarded as a probabilistic transition system (also referred to as Markov chain) whose set $S$ of states is $S=\{s_1,\dots, s_n\}$ and whose transition function $\tau_M:S\rightarrow\mathcal{D}^{\leq 1}(S)$, defined as:
$$
\tau_M (s_i) (s_j) = r_{i,j} 
$$
 assigns to each state $s_i\in S$ a sub-probability\footnote{A sub-probability distribution on a set $S$ is a function $d:S \rightarrow [0,1]$ such that $\sum_s d(s) \leq 1$.} distribution $\tau_M(s_i) \in \mathcal{D}^{\leq 1}(S)$ specifying the probability of reaching $s_j$ from $s_i$, for any $s_i,s_j\in S$.
 
 For a concrete example, consider the modal Riesz space $\mathbb{R}^2$ with $\Diamond$ interpreted by the matrix $M$ defined as:
\begin{center}
$M =
\begin{pmatrix}
\frac{1}{3} & \frac{1}{2}  \\[0.15em]
\frac{1}{3} & 0  \\
\end{pmatrix}$.
\end{center}
This modal Riesz space can be identified with the Markov chain having state space $S=\{s_1,s_2\}$ and transition function $\tau_M$ defined by: $\tau_M(s_1) = (s_1\mapsto \frac{1}{3}, s_2\mapsto \frac{1}{2})$ and $\tau_M(s_2) = (s_1\mapsto \frac{1}{3}, s_2\mapsto 0)$:
\begin{center}
$$
\SelectTips{cm}{}
	\xymatrix @=20pt {
		\nodeC{s_1} \ar@{->}[rr]^{\frac{1}{2}}\ar@{->}@(ul,ur)^{\frac{1}{3}}   & &  \nodeC{s_2} \ar@{->}@/^10pt/[ll]^{\frac{1}{3}}   	}
$$
\end{center}
From the state $s_1$ the computation progresses to $s_1$ itself with probability $\frac{1}{3}$, to $s_2$ with probability $\frac{1}{2}$ and it halts with probability $\frac{1}{6}$  (i.e., with the remaining probability $1-(\frac{1}{2}+\frac{1}{3})$).  From the state $s_2$ the computation progresses to $s_1$ with probability $\frac{1}{3}$ and it  halts with probability $\frac{2}{3}$.

\end{exa}

\begin{exa}[Transition Semantics]\label{logic:example1a}
Carrying on the previous example, given any Markov chain $(S,\tau_M)$ (i.e., equivalently, a modal Riesz space on $\mathbb{R}^n$ with $\Diamond$ interpreted by a sub-stochastic matrix $M$), each closed (i.e., without variables) modal Riesz term $A$ is interpreted as a function $\llbracket A\rrbracket=S\rightarrow\mathbb{R}$ (i.e., a vector in $\mathbb{R}^n$). This interpretation is inductively defined as:

\begin{center}
$\llbracket 0\rrbracket (s_i) = 0 \ \ \ \ \ \llbracket 1\rrbracket (s_i) = 1$

$\ $\\

$ \llbracket r A\rrbracket(s_i) = r \cdot\big(\llbracket A\rrbracket(s_i)\big) \ \ \ \ \ \ \llbracket A+B\rrbracket(s_i) = \llbracket A\rrbracket(s_i) + \llbracket B\rrbracket(s_i)$

$\ $\\

$\llbracket A\sqcup B\rrbracket(s_i) = \max\{ \llbracket A\rrbracket(s_i) ,  \llbracket B\rrbracket(s_i)\}  \ \ \ \ \ \ \llbracket A\sqcap B\rrbracket(s_i) = \min\{ \llbracket A\rrbracket(s_i) ,  \llbracket B\rrbracket(s_i)\}$

$\ $\\

$\llbracket \Diamond A\rrbracket(s_i) =\displaystyle \sum^n_{j=1} \big( \tau_M(s_i)(s_j)\cdot \llbracket A\rrbracket(s_j) \big)$
\end{center}
for all $s_i\in S$. The semantics $\llbracket A\rrbracket$ of a closed term $A$ can be understood as a (real-valued) quantitative property of states of Markov chains and $\llbracket \Diamond A\rrbracket$ denotes the expected value of $\llbracket A\rrbracket$ after a transition step.  This fact provides the motivation for the study of modal Riesz spaces in~\cite{miolics2017,FMM2020} in the context of logics for probabilistic programs.

Consider for example the Markov chain of the previous example: 
$$
\SelectTips{cm}{}
	\xymatrix @=20pt {
		\nodeC{s_1} \ar@{->}[rr]^{\frac{1}{2}}\ar@{->}@(ul,ur)^{\frac{1}{3}}   & &  \nodeC{s_2} \ar@{->}@/^10pt/[ll]^{\frac{1}{3}}   	}
$$
and the modal Riesz terms $\Diamond 1$ and $\Diamond(\Diamond 1)$. They are interpreted as the two functions on $S=\{s_1,s_2\}$ illustrated as vectors below:
\begin{center}
$ \llbracket \Diamond 1\rrbracket = \begin{pmatrix}
  \frac{5}{6} \\[0.15em] \frac{1}{3}
\end{pmatrix}  \ \ \ \  \textnormal{ and } \ \ \ \  \llbracket \Diamond(\Diamond 1)\rrbracket = \begin{pmatrix}
  \frac{4}{9} \\[0.15em] \frac{5}{18}
\end{pmatrix}$
\end{center}
The term $\Diamond 1$ assigns to each state $s_i\in\{s_1,s_2\}$ the probability of making a computational step from $s_i$ to any other state (and thus not halting). Similarly, the term $\Diamond\Diamond 1$ is the vector assigning to each state $s_i$ the probability of making two consecutive computational steps from $s_i$. More complex terms can express sophisticated real-valued properties of Markov chains~\cite{miolics2017,FMM2020}. Importantly, in~\cite{FMM2020} the transition semantics\footnote{In fact, a generalised topological semantics is required for completeness, where Markov chains have  state spaces endowed with a compact Hausdorff topology. The examples above have finite state spaces, and are thus trivially compact Hausdorff.} is proved to be sound and complete with respect to the equational theory of modal Riesz spaces: two modal Riesz terms $A$ and $B$ are provably equal from the axioms of modal Riesz spaces if and only if $\llbracket A\rrbracket=\llbracket B\rrbracket$ when interpreted in all possible Markov chains. We refer to~\cite{FMM2020} for further details and comparisons with other logics having a real-valued semantics such as~\cite{DBLP:conf/ijcai/SchroderP11,StonePrakash,MioSimpsonFI2017,Kozen1983,MM07} among others.
\end{exa}

\begin{exa}\label{example:false-equality}
Consider the equality $\Diamond (x \sqcup y) = \Diamond(x) \sqcup \Diamond(y)$. Does it hold in all modal Riesz spaces? In other words, does $\mathcal{A}^{\Diamond}_{\textnormal{Riesz}}\vdash \Diamond (x \sqcup y) = \Diamond(x) \sqcup \Diamond(y)$? The answer is negative. Take as example the modal Riesz space $\mathbb{R}^2$ with:
\begin{center}
$1 = \begin{pmatrix}
1 \\ 1 \end{pmatrix} \ \ \ \ \ \ \Diamond = 
\begin{pmatrix}
\frac{1}{3} & \frac{2}{3}\\
0 & 0  \\
\end{pmatrix}$
\end{center}
and let $a = (1,0)$ and $b=(0,1)$. One verifies that $\Diamond(a\sqcup b) = (1,0)$ while $\Diamond(a) \sqcup \Diamond(b) =  (\frac{2}{3},0)$. This example shows that unlike the theory of Riesz spaces (cf. Example~\ref{example:techback1}), the theory of modal Riesz spaces cannot be generated by a linear model, i.e. a model where either $a \leq b$ or $b \leq a$ for all $a$ and $b$. Indeed, in any linear model, the equality $\Diamond (x \sqcup y) = \Diamond(x) \sqcup \Diamond(y)$ clearly holds while it does not hold in the example above.
\end{exa}

\begin{rem}
The choice of using the $\Diamond$ symbol for the unary operation of modal Riesz spaces might suggest the existence of a distinct De Morgan dual operator $\Box x = -\Diamond-x$. This is not the case since, due to linearity, $\Box x = \Diamond x$, i.e., $\Diamond$ is self dual. While using a different symbol such as ($\circ$) might have been a better choice, we decided to stick to $\Diamond$ for backwards compatibility with previous works on modal Riesz spaces~\cite{miolics2017,FMM2020,DBLP:conf/fossacs/LucasM19}. Another source of potential ambiguity lines in the ``modal'' adjective itself. Of course other axioms for $\Diamond$ can be conceived (e.g., $\Diamond(x\sqcup y ) = \Diamond(x) \sqcup \Diamond(y)$ instead of our $\Diamond(x + y ) = \Diamond(x)+\Diamond(y)$, see, e.g.,~\cite{DBLP:journals/lmcs/DiaconescuMS18}). Therefore different notions of modal Riesz spaces can be investigated, just like many types of classical modal logic exist (K, S4, S5, etc). Once again, our choice of terminology is motivated by backwards compatibility with previous works.
\end{rem}

We now expand the definitions and properties related to terms in negation normal form to modal Riesz spaces.

\begin{defi}\label{nnf:modal_definition}
A term $A$ is in \emph{negation normal form} (NNF) if the operator $(-)$ is only applied to variables and the constant $1$. 
\end{defi}

\begin{lem}
Every term $A$ can be rewritten to an equivalent term in NNF.
\end{lem}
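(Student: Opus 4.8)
The plan is to mirror the proof of Lemma~\ref{lem:into_nnf} for Riesz spaces, simply extending the rewriting system with the clauses needed to treat the two new symbols of the modal signature, namely the constant $1$ and the unary operator $\Diamond$. Recall that in the modal setting (Definition~\ref{nnf:modal_definition}) a term is in NNF when $(-)$ is applied only to variables \emph{and} to the constant $1$, so $-1$ is now an allowed ``atom'' and nothing further needs to be done to it; what must still be eliminated is every occurrence of $(-)$ in front of a compound subterm, including $-\Diamond A$.

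Concretely, I would argue by induction on the structure of $A$ that $A$ can be rewritten, using provable equalities of $\mathcal{A}^{\Diamond}_{\textnormal{Riesz}}$, to an equivalent term in NNF. The base cases are variables, $0$, $1$, and the already-admissible atom $-1$. For the inductive step one pushes negations inward using exactly the rewritings of Lemma~\ref{lem:into_nnf} for the Riesz-space fragment ($-(-A)=A$, $-(rA)=r(-A)$, $-(A+B)=(-A)+(-B)$, $-(A\sqcup B)=(-A)\sqcap(-B)$, $-(A\sqcap B)=(-A)\sqcup(-B)$), together with one new rewriting for the diamond, $-\Diamond A = \Diamond(-A)$, which is an instance of the linearity axiom $\Diamond(r_1 A + r_2 B) = r_1\Diamond(A)+r_2\Diamond(B)$ taking $r_1 = -1$, $r_2 = 0$ (equivalently, $\Diamond((-1)A) = (-1)\Diamond(A)$). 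Since every such rewriting either strictly decreases the number of symbols underneath a negation sign or replaces $-\Diamond A$ by $\Diamond(-A)$ and thereby strictly decreases the total size of subterms of the form $-C$ with $C$ compound-and-not-of-the-form-$1$, the process terminates; by the congruence (ctxt) and transitivity rules of equational logic the resulting NNF term is provably equal to $A$.

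The only mildly delicate point is termination of the rewriting, so I would make the measure explicit rather than hand-wave it: for instance, assign to each term $A$ the sum over all subterm-occurrences of the form $-C$ (with $C$ not a variable and not the constant $1$) of the size of $C$, and check that each of the six rewriting clauses strictly decreases this measure while none increases it. The $-\Diamond A$ clause is the one worth spelling out, since it is the genuinely new case and the reader should see that it is simply the $-1$-instance of linearity; everything else is a verbatim repeat of the non-modal lemma. I do not expect any real obstacle here — the statement is essentially bookkeeping on top of Lemma~\ref{lem:into_nnf}, and the extra subtlety that $1$ is now a permitted atom only \emph{shrinks} the work to be done.
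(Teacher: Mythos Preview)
Your proposal is correct and takes essentially the same approach as the paper: push negations inward using the Riesz-space rewritings of Lemma~\ref{lem:into_nnf} together with the single new clause $-\Diamond A = \Diamond(-A)$, justified by linearity of $\Diamond$. The paper's proof is a one-liner that simply cites this rewriting and refers back to Lemma~\ref{lem:into_nnf}; your version adds an explicit termination measure, which is more detail than the paper provides but entirely in the same spirit.
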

\begin{proof}
Negation can be pushed towards the variables by the following rewritings: $-\Diamond(A) = \Diamond(-A)$ (see Lemma~\ref{lem:into_nnf} for the other operators).
\end{proof}

Negation can be defined on terms in NNF as follows.

\begin{defi}
Given a term $A$ in NNF, we expand the operator $\overline{A}$ as follows: $\overline{\Diamond A} = \Diamond \negF{A}$, $\negF{1} = -1$, $\negF{-1}=1$. 
\end{defi}

The following are basic facts regarding negation of NNF terms.
\begin{prop}
For any term $A$ in NNF, the term $\overline{A}$ is also in NNF and it holds that $\mathcal{A}_{\textnormal{Riesz}}\vdash \overline{A} = -A$.
\end{prop}

\begin{prop}
For any terms $A, B$ in NNF, it holds that $\negF{A}[B/x] = \negF{A[B/x]}$.
\end{prop}

%%% Local Variables:
%%% mode: latex
%%% TeX-master: "main"
%%% End:

\section{Hypersequent Calculus for Riesz Spaces}\label{modal_free_section}

In this section we introduce the hypersequent calculus \hr\ for the equational theory of Riesz spaces.

In what follows we proceed with a sequence of syntactical definitions and notational conventions necessary to present the rules of the system. We use the letters $A$, $B$, $C$ to range over Riesz terms in negation normal form (NNF, see Definition~\ref{nnf:definition}) built from a countable set of variables $x$, $y$, $z$ and negated variables $\overline{x}$, $\overline{y}$, $\overline{z}$.
The scalars appearing in these terms are all strictly positive and are ranged over by the letters $r,s,t\in{\mathbb{R}_{>0}}$. From now on, the term scalar should always be understood as strictly positive scalar.

\begin{defi}
A \emph{weighted term} is a formal expression $r. A$ where $r\in\mathbb{R}_{>0}$ and $A$ is a term. 
\end{defi}

Given a weighted term $r.A$ and a scalar $s$ we denote with $s.(r.A)$ the weighted term $(sr).A$. Thus we have defined (strictly positive) scalar multiplication on weighted terms.

We use the greek letters $\Gamma, \Delta, \Theta, \Sigma$ to range over possibly empty finite multisets of weighted terms. We often write these multisets as lists but they should always be understood as being taken modulo reordering of their elements. As usual, we write $\Gamma,\Delta$ for the concatenation of  $\Gamma$ and $\Delta$.

We adopt the following notation:
\begin{itemize}
\item Given a sequence $\vec{r}=(r_1,\dots r_n)$ of scalars and a term $A$, we denote with $\vec{r}.A$ the multiset $[r_1.A, \dots, r_n.A]$. When $\vec{r}$ is empty, the multiset $\vec{r}.A$ is also empty.
\item Given a multiset $\Gamma=[r_1.A_1, \dots, r_n.A_n]$ and a scalar $s>0$, we denote with $s.\Gamma$ the multiset $[s.r_1.A_1, \dots, s.r_n.A_n]$.
\item Given a sequence $\vec{s}=(s_1,\dots s_n)$ of scalars and a multiset $\Gamma$, we denote with $\vec{s}.\Gamma$ the multiset $s_1.\Gamma, \dots, s_n.\Gamma$.
\item Given two sequences $\vec{r}=(r_1,\dots r_n)$ and $\vec{s}=(s_1,\dots s_m)$ of scalars, we denote $\app{\vec{r}}{\vec{s}}$ the concatenation of the two sequences, i.e. the sequence $(r_1,\dots r_n,s_1, \dots s_m)$.
\item Given a sequence $\vec{s}=(s_1,\dots s_n)$ of scalars and a scalar $r$, we denote $(r\vec{s})$ the sequence $(rs_1,\dots rs_n)$.
\item Given two sequences $\vec{r}=(r_1,\dots r_n)$ and $\vec{s}=(s_1,\dots s_m)$ of scalars, we denote $\vec{r}\vec{s}$ the sequence $\app{r_1\vec{s}}{\app{\dots}{r_n\vec{s}}}$.
\item Given a sequence $\vec{s}=(s_1,\dots s_n)$ of scalars, we denote $\sum \vec{s}$ the sum of all scalars in $\vec{s}$, i.e. the scalar $\sum\limits_{i=1}^n s_i$.
\end{itemize}

\begin{defi}
A \emph{sequent} is a formal expression of the form $\vdash \Gamma$.
\end{defi}
If $\Gamma=\emptyset$, the corresponding empty sequent is simply written as $\vdash$.

\begin{defi}
A \emph{hypersequent} is a non-empty finite multiset of sequents, written as $\vdash \Gamma_1 | \dots | \vdash \Gamma_n$.
\end{defi}
We use the letter $G,H$ to range over hypersequents. Note that, under these notational conventions, the expression $\vdash \Gamma$ could either denote the sequent $\vdash \Gamma$ itself or the hypersequent $[\vdash \Gamma]$ containing only one sequent. The context will always determine which of these two interpretations is intended.

We now describe how sequents and hypersequents can be interpreted by Riesz terms. This means that  \hr\ is a \emph{structural proof system}, i.e., by manipulating sequents and hypersequents it in fact deals with terms of a certain specific form.

\begin{defi}[Interpretation]
  \label{defi:modal_free_interpretation}
We interpret weighted terms ($r.A$), sequents $\vdash\Gamma$ and hypersequents $G$ as the Riesz terms $\sem{r.A}$, $\sem{\vdash\Gamma}$ and $\sem{G}$, respectively, as follows:
\begin{center}

\begin{tabular}{l|  l l l}
& Syntax &  & Term interpretation $\sem{\_}$ \\
\hline
Weighted terms & $r.A$ & & $rA$\\
 Sequents & $\vdash r_1.A_1,\dots, r_n.A_n$ & & $\sem{r_1.A_1} + \dots + \sem{r_n.A_n}$\\
 Hypersequents & $\vdash\Gamma_1 | \dots | \vdash\Gamma_n$ && $\sem{\vdash\Gamma_1} \sqcup \dots \sqcup \sem{\vdash\Gamma_n}$
\end{tabular}
\end{center}
\end{defi}

Hence a weighted term is simply interpreted as the term scalar-multiplied by the weight. A sequent is interpreted as sum $(\sum)$ and a hypersequent is interpreted as a join of sums ($\bigsqcup \sum)$.

\begin{exa}
The interpretation of the hypersequent:
$$
\vdash 1.x, 2.(y\sqcap z) \ | \ \vdash 2.(3\overline{x} \sqcap y) 
$$
is the Riesz term:
$$\big(1x  + 2(y\sqcap z)\big) \sqcup  \big( 2(3\overline{x} \sqcap y) \big).$$
\end{exa}

The hypersequent calculus \hr\ is a deductive system for deriving hypersequents whose interpretation is positive, i.e., the hypersequents $G$ such that $\mathcal{A}_{\textnormal{Riesz}} \vdash 0 \leq \sem{G}$. The rules of \hr\ are presented in Figure~\ref{rules:hr} and are very similar to the rules of the system GA of~\cite{MOG2005,MOGbook} (see Appendix~\ref{ga:section}) where the main difference is the use of weighted terms in sequents. We write $\prove G$ if the hypersequent $G$ is derivable in the system $\hr$.

\begin{figure}[h!]
  \begin{center}
  \scalebox{0.85}{
      \begin{minipage}{12.5cm}
        \textbf{Axiom:}
        \begin{center}
            
            $\infer[\text{INIT}]{\vdash}{}$
        \end{center}
        \textbf{Structural rules:}
        \begin{center}
          \begin{tabular}{ccc}
           $ \infer[\text{W}]{G \sep \vdash \Gamma}
                      {G}$
                      & &
                      $\infer[\text{C}]{G \sep \vdash \Gamma}
                                {G \sep \vdash \Gamma \sep \vdash \Gamma}$\\[0.5cm]
            $\infer[\text{S}]
                      {G \sep \vdash \Gamma_1 \sep  \vdash \Gamma_2}
                      {G \sep \vdash \Gamma_1, \Gamma_2}$
                      &  &
                      $\infer[\text{M}]
                                {G \sep  \vdash \Gamma_1 , \Gamma_2}
                                {G \sep \vdash \Gamma_1 & G \sep \vdash \Gamma_2}$         \\[0.5cm]
            $\infer[\text{T}]{G \sep \vdash \Gamma}{G \sep \vdash r.\Gamma}$ & & $\infer[\text{ID},\sum r_i = \sum s_i]{G \sep \vdash \Gamma, \vec{r}.x,\vec{s}.\covar{x}}{G \sep \vdash \Gamma}$
          \end{tabular}
        \end{center}
        \textbf{Logical rules:}
        \begin{center}
          \begin{tabular}{ccc}
            $\infer[0]
            {G \sep \vdash \Gamma , \vec{r}.0}
            {G \sep \vdash \Gamma}$
            &
            $\infer[+]
            {G \sep \vdash \Gamma, \vec{r}.(A + B)}
            {G \sep \vdash \Gamma , \vec{r}.A , \vec{r}.B}$
            &
            $\infer[\times]
            {G \sep \Gamma \vdash \Gamma , \vec{r}.(s A)}
            {G \sep \Gamma \vdash \Gamma , (s\vec{r}).A}$ \\[0.4cm]
          \end{tabular}
          \begin{tabular}{cc}
            $\infer[\sqcup]
            {G \sep\vdash \Gamma , \vec{r}.(A \sqcup B)}
            {G \sep\vdash \Gamma , \vec{r}.A \sep \vdash \Gamma , \vec{r}.B}$
            &
            $\infer[\sqcap]
            {G \sep \vdash \Gamma , \vec{r}.(A \sqcap B)}
              {G \sep  \vdash \Gamma , \vec{r}.A & G \sep \vdash \Gamma , \vec{r}.B}$ \\[0.2cm]
          \end{tabular}
                  \end{center}
                  
                   \textbf{CAN rule:}
                   \begin{center}
                     $\infer[\text{CAN}, \sum r_i = \sum s_i]{G \sep \vdash \Gamma}{G \sep \vdash \Gamma, \vec{s}.A , \vec{r}.\negF{A}} $
        \end{center}

            \end{minipage}
    }
  \end{center}
  \caption{Inference rules of \hr.}
  \label{rules:hr}
\end{figure}

The axiom INIT allows for the derivation of $(\vdash)$, the hypersequent containing only the empty sequent, thus it corresponds to the positivity of the constant $0$. The C rule (contraction) allows treating hypersequents as (always non-empty) sets of sequents. The M (mix) and S (split) rules are as in the system GA of~\cite{MOG2005,MOGbook}. We instead adopted the rule $\textnormal{ID}$, in place of the axiom $\textnormal{ID-ax}$ of GA (see Appendix~\ref{ga:section}). While the two are equivalent (i.e., mutually derivable) in presence of the other rules, the formulation of $\textnormal{ID}$ as a rule is convenient in the statement of the $M$-elimination theorem later on. The $T$ rule is novel, and can be seen as a real-valued variant of C (contraction) rule in that the weight of a sequent in the hypersequent can be multiplied by an arbitrary positive real number. Finally, note that the logical rules are all presented using the syntactic sugaring $\vec{r}.A$ described above. For example, one valid instance of the rule $(+)$ is the following:
$${\infer[+]{\vdash \Gamma, 2.(3y + x), \frac{1}{2}(3y+x)}{\vdash \Gamma, 2.3y, 2.x, \frac{1}{2}.3y, \frac{1}{2}.x}}$$
This effectively allows us to apply the rule to several terms in the sequent at the same time. This feature adds some flexibility in the process of derivation construction and simplifies some proofs, but it is not strictly required. All our results hold even in a variant of the \hr\ system where rules are allowed to act on only one term at the time.

\begin{conv}
  We often have to use the same rule multiple times when building a derivation. For convenience, we may write the rule only once with the number of times the rule is used as exponent, as follows:
  \[ \infer[W^2]{G \sep \vdash \Gamma \sep \vdash \Delta}{G} \]
  If the number of times a rule is used is not known, we use a wildcard as exponent, as in the following example where the weakening rule is used to remove all sequents appearing in G:
  \[ \infer[W^*]{G \sep \vdash \Gamma}{\vdash \Gamma} \]
\end{conv}

\begin{rem}
On the one hand, we could have introduced appropriate exchange (i.e., reordering) rules and defined sequents and hypersequents as lists, rather than multisets. In the opposite direction, we could have defined hypersequents as (non-empty) sets and dispose of the rules (C). Our choice is motivated by a balance between readability and fine control over the derivation steps in the proofs. 
\end{rem}

\begin{rem}\label{cutcomment1}
Note that the following CUT rule 
$$ \infer[\text{CUT},  \sum \vec{r} = \sum \vec{s}]{G \sep \vdash \Gamma_1, \Gamma_2}{G \sep \vdash \Gamma_1, \vec{r}.A & G \sep \vdash \Gamma_2, \vec{s}.\negF{A}} $$
is equivalent (i.e., mutually derivability) to the CAN rule in the \hr\ hypersequent calculus:
 
\begin{figure}[H]
    \[ \infer[\text{CAN},  \sum \vec{r} = \sum \vec{s}]{G \sep \vdash \Gamma_1, \Gamma_2}{\infer[\text{M}]{G \sep \vdash \Gamma_1, \Gamma_2, \vec{r}.A,\vec{s}.\negF{A}}{G \sep \vdash \Gamma_1, \vec{r}.A & G \sep \vdash \Gamma_2, \vec{s}.\negF{A}}} \]

  \caption{Derivability of the CUT rule.}
  \label{fig:enc_cut}
\end{figure}
\begin{figure}[H]
    \[ 
    \infer[\text{C}]{G \sep \vdash \Gamma}{\infer[\text{S}]{G \sep \vdash \Gamma \sep \vdash \Gamma}{\infer[\text{CUT}]{G \sep \vdash \Gamma, \Gamma}{\infer[+]{G \sep \vdash \Gamma, \vec{r}.(A + \negF{A})}{\infer[\text{CUT}]{G \sep \vdash \Gamma,\vec{r}.A,\vec{r}.\negF{A}}{G \sep \vdash \Gamma, \vec{r}.A,\vec{s}.\negF{A} & \infer[\text{W}^*]{G \sep \vdash \vec{s}.A,\vec{r}.\negF{A}}{\infer[\text{Lemma~\ref{lem:modal_free_ext_ID_rule}}]{\vdash \vec{s}.A,\vec{r}.\negF{A}}{\infer[\text{INIT}]{\vdash}{}}}}} & \infer[+]{G \sep \vdash \Gamma, \vec{r}.(\negF{A} + A)}{\infer[\text{CUT}]{G \sep \vdash \Gamma,\vec{r}.A,\vec{r}.\negF{A}}{G \sep \vdash \Gamma, \vec{r}.A,\vec{s}.\negF{A} & \infer[\text{W}^*]{G \sep \vdash \vec{s}.A,\vec{r}.\negF{A}}{\infer[\text{Lemma~\ref{lem:modal_free_ext_ID_rule}}]{\vdash \vec{s}.A,\vec{r}.\negF{A}}{\infer[\text{INIT}]{\vdash}{}}}}}}}}
     \]
    
  \caption{Derivability of the CAN rule.}
  \label{fig:enc_can}
\end{figure}
Our choice (following~\cite{MOGbook, MOG2005}) of presenting the system \hr\ using the CAN rule, rather than the equivalent CUT rule, is just motivated by elegance and technical convenience.
\end{rem}

\begin{exa}
  Example of derivation of the hypersequent $\vdash 1.\big((2x + 2\overline{y})\sqcup (y + \overline{x})\big)$ which consists of only one sequent. 
  \[ \infer[\sqcup]{\vdash 1.\big((2x + 2\overline{y})\sqcup (y + \overline{x})\big)}{\infer[+]{\vdash 1.(2x + 2\covar{y}) \sep \vdash 1.(y + \covar{x})}{\infer[\times]{\vdash 1.2x, 1.2\covar{y} \sep \vdash 1.(y + \covar{x})}{\infer[\times]{\vdash 2.x, 1.2\covar{y} \sep \vdash 1.(y + \covar{x})}{\infer[+]{\vdash 2.x, 2.\covar{y} \sep \vdash 1.(y + \covar{x})}{\infer[\text{T(multiplication by $2$)}]{\vdash 2.x, 2.\covar{y} \sep \vdash 1.y, 1.\covar{x}}{\infer[\text{S}]{\vdash 2.x, 2.\covar{y} \sep \vdash 2.y, 2.\covar{x}}{\infer[\text{ID}]{\vdash 2.x,2.y,2.\covar{x},2.\covar{y}}{\infer[\text{ID}]{\vdash 2.y, 2.\covar{y}}{\infer[\text{INIT}]{\vdash}{}}}}}}}}}} \]
\end{exa}

In what follows we say that an hypersequent $G$ has a CAN-free derivation (resp., M-free, T-free, \emph{etc.}) if it has a derivation that never uses the rule CAN (resp., rule M, rule T, \emph{etc.}).

%%% Local Variables:
%%% mode: latex
%%% TeX-master: "main"
%%% End:

\subsection{Main results regarding the system \hr}
\label{subsec:modal_free_main_results}

We are now ready to state the main results regarding the hypersequent calculus \hr. Each theorem will be proven in a separate subsection of this section.

 Recall that we write $\mathcal{A}_{\textnormal{Riesz}} \vdash A \geq B$ if the inequality $A\geq B$ is derivable in equational logic from the axioms of Riesz spaces and that we write $\prove G$ if the hypersequent $G$ is derivable in the \hr\ proof system.

Our first technical result states that the system \hr\ can derive all and only those hypersequents $G$ such that $\mathcal{A}_{\textnormal{Riesz}} \vdash \sem{G}\geq 0$.

\begin{thm}[Soundness]
\label{thm:modal_free_soundness}
For every hypersequent $G$, 
$$
\prove G\ \ \ \Longrightarrow\ \ \ \mathcal{A}_{\textnormal{Riesz}}\vdash \sem{G} \geq 0.
$$
\end{thm}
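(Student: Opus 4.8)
The plan is to prove soundness by induction on the structure of the derivation of $\prove G$. For each inference rule of \hr\ (Figure \ref{rules:GA_modal_free}), I would assume as induction hypothesis that the interpretations of the premises are provably $\geq 0$ in $\mathcal{A}_{\textnormal{Riesz}}$, and show that the interpretation of the conclusion is also provably $\geq 0$. The base case is the axiom INIT, where $\sem{\vdash} = 0$ (the empty sum) and trivially $\mathcal{A}_{\textnormal{Riesz}}\vdash 0 \geq 0$. Since $\mathbb{R}$ is complete for the quasiequational theory of Riesz spaces (Example \ref{example:techback1}), for each rule it actually suffices to verify the corresponding implication between real-valued inequalities, which makes the bookkeeping routine: e.g.\ for W, from $a \geq 0$ (where $a = \sem{G}$) we get $a \sqcup b \geq 0$ since $a \sqcup b \geq a$; for C, the interpretation is unchanged up to the idempotency $a \sqcup a = a$; for S, $\sem{G}\sqcup(c_1 + c_2) \geq 0$ follows from $\sem{G} \sqcup (c_1 + c_2) \geq \ldots$ — here one uses that $c_1 + c_2 \leq 2(c_1 \sqcup c_2)$ is not quite what is needed; rather S needs the converse direction and is the one rule requiring a small argument (see below).

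Concretely, I would organize the rules into groups. The \emph{logical rules} ($0$, $+$, $\times$, $\sqcup$, $\sqcap$) are the easiest: in each case the interpretation of the conclusion equals the interpretation of the premise(s) after applying a provable Riesz identity (distributivity of $+$ over the weighted-term sum, associativity/commutativity, $r(sA) = (rs)A$, and for $\sqcup$ the identity $(a \sqcup c_1) \sqcup (a \sqcup c_2) = a \sqcup (c_1 \sqcup c_2)$, for $\sqcap$ the identity $(a \sqcup c_1) \sqcap (a \sqcup c_2) = a \sqcup (c_1 \sqcap c_2)$, which is a form of the distributive-lattice law — valid in Riesz spaces since they are distributive lattices, provable via completeness over $\mathbb{R}$). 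So for these rules $\sem{\text{conclusion}} = \sem{\text{premise}}$ up to provable equality, and the claim is immediate. The \emph{structural rules} W, C, T, ID and M each need a one-line Riesz-space argument: W uses $x \sqcup y \geq x$; C uses idempotency; T uses Lemma \ref{lem:equalities}[5] ($0 \leq A \Leftrightarrow 0 \leq rA$) applied to the relevant disjunct; ID uses $\sum r_i = \sum s_i$ so that $\vec{r}.x + \vec{s}.\covar{x}$ interprets to $(\sum r_i)x + (\sum s_i)(-x) = 0$, leaving the interpretation unchanged; M uses that from $a \sqcup b_1 \geq 0$ and $a \sqcup b_2 \geq 0$ one derives $a \sqcup (b_1 + b_2) \geq 0$ (checked over $\mathbb{R}$: if $a < 0$ then $b_1 \geq 0$ and $b_2 \geq 0$ hence $b_1 + b_2 \geq 0$). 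The S rule is the dual observation: from $a \sqcup (b_1 + b_2) \geq 0$ derive $(a \sqcup b_1) \sqcup b_2 \geq 0$ — checked over $\mathbb{R}$: if both $a < 0$ and $b_2 < 0$ then $b_1 + b_2 < b_1$, forcing $b_1 > a \vee (b_1+b_2)\geq 0$... more carefully, if $a\sqcup(b_1+b_2)\geq 0$ and $a<0, b_2<0$, then $b_1 + b_2 \geq 0$ so $b_1 \geq -b_2 > 0$, giving $(a\sqcup b_1)\sqcup b_2 \geq b_1 > 0$.

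The remaining rule is CAN. Here $\sem{\vdash \Gamma, \vec{s}.A, \vec{r}.\negF{A}} = \sem{\vdash\Gamma} + (\sum s_i)A + (\sum r_i)\overline{A}$, and since $\mathcal{A}_{\textnormal{Riesz}}\vdash \overline{A} = -A$ and $\sum s_i = \sum r_i$, the added terms cancel to $0$, so $\sem{\text{conclusion}} = \sem{\text{premise}}$ up to provable equality and the induction hypothesis applies directly. I would assemble these cases and conclude. The main obstacle, such as it is, is not conceptual but purely a matter of carefully matching the multiset-and-weighted-term syntax of Definition \ref{defi:modal_free_interpretation} to the Riesz-term identities — in particular making sure that the $\vec{r}.A$ sugar interprets as $(\sum\vec{r})A$ and that the generalized (acting-on-many-formulas) form of each logical rule still reduces to a single provable equality; once that translation is set up, each rule is a short calculation verified via completeness of $\mathbb{R}$ for the quasiequational theory (Example \ref{example:techback1}, Lemma \ref{lem:equalities}).
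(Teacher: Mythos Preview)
Your proposal is correct and follows the same overall strategy as the paper: induction on the derivation, verifying that each rule preserves the property $\mathcal{A}_{\textnormal{Riesz}}\vdash \sem{\cdot}\geq 0$. The only noteworthy difference is in how you treat the structural rules S, M and T. The paper reformulates $A\sqcup B\geq 0$ as $A^-\sqcap B^-=0$ (Lemma~\ref{lem:condMaxPos}) and then carries out explicit algebraic manipulations with negative parts (using Lemma~\ref{lem:equalities}[1]--[4] and a case split on $r\geq 1$ for T). You instead phrase each rule as a quasiequation---e.g.\ $a\sqcup(b_1+b_2)\geq 0 \Rightarrow a\sqcup b_1\sqcup b_2\geq 0$ for S---and verify it by a case analysis in $\mathbb{R}$, invoking the quasiequational completeness of $\mathbb{R}$ (Example~\ref{example:techback1}) together with Birkhoff completeness to transfer back to provability. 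Your route is shorter and avoids the $(\cdot)^-$ machinery entirely; the paper's route makes the equational derivation a bit more explicit, though it too ultimately rests on completeness over $\mathbb{R}$ via Lemma~\ref{lem:equalities}. Both are valid and close in spirit.
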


\begin{thm}[Completeness]
\label{thm:modal_free_completeness}
For every hypersequent $G$, 
$$
\mathcal{A}_{\textnormal{Riesz}} \vdash\sem{G} \geq 0 \ \ \ \Longrightarrow\ \ \  \prove G.
$$
\end{thm}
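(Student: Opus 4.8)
The plan is to prove the converse of Soundness in two stages: first reduce an arbitrary valid hypersequent to an \emph{atomic} one (all weighted terms of the form $r.x$ or $r.\overline{x}$) by reading the logical rules bottom-up, and then settle the atomic case by a theorem of the alternative from linear algebra together with the structural rules. I work semantically over the reals: since $(\mathbb{R},\leq)$ is a Riesz space (Example~\ref{example:techback1}), the hypothesis $\mathcal{A}_{\textnormal{Riesz}}\vdash\sem{G}\geq 0$ forces the inequality $\sem{G}\geq 0$ to hold in $(\mathbb{R},\leq)$, which by Definition~\ref{defi:modal_free_interpretation} means: for every assignment $v$ of the variables into $\mathbb{R}$ and every $G = \vdash\Gamma_1\mid\dots\mid\vdash\Gamma_n$, at least one sequent satisfies $\sem{\vdash\Gamma_i}(v)\geq 0$.

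The first ingredient is that every logical rule is \emph{validity-preserving in both directions}: the interpretation of its conclusion is $\geq 0$ over $\mathbb{R}$ if and only if the interpretation of each premise is. For $0$, $+$ and $\times$ the conclusion and each premise in fact have literally the same interpretation as a function $\mathbb{R}^{\mathrm{Var}}\to\mathbb{R}$; for $\sqcup$ the same is true, using the pointwise identity $\max(x+t\max(a,b),y)=\max(x+ta,x+tb,y)$ for $t>0$; and for $\sqcap$ a one-line case split on the sign of $A(v)-B(v)$ (of the flavour of the $\mathbb{R}$-computations behind Lemma~\ref{lem:equalities}) yields both implications. Hence, reading these rules bottom-up, I may repeatedly replace a valid hypersequent by (one or two) strictly simpler still-valid hypersequents until every weighted term is an atom; plugging the derivations of the atomic leaves back under these rule applications, it suffices to derive every atomic valid hypersequent. (The decomposition terminates, but a little care is required because $(\sqcup)$ and $(\sqcap)$ duplicate the side context: choosing at each step a weighted term of maximal connective depth inside a sequent carrying the most such subterms, the lexicographic measure recording the maximal depth present, that largest count, and the number of sequents realising it strictly decreases.)

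So let $G = \vdash\Gamma_1\mid\dots\mid\vdash\Gamma_n$ be atomic and valid over $\mathbb{R}$. Each sequent interprets as a homogeneous linear form $\ell_i(v)=\sum_x c^i_x\, v(x)$, and validity of $G$ says exactly that the strict system $\{\ell_i(v)<0 : 1\leq i\leq n\}$ has no real solution. By Gordan's theorem (a theorem of the alternative, provable elementarily, e.g.\ by Fourier--Motzkin elimination) there exist reals $\lambda_1,\dots,\lambda_n\geq 0$, not all zero, with $\sum_i\lambda_i\ell_i=0$ as a linear form; equivalently, for every variable $x$, $\sum_i\lambda_i c^i_x = 0$. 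I then build the derivation bottom-up: discard, by weakening $(\mathrm{W})$, each sequent with $\lambda_i=0$ (at least one survives); rescale each surviving $\vdash\Gamma_i$ by $\lambda_i>0$ using the rule $(\mathrm{T})$; merge all surviving sequents into a single sequent by $(\mathrm{S})$. The resulting sequent $\vdash\Delta$ has interpretation $\sum_i\lambda_i\ell_i = 0$, which means precisely that in $\Delta$ the total weight carried by $x$ equals the total weight carried by $\overline{x}$, for every variable $x$; such a $\vdash\Delta$ is then derived from the axiom $\mathrm{INIT}$ by one application of $(\mathrm{ID})$ per variable, whose side condition $\sum r_i=\sum s_i$ is exactly this balancing. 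Composing the steps gives $\prove G$, using neither $\mathrm{CAN}$ nor $(\mathrm{M})$.

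I expect the atomic case to be the heart of the argument, and the place where \hr\ genuinely departs from the lattice-ordered-group system $\gasystem$: the multipliers $\lambda_i$ must be allowed to be \emph{real}, not merely rational (e.g.\ for $G = \vdash 1.x\mid\vdash\sqrt{2}.\overline{x}$), which is exactly the reason the rule $(\mathrm{T})$ is present and why the $\gasystem$-style argument assembling integer combinations via $(\mathrm{M})$ does not transfer directly. Matching the precise statement of the theorem of the alternative against the side conditions of $(\mathrm{ID})$, $(\mathrm{S})$ and $(\mathrm{T})$ is the step that needs the most care; the termination of the bottom-up decomposition in the presence of context duplication by $(\sqcup)$ is a secondary, routine technicality.
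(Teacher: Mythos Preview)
Your proof is correct, but it follows a genuinely different route from the paper's.

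The paper proves completeness \emph{syntactically}: it shows by induction on the equational-logic derivation of $\mathcal{A}_{\textnormal{Riesz}}\vdash A=B$ that both $\vdash 1.A,1.\negF{B}$ and $\vdash 1.B,1.\negF{A}$ are derivable in \hr\ (Lemma~\ref{lem:modal_free_completeness_aux}), and then unpacks the hypersequent structure using the \emph{CAN-using} invertibility of the logical rules (Lemma~\ref{lem:modal_free_can_full_invertibility}). In particular, the paper's completeness proof goes through the CAN rule; CAN-elimination and M-elimination are established separately, by other arguments.

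Your approach instead passes to $\mathbb{R}$-semantics, reduces to the atomic case via \emph{semantic} invertibility of the logical rules, and then settles the atomic case with Gordan's theorem together with the rules W, T, S, ID, INIT. This yields a CAN-free and M-free derivation directly, so you are in effect proving Theorems~\ref{thm:modal_free_completeness}, \ref{thm:modal_free_m_elim} and~\ref{thm:modal_free_can_elim} in one stroke. Your atomic step is essentially the $(2)\Rightarrow(1)$ direction of Lemma~\ref{lem:modal_free_lambda_prop}, which the paper does prove, but deploys only later for T-elimination and decidability rather than for completeness. What the paper's route buys is a fully internal, proof-theoretic argument with no appeal to an external linear-algebra theorem, in line with the paper's emphasis on syntactic, formalisable proofs; what your route buys is economy and an immediately stronger conclusion (CAN-free, M-free completeness). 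Your termination argument for the bottom-up decomposition is a little sketchy as written; a cleaner measure is $\mu(G)=\sum_{\vdash\Gamma\in G} 3^{\,c(\Gamma)}$ with $c(\Gamma)$ the total connective count, which strictly decreases along every branch under each logical rule.
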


Our next theorem states that all the logical rules of the hypersequent calculus \hr\ are \emph{CAN-free invertible}. This means that if an hypersequent $G$ having the shape of the conclusion of a logical rule is derivable with a CAN-free derivation, then also the premises of that logical rule are derivable by CAN-free derivations. So, for example, in the case of the ($\sqcap$) rule, if the hypersequent $$G \sep \vdash \Gamma , \vec{r}.(A \sqcap B)$$ has a CAN-free derivation, then also 
$$G \sep \vdash \Gamma , \vec{r}.A \ \ \ \ \textnormal{and} \ \ \ \  G \sep \vdash \Gamma , \vec{r}.B $$ 
have CAN-free derivations.

\begin{thm}[CAN-free Invertibility]
\label{thm:modal_free_invertibility}
All the logical rules are CAN-free invertible.
\end{thm}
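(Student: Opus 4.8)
The plan is to prove the statement \emph{one logical rule $\rho$ at a time}, by induction on the height of a CAN--free derivation of a hypersequent having the shape of the conclusion of $\rho$. More precisely, the inductive claim is: whenever $\proveNC G$ and some sequent of $G$ has the form $\vdash \Gamma, \vec r.(A \star B)$, where $\star$ is the connective of $\rho$, then the ``inverted'' hypersequent prescribed by $\rho$ is CAN--free derivable as well. It is technically convenient to require only that the weighted terms $\vec r.(A\star B)$ form a \emph{sub}--multiset of one of the sequents of $G$ (rather than exactly its non--context part); this extra slack is what lets the induction survive structural rearrangements of the carrier sequent. The proof is then a case analysis on the last rule $\rho'$ of the given CAN--free derivation.

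First come the routine cases, where the displayed occurrences $\vec r.(A\star B)$ are passive in $\rho'$: namely $\rho'$ acts on another sequent of $G$, or within the context part $\Gamma$, or is a W or C step not involving the carrier. Here one applies the induction hypothesis to the premise(s) --- which still carry the displayed occurrences --- and re--applies $\rho'$. The cases in which $\rho'$ is S, T or ID and \emph{does} act on the carrier are also routine: since $A\star B$ has a logical connective (or, for the $0$--rule, the constant $0$) as head symbol, none of its displayed occurrences can be created or destroyed by these rules; one applies the induction hypothesis to the premise --- keeping track that S merges the carrier with a neighbour, that T rescales it uniformly, and that ID only appends $\vec r.x, \vec s.\covar{x}$ pairs --- and re--applies the rule (twice, in the T--case for $\sqcup$, where the carrier has been split). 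Finally, if $\rho' = \rho$ and some displayed occurrences are principal in $\rho'$, the premise of $\rho$ is essentially read off the premise of this last $\rho'$--instance: one uses the induction hypothesis to invert any displayed occurrences lying in the context of that instance, and then re--applies $\rho$ to the occurrences that $\rho'$ introduced but that we do not wish to invert.

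The one genuinely delicate case is when $\rho'$ is M and the carrier $\vdash\Gamma, \vec r.(A\star B)$ is precisely the sequent $\vdash \Gamma_1, \Gamma_2$ produced by that M--step, so that the displayed occurrences are \emph{distributed} over the two premises $G' \sep \vdash\Gamma_1$ and $G' \sep \vdash \Gamma_2$ as $\vec r_1.(A\star B) \subseteq \Gamma_1$ and $\vec r_2.(A\star B) \subseteq \Gamma_2$ with $\vec r = \vec r_1, \vec r_2$. Applying the induction hypothesis to each premise inverts these smaller multisets, and it remains to recombine the two resulting CAN--free derivations. For $\rho \in \{0, +, \times\}$ this recombination is a single further application of M; for $\rho = \sqcap$, whose inversion demands two independent target hypersequents, each target is obtained by one application of M from the matching pair of inductive results.

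The case $\rho = \sqcup$ is the crux, and the step I expect to be the main obstacle. There the inductive hypothesis supplies $\proveNC G' \sep \vdash \Delta_1, \vec r_1.A \sep \vdash \Delta_1, \vec r_1.B$ and $\proveNC G' \sep \vdash \Delta_2, \vec r_2.A \sep \vdash \Delta_2, \vec r_2.B$, whereas the required target is $G' \sep \vdash \Gamma, \vec r.A \sep \vdash \Gamma, \vec r.B$; a naive merge via M produces spurious ``cross'' sequents such as $\vdash \Delta_1, \vec r_1.A, \Delta_2, \vec r_2.B$ and so does not work directly. The reassembly is to be achieved by exploiting that the W rule may introduce \emph{arbitrary} sequents, together with repeated uses of M --- I expect to isolate for this an auxiliary lemma, provable from W and M alone by a nested induction on the sizes of the two hypersequents, which turns CAN--free derivations of $G' \sep \mathcal{H}_1$ and $G' \sep \mathcal{H}_2$ into exactly the CAN--free derivations needed to build the two target sequents. (This subtlety is an artefact of allowing the rules to act on several formulas at once: in the one--formula variant of \hr\ the displayed $(A\star B)$ is a single weighted term, one side of any M--split is empty, and the recombination for $\sqcup$ becomes as straightforward as for the other connectives.)
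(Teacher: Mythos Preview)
Your overall plan---induction on a CAN--free derivation, with the inductive statement strengthened so that the displayed occurrences $\vec r.(A\star B)$ may form any sub--multiset of a carrier sequent---is the paper's approach (the paper strengthens a little further, tracking the displayed formula in \emph{several} sequents simultaneously, which is what makes the $\rho'=\sqcup$ sub--case go through cleanly when the carrier gets duplicated). You are also right that the M--case of the $\sqcup$--inversion is the crux. Two steps, however, do not go through as written. First, the case $\rho'=\sqcap$ in the $\sqcup$--inversion is \emph{not} routine: applying the induction hypothesis to each $\sqcap$--premise yields $G'\sep\vdash\Pi,\vec r.A,\vec s.C\sep\vdash\Pi,\vec r.B,\vec s.C$ and the analogue with $D$, and to re--apply $\sqcap$ you need the two ``cross'' hypersequents in which one sequent carries $C$ and the other $D$; these must be manufactured. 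The paper explicitly singles out \emph{both} M and $\sqcap$ as the non--trivial cases for $\sqcup$, and handles this one with W, M, then S (to split the extra merged sequent back into exact duplicates of the desired sequents) and C.

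Second, and more seriously, your auxiliary lemma for the M--case cannot be proved from W and M alone. A counting argument shows why: M preserves the number of sequents (both premises and the conclusion have $|G|+1$), while W strictly increases it; so any hypersequent built from copies of the two IH results via only W and M has size at least $|G'|+2$, with equality only when no W is used. But without W the two IH results---whose carrier sequents are in general pairwise distinct---share no common $(|G'|{+}1)$--sequent part and hence cannot be M--ed against one another at all; M--ing either with itself only duplicates material already present. The paper's recombination therefore relies essentially on the T rule (via Lemmas~\ref{lem:modal_free_genT} and~\ref{lem:modal_free_genT2}) together with S and C: one rescales the ``unwanted'' sequent of each IH result by the other side's weight vector, merges via W+M, then S--splits and C--contracts to obtain the cross hypersequents $G'\sep\vdash\Delta_1,\vec r_1.A\sep\vdash\Delta_2,\vec r_2.B$ and its mirror; four further M--applications then yield the target.
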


The invertibility theorem is very important for proof search. When trying to derive a hypersequent $G$ (without CAN applications) it is always possible to systematically apply the logical rules and reduce the problem of deriving $G$ (without CAN applications) to the problem of deriving a number of hypersequents $G_1, \dots G_n$ where no logical symbols appear. We call such reduced hypersequents without logical symbols \emph{atomic hypersequents}.

\begin{figure}[H]
  \[ \infer[\text{Logical rules}]{G}{\infer{\qquad \qquad \ddots \qquad \qquad}{G_1} & \vdots  & \infer{\qquad \qquad \iddots \qquad \qquad}{G_n}} \]
  \caption{Systematic application of the logical rules to reduce the logical complexity.}
  \label{fig:algo_to_inv}
\end{figure}

As we will discuss later (Theorem~\ref{thm:modal_free_decidability}), this procedure of simplification will lead to an algorithm for deciding if an arbitrary hypersequent $G$ is derivable in \hr\  or not.

The three theorems above are adaptations of similar results for the hypersequent calculus \gasystem\ of~\cite{MOGbook, MOG2005} for the theory of lattice ordered abelian groups. 

The following theorem, instead, appears to be novel. It is stated in the context of our system \hr\ but a similar result can be proved for \gasystem\ too.

\begin{thm}[M-elimination]
\label{thm:modal_free_m_elim}
If a hypersequent has a CAN-free derivation, then it has a CAN-free and M-free derivation.
\end{thm}

Our motivation for proving the above result is mostly technical. Indeed it allows us to prove our main theorem (Theorem~\ref{thm:modal_free_can_elim} below) in a rather simple way (different from that of~\cite{MOGbook, MOG2005}). However note how the M-elimination theorem is also useful from the point of view of proof search since it reduces the space of derivation trees to be explored.

We are now ready to state our main result regarding the system $\hr$.

\begin{thm}[CAN elimination]
\label{thm:modal_free_can_elim}
If a hypersequent $G$ has a derivation, then it has a CAN-free derivation.
\end{thm}

\begin{proof}[Proof sketch] The CAN rule has the following form:
$$\infer[\text{CAN}, \sum \vec{r} = \sum \vec{s}]{G \sep \vdash \Gamma}{G \sep \vdash \Gamma, \vec{s}.A , \vec{r}.\negF{A}} $$

We show how to eliminate one application of the CAN rule. Namely, we prove that if the premise $G \sep \vdash \Gamma, \vec{s}.A , \vec{r}.\negF{A}$ has a CAN-free derivation then the conclusion $G \sep \vdash \Gamma$ also has a CAN-free derivation. This of course implies the statement of the CAN-elimination theorem by using a simple inductive argument on the number of CAN's applications in a derivation.

As a preliminary step, we first invoke the M-elimination Theorem~\ref{thm:modal_free_m_elim} on the derivation of $G \sep \vdash \Gamma, \vec{s}.A , \vec{r}.\negF{A}$ to remove possible occurrences of the M rule. In other words, we can assume that the derivation of $G \sep \vdash \Gamma, \vec{s}.A , \vec{r}.\negF{A}$ does not contain applications of the M rule. This is important since the M rule is problematic to deal with in our inductive proof because its two premises can generally break the symmetry between  the weights of $A$ and $\negF{A}$ in the hypersequent. For instance, the induction hypothesis could not be used on the premises of the following instance of the M rule since the condition $\sum\vec{r} = \sum \vec{s}$ is not satisfied in either of the two premises: \[ \infer[\text{M}]{G \sep \vdash \Gamma, \vec{s}.A , \vec{r}.\negF{A}}{G\sep \vdash \Gamma, \vec{s}.A & G\sep \vdash \vec{r}.\negF{A}} \]

Hence, in what follows we assume that the  the derivation of $G \sep \vdash \Gamma, \vec{s}.A , \vec{r}.\negF{A}$ is M-free and the proof proceeds by induction on the structure of $A$.

The base case in when $A=x$, i.e., when $A$ is atomic.  Proving this case is relatively straightforward, once the critical case regarding the M rule can be ignored, as explained above.
%%straightforward in presence of the M rule, as explained above, but it becomes much easier in the \hr\ system without the M rule. 

%This is why the M-elimination theorem, which asserts the equivalence between $\hr\setminus\{\text{CAN}\}$ and $\hr\setminus \{\text{CAN,M}\}$, is useful.

For the inductive case, when $A$ is a complex term we invoke the invertibility theorem. For example, if $A=B+C$, the invertibility theorem states that $G \sep \vdash \Gamma, \vec{s}.B,  \vec{s}.C, \vec{r}.\negF{B},\vec{r}.\negF{C}$ must also have a CAN-free derivation (and also M-free by application of the M-elimination Theorem~\ref{thm:modal_free_m_elim}). We then note that, since $B$ and $C$ both have lower complexity than $A$, it follows from two applications of the inductive hypothesis that $G \sep \vdash \Gamma$ has a M-free CAN-free derivation, as desired.
\end{proof}

\begin{rem}
Note, with reference to Remark~\ref{cutcomment1}, that Theorems~\ref{thm:modal_free_m_elim} and~\ref{thm:modal_free_can_elim} together imply also a CUT-elimination theorem. 
\end{rem}

The CAN rule is not analytical, meaning that in its premise there is a term not appearing (even as a subterm) in the conclusion. This is why the above CAN-elimination is of key importance, especially in the context of proof search.

However there is another rule of \hr\ which is not analytical: the T rule. The following theorem shows that also the T rule is admissible if the scalars appearing in the end hypersequent $G$ are all rational numbers.

\begin{thm}[Rational T-elimination]
\label{thm:modal_free_conservativity}
If a hypersequent $G$ with only rational numbers has a CAN-free derivation, then it has also a CAN-free and T-free derivation. 
\end{thm}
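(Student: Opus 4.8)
The plan is to prove Rational T--elimination by following the same strategic pattern used for CAN--elimination: reduce to a normal form (here, atomic hypersequents) via the invertibility theorem, and then handle the resulting combinatorial core directly. More precisely, suppose $G$ has only rational scalars and a CAN--free derivation. First I would apply CAN--free invertibility (Theorem~\ref{thm:modal_free_invertibility}) repeatedly to push all logical rules down, reducing the problem to showing that finitely many \emph{atomic} hypersequents $G_1,\dots,G_n$ — each obtained from $G$ by the systematic logical decomposition of Figure~\ref{fig:algo_to_inv} — have CAN--free and T--free derivations. Since the logical rules themselves are unaffected by whether $T$ is used, it suffices to give, for each atomic $G_i$ with rational scalars that is CAN--free derivable, a derivation that is both CAN--free and T--free. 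One subtlety to check here: the logical decomposition preserves rationality of the scalars (the $(\times)$ rule multiplies rationals by rationals, etc.), so each $G_i$ indeed has only rational scalars.

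Next I would analyse CAN--free derivations of atomic hypersequents. Here I expect to invoke M--elimination (Theorem~\ref{thm:modal_free_m_elim}) to further assume the derivation of $G_i$ uses neither CAN, M, nor any logical rule — only the structural rules INIT, W, C, S, T, ID. The key observation is that the only rule in this restricted set that introduces irrational scalars is T (which multiplies an entire sequent by an arbitrary positive real), and that T can be ``pushed'' through the other structural rules. Concretely, I would show that any CAN--free, M--free, logical--rule--free derivation can be transformed so that all T--applications are gathered immediately above the INIT axioms (or eliminated), essentially because T commutes with W, C, S, and ID (multiplying by a fixed scalar commutes with splitting/contracting/weakening, and the side condition $\sum r_i = \sum s_i$ of ID is preserved under rescaling), while the net effect of a chain of T's is a single T whose scalar is the product. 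Once all scalars in the end hypersequent are rational, the remaining question is whether one still needs a T with an irrational scalar somewhere in the middle. The cleanest route is a ``clearing denominators'' argument: given a CAN--free derivation of a rational atomic hypersequent, rescale so that all weighted terms carry integer weights, and then observe that integer-weighted atomic sequents can be derived using only ID, S, W, C, INIT — each ID step handling a matched batch $\vec{r}.x,\vec{s}.\overline{x}$ with $\sum\vec r=\sum\vec s$, which over the integers is purely combinatorial. This amounts to showing: a rational atomic hypersequent $H$ satisfies $\semProve \sem{H}\geq 0$ iff $H$ has a CAN--free, T--free derivation, which by soundness/completeness reduces to the T--free derivability being complete for rational atomic hypersequents.

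The main obstacle, I expect, is precisely the middle--of--derivation use of T: a CAN--free derivation of a rational $G$ may internally derive lemmas with irrational weights (e.g. introduce $\sqrt2.x$ and later cancel it via ID against $\sqrt2.\overline x$), so a naive bottom--up rescaling does not obviously work. The fix is to go top--down: since after invertibility and M--elimination we are in the structural fragment $\{$INIT, W, C, S, T, ID$\}$, and T is the \emph{only} source of new scalars, I would argue by induction on the derivation that every sequent occurring in it has weights lying in the $\mathbb{Q}$--vector space spanned by $\{1\}$ together with the (finitely many) T--scalars used — and then show that whenever a T--scalar is irrational it can be deleted by a uniform substitution argument (replace every occurrence of the irrational scalar $t$ throughout the derivation by a rational $q$ close enough that all side conditions $\sum r_i=\sum s_i$ — which are $\mathbb{Q}$--linear identities in the T--scalars — remain satisfied; since these identities hold for $t$ and are linear, they hold for a suitable $q$, in fact for a Zariski--dense set of $q$). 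Iterating removes all irrational T--scalars, and a final T--application with rational scalar is then absorbed by the integer--clearing argument above. The delicate point to get right is that the finitely many side conditions along the derivation form a finite system of linear equations over $\mathbb{Q}$ in the T--scalars, so the solution set is a rational affine subspace, hence contains rational points arbitrarily close to any real solution — this is what licenses the substitution.
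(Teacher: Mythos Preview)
Your overall plan matches the paper's: reduce to atomic hypersequents via CAN--free invertibility (noting that the logical rules preserve rationality), and then show that an atomic hypersequent with rational scalars that is CAN--free derivable is also CAN--free T--free derivable. The paper executes the second step differently, though, and your version has a gap in the place you yourself flag as delicate.

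The paper does \emph{not} work with the T--scalars of a given derivation. Instead it proves two characterisation lemmas for atomic hypersequents $\vdash\Gamma_1\mid\dots\mid\vdash\Gamma_m$: Lemma~\ref{lem:modal_free_lambda_prop} says $G$ is (CAN--free) derivable iff there exist reals $t_1,\dots,t_m\ge 0$, not all zero, with $\sum_i t_i(\sum\vec r_{i,j})=\sum_i t_i(\sum\vec s_{i,j})$ for every variable pair; Lemma~\ref{lem:modal_free_int_lambda_prop} says $G$ is CAN--free \emph{and} T--free derivable iff there exist such $t_i$ in $\mathbb{N}$. These $t_i$ are one per \emph{sequent} of $G$, and the system is genuinely linear in them with rational coefficients. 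The elementary extension $\mathbb{Q}\preceq\mathbb{R}$ for linear arithmetic then gives rational $t_i$, clearing denominators gives integers, and Lemma~\ref{lem:modal_free_int_lambda_prop} closes the loop. No M--elimination is needed.

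Your substitution argument, by contrast, treats the T--scalars of a particular derivation as the unknowns. The claim that the ID side conditions are $\mathbb{Q}$--linear in those T--scalars is false: a T applied to a sequent, followed by an S that merges it with another sequent, followed by another T on the merged sequent, produces weights that are \emph{products} of T--scalars, so the ID conditions become polynomial. And for polynomial systems over $\mathbb{Q}$ a real solution does not in general yield a rational one. Your intuition is salvageable precisely by reparametrising to one net scaling factor per sequent of $G$ --- but that is exactly the content of Lemma~\ref{lem:modal_free_lambda_prop}, and once you do that the derivation--manipulation scaffolding (pushing T's around, invoking M--elimination) becomes unnecessary.
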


It can be shown, however, that if $G$ contains irrational numbers, it is generally not possible to eliminate both rules CAN and T at the same time.

\begin{prop}
  \label{prof:can_or_T}
The system \hr\ without the CAN and T rules is incomplete. 
\end{prop}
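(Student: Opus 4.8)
The plan is to exhibit a single hypersequent $G$ that is derivable in $\hr$ but has no derivation avoiding both CAN and T. For the positive part, it suffices to pick $G$ with $\mathcal{A}_{\textnormal{Riesz}}\vdash \sem{G}\geq 0$ and invoke the completeness theorem (Theorem \ref{thm:modal_free_completeness}); in fact a CAN-free derivation exists by Theorem \ref{thm:modal_free_can_elim}, so the real content is that T is unavoidable once CAN is also forbidden. The natural candidate involves an irrational scalar, say $r=\sqrt{2}$: take something like $G = \ (\vdash 1.(r.x)\ |\ \vdash 1.\covar{x})$, whose interpretation is $(\sqrt 2)x \sqcup (-x)$, which is $\geq 0$ in every Riesz space (if $x\geq 0$ the first disjunct is $\geq 0$; if $x\leq 0$ the second is $\geq 0$). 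One should double-check, using the $\mathbb R$-completeness of the quasiequational theory (Example \ref{example:techback1}), that this inequality indeed holds universally, and possibly adjust the exact witness so that the negative part of the argument below goes through cleanly.

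The heart of the argument is a semantic invariant preserved by every rule of $\hr\setminus\{\text{CAN},\text{T}\}$ that fails for $G$. The key observation is that in the system without CAN and without T, the only place where scalars can be combined is inside the logical rule $(\times)$ (which multiplies a sequence of weights by a fixed scalar $s$ coming from the term $sA$ being analysed) and the rules ID/S/M which only add, split or merge weights without scaling them. I would make this precise by defining, for a derivation $\pi$, the set of scalars that "can appear": show by induction on $\pi$ that every weight occurring in any hypersequent of a CAN-free, T-free derivation of $G$ lies in the sub-semiring (or sub-$\mathbb{Q}$-vector-space, suitably formalised) of $\mathbb{R}$ generated by the scalars syntactically occurring in $G$ together with $\mathbb{Q}_{>0}$, closed under the operations actually used by the rules. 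The point of removing T is exactly that T is the one rule that lets an arbitrary fresh positive real enter a derivation; removing CAN is needed because CAN (via the M-encoding in Remark \ref{cutcomment1}, or directly) could otherwise smuggle in auxiliary scalars with no subformula constraint. Then, choosing $G$ so that its interpretation is $\geq 0$ but every atomic hypersequent reachable from $G$ by the logical rules genuinely requires a weight outside that generated set — e.g. an atomic sequent of the form $\vdash \vec{r}.x,\vec{s}.\covar{x}$ with $\sum\vec r=\sum\vec s$ forcing a rational combination that clashes with the irrational $r=\sqrt 2$ present in $G$ — yields a contradiction with the invariant, so no such derivation exists.

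The main obstacle is getting the invariant to be both true and strong enough. One must carefully track the scalars through the logical rules: $(\times)$ can multiply by the scalar $s$ appearing in a subterm $sA$, so the generated set must be closed under multiplication by all such $s$'s occurring in $G$, and through $(+)$, $(\sqcup)$, $(\sqcap)$, $0$ the weights $\vec r$ are merely copied. The ID and S rules impose the linear constraint $\sum\vec r=\sum\vec s$ on the atomic leaves, and one has to argue that, after exhaustively applying the invertible logical rules to $G$ (as in Figure \ref{fig:algo_to_inv}), the resulting atomic hypersequents cannot all be closed by ID/INIT/S/M/W/C within the allowed scalar set — this is where the precise choice of $G$ matters, and where a clean linear-algebra argument over $\mathbb{Q}$ (treating $\sqrt 2$ as $\mathbb Q$-linearly independent from $1$) finishes the proof. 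A secondary subtlety is the C rule interacting with nonempty side-hypersequents $G$, but since C only duplicates and merges existing sequents it preserves the invariant trivially; likewise W only deletes. I would therefore structure the write-up as: (1) define the admissible scalar set $\mathcal S_G$; (2) prove by induction on CAN-free T-free derivations that all weights lie in $\mathcal S_G$; (3) exhibit $G$ with $\sem{G}\geq0$ provable but every atomic consequence of $G$ violating (2); (4) conclude.
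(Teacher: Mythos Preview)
Your witness is essentially the paper's: after one application of the $\times$ rule, your $G$ becomes the atomic hypersequent $\vdash \sqrt{2}.x \mid \vdash 1.\covar{x}$, and the paper uses $\vdash r_1.x \mid \vdash r_2.\covar{x}$ with $r_1/r_2\notin\mathbb{Q}$ (Lemma~\ref{lem:modal_free_not_complete}). The positive half (derivability) is fine.

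The gap is in your negative half. Your proposed invariant, ``all weights lie in the sub-semiring $\mathcal S_G$ generated by the scalars of $G$ and $\mathbb{Q}_{>0}$'', is true but useless: for the atomic $G$ above every weight is already $\sqrt 2$ or $1$, both in $\mathcal S_G$, and nothing ever forces a weight \emph{outside} $\mathcal S_G$ to appear. Your step~(3), ``every atomic consequence of $G$ violates~(2)'', therefore cannot hold as written. The real obstruction is not which scalars occur but the \emph{multiplicities}: without T, the structural rules can only take $\mathbb{N}$-linear combinations of the sequents, so the ID side condition $\sum\vec r=\sum\vec s$ becomes $n_1\sqrt 2 = n_2\cdot 1$ for naturals $n_1,n_2$ not both zero. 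That is the $\mathbb{Q}$-linear-independence argument you mention at the end, but your $\mathcal S_G$ invariant does not set it up.

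The paper's route makes this precise in one stroke: Lemma~\ref{lem:modal_free_int_lambda_prop} shows, by induction on CAN-free T-free derivations, that an atomic hypersequent $\vdash\Gamma_1\mid\cdots\mid\vdash\Gamma_m$ is derivable in $\hr\setminus\{\text{CAN},\text{T}\}$ iff there exist $n_1,\dots,n_m\in\mathbb{N}$, not all zero, with $\sum_i n_i(\sum\vec r_{i,j})=\sum_i n_i(\sum\vec s_{i,j})$ for every variable pair. Applied to the two-sequent witness this is exactly $n_1 r_1=n_2 r_2$, impossible when $r_1/r_2$ is irrational. So the right invariant to carry through the induction is the tuple of natural-number coefficients on sequents, not a set of admissible scalars; once you track that, the proof is a few lines.
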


As mentioned earlier, using the invertibility theorem, it is possible to reduce the problem of deriving an hypersequent $G$ to the problem of deriving a number $G_1, \dots G_n$ of atomic (i.e., without logical symbols) hypersequents. This leads us to the  following result.

\begin{thm}[Decidability]
\label{thm:modal_free_decidability}
There is an algorithm to decide whether or not a hypersequent has a derivation.
\end{thm}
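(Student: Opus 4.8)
The plan is to combine the structural results established earlier into a terminating decision procedure. First I would use the CAN--elimination theorem (Theorem~\ref{thm:modal_free_can_elim}) to reduce the question ``does $G$ have a derivation?'' to ``does $G$ have a CAN--free derivation?'', since the two are equivalent. From that point on only the rules INIT, W, C, S, M, T, ID and the logical rules are available. Next I would invoke CAN--free invertibility (Theorem~\ref{thm:modal_free_invertibility}): applying the logical rules bottom--up is always safe, so I can exhaustively decompose every compound formula occurring in $G$. Since each logical rule strictly decreases the total size of the formulas in the hypersequent (and each rule has finitely many premises), this decomposition terminates and produces finitely many \emph{atomic} hypersequents $G_1,\dots,G_n$ — hypersequents in which every weighted term is of the form $r.x$ or $r.\covar{x}$ — such that $\prove G$ if and only if each $G_i$ has a CAN--free derivation.

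The crux is therefore to decide derivability of an \emph{atomic} hypersequent. Here I would appeal to the M--elimination theorem (Theorem~\ref{thm:modal_free_m_elim}) to further restrict attention to CAN--free and M--free derivations. For an atomic hypersequent, a CAN--free, M--free derivation, read from the root upwards, can only use W, C, S, T and ID before terminating at INIT (no logical rules apply, there are no compound terms). The key observation is that each of these rules, applied bottom--up, either removes atomic material (ID removes a balanced block $\vec{r}.x,\vec{s}.\covar{x}$ with $\sum r_i = \sum s_i$, S splits a sequent, W deletes a component), or is ``reversible for free'' (C, T merely duplicate/rescale components and need never increase the multiset of terms beyond what is already present). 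One then argues that deciding whether the atomic hypersequent reduces to INIT amounts to a feasibility question over the reals: does there exist a choice of positive rescaling factors (for the T and C applications) and a partition of the terms (for the S applications) so that everything cancels via ID? This can be phrased as the satisfiability of a sentence in the first--order theory of the ordered field $(\mathbb{R},+,\times,\le)$, which is decidable by Tarski's theorem~\cite{tarski1951}; alternatively one shows a small--model / bounded--search property bounding the size of the needed derivation in terms of $|G|$, so that brute--force search over derivation trees of bounded depth suffices.

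The main obstacle I expect is precisely the atomic case: one must show that the search space for atomic derivations can be bounded, i.e.\ that if an atomic hypersequent is derivable at all then it is derivable by a derivation whose size is effectively bounded by the data in $G$. Naively the rules T and C allow arbitrary rescaling and unbounded duplication, so a termination/boundedness argument is needed. I would handle this by normalising atomic derivations — pushing all C and T applications to a controlled position, merging consecutive rescalings, and observing that S and ID can only be applied a number of times bounded by the number of atomic occurrences in $G$ — thereby reducing to a linear-arithmetic feasibility problem with a bounded number of real unknowns, which is decidable. (This boundedness analysis is essentially the content of what later sections establish when proving rational T--elimination, Theorem~\ref{thm:modal_free_conservativity}, and the decidability of the modal system.) Once the atomic case is decided, reassembling the answer for the original $G$ via the finite decomposition tree of Figure~\ref{fig:algo_to_inv} is immediate, completing the algorithm.
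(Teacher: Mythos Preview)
Your proposal is correct and follows essentially the same route as the paper: reduce to atomic hypersequents via CAN--free invertibility, then express derivability of an atomic hypersequent as a first--order formula over the real closed field and appeal to Tarski. The paper makes the atomic step precise via Lemma~\ref{lem:modal_free_lambda_prop} (derivability $\Leftrightarrow$ existence of nonnegative reals $t_1,\dots,t_m$, not all zero, balancing the weighted variable/covariable sums for each $x_j$), and does so without invoking M--elimination, which is an unnecessary detour in your outline since that lemma handles the M case directly by induction on CAN--free derivations.
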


We remark that the above statement follows easily by the soundness Theorem~\ref{thm:modal_free_soundness} and the completeness Theorems~\ref{thm:modal_free_completeness}  and the known fact that the equational theory of Riesz spaces is decidable (see Example~\ref{example:techback1}).  Interestingly, however, we present an alternative proof that will be adaptable, in Section~\ref{subsec:decidability}, to obtain a decidability result for the system \hmr. The proof is based on considering a generalization of the concept of derivation where the scalars appearing in the hypersequents can be variables, rather than numerical constants. For instance, the derivation
$$\infer[\text{ID}]{ \vdash \Gamma, r.x,r.\covar{x}}{ 
\infer[\text{INIT}]{\vdash}{}}$$
\noindent is valid for any $r\in\mathbb{R}_{>0}$  and, similarly, the derivation
$$\infer[\text{ID}, r = s + t]{ \vdash \Gamma, r.x,s.\covar{x}, t.\covar{x}}{ 
\infer[\text{INIT}]{\vdash}{}}$$

\noindent
is valid for any values of reals $(r,s,t)\in\mathbb{R}_{>0}^3$ such that $r = s + t$. Lastly, the hypersequent containing two scalar-variables $\alpha,\beta$ and two concrete scalars $s$ and $t$
 $$
\vdash (\alpha^2 - \beta).x, s.\covar{x}, t.\covar{x}
$$ 
is derivable for any  assignment of concrete assignments $r_1,r_2\in \mathbb{R}$ to $\alpha$ and $\beta$ such that $(r_1)^2-r_2 > 0$ and $(r_1)^2-r_2 = s+t$.   Hence a hypersequent can be interpreted as describing the set of possible assignments to these real-valued variables that result in a valid concrete (i.e., where all scalars are numbers and not variables) derivation. 

The main idea behind the proof of Theorem~\ref{thm:modal_free_decidability} is that it is possible, given an arbitrary hypersequent $G$, to construct (automatically) a formula in the first order theory of the real closed field (FO$(\mathbb{R},+,\times, \leq)$) describing the set of valid assignments. Since this theory is decidable and has quantifier elimination~\cite{tarski1951}, it is possible to verify if this set is nonempty and extract a valid assignment to variables.
% but it will be adapted for the system \hmr{} in Section~\ref{subsec:decidability}, and thus is useful to understand the main idea of the algorithm without dealing with the additional difficulties generated by the $\Diamond$ operator.}

%%% Local Variables:
%%% mode: latex
%%% TeX-master: "main"
%%% End:

\subsection{Relations with the calculus \gasystem\ and l-ordered Abelian groups}

As mentioned earlier, our hypersequent calculus system \hr\ for the theory of Riesz spaces is an extension of the system \gasystem\ of~\cite{MOGbook, MOG2005} for the theory of lattice-ordered Abelian groups (laG). The equational theory ($\mathcal{A}_{\textnormal{laG}}$) of lattice-ordered Abelian groups can be defined by removing, from the signature of Riesz spaces, the scalar multiplication operations and, accordingly, the equational axioms regarding scalar multiplication. Integer scalars (e.g., $-3x$) can still be used as a short hand for repeated sums (e.g., $-(x + x + x)$). The system \hr\ stripped out of scalars is essentially identical to the system \gasystem. 

From our Rational T-elimination Theorem~\ref{thm:modal_free_conservativity} we obtain as a corollary the fact that the theory of Riesz spaces is a proof-theoretic conservative extension of the theory of lattice-ordered Abelian groups. 

\begin{prop}
Let $A$ be a term in the signature of lattice-ordered Abelian groups (i.e., a Riesz term where all scalars are natural numbers). Then 
$$
 \mathcal{A}_{\textnormal{Riesz}}\vdash A \geq 0  \Leftrightarrow \mathcal{A}_{\textnormal{laG}}\vdash A\geq 0.
$$

\end{prop}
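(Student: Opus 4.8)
The plan is to derive this conservativity result directly from the Rational T--elimination theorem (Theorem \ref{thm:modal_free_conservativity}), together with soundness and completeness of \hr. Let $A$ be a term in the signature of lattice--ordered Abelian groups, viewed as a Riesz term in which every scalar is a natural number (with integer scalars understood as abbreviations for repeated sums, as noted above). The key observation is that if $A$ is put into negation normal form, it contains only natural--number scalars, hence in particular only rational scalars, so the Rational T--elimination theorem applies to any \hr\ derivation of the one--sequent hypersequent $\vdash 1.A$.

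The forward direction is the substantive one. Suppose $\mathcal{A}_{\textnormal{Riesz}}\vdash A\geq 0$. By the completeness theorem (Theorem \ref{thm:modal_free_completeness}), the hypersequent $\vdash 1.A$ has a derivation in \hr, and by the CAN--elimination theorem (Theorem \ref{thm:modal_free_can_elim}) it has a CAN--free derivation. Since all scalars occurring in $A$ (and hence in $\vdash 1.A$) are rational, Theorem \ref{thm:modal_free_conservativity} yields a CAN--free and T--free derivation of $\vdash 1.A$. Now I would argue that the fragment of \hr\ consisting of all rules \emph{except} CAN, T and the logical rule $(\times)$ for scalar multiplication --- when restricted to hypersequents in which no scalar multiplication term $sA$ occurs and all weights are natural numbers --- coincides, up to the syntactic bookkeeping of weighted terms, with the system \gasystem\ of \cite{MOGbook, MOG2005} (see Appendix \ref{ga:section}). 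Indeed, a weighted term $n.A$ with $n\in\mathbb{N}$ can be read as $n$ copies of $A$ in the sense of \gasystem, the rule T with an integer multiplier is simulated by repeated use of contraction, the ID rule is the \textnormal{ID-ax} axiom, and the remaining structural and logical rules (W, C, S, M, $0$, $+$, $\sqcup$, $\sqcap$) are literally those of \gasystem. Hence the T--free, CAN--free, $(\times)$--free derivation we obtained transports to a \gasystem\ derivation of $A\geq 0$; by soundness of \gasystem\ for lattice--ordered Abelian groups (\cite{MOGbook, MOG2005}), $\mathcal{A}_{\textnormal{laG}}\vdash A\geq 0$. Alternatively, and perhaps more cleanly, one can avoid invoking \gasystem\ entirely: a derivation using none of CAN, T or $(\times)$ is \emph{scalar--oblivious}, so by the soundness theorem (Theorem \ref{thm:modal_free_soundness}) its conclusion is provable using only those axioms of $\mathcal{A}_{\textnormal{Riesz}}$ that do not mention scalar multiplication, i.e.\ the axioms of $\mathcal{A}_{\textnormal{laG}}$; one then checks that the soundness argument, restricted to this fragment, only ever appeals to laG--axioms.

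The reverse direction is immediate: $\mathcal{A}_{\textnormal{laG}}$ is a sub--theory of $\mathcal{A}_{\textnormal{Riesz}}$ (the laG--axioms are exactly the Riesz axioms with the scalar--multiplication axioms deleted), so any equational--logic derivation of $A\geq 0$ from $\mathcal{A}_{\textnormal{laG}}$ is also one from $\mathcal{A}_{\textnormal{Riesz}}$.

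The main obstacle I anticipate is the precise matching in the forward direction between the T--free, CAN--free fragment of \hr\ acting on natural--number--weighted, scalar--multiplication--free hypersequents and the system \gasystem\ --- in particular verifying that the T rule with a rational (here, integer) multiplier is genuinely redundant once weights are natural numbers and $(\times)$ is absent, and that nothing in a T--free, $(\times)$--free derivation can reintroduce non--integer weights or scalar--multiplication terms when the endsequent has none. The cleanest route is probably the soundness--based alternative: observe that in any CAN--free, T--free, $(\times)$--free derivation every hypersequent appearing is scalar--multiplication--free with natural--number weights, and that the soundness proof for such derivations uses only the laG--axioms, so it directly establishes $\mathcal{A}_{\textnormal{laG}}\vdash\sem{G}\geq 0$ without passing through \gasystem\ at all.
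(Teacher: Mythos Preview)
Your proposal is correct and follows essentially the same route as the paper: both directions are handled identically (the reverse is trivial since $\mathcal{A}_{\textnormal{laG}}\subseteq\mathcal{A}_{\textnormal{Riesz}}$; the forward uses completeness of \hr, then CAN--elimination, then Rational T--elimination, then translation to \gasystem\ and its soundness). The paper simply writes ``This is essentially (the trivial translation details are omitted) translatable to a \gasystem\ derivation,'' whereas you spell out what that translation involves and flag the bookkeeping issues (absence of $(\times)$, integrality of weights); your added care is appropriate, and the ``soundness--based'' alternative you sketch is a reasonable variant, though the paper does not pursue it.
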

\begin{proof}
The ($\Leftarrow$) direction is trivial, since $\mathcal{A}_{\textnormal{Riesz}}$ is an extension of  $\mathcal{A}_{\textnormal{laG}}$ (using the same equalities as in Lemma~\ref{lem:into_nnf}, we can push the negation towards variables using the axioms of lattice-ordered Abelian groups).

 For the other direction, assume  $\mathcal{A}_{\textnormal{Riesz}}\vdash A \geq 0$. Then, by the completeness theorem, the hypersequent $\vdash A$ has a \hr\ derivation. Then, by the CAN-elimination and the rational T-elimination theorems, $\vdash A$ has a CAN-free and T-free derivation. This is essentially (the trivial translation details are omitted) translatable to a \gasystem\ derivation of $\vdash A$. Since the system \gasystem\ is sound and complete with respect to $\mathcal{A}_{\textnormal{laG}}$ we deduce that  $\mathcal{A}_{\textnormal{laG}}\vdash A\geq 0$ as desired.
\end{proof}

Similarly, we could define the theory of Riesz spaces over rationals ($\mathcal{A}_{\mathbb{Q}\textnormal{-Riesz}}$), defined just as Riesz spaces but over the field $\mathbb{Q}$ of rational numbers instead of the field $\mathbb{R}$ of reals. Again, from~\ref{thm:modal_free_conservativity}, we get the following conservativity result.

\begin{prop}
Let $A$ be a term in the signature of Riesz spaces over rationals. Then 
$$
 \mathcal{A}_{\textnormal{Riesz}}\vdash A \geq 0  \Leftrightarrow \mathcal{A}_{\mathbb{Q}\textnormal{-Riesz}} \vdash A\geq 0.
$$

\end{prop}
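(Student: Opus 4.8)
The plan is to reproduce, almost verbatim, the proof of the preceding proposition on lattice--ordered Abelian groups, now with the \emph{rational fragment} $\hr_{\mathbb{Q}}$ of \hr\ --- the restriction in which every scalar occurring in a weighted term, a rule instance, or a side condition is rational --- playing the role that \gasystem\ plays there. This fragment is sound (and in fact complete) with respect to $\mathcal{A}_{\mathbb{Q}\textnormal{--Riesz}}$ by exactly the arguments of Theorems \ref{thm:modal_free_soundness} and \ref{thm:modal_free_completeness}; the only point worth flagging is that the soundness construction of Theorem \ref{thm:modal_free_soundness} associates to a \hr\ derivation an equational--logic derivation of $\sem{G}\geq 0$ that invokes only axiom instances whose scalars already occur in the given derivation, so that on an all--rational derivation it uses only axioms of $\mathcal{A}_{\mathbb{Q}\textnormal{--Riesz}}$.

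The $(\Leftarrow)$ direction is immediate: reading a rational--scalar term as a real--scalar term identifies every axiom of $\mathcal{A}_{\mathbb{Q}\textnormal{--Riesz}}$ with an axiom of $\mathcal{A}_{\textnormal{Riesz}}$ (the scalar--multiplication and compatibility axioms specialise from real to rational scalars), so any derivation in equational logic from $\mathcal{A}_{\mathbb{Q}\textnormal{--Riesz}}$ is already one from $\mathcal{A}_{\textnormal{Riesz}}$. For $(\Rightarrow)$, suppose $\mathcal{A}_{\textnormal{Riesz}}\vdash A\geq 0$ where $A$ has only rational scalars. By the completeness theorem (Theorem \ref{thm:modal_free_completeness}) the hypersequent $\vdash A$ has a \hr\ derivation; by CAN--elimination (Theorem \ref{thm:modal_free_can_elim}) it has a CAN--free one; and since $\vdash A$ contains only rational scalars, rational T--elimination (Theorem \ref{thm:modal_free_conservativity}) yields a derivation that is both CAN--free and T--free. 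One then checks, rule by rule, that the only inference rules of \hr\ which --- read from conclusion to premises --- can introduce a scalar not already occurring in the conclusion are CAN (which adds cut weighted terms $\vec{s}.A,\vec{r}.\negF{A}$ with fresh real scalars) and T (which rescales a component by a fresh real scalar); all remaining rules either delete, duplicate, or re--partition weighted terms, or reuse scalars of the conclusion (in the case of $\times$, using the scalar $s$ that occurs inside the term $sA$ of the conclusion). Hence our CAN--free, T--free derivation of $\vdash A$, whose conclusion has only rational scalars, has only rational scalars throughout, i.e. it is a derivation in $\hr_{\mathbb{Q}}$. By soundness of $\hr_{\mathbb{Q}}$ with respect to $\mathcal{A}_{\mathbb{Q}\textnormal{--Riesz}}$ we conclude $\mathcal{A}_{\mathbb{Q}\textnormal{--Riesz}}\vdash A\geq 0$, as required.

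The single step that is not a mechanical transcription of the earlier argument is the twofold ``scalar--locality'' observation: that deleting CAN and T confines a derivation of a rational hypersequent to the rationals, and that the soundness construction of Theorem \ref{thm:modal_free_soundness} correspondingly descends to $\mathcal{A}_{\mathbb{Q}\textnormal{--Riesz}}$. Each is a short inspection of the rules rather than a real computation, but this is where the content of the conservativity statement actually lies; everything else is bookkeeping inherited from the laG case.
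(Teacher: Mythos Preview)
Your proof is correct and follows exactly the approach the paper intends: the paper does not spell out a separate proof for this proposition but simply says ``Again, from [the rational T--elimination theorem], we get the following conservativity result,'' pointing back to the argument for the lattice--ordered Abelian group case. Your write--up is in fact more detailed than the paper's, since you make explicit the scalar--locality observation (that only CAN and T can introduce fresh scalars, and that the soundness argument descends to $\mathcal{A}_{\mathbb{Q}\textnormal{--Riesz}}$) which the paper leaves as ``trivial translation details.''
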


Both conservativity results are known as folklore in the theory of Riesz spaces. It is perhaps interesting, however, that here we obtain them in a completely syntactical (proof theoretic) way.

Compared to the proof method used in~\cite{MOGbook, MOG2005} to prove the CAN-elimination theorem, our approach is novel in that our proof is based on the M-elimination theorem. We remark here that a proof of all the theorems stated in this section could have been obtained without using the M-elimination theorem, and instead following the proof structure adopted in~\cite{MOGbook, MOG2005}. The proof technique based on the M-elimination theorem will be however of great value in proving the CAN elimination of the system \hmr{} in Section~\ref{modal_section}.

%%% Local Variables:
%%% mode: latex
%%% TeX-master: "main"
%%% End:

\subsection{Some technical lemmas}
\label{subsec:modal_free_tech_lemmas}

Before embarking in the proofs of the theorems stated in Section~\ref{modal_free_section}, we prove in this section a few useful routine lemmas that will be used often.

Our first lemma states that the following variant of the ID rule (see Figure~\ref{rules:hr}) where general terms $A$ are considered rather than just variables, is admissible in the proof system \hr. 
$$
 \infer[\text{ID},\sum \vec{r} = \sum \vec{s}]{G \sep \vdash \Gamma, \vec{r}.A,\vec{s}.\covar{A}}{G \sep \vdash \Gamma}
$$
Formally, we prove the admissibility of a slightly more general rule which can act on several sequents of the hypersequent at the same time.

\begin{lem}
\label{lem:modal_free_ext_ID_rule}
For all terms $A$, numbers $n>0$, and vectors $\vec{r_i}$ and $\vec{s}_i$, for $1\leq i \leq n$, such that   $\sum \vec{r_i} = \sum \vec{s_i}$,
\begin{center}
if $\prove \left [ \vdash \Gamma_i \right ] _{i = 1}^n$ then $\prove \left[ \vdash \Gamma_i, \vec{r_i}.A,\vec{s_i}.\negF{A}\right ] _{i = 1}^n$ 
\end{center}
\end{lem}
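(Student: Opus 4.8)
The plan is to prove this by induction on the structure of the term $A$, exactly mirroring the structure of the CAN-elimination argument that the paper is building towards. For each shape of $A$ we reduce the goal to the same statement for strictly simpler terms, and the base case handles $A$ an atom (a variable $x$, a negated variable $\covar{x}$). Since the rule ID already handles single occurrences, the multiset-indexed formulation with the several $\vec{r_i},\vec{s_i}$ and the side condition $\sum\vec{r_i}=\sum\vec{s_i}$ is really the natural induction-loaded statement: when we peel off a connective from $A$, the vectors get replicated or rescaled, and we need the flexibility of arbitrary vectors on each sequent to keep the induction going.

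First I would dispatch the base cases. If $A=x$ is a variable, then $\covar{A}=\covar{x}$ and the desired rule is, up to reordering, literally the ID rule of Figure \ref{rules:GA_modal_free} applied once per sequent $i$ (the side condition $\sum\vec{r_i}=\sum\vec{s_i}$ is exactly the side condition of ID); an easy induction on $n$ finishes it, starting from $\prove[\vdash\Gamma_i]_{i=1}^n$ and applying ID to each index in turn. The case $A=\covar{x}$ is symmetric, using that $\covar{\covar{x}}=x$ and swapping the roles of $\vec{r_i}$ and $\vec{s_i}$. Then come the inductive cases, one per term constructor:
\begin{itemize}
\item $A=0$: here $\covar{0}=0$ (or rather $0$, since $\overline{0}=0$ up to the conventions), and applying the logical rule $(0)$ backwards/forwards lets us delete all the $\vec{r_i}.0$ and $\vec{s_i}.0$; so from $\prove[\vdash\Gamma_i]$ we get $\prove[\vdash\Gamma_i,\vec{r_i}.0,\vec{s_i}.0]$ by a single application of rule $(0)$ per sequent (or one combined application).
\item $A=sB$ for a scalar $s>0$: then $\covar{A}=s\covar{B}$. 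Apply the $(\times)$ rule to turn each $\vec{r_i}.(sB)$ into $(s\vec{r_i}).B$ and each $\vec{s_i}.(s\covar{B})$ into $(s\vec{s_i}).\covar{B}$; since $\sum(s\vec{r_i})=s\sum\vec{r_i}=s\sum\vec{s_i}=\sum(s\vec{s_i})$, the side condition is preserved and the inductive hypothesis for $B$ applies.
\item $A=B+C$: then $\covar{A}=\covar{B}+\covar{C}$. Apply $(+)$ to split each $\vec{r_i}.(B+C)$ into $\vec{r_i}.B,\vec{r_i}.C$ and likewise on the $\covar{A}$ side. Now invoke the inductive hypothesis first for $C$ (absorbing the $\vec{r_i}.C,\vec{s_i}.C$, side condition $\sum\vec{r_i}=\sum\vec{s_i}$ still holds) and then for $B$.
\item $A=B\sqcup C$: then $\covar{A}=\covar{B}\sqcap\covar{C}$. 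This is the one case that needs a little care. Use the $(\sqcup)$ rule on $\vec{r_i}.(B\sqcup C)$, which duplicates each sequent into a $B$-version and a $C$-version; use $(\sqcap)$ on $\vec{s_i}.(\covar{B}\sqcap\covar{C})$, which also branches. One has to be slightly careful to organize the derivation so that in each resulting branch the inductive hypothesis can be applied; the cleanest route is to first eliminate $\vec{r_i}.(B\sqcup C)$ via $(\sqcup)$ on every sequent, and in the resulting hypersequents apply $(\sqcap)$ on the $\covar A$ component only in the branch where it is still needed, or simply note that $(\sqcap)$ is invertible-style and treat the $B$ and $C$ sub-derivations separately, invoking the inductive hypothesis for $B$ in the $B$-branches and for $C$ in the $C$-branches. (Alternatively and more slickly: derive it by encoding, using the already-proven derivability of the extended ID rule for $B$ and $C$ plus the logical rules, but the direct induction is fine.)
\item $A=B\sqcap C$: symmetric to the previous case, with the roles of $(\sqcup)$ and $(\sqcap)$ swapped and $\covar{A}=\covar{B}\sqcup\covar{C}$.
\end{itemize}

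The main obstacle, as usual in these arguments, is bookkeeping in the $\sqcup/\sqcap$ case: when one logical rule (say $\sqcup$) duplicates a sequent and the dual rule ($\sqcap$) branches the derivation, one must make sure the two sets of vectors $\vec{r_i},\vec{s_i}$ are distributed consistently across the branches so that the side condition $\sum\vec{r_i}=\sum\vec{s_i}$ is available wherever the inductive hypothesis is applied. I expect to handle this by doing the eliminations in a fixed order — first remove the occurrences of $A$ (the $\sqcup$ side) with $(\sqcup)$ across all sequents, then in each leaf hypersequent remove the occurrences of $\covar A$ (the $\sqcap$ side) with $(\sqcap)$ — so that at each leaf the remaining goal is of the form $\prove[\vdash\Gamma_i',\vec{r_i}.B,\vec{s_i}.\covar B]$ or the $C$-analogue, with the very same vectors and hence the very same side condition, matching the inductive hypothesis exactly. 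Everything else is a routine, purely syntactic manipulation of the rules of Figure \ref{rules:GA_modal_free}. Note that this lemma does not use the CAN rule, so it holds for CAN-free derivations as well, which is what its later uses (e.g.\ in Remark \ref{cutcomment1}) require.
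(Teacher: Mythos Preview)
Your overall plan—structural induction on $A$—is exactly the paper's approach, and your treatment of the base case and of the connectives $0$, $\times$, $+$ is correct and matches the paper.

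The $\sqcup/\sqcap$ case, however, has a real gap. You propose to apply $(\sqcup)$ first across all $n$ sequents, obtaining a single hypersequent with $2n$ sequents (a $B$-copy and a $C$-copy of each $\Gamma_i$), and then apply $(\sqcap)$ to all occurrences of $\covar{B}\sqcap\covar{C}$. But $(\sqcap)$ branches the derivation, and in some of the $2^{2n}$ resulting leaves you will have, for a given index $i$, the $B$-copy carrying $\vec{s_i}.\covar{C}$ \emph{and} the $C$-copy carrying $\vec{s_i}.\covar{B}$. Both sequents for that index are then mismatched, and your claim that ``at each leaf the remaining goal is of the form $[\vdash\Gamma_i',\vec{r_i}.B,\vec{s_i}.\covar B]$ or the $C$-analogue'' is simply false for those leaves. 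You cannot weaken away both bad copies (you would lose $\Gamma_i$ entirely and the inductive hypothesis no longer applies to the right base hypersequent), and the inductive hypothesis does not let you absorb a pair $\vec{r_i}.B,\vec{s_i}.\covar{C}$.

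The paper's fix is to reverse the order: apply $(\sqcap)$ \emph{first} across all $n$ sequents (this creates $2^n$ branches; in each branch, sequent $i$ has a fixed choice $D_i\in\{B,C\}$ on the $\vec{r_i}$ side), and only \emph{then} apply $(\sqcup)$ to duplicate each sequent into a $\covar{B}$-copy and a $\covar{C}$-copy. Now for every $i$, exactly one of the two duplicates matches $D_i$, so a single application of W per index discards the mismatched copy. The leaf is then $[\vdash\Gamma_i,\vec{r_i}.D_i,\vec{s_i}.\covar{D_i}]_{i=1}^n$, to which one applies the inductive hypothesis for $B$ (with empty vectors on the indices where $D_i=C$) and then for $C$. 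The order of $(\sqcap)$ before $(\sqcup)$ is what guarantees at least one good copy per index; in your order this guarantee is lost.
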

\begin{proof}
We prove the result by induction on $A$.
  \begin{itemize}
  \item If $A$ is a variable, we simply use the ID rule $n$ times.
  \item If $A= 0$, we use the $0$ rule $n$ times.
  \item If $A = sB$, we use the $\times$ rule $2n$ times and conclude with the induction hypothesis.
  \item If $A = B + C$, we use the $+$ rule $2n$ times and conclude with the induction hypothesis.
  \item For the case $A = B \sqcap C$ or $A = B \sqcup C$, we first use the $\sqcap$ rule $2^n -1$ times --- one time on the conclusion, then again on the two premises, then on the four premises and so forth until we used the $\sqcap$-rule for all sequents --- and then the $\sqcup$ rule $n$ times on each premise and the W rule $n$ times on each premise to remove the sequents with both $B$ and $C$ in them. We can then conclude with the induction hypothesis.
    
    \[ \scalebox{0.85}{
        \infer[\sqcap]{[\vdash \Gamma_i, \vec{r}_i.(B \sqcap C), \vec{s}_i.(\negF{B} \sqcup \negF{C})]_{i=1}^n}{\ddots & \infer{\vdots}{\infer[\sqcup^n]{[\vdash \Gamma_i, \vec{r}_i.B, \vec{s}_i.(\negF{B} \sqcup \negF{C})]_{i=0}^k \sep [\vdash \Gamma_i, \vec{r}_i.C, \vec{s}_i.(\negF{B} \sqcup \negF{C})]_{i=k+1}^n}{\infer[\text{W}^n]{[\vdash \Gamma_i, \vec{r}_i.B, \vec{s}_i.\negF{B}]_{i=0}^k \sep [\vdash \Gamma_i, \vec{r}_i.B, \vec{s}_i.\negF{C}]_{i=0}^k \sep [\vdash \Gamma_i, \vec{r}_i.C, \vec{s}_i.\negF{B}]_{i=k+1}^n \sep [\vdash \Gamma_i, \vec{r}_i.C, \vec{s}_i.\negF{C}]_{i=k+1}^n}{\infer[\text{IH}^2]{[\vdash \Gamma_i, \vec{r}_i.B, \vec{s}_i.\negF{B}]_{i=0}^k \sep [\vdash \Gamma_i, \vec{r}_i.C, \vec{s}_i.\negF{C}]_{i=k+1}^n}{[\vdash \Gamma_i]_{i=0}^n}} }} & \iddots}
      }\]
  \end{itemize}

  Note that the premises obtained after applying the $\sqcap$-rule can have a different shape than the displayed premise in the derivation above, where $B$ and $C$ were chosen. Indeed, the general shape of the premise can be any combination of $B$ and $C$ appearing in the sequents.
\end{proof}

The next result states that derivability in the \hr\ system is preserved by substitution of terms for variables.

\begin{lem}
  \label{lem:modal_free_subst}
  For all hypersequents $G$ and terms $A$, if $\prove G$ then $\prove G[A/x]$.
\end{lem}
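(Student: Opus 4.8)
The plan is to prove Lemma~\ref{lem:modal_free_subst} by a straightforward induction on the structure of the derivation of $\prove G$. The key observation is that every rule of \hr\ (see Figure~\ref{rules:GA_modal_free}) is \emph{stable under substitution}: if we take a rule instance with conclusion $G$ and premises $G_1,\dots,G_k$ and apply the substitution $[A/x]$ uniformly to every hypersequent involved, we again obtain a valid rule instance with conclusion $G[A/x]$ and premises $G_1[A/x],\dots,G_k[A/x]$. This is because the side conditions of the rules only constrain the scalars (e.g.\ $\sum r_i = \sum s_i$ in ID and CAN), which are untouched by substituting a term for a variable, and because substitution commutes with all the term-forming operations as well as with negation on NNF terms (the latter being exactly the content of the proposition stating $\negF{A}[B/x] = \negF{A[B/x]}$).

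First I would set up the induction: given a derivation of $\prove G$, proceed by case analysis on the last rule applied. Most cases are immediate. For the \textbf{INIT} axiom the conclusion $\vdash$ contains no terms, so $(\vdash)[A/x] = (\vdash)$ and we reuse INIT. For the structural rules \textbf{W}, \textbf{C}, \textbf{S}, \textbf{M}, \textbf{T} the substitution simply distributes over the multiset structure of sequents and hypersequents and commutes with scalar multiplication on weighted terms ($(r.B)[A/x] = r.(B[A/x])$), so each becomes a valid instance of the same rule applied to the substituted premises, which are derivable by the induction hypothesis. For the \textbf{ID} rule the only subtlety is that the distinguished variable in the rule may be $x$ itself: if the rule introduces $\vec r.x, \vec s.\covar x$ and we substitute $[A/x]$, the conclusion becomes $\vdash \Gamma[A/x], \vec r.A, \vec s.\covar A$, which is no longer an ID instance — but it is exactly an instance of the \emph{extended} ID rule of Lemma~\ref{lem:modal_free_ext_ID_rule} (with the same side condition $\sum\vec r = \sum\vec s$), so we invoke that lemma on the substituted premise. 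If the ID rule acts on a variable $y \neq x$, substitution leaves $y$ and $\covar y$ untouched and it remains a plain ID instance.

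For the logical rules $0, +, \times, \sqcup, \sqcap$ and the \textbf{CAN} rule, substitution commutes with the principal connective: e.g.\ $(B+C)[A/x] = B[A/x] + C[A/x]$, $(sB)[A/x] = s(B[A/x])$, $(B\sqcup C)[A/x] = B[A/x]\sqcup C[A/x]$, and for CAN we additionally use $\covar{B}[A/x] = \covar{B[A/x]}$; hence each rule instance is transported to a valid instance of the same rule on the substituted hypersequents, and we conclude by the induction hypothesis on the premises. The main (and only genuine) obstacle is the ID-on-$x$ case just described, where the shape of the rule degenerates under substitution; this is precisely why Lemma~\ref{lem:modal_free_ext_ID_rule} was proved first, and with it in hand the whole argument is routine. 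No case requires CAN, so in fact the proof also shows that CAN--free derivability is preserved by substitution, a refinement that will be useful later.
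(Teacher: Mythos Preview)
Your proposal is correct and follows essentially the same approach as the paper: induction on the derivation, with all cases handled by applying the same rule to the substituted premises, except the ID rule applied to the distinguished variable $x$, which is handled by invoking Lemma~\ref{lem:modal_free_ext_ID_rule}. Your additional remark that the argument preserves CAN--free derivability is a correct and useful refinement not stated explicitly in the paper.
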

\begin{proof}
  We prove the result by induction on the derivation of $G$. Most cases are quite straightforward, we simply use the induction hypothesis on the premises and then use the same rule. For instance, if the derivation finishes with
  \[ \infer[+]{G \sep \vdash \Gamma, \vec{r}.(B + C)}{G \sep \vdash \Gamma, \vec{r}.B, \vec{r}.C}\]
  by induction hypothesis $\prove G[A/x] \sep \vdash \Gamma[A/x], \vec{r}.B[A/x], \vec{r}.C[A/x]$
  so
  \[ \infer[+]{G[A/x] \sep \vdash \Gamma[A/x], \vec{r}.(B + C)[A/x]}{G[A/x] \sep \vdash \Gamma[A/x], \vec{r}.B[A/x], \vec{r}.C[A/x]}\]
  The only tricky case is when the ID rule is used on the variable $x$, where we conclude using Lemma~\ref{lem:modal_free_ext_ID_rule}.
\end{proof}

The next lemma states that the logical rules are invertible using the CAN rule, meaning that if the conclusion is derivable, then the premises are also derivable. The difference with Theorem~\ref{thm:modal_free_invertibility} is that the derivations of the premises introduce a CAN rule.

\begin{lem}
  \label{lem:modal_free_can_full_invertibility}
  All logical rules are invertible.
\end{lem}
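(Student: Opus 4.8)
The plan is to prove that each logical rule of \hr{} is invertible, where the derivations of the premises are allowed to use the CAN rule. The key tool is the admissibility of the extended ID rule (Lemma \ref{lem:modal_free_ext_ID_rule}), which lets us "cancel" a general term $A$ against $\covar{A}$, together with the CAN rule itself. The general strategy for each logical rule is the same: given a CAN-derivation of the conclusion, re-introduce the subterms that were broken apart, and then use CAN to cancel the original compound formula. I would treat each logical rule ($0$, $+$, $\times$, $\sqcup$, $\sqcap$) in turn.

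Consider the representative case of the $(+)$ rule, whose conclusion is $G \sep \vdash \Gamma, \vec{r}.(A+B)$ and whose premise is $G \sep \vdash \Gamma, \vec{r}.A, \vec{r}.B$. Assuming $\prove G \sep \vdash \Gamma, \vec{r}.(A+B)$, I would first weaken in (using the W rule, componentwise, or rather apply Lemma \ref{lem:modal_free_ext_ID_rule}) to obtain $\prove G \sep \vdash \Gamma, \vec{r}.(A+B), \vec{r}.A, \vec{r}.B, \vec{r}.\covar{A}, \vec{r}.\covar{B}$: concretely, starting from the hypothesis, apply Lemma \ref{lem:modal_free_ext_ID_rule} to add $\vec{r}.A, \vec{r}.\covar{A}$ and again to add $\vec{r}.B, \vec{r}.\covar{B}$. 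Now note that $\covar{A+B} = \covar{A} + \covar{B}$, so applying the $(+)$ rule (in the "wrong" direction, i.e., as an ordinary rule instance read from premise to conclusion is not what we want; rather we use it to group $\vec{r}.\covar{A}, \vec{r}.\covar{B}$ into $\vec{r}.(\covar{A}+\covar{B}) = \vec{r}.\covar{A+B}$). This gives $\prove G \sep \vdash \Gamma, \vec{r}.A, \vec{r}.B, \vec{r}.(A+B), \vec{r}.\covar{A+B}$, and then a single application of CAN (with $\sum\vec{r} = \sum\vec{r}$) cancels $\vec{r}.(A+B)$ against $\vec{r}.\covar{A+B}$, yielding exactly the premise $\prove G \sep \vdash \Gamma, \vec{r}.A, \vec{r}.B$, as desired.

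The cases of $(\times)$ and $(0)$ are entirely analogous and simpler: for $(\times)$ use $\covar{sA} = s\covar{A}$ and the $(\times)$ rule to regroup scalars, then CAN; for $(0)$ the term $\vec{r}.0$ can be removed by first adding $\vec{r}.0, \vec{r}.\covar{0} = \vec{r}.0, \vec{r}.0$ via Lemma \ref{lem:modal_free_ext_ID_rule} and cancelling with CAN, or more directly since $0$ and $\covar{0}$ are both $0$. The cases $(\sqcup)$ and $(\sqcap)$ require a little more care because the rule has two premises or branches. For $(\sqcup)$, with conclusion $G \sep \vdash \Gamma, \vec{r}.(A\sqcup B)$ and premises $G \sep \vdash \Gamma, \vec{r}.A$ and $G \sep \vdash \Gamma, \vec{r}.B$: from $\prove G \sep \vdash \Gamma, \vec{r}.(A\sqcup B)$, add $\vec{r}.A, \vec{r}.\covar{A}$ via Lemma \ref{lem:modal_free_ext_ID_rule}; since $\covar{A} = \covar{A} \sqcap \covar{B}$ is \emph{not} valid, instead I would weaken $\vec{r}.\covar{A}$ up to $\vec{r}.\covar{A\sqcup B} = \vec{r}.(\covar{A}\sqcap\covar{B})$ — this needs an inequality argument, namely that $\prove G \sep \vdash \Delta, \vec{r}.\covar{A}$ implies $\prove G \sep \vdash \Delta, \vec{r}.(\covar{A}\sqcap\covar{B})$ is \emph{false} in general, so this direct route fails. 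The correct route is: apply the $(\sqcap)$ rule to $\vec{r}.(\covar{A}\sqcap\covar{B})$ which creates two branches needing $\vec{r}.\covar{A}$ and $\vec{r}.\covar{B}$ respectively; in the first branch start from the hypothesis, add $\vec{r}.A,\vec{r}.\covar{A}$ by Lemma \ref{lem:modal_free_ext_ID_rule}, and CAN away $\vec{r}.(A\sqcup B)$ against the freshly-built $\vec{r}.\covar{A\sqcup B}$; symmetrically in the second branch with $B$. This yields $\prove G \sep \vdash \Gamma, \vec{r}.A$ and $\prove G \sep \vdash \Gamma, \vec{r}.B$. For $(\sqcap)$, the dual construction works, using the $(\sqcup)$ rule on $\vec{r}.(\covar{A}\sqcup\covar{B})$.

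The main obstacle, as the sketch above already reveals, is organizing the $(\sqcup)$ and $(\sqcap)$ cases so that the lattice directionality is respected: one cannot simply "add garbage and cancel it" because weakening a term into a meet (or projecting out of a join) is not sound. The fix is to do the CAN-cancellation \emph{inside} each branch created by the dual connective rule, so that the compound term $\vec{r}.(A\sqcup B)$ (resp.\ $\vec{r}.(A\sqcap B)$) is always paired with its exact negation before CAN is applied. Once this bookkeeping is set up correctly, every case reduces to: (i) an application of Lemma \ref{lem:modal_free_ext_ID_rule} to introduce a term together with its negation, (ii) possibly an application of the dual logical rule to reshape the negated term, and (iii) a single CAN application. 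I would present the $(+)$ case in full as the template and then indicate the adjustments for the remaining four rules.
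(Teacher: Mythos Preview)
Your treatment of the $(+)$, $(\times)$ and $(0)$ cases is correct and is essentially the paper's argument: the paper combines the hypothesis with a fresh derivation of $\vdash \vec{r}.A,\vec{r}.B,\vec{r}.(\covar{A}+\covar{B})$ via the M rule and then applies CAN, whereas you use Lemma~\ref{lem:modal_free_ext_ID_rule} directly on the hypothesis sequent before regrouping and CAN. These are interchangeable.

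There is a genuine gap in the $\sqcup$/$\sqcap$ cases, however: you have the two rules swapped. In \hr{} the $\sqcup$ rule has a \emph{single} premise $G \sep \vdash \Gamma,\vec{r}.A \sep \vdash \Gamma,\vec{r}.B$ (one hypersequent with an extra sequent), while $\sqcap$ has \emph{two} premises $G \sep \vdash \Gamma,\vec{r}.A$ and $G \sep \vdash \Gamma,\vec{r}.B$. Your description of the $\sqcup$ case targets the two hypersequents $G \sep \vdash \Gamma,\vec{r}.A$ and $G \sep \vdash \Gamma,\vec{r}.B$ separately, but this is not the premise of $\sqcup$, and in fact it is simply false that $\prove G \sep \vdash \Gamma,\vec{r}.(A\sqcup B)$ implies $\prove G \sep \vdash \Gamma,\vec{r}.A$: take $\Gamma=\emptyset$, $A=\covar{x}$, $B=x$, $\vec{r}=(1)$. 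Consequently your sketch ``apply $(\sqcap)$ to $\vec{r}.(\covar{A}\sqcap\covar{B})$, then in each branch add $\vec{r}.A,\vec{r}.\covar{A}$ (resp.\ $B$) and CAN'' does not assemble into a valid derivation: the two $\sqcap$-branches must share the same context, and your second branch produces $\vec{r}.B$ where $\vec{r}.A$ is required.

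If you swap the labels, your outline becomes the paper's proof of the $\sqcap$ case almost verbatim (the paper uses M with the hypothesis and a derivation of $\vdash \vec{r}.A,\vec{r}.(\covar{A}\sqcup\covar{B})$ obtained by $\sqcup$, W, Lemma~\ref{lem:modal_free_ext_ID_rule}). The genuine $\sqcup$ case---deriving the single hypersequent $G\sep\vdash\Gamma,\vec{r}.A\sep\vdash\Gamma,\vec{r}.B$---is more delicate than you suggest: the paper needs two nested CAN applications and an S rule to shuffle $\vec{r}.A$ and $\vec{r}.\covar{B}$ between the two sequents, and your ``dual construction'' hint does not cover this.
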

\begin{proof}
  We simply use the CAN rule to introduce the operators. We will show the two most interesting cases, the other cases are trivial.
  \begin{itemize}
  \item The $\sqcap$ rule: we assume that $G \sep \vdash \Gamma, \vec{r}.(A \sqcap B)$ is derivable. The derivation of $G \sep \vdash \Gamma, \vec{r}.A$ is then:
    \[ \infer[\text{CAN}]{G \sep \vdash \Gamma, \vec{r}.A}{\infer[\text{M}]{G \sep \vdash \Gamma, \vec{r}.A, \vec{r}.(A \sqcap B), \vec{r}.(\negF{A} \sqcup \negF{B})}{G \sep \vdash \Gamma, \vec{r}.(A \sqcap B) & \infer[\sqcup]{\vdash \vec{r}.A, \vec{r}.(\negF{A} \sqcup \negF{B})}{\infer[\text{W}]{\vdash \vec{r}.A, \vec{r}.\negF{A} \sep \vdash \vec{r}.A, \vec{r}.\negF{B}}{\infer[\text{Lemma~\ref{lem:modal_free_ext_ID_rule}}]{\vdash \vec{r}.A, \vec{r}.\negF{A}}{\infer[\text{INIT}]{\vdash}{}}}}}} \]
    The derivation of $G \sep \vdash \Gamma, \vec{r}.B$ is similar.
  \item The $\sqcup$ rule: we assume that $G \sep \vdash \Gamma, \vec{r}.(A \sqcup B)$ is derivable. The derivation of $G \sep \vdash \Gamma, \vec{r}.A \sep \vdash \Gamma, \vec{r}.B$ is then:
    \[ \scalebox{0.85}{
        \infer[\text{CAN}]{G \sep \vdash \Gamma, \vec{r}.A \sep \vdash \Gamma, \vec{r}.B}{\infer[\text{M}]{G \sep \vdash \Gamma, \vec{r}.A, \vec{r}.(A \sqcup B), \vec{r}.(\negF{A} \sqcap \negF{B}) \sep \vdash \Gamma, \vec{r}.B}{\infer[\text{W}]{G \sep \vdash \Gamma, \vec{r}.(A \sqcup B) \sep \vdash \Gamma, \vec{r}.B}{G \sep \vdash \Gamma, \vec{r}.(A \sqcup B) } & \infer[\sqcap]{G \sep \vdash \vec{r}.A,\vec{r}.(\negF{A} \sqcap \negF{B}) \sep \vdash \Gamma, \vec{r}.B}{\infer[\text{W}^*]{G \sep \vdash \vec{r}.A,\vec{r}.\negF{A}\sep \vdash \Gamma, \vec{r}.B}{\infer[\text{Lemma~\ref{lem:modal_free_ext_ID_rule}}]{\vdash \vec{r}.A,\vec{r}.\negF{A}}{\infer[\text{INIT}]{\vdash}{}}} & \infer{G \sep \vdash \vec{r}.A,\vec{r}.\negF{B}\sep \vdash \Gamma, \vec{r}.B}{\Pi}}}} }\]
where $\Pi$ is the following derivation:
\[  \scalebox{0.85}{
    \infer[\text{CAN}]{G \sep \vdash \vec{r}.A,\vec{r}.\negF{B}\sep \vdash \Gamma, \vec{r}.B}{\infer[\text{M}]{G \sep \vdash \vec{r}.A,\vec{r}.\negF{B}\sep \vdash \Gamma, \vec{r}.B, \vec{r}.(A \sqcup B),\vec{r}.(\negF{A} \sqcap \negF{B})}{\infer[\text{W}]{G \sep \vdash \vec{r}.A,\vec{r}.\negF{B}\sep \vdash \Gamma,\vec{r}.(A \sqcup B)}{G \sep \vdash \Gamma,\vec{r}.(A \sqcup B)} & \infer[\text{W}^*]{G \sep \vdash \vec{r}.A,\vec{r}.\negF{B}\sep \vdash \vec{r}.B, \vec{r}.(\negF{A} \sqcap \negF{B})}{\infer[\sqcap]{ \vdash \vec{r}.A,\vec{r}.\negF{B}\sep \vdash \vec{r}.B, \vec{r}.(\negF{A} \sqcap \negF{B})}{\infer[\text{S}]{ \vdash \vec{r}.A,\vec{r}.\negF{B}\sep \vdash\vec{r}.B, \vec{r}.\negF{A}}{\infer[\text{Lemma~\ref{lem:modal_free_ext_ID_rule}}]{\vdash \vec{r}.A,\vec{r}.\negF{B},\vec{r}.B, \vec{r}.\negF{A}}{\infer[\text{Lemma~\ref{lem:modal_free_ext_ID_rule}}]{\vdash \vec{r}.B,\vec{r}.\negF{B}}{\infer[\text{INIT}]{\vdash}{}}}} & \infer[\text{W}]{\vdash \vec{r}.A,\vec{r}.\negF{B}\sep \vdash\vec{r}.B, \vec{r}.\negF{B}}{\infer[\text{Lemma~\ref{lem:modal_free_ext_ID_rule}}]{\vdash \vec{r}.B,\vec{r}.\negF{B}}{\infer[\text{INIT}]{\vdash}{}}}}}}}
  }\qedhere\]
  \end{itemize}
\end{proof}

\begin{rem}
  \label{rem:modal_free_inv_no_T_rule}
  The proof of invertibility does not introduce any new T rule, so if the conclusion of a logical rule has a T-free derivation then the premises also have T-free derivations.
\end{rem}
      
The next lemmas state that CAN-free derivability in the \hr\ system is preserved  by scalar multiplication.      
      
\begin{lem}
  \label{lem:modal_free_genT}
  Let $\vec{r}\in\mathbb{R}_{>0}$ be a non-empty vector and $G$ a hypersequent. If $\proveNC G \sep \vdash \vec{r}.\Gamma$ then $\proveNC G \sep \vdash \Gamma$.
\end{lem}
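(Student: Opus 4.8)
The statement is: if $\proveNC G \mid \vdash \vec{r}.\Gamma$ (a CAN-free derivation) and $\vec{r}$ is a non-empty vector of positive reals, then $\proveNC G \mid \vdash \Gamma$. The natural strategy is to isolate a single sequent of the hypersequent, scale only that sequent, and iterate. But $\vec{r}.\Gamma$ means $s_1.\Gamma,\dots,s_n.\Gamma$ concatenated into one sequent, so the intended reading is: the sequent $\vdash \vec{r}.\Gamma$ consists of $|\vec{r}|$ many scaled copies of $\Gamma$ glued together, and we want to recover a single copy $\vdash \Gamma$. I would prove this by induction on the CAN-free derivation of $G \mid \vdash \vec{r}.\Gamma$, carefully tracking the distinguished sequent.

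\textbf{Key steps.} First I would set up the right induction hypothesis. Since a generic derivation of $G \mid \vdash \vec{r}.\Gamma$ may apply rules that touch both $G$ and the distinguished sequent, and the structural rules (S, M, T) reshuffle weights, the cleanest formulation is probably: for every CAN-free derivation $\pi$ of a hypersequent of the form $H \mid \vdash \vec{r}.\Gamma$ (with $\vec{r}$ non-empty), there is a CAN-free derivation of $H \mid \vdash \Gamma$, of no greater height. Then I would proceed by case analysis on the last rule of $\pi$:
\emph{(i)} if the last rule acts only within $H$, apply the induction hypothesis to the premise(s) and reapply the rule;
\emph{(ii)} if the last rule is a logical rule or ID/0 acting on the distinguished sequent, note that it acts uniformly on each of the $|\vec{r}|$ copies of $\Gamma$ (up to the weight prefixes), so the single-copy version of the same rule applies after the induction hypothesis — this uses that the logical rules are phrased with the $\vec{r}.A$ sugaring, so one instance in the scaled sequent corresponds to one instance in the unscaled sequent;
\emph{(iii)} the T rule applied to the distinguished sequent simply rescales, $\vdash t.(\vec{r}.\Gamma) = \vdash (t\vec{r}).\Gamma$, and $t\vec{r}$ is still a non-empty positive vector, so the induction hypothesis applies directly;
\emph{(iv)} the genuinely delicate rules are S and M, which can split $\vec{r}.\Gamma$ across two sequents or merge it with unrelated material. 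For S, if $\vdash \vec{r}.\Gamma$ was obtained as $\vdash \Gamma_1 \mid \vdash \Gamma_2$ from $\vdash \Gamma_1,\Gamma_2$, one must argue that the multiset $\vec{r}.\Gamma$ decomposes appropriately; here I would likely avoid analyzing S on the distinguished sequent directly and instead exploit the ID-free, CAN-free structure, or better, reduce to Lemma~\ref{lem:modal_free_genT}'s expected companion (the single-scalar case) and then iterate: prove first the special case $|\vec{r}| = 1$, i.e. $\proveNC G \mid \vdash s.\Gamma \Rightarrow \proveNC G \mid \vdash \Gamma$, by a cleaner induction using T in reverse, and then get the general statement by repeatedly applying S to split $\vdash s_1.\Gamma,\dots,s_n.\Gamma$ into $\vdash s_1.\Gamma \mid \dots \mid \vdash s_n.\Gamma$, contracting (C) down to $\vdash s_1.\Gamma$ after rescaling each piece, or even more simply by observing $\vdash \vec{r}.\Gamma$ is derivable iff $\vdash (\sum\vec{r}).\Gamma$ is (via S and T), reducing everything to one scalar.

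\textbf{Main obstacle.} The hard part is handling the structural rules S and M when they interact with the distinguished scaled sequent: an application of S could have produced $\vdash\vec{r}.\Gamma$ by merging, so the premise is $\vdash \vec{r}.\Gamma, \Delta \mid \dots$ with $\Delta$ non-trivial — wait, no: S moves from $\vdash \Gamma_1,\Gamma_2$ to $\vdash\Gamma_1 \mid \vdash\Gamma_2$, so to get $\vdash\vec{r}.\Gamma$ as one of the two, $\vec{r}.\Gamma$ must already be a sub-multiset, which is fine; the real trouble is M, whose conclusion $\vdash\Gamma_1,\Gamma_2$ could equal $\vdash\vec{r}.\Gamma$ with an arbitrary split of the multiset $\vec{r}.\Gamma$ across the two premises, and the two premises need not individually be of the form $\vdash\vec{s}.\Gamma$. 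This is exactly why one wants the single-scalar reduction up front (so that $\vdash\vec{r}.\Gamma$ can be pre-processed into $\vdash(\sum\vec{r}).\Gamma$ using only S, C, T before touching the rest of the derivation), or alternatively a strengthened induction that permits the distinguished material to be an arbitrary positive-rescaling of $\Gamma$ spread over several sequents. I expect the cleanest writeup to first establish the $|\vec r|=1$ case and then lift, so that the only rule one ever inverts against the distinguished sequent is T, sidestepping the S/M difficulty entirely.
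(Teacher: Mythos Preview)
Your approach is vastly more complicated than necessary, and the induction-on-derivation strategy runs into exactly the M-rule difficulties you yourself identify. The paper's proof is three lines and requires no induction on the derivation at all: starting from the given CAN-free derivation of $G \sep \vdash \vec{r}.\Gamma = G \sep \vdash r_1.\Gamma,\dots,r_n.\Gamma$, one simply applies rules \emph{forward}, on top of it:
\[
\infer[\text{C}^*]{G \sep \vdash \Gamma}{\infer[\text{T}^*]{G \sep \vdash \Gamma \sep \cdots \sep \vdash \Gamma}{\infer[\text{S}^*]{G \sep \vdash r_1.\Gamma \sep \cdots \sep \vdash r_n.\Gamma}{G \sep \vdash \vec{r}.\Gamma}}}
\]
That is: apply S repeatedly to split the single sequent $\vdash r_1.\Gamma,\dots,r_n.\Gamma$ into $n$ separate sequents $\vdash r_i.\Gamma$; apply T to each of these (the T rule has premise $G\sep\vdash r.\Gamma$ and conclusion $G\sep\vdash \Gamma$, so it already goes in the needed direction --- no ``T in reverse'' is involved); then contract the $n$ identical copies of $\vdash \Gamma$ down to one with C.

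You actually describe almost exactly this construction in passing (``repeatedly applying S to split \dots contracting (C) down \dots after rescaling each piece''), but you frame it as a preprocessing step before an induction on the derivation, and you still think the single-scalar case $|\vec{r}|=1$ needs an inductive argument. It does not: that case is a \emph{single} application of T. The point you are missing is that the lemma is proved by extending the given derivation at the root, not by transforming it from within; none of the structural analysis of S and M is needed.
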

\begin{proof}
  We simply use the C,T and S rules :
  \[ \infer[\text{C}^*]{G \sep \vdash \Gamma}{\infer[\text{T}^*]{G \sep \vdash \Gamma \sep ... \sep \vdash \Gamma}{\infer[\text{S}^*]{G \sep \vdash r_1.\Gamma \sep ... \sep r_n.\Gamma}{G \sep \vdash \vec{r}.\Gamma}}}\qedhere \]
\end{proof}

\begin{lem}
  \label{lem:modal_free_genT2}
  Let $\vec{r}\in\mathbb{R}_{>0}$ be a vector and $G$ a hypersequent. If $\proveNC G \sep \vdash \Gamma$ then $\proveNC G \sep \vdash \vec{r}.\Gamma$.
\end{lem}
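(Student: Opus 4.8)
The plan is to proceed by induction on the length $n$ of $\vec r = (r_1, \dots, r_n)$, which we take to be non--empty (just as in Lemma \ref{lem:modal_free_genT}). The single idea needed is that the T rule can be applied \emph{in reverse}: since $r_i > 0$ the reciprocal $\tfrac{1}{r_i}$ is again a legitimate scalar, and because $s.(r.A) = (sr).A$ on weighted terms we have $\tfrac{1}{r_i}.(r_i.\Gamma) = \Gamma$; hence the instance
\[
\infer[\text{T}]{G \sep \vdash r_i.\Gamma}{G \sep \vdash \Gamma}
\]
with scalar $\tfrac{1}{r_i}$ is valid. Applied once, this already settles the base case $n = 1$.

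For the inductive step assume $n > 1$. The induction hypothesis applied to the non--empty vector $(r_1, \dots, r_{n-1})$ gives $\proveNC G \sep \vdash r_1.\Gamma, \dots, r_{n-1}.\Gamma$, and the base case gives $\proveNC G \sep \vdash r_n.\Gamma$; one application of the M rule, with $\Gamma_1 = r_1.\Gamma, \dots, r_{n-1}.\Gamma$ and $\Gamma_2 = r_n.\Gamma$, merges these into $\proveNC G \sep \vdash r_1.\Gamma, \dots, r_n.\Gamma$, i.e. $\proveNC G \sep \vdash \vec r.\Gamma$. Unwinding the recursion, the full derivation is the left comb
\[
\infer[\text{M}^*]{G \sep \vdash \vec r.\Gamma}
  {\infer[\text{T}]{G \sep \vdash r_1.\Gamma}{G \sep \vdash \Gamma}
   & \dots &
   \infer[\text{T}]{G \sep \vdash r_n.\Gamma}{G \sep \vdash \Gamma}}
\]
of $n-1$ instances of M whose $n$ leaves are each the assumed derivation of $G \sep \vdash \Gamma$ extended by one T rule.

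I do not expect any genuine obstacle here: the construction uses only the structural rules T and M (no logical rule, and in particular no CAN rule), so it preserves CAN--freeness of the input derivation. It is the exact dual of Lemma \ref{lem:modal_free_genT}, which turned $\vdash \vec r.\Gamma$ into $\vdash \Gamma$ by means of C, T and S; here we go the other way using T and M instead. The only point requiring a little care is the bookkeeping of scalars in the reverse use of T, which is immediate from the definition of scalar multiplication on weighted terms.
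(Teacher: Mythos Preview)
Your argument for non-empty $\vec{r}$ is exactly the paper's: induction on the length, the reverse use of T for the singleton case, and M to combine. The one omission is that, unlike Lemma~\ref{lem:modal_free_genT}, the present statement does \emph{not} assume $\vec{r}$ non-empty; the paper treats the empty case explicitly. When $\vec{r}$ is empty, $\vec{r}.\Gamma$ is the empty multiset, so the goal is $G \sep \vdash$, obtained by repeated W down to $\vdash$ and then INIT. This is trivial to add, but you should not simply import the non-emptiness hypothesis from the previous lemma.
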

\begin{proof}
  We reason by induction on the size of $\vec{r}$.
  
  If the size of $\vec{r}$ is 0: Since $\vdash \vec{r}.\Gamma = \vdash$, we simply use the W rule until we can use the INIT rule:
  \[ \infer[\text{W}^*]{G \sep \vdash}{\infer[\text{INIT}]{\vdash}{}} \]
  
  If the size of $\vec{r}$ is 1: we can use the T rule:
\[ \infer[\text{T}]{G \sep \vdash r_1.\Gamma}{G \sep \vdash  (\frac{1}{r_1}r_1).\Gamma} \]

  Otherwise: Let $(r_1,...,r_{n+1}) = \vec{r}$. We can invoke the inductive hypothesis and conclude as follows: 
  \[ \infer[\text{M}]{G \sep \vdash r_1.\Gamma, ...,r_n.\Gamma,r_{n+1}.\Gamma}{\infer{G \sep \vdash r_1.\Gamma, ...,r_n.\Gamma}{G \sep \vdash \Gamma} & \infer[\text{T}]{G \sep \vdash r_{n+1}.\Gamma}{G \sep \vdash  (\frac{1}{r_{n+1}}r_{n+1}).\Gamma}}\qedhere \]
\end{proof}

The above lemmas have two useful corollaries.

\begin{cor}
  \label{cor:modal_free_andAlphaBeta}
    If $\proveNC G \sep \vdash \Gamma,\vec{r}.A, \vec{s}.A$ and $\proveNC G \sep \vdash \Gamma, \vec{r}.B, \vec{s}.B$ then $\proveNC G \sep \vdash \Gamma, \vec{r}.A, \vec{s}.B$.
\end{cor}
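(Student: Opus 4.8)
The plan is to combine the two scaling lemmas (Lemmas~\ref{lem:modal_free_genT} and~\ref{lem:modal_free_genT2}) with a single use of the M rule, with no appeal to the $\sqcap$/$\sqcup$ rules or to invertibility. First I would dispose of the degenerate cases: if $\vec{s}$ is the empty vector then $\vec{s}.A$ and $\vec{s}.B$ are empty and the target hypersequent $G \sep \vdash \Gamma, \vec{r}.A$ is literally the first hypothesis, and symmetrically if $\vec{r}$ is empty the target is the second hypothesis. So we may assume that both $\vec{r}$ and $\vec{s}$ are non-empty and set $\vec{w} := \app{\vec{r}}{\vec{s}}$, which is then also non-empty.

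For the main case I would apply Lemma~\ref{lem:modal_free_genT2} to the first hypothesis with scaling vector $\vec{r}$, obtaining $\proveNC G \sep \vdash \vec{r}.(\Gamma, \vec{r}.A, \vec{s}.A)$, and to the second hypothesis with scaling vector $\vec{s}$, obtaining $\proveNC G \sep \vdash \vec{s}.(\Gamma, \vec{r}.B, \vec{s}.B)$; both still carry the side hypersequent $G$, so a single application of the M rule yields
\[
\proveNC G \sep \vdash \vec{r}.(\Gamma, \vec{r}.A, \vec{s}.A),\ \vec{s}.(\Gamma, \vec{r}.B, \vec{s}.B).
\]
The key identity I would then check is that the multiset of weighted terms on the right-hand side of this sequent is exactly $\vec{w}.(\Gamma, \vec{r}.A, \vec{s}.B)$: unwinding the notational conventions, its $\Gamma$-part is $\vec{r}.\Gamma, \vec{s}.\Gamma = \vec{w}.\Gamma$, its $A$-part is $(\vec{r}\vec{r}).A, (\vec{r}\vec{s}).A$, and its $B$-part is $(\vec{s}\vec{r}).B, (\vec{s}\vec{s}).B$; and since $\vec{r}\vec{s}$ and $\vec{s}\vec{r}$ denote the same multiset of scalars (they differ only by a reindexing, multiplication being commutative), the $A$-part equals $(\vec{r}\vec{r}).A, (\vec{s}\vec{r}).A = (\vec{w}\vec{r}).A = \vec{w}.(\vec{r}.A)$ and likewise the $B$-part equals $\vec{w}.(\vec{s}.B)$. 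Hence the conclusion of the M step is $\proveNC G \sep \vdash \vec{w}.(\Gamma, \vec{r}.A, \vec{s}.B)$.

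Finally, since $\vec{w}$ is non-empty, Lemma~\ref{lem:modal_free_genT} applied to this sequent gives $\proveNC G \sep \vdash \Gamma, \vec{r}.A, \vec{s}.B$, which completes the argument. I expect the only real obstacle to be the multiset bookkeeping in the middle step --- carefully matching the scaled, duplicated copies produced by Lemma~\ref{lem:modal_free_genT2} against the single scaled copy $\vec{w}.(\Gamma, \vec{r}.A, \vec{s}.B)$ expected by Lemma~\ref{lem:modal_free_genT}, in particular the fact that $\vec{r}\vec{s}$ and $\vec{s}\vec{r}$ coincide as multisets --- which is routine provided the conventions are fully expanded; no new proof-theoretic idea seems to be needed.
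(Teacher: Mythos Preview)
Your proof is correct and follows essentially the same approach as the paper: scale the first hypothesis by $\vec{r}$ and the second by $\vec{s}$ via Lemma~\ref{lem:modal_free_genT2}, combine with M, then unscale via Lemma~\ref{lem:modal_free_genT}. The only difference is cosmetic: the paper, after the M step, applies S to split the merged sequent into a $\vec{r}$-scaled copy and a $\vec{s}$-scaled copy, applies Lemma~\ref{lem:modal_free_genT} to each, and then uses C to recombine, whereas you observe directly that the M-output is $(\app{\vec{r}}{\vec{s}})$-scaled and apply Lemma~\ref{lem:modal_free_genT} once --- a slightly cleaner finish relying on the same multiset identity $(\vec{r}\vec{s}).A = (\vec{s}\vec{r}).A$.
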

\begin{proof}
  If $\vec{r} = \emptyset$ or $\vec{s} = \emptyset$, the result is trivial. Otherwise
  \[
  \infer[\text{C}]{G \sep \vdash \Gamma, \vec{r}.A, \vec{s}.B}
        {\infer[\text{Lemma~\ref{lem:modal_free_genT}}]{G \sep \vdash \Gamma, \vec{r}.A, \vec{s}.B \sep \vdash \Gamma, \vec{r}.A, \vec{s}.B}{\infer[\text{Lemma~\ref{lem:modal_free_genT}}]{G \sep \vdash \vec{r}.\Gamma, (\vec{r}\vec{r}).A, (\vec{r}\vec{s}).B \sep \vdash \Gamma, \vec{r}.A, \vec{s}.B}{\infer[\text{S}]{G \sep \vdash \vec{r}.\Gamma, (\vec{r}\vec{r}).A,(\vec{r}\vec{s}).B \sep \vec{s}.\Gamma,(\vec{s}\vec{r}).A, (\vec{s}\vec{s}).B}{\infer[\text{M}]{G \sep \vdash \vec{r}.\Gamma,\vec{s}.\Gamma, (\vec{r}\vec{r}).A,(\vec{s}\vec{r}).A,(\vec{r}\vec{s}).B, (\vec{s}\vec{s}).B}
          {\infer[\text{Lemma }\ref{lem:modal_free_genT2}]{G \sep \vdash \vec{r}.\Gamma, (\vec{r}\vec{r}).A, (\vec{r}\vec{s}).A}
            {G \sep \vdash \Gamma , \vec{r}.A,\vec{s}.A}
            &
            \infer[\text{Lemma }\ref{lem:modal_free_genT2}]{G \sep \vdash \vec{s}.\Gamma, (\vec{s}\vec{r}).B, (\vec{s}\vec{s}).B} {G \sep \vdash \Gamma , \vec{r}.B,\vec{s}.B}}}}}}\qedhere
  \]
\end{proof}

\begin{cor}
  \label{cor:modal_free_orAlphaBeta}
If $\proveNC G \sep \vdash \vec{r}.A,\vec{s}.A, \Gamma \sep \vdash \vec{r}.B,\vec{s}.B , \Gamma \sep \vdash \vec{r}.A,\vec{s}.B,\Gamma$, then $\proveNC G \sep \vdash \vec{r}.A,\vec{s}.A, \Gamma \sep \vdash \vec{r}.B,\vec{s}.B, \Gamma$.
\end{cor}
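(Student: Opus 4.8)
The claim reflects the semantic fact that $\sem{\vdash \vec r.A,\vec s.B,\Gamma}\le \sem{\vdash \vec r.A,\vec s.A,\Gamma}\sqcup \sem{\vdash \vec r.B,\vec s.B,\Gamma}$, which holds because $A\le A\sqcup B$ and $B\le A\sqcup B$: the third sequent is a redundant disjunct of the hypersequent. The plan is to make this redundancy purely syntactic using only the $\sqcup$ rule, its CAN--free invertibility (Theorem~\ref{thm:modal_free_invertibility}), and the contraction rule~C. Throughout I write $P=(\vec r.A,\vec s.A,\Gamma)$, $Q=(\vec r.B,\vec s.B,\Gamma)$, $R=(\vec r.A,\vec s.B,\Gamma)$, and also $R'=(\vec r.B,\vec s.A,\Gamma)$ (the \emph{other} cross term) and $R^{\sqcup}=(\vec r.(A\sqcup B),\vec s.(A\sqcup B),\Gamma)$.

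First I would record the elementary fact that a single application of $\sqcup$ turns $\proveNC G\sep \vdash P\sep \vdash Q$ into $\proveNC G\sep \vdash R^{\sqcup}$ (since $\vdash \Gamma,(\app{\vec r}{\vec s}).A$ is $\vdash P$ and $\vdash \Gamma,(\app{\vec r}{\vec s}).B$ is $\vdash Q$), while CAN--free invertibility of $\sqcup$ goes back; hence it suffices to produce a CAN--free derivation of $G\sep \vdash R^{\sqcup}$ from the hypothesis. Applying $\sqcup$ to the hypothesis, this time merging its $\vdash P$ and $\vdash Q$ sequents, immediately gives $\proveNC G\sep \vdash R\sep \vdash R^{\sqcup}$; the whole difficulty is therefore to delete the spectator sequent $\vdash R$.

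To do that I would decompose $\vdash R^{\sqcup}$ again by invertibility of $\sqcup$, but splitting the $\vec r$--copies of $A\sqcup B$ before the $\vec s$--copies: the four sequents produced this way are $\vdash P$, $\vdash R$ (taking $A$ on the $\vec r$--side and $B$ on the $\vec s$--side yields exactly another copy of $R$), $\vdash R'$, and $\vdash Q$, so after contracting the two copies of $\vdash R$ we obtain $\proveNC G\sep \vdash P\sep \vdash Q\sep \vdash R\sep \vdash R'$. Now three more applications of $\sqcup$ rebuild $R^{\sqcup}$: merge $\vdash P$ with $\vdash R$ along their common part $\vec r.A,\Gamma$ into $S_1=(\vec r.A,\vec s.(A\sqcup B),\Gamma)$; merge $\vdash R'$ with $\vdash Q$ along their common part $\vec r.B,\Gamma$ into $S_3=(\vec r.B,\vec s.(A\sqcup B),\Gamma)$; and merge $\vdash S_1$ with $\vdash S_3$ along their common part $\vec s.(A\sqcup B),\Gamma$ into $\vdash R^{\sqcup}$. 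This yields $\proveNC G\sep \vdash R^{\sqcup}$, and the first observation then delivers $\proveNC G\sep \vdash P\sep \vdash Q$.

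The one point to watch — and the real obstacle — is that a merge through the $\sqcup$ rule requires the two merged sequents to agree outside a single ``slot'', i.e.\ to match the shape $\vdash \Delta,\vec t.A''\sep \vdash \Delta,\vec t.B''$ of the premise of $\sqcup$. This is precisely why $\vdash R$ cannot be cancelled directly against $\vdash P$, $\vdash Q$, or $\vdash R^{\sqcup}$ (each of which differs from $\vdash R$ in two slots at once), and why the detour through $\vdash R'$ — produced for free by decomposing $\vdash R^{\sqcup}$ in the opposite order — is needed. One must also choose, at each application of $\sqcup$, which subterm plays the role of $A''$ so that $A\sqcup B$ (rather than $B\sqcup A$) appears uniformly; this only fixes the instance of the rule and raises no genuine difficulty.
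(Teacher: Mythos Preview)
Your manipulation with $\sqcup$ and contraction is correct on its own terms, but the appeal to Theorem~\ref{thm:modal_free_invertibility} makes the argument circular in the paper's development: the proof of CAN--free invertibility of $\sqcup$ \emph{uses} Corollary~\ref{cor:modal_free_orAlphaBeta}. Concretely, in the inductive proof of $\sqcup$-invertibility, the case where the last rule of the given derivation is itself $\sqcup$ acting on only part of the copies of the active formula $A\sqcup B$ produces, after the induction hypothesis, exactly the ``mixed'' sequents $\vdash \Gamma,\vec r.A,\vec s.B$ and $\vdash \Gamma,\vec r.B,\vec s.A$ alongside the pure ones $\vdash \Gamma,\vec r.A,\vec s.A$ and $\vdash \Gamma,\vec r.B,\vec s.B$; Corollary~\ref{cor:modal_free_orAlphaBeta} is precisely the tool that absorbs the mixed ones. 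So you cannot invoke that invertibility here without an independent proof of it, and the natural induction on derivations does not seem to avoid this very lemma. This is why the corollary sits among the ``technical lemmas'' that \emph{precede} the invertibility section.

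The paper instead proves the corollary directly from the scalar-multiplication lemmas and the structural rules, with no appeal to invertibility: scale the offending sequent $\vdash \Gamma,\vec r.A,\vec s.B$ by the concatenated vector $\app{\vec r}{\vec s}$ via Lemma~\ref{lem:modal_free_genT2}, split with~S into $\vdash \vec r.\Gamma,(\vec r\vec r).A,(\vec r\vec s).A$ and $\vdash \vec s.\Gamma,(\vec s\vec r).B,(\vec s\vec s).B$, rescale each back to $\vdash \Gamma,\vec r.A,\vec s.A$ and $\vdash \Gamma,\vec r.B,\vec s.B$ via Lemma~\ref{lem:modal_free_genT}, and contract against the two copies already present. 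Those two lemmas are established beforehand using only T, S, C, M and W, so no circularity arises.
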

\begin{proof}
  If $\vec{r} = \emptyset$ or $\vec{s} = \emptyset$, the result is trivial. Otherwise
  \[
  \infer[\text{C}^2]{G \sep \vdash \Gamma, \vec{r}.A,\vec{s}.A \sep \vdash \Gamma, \vec{r}.B,\vec{s}.B}{\infer[\text{Lemma~\ref{lem:modal_free_genT}}]{G \sep \vdash \Gamma, \vec{r}.A,\vec{s}.A \sep \vdash \Gamma, \vec{r}.B,\vec{s}. B \sep \vdash \Gamma, \vec{r}.A,\vec{s}.A \sep \vdash \Gamma, \vec{r}.B,\vec{s}.B}{\infer[\text{Lemma~\ref{lem:modal_free_genT}}]{G \sep \vdash \Gamma, \vec{r}.A,\vec{s}.A \sep \vdash \Gamma, \vec{r}.B,\vec{s}. B \sep \vdash \vec{r}.\Gamma, (\vec{r}\vec{r}).A,(\vec{r}\vec{s}).A \sep \vdash \Gamma, \vec{r}.B,\vec{s}.B}
        {\infer[\text{S}]{G \sep \vdash \Gamma, \vec{r}.A,\vec{s}.A \sep \vdash \Gamma, \vec{r}.B,\vec{s}. B \sep \vdash \vec{r}.\Gamma, (\vec{r}\vec{r}).A,(\vec{r}\vec{s}).A \sep \vdash \vec{s}.\Gamma, (\vec{s}\vec{r}).B,(\vec{s}\vec{s}).B}{\infer[\text{Lemma }~\ref{lem:modal_free_genT2}]{G \sep \vdash \Gamma, \vec{r}.A,\vec{s}.A  \sep \vdash \Gamma, \vec{r}.B,\vec{s}.B \sep \vdash \vec{r}.\Gamma, \vec{s}.\Gamma, (\vec{r}\vec{r}.)A, (\vec{r}\vec{s}).A , (\vec{s}\vec{r}).B, (\vec{s}\vec{s}).B}
          {G \sep \vdash \Gamma, \vec{r}.A,\vec{s}.A \sep \vdash \Gamma, \vec{r}.B,\vec{s}.B \sep \vdash \Gamma, \vec{r}.A,\vec{s}.B}}}}}\qedhere
\]
\end{proof}

%%% Local Variables:
%%% mode: latex
%%% TeX-master: "main"
%%% End:

\subsection{Soundness -- Proof of Theorem~\ref{thm:modal_free_soundness}}
\label{subsec:modal_free_soundness}

We need to prove that if there exists a \hr\ derivation of a hypersequent $G$ then $\sem{ G } \geq 0$ is derivable in equational logic (written $\mathcal{A}_{\textnormal{Riesz}}\vdash \sem{G}\geq 0$). This is done in a straightforward way by showing that each deduction rule of the system \hr\ is sound. The desired result then follows immediately by induction on the derivation of $G$.

  \begin{itemize}
  \item For the rule
    \[ \infer[\text{INIT}]{\vdash}{} \]
    The semantics of the hypersequent consisting only of the empty sequent is $\sem{ \vdash } = 0$ and therefore $\sem{ \vdash } \geq 0$, as desired.
  \item For the rule
    \[ \infer[\text{W}]{G \sep \vdash \Gamma}{G} \]
    the hypothesis is  $\sem{ G } \geq 0$ so
    \begin{eqnarray*}
      \sem{ G \sep \vdash \Gamma } &=& \sem{ G } \sqcup \sem{ \vdash \Gamma } \\
      &\geq& \sem{ G }\\
      &\geq& 0
    \end{eqnarray*}
  \item For the $\text{C}, \text{ID},+,0,\times$ and $\text{CAN}$ rules, it is immediate to observe that the interpretation of the only premise and the interpretation of its conclusion are equal, therefore the result is trivial.
  \item For the rule
    \[ \infer[\text{S}]{G \sep \vdash \Gamma_1 \sep \vdash \Gamma_2}{G \sep \vdash \Gamma_1,\Gamma_2} \]
    the hypothesis is  $\sem{ G \sep \vdash \Gamma_1,\Gamma_2 } \geq 0$ so according to Lemma~\ref{lem:condMaxPos}, $\sem{ G }^- \sqcap \sem{ \vdash \Gamma_1,\Gamma_2} ^- = 0$. Our goal is to prove that $\sem{G \sep \vdash \Gamma_1 \sep \vdash \Gamma_2} \geq 0$. Again, using Lemma~\ref{lem:condMaxPos}, we equivalently need to prove that 
    
    $$\sem{ G }^- \sqcap \sem{ \vdash \Gamma_1 \sep \vdash \Gamma_2 }^-  = 0.$$

    The above expression is of the form $A^{-}\sqcap B^{-}$, and since $A^{-}\geq 0$ always holds for every $A$ (see Section~\ref{some_small_lemmas_section}), it is clear that   $\sem{ G }^- \sqcap \sem{ \vdash \Gamma_1 \sep \vdash \Gamma_2 }^- \geq 0$. It remains therefore to show that $\sem{ G }^- \sqcap \sem{ \vdash \Gamma_1 \sep \vdash \Gamma_2 }^- \leq 0$. This is done as follows:

    \[
    \begin{array}{lllr}
      \sem{ G }^- \sqcap \sem{ \vdash \Gamma_1 \sep \vdash \Gamma_2 }^- & \leq & \sem{ G }^- \sqcap 2.\sem{ \vdash \Gamma_1 \sep  \vdash \Gamma_2 }^- & \text{since $\sem{ \vdash \Gamma_1 \sep  \vdash \Gamma_2 }^- \geq 0$} \\
      & = & \sem{ G }^- \sqcap (2.(\sem{ \vdash \Gamma_1} \sqcup \sem{ \vdash \Gamma_2 }))^- & \text{Lemma~\ref{lem:equalities}[1]} \\
      & \leq & \sem{ G }^- \sqcap (\sem{  \vdash \Gamma_1 } + \sem{ \vdash \Gamma_2 })^- & \text{Lemma~\ref{lem:equalities}[2-3]}\\
      & = &  \sem{ G }^- \sqcap (\sem{ \vdash \Gamma_1, \Gamma_2 })^- & \\
      & = & 0
    \end{array}
    \]
  \item For the rule
    \[ \infer[\text{M}]{G \sep \vdash \Gamma_1, \Gamma_2}{G \sep \vdash \Gamma_1 & G \sep \vdash \Gamma_2} \]
    the hypothesis is 
    \[ \sem{ G \sep \vdash \Gamma_1 } \geq 0 \]
    \[ \sem{ G \sep \vdash \Gamma_2 } \geq 0 \]
    so according to Lemma~\ref{lem:condMaxPos},
    \[ \sem{ G }^- \sqcap \sem{ \vdash \Gamma_1 }^- = 0 \]
    \[ \sem{ G }^- \sqcap \sem{ \vdash \Gamma_2 }^- = 0 \]
    Following the same reasoning of the previous case (S rule) our goal is to show that $\sem{ G }^- \sqcap \sem{ \vdash \Gamma_1, \Gamma_2 }^- \leq 0$. This is done as follows:
    \[
    \begin{array}{lllr}
      \sem{ G }^- \sqcap \sem{ \vdash \Gamma_1, \Gamma_2 }^- & = & \sem{ G }^- \sqcap (\sem{ \vdash \Gamma_1} + \sem{ \vdash \Gamma_2 })^- &  \\
      & \leq & \sem{ G }^- \sqcap (\sem{ \vdash \Gamma_1}^- + \sem{ \vdash \Gamma_2 }^-)& \text{Lemma~\ref{lem:equalities}[4]} \\
      & \leq & \sem{ G }^- \sqcap \sem{ \vdash \Gamma_1}^- + \sem{ G }^- \sqcap \sem{ \vdash \Gamma_2 }^- & \text{distributivity of $\sqcap$ over $+$}
    \end{array}
    \]

  \item For the rule
    \[ \infer[\text{T}]{G \sep \vdash \Gamma}{G \sep \vdash r.\Gamma} \]
    the hypothesis is  $\sem{ G \sep \vdash r.\Gamma } \geq 0$ so using Lemma~\ref{lem:condMaxPos}, we have
    \[ \sem{ G }^- \sqcap r.(\sem{ \vdash \Gamma } )^-= \sem{ G }^- \sqcap (\sem{ \vdash r.\Gamma } )^- = 0 \]
    By the same reasoning as for the S rule's case, our goal is to show that $\sem{ G }^- \sqcap \sem{ \vdash \Gamma } ^- \leq 0$. To do so, we need to distinguish between two cases: whether or not $r \geq 1$.
    
    If $r \geq 1$, then
    \[
      \begin{array}{lll}
        \sem{ G }^- \sqcap \sem{ \vdash \Gamma } ^- & \leq & \sem{ G }^- \sqcap r.\sem{ \vdash \Gamma } ^- \\
                                                                                & = & 0
      \end{array}
    \]

    Otherwise, Lemma~\ref{lem:equalities}[5] states that $\sem{ G }^- \sqcap \sem{ \vdash \Gamma } ^- \leq 0$ if and only if $r.(\sem{ G }^- \sqcap \sem{ \vdash \Gamma } ^-) \leq 0$, which is proven as follows:
    \[
      \begin{array}{lll}
        r.(\sem{ G }^- \sqcap \sem{ \vdash \Gamma } ^-) & = & (r.\sem{ G }^-) \sqcap (r.\sem{ \vdash \Gamma } ^-) \\
        & \leq & \sem{ G }^- \sqcap (r.\sem{ \vdash \Gamma } ^-) \\
                                                                                    & = & 0
      \end{array}
    \]

    In both cases $\sem{ G }^- \sqcap \sem{ \vdash \Gamma } ^- \leq 0$.
  \item For the rule
  \[ \infer[\sqcup]{G \sep \vdash \Gamma, \vec{r}.(A\sqcup B) }{G \sep \vdash \Gamma, \vec{r}. A  \sep \vdash \Gamma, \vec{r}. B } \]
  the hypothesis is $\sem{ G \sep \vdash \Gamma, \vec{r}. A  \sep \vdash \Gamma, \vec{r}. B } \geq 0$.
  So :
  \[
  \begin{array}{lllr}
  	\sem{ G \sep \vdash \Gamma, \vec{r}.(A\sqcup B) } & = &  \sem{ G } \sqcup \sem{ \vdash \Gamma, \vec{r}.(A\sqcup B) } & \\
  	& = & \sem{ G } \sqcup \sem{ \vdash \Gamma, \vec{r}.A } \sqcup \sem{ \vdash \Gamma, \vec{r}.B } & \text{distributivity of $\sqcup$ over $+$} \\
  	& \geq & 0 &
  \end{array}
  \]
  \item For the rule
  \[ \infer[\sqcap]{G \sep \vdash \Gamma, \vec{r}.(A\sqcap B)}{G \sep \vdash \Gamma, \vec{r}.A  & G \sep \vdash \Gamma, \vec{r}.B} \]
  the hypothesis is
  \[ \sem{ G \sep \Gamma, \vec{r}.A \vdash \Gamma } \geq 0\]
  \[ \sem{ G \sep \Gamma, \vec{r}.B \vdash \Gamma } \geq 0\]
  So
  \[
    \begin{array}{lllr}
      \sem{ G \sep \vdash \Gamma, \vec{r}.(A \sqcap B) } & = & \sem{ G } \sqcup \sem{  \vdash \Gamma, \vec{r}.(A \sqcap B) } & \\
      & = & \sem{ G } \sqcup (\sem{ \vdash \Gamma, \vec{r}.A } \sqcap \sem{ \Gamma, \vec{r}.B }) & \text{$\sqcap$ distributes over $+$} \\
      & = & (\sem{ G } \sqcup \sem{ \vdash \Gamma, \vec{r}.A }) \sqcap (\sem{ G } \sqcup \sem{ \vdash \Gamma, \vec{r}.B })&\text{$\sqcup$ distributes over $\sqcap$} \\
     & \geq & 0&
    \end{array}
  \]
\end{itemize}

%%% Local Variables:
%%% mode: latex
%%% TeX-master: "main"
%%% End:

\subsection{Completeness -- Proof of Theorem~\ref{thm:modal_free_completeness}}
\label{subsec:modal_free_completeness}

In order to prove  Theorem~\ref{thm:modal_free_completeness} we first prove an equivalent result (Lemma~\ref{lem:modal_free_completeness_aux} below) stating that if $\semProve A = B$ then the hypersequents $\vdash r.A, r.\negF{B}$ and $\vdash r.B, r.\negF{A}$ are both derivable for all $r > 0$. The advantage of this formulation is that it allows for a simpler proof by induction. 

From Lemma~\ref{lem:modal_free_completeness_aux} one indeed obtains Theorem~\ref{thm:modal_free_completeness} as a corollary.

\begin{proof}[Proof of Theorem~\ref{thm:modal_free_completeness}.]
Recall that $\semProve \sem{G}\geq 0$ is a shorthand for 
$\semProve 0 = \sem{G} \sqcap 0$. 
Hence, from the hypothesis $\semProve \sem{G}\geq 0$  we can deduce, by using Lemma~\ref{lem:modal_free_completeness_aux}, that $\prove \vdash 1.(0 \sqcap \sem{G}), 1.0$ is provable.

From this we can show that $\prove G$ by invoking Lemma~\ref{lem:modal_free_can_full_invertibility}. Indeed, if $G$ is $  \vdash \Gamma_1 \sep ... \sep \vdash \Gamma_n$ then $\sem{G} = \sem{\vdash \Gamma_1} \sqcup ... \sqcup \sem{\vdash \Gamma_n}$ and
\begin{enumerate}
\item by using the invertibility of the $0$ rule, $\vdash 1.(0 \sqcap (\sem{\vdash \Gamma_1} \sqcup ... \sqcup \sem{\vdash \Gamma_n}))$ is derivable,
\item by using the invertibility of the $\sqcap$ rule, $\vdash 1.(\sem{\vdash \Gamma_1} \sqcup ... \sqcup \sem{\vdash \Gamma_n})$ is derivable,
\item by using the invertibility of the $\sqcup$ rule $n-1$ times, $\vdash 1.\sem{\vdash \Gamma_1} \sep ... \sep \vdash 1.\sem{\vdash \Gamma_n}$ is derivable,
\item and finally, by using the invertibility of the $+$ rule and $\times$ rule, $\vdash \Gamma_1 \sep ... \sep \vdash \Gamma_n$ is derivable.\qedhere
\end{enumerate}
\end{proof}

\begin{lem}
  \label{lem:modal_free_completeness_aux}
  If $\semProve A = B$ then $\vdash r.A, r.\negF{B}$ and $\vdash r.B, r.\negF{A}$ are provable for all $r > 0$.
\end{lem}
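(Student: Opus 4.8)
The proof is by induction on the derivation of $A = B$ in equational logic from the axioms $\mathcal{A}_{\textnormal{Riesz}}$. Since the statement is symmetric in the two hypersequents $\vdash 1.A, 1.\negF{B}$ and $\vdash 1.B, 1.\negF{A}$ (and since $\negF{\negF{A}} = A$ for NNF terms), it suffices in each case to produce one of them, say $\vdash 1.A, 1.\negF{B}$, and obtain the other by the symmetric argument. Actually, it is cleaner to carry both hypersequents through the induction simultaneously, since the \textbf{sym} rule just swaps them and the \textbf{trans} rule needs both directions to glue derivations together via the CAN rule (or Lemma \ref{lem:modal_free_ext_ID_rule}).

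\textbf{Plan of the induction.} First I would handle the \emph{axioms} (rule \text{Ax}): for each equational axiom $A = B$ of Figure \ref{axioms:of:riesz:spaces}, I must exhibit \hr\ derivations of $\vdash 1.A, 1.\negF{B}$ and $\vdash 1.B, 1.\negF{A}$. These are finitely many concrete derivations; each reduces, after applying the logical rules ($+$, $\times$, $\sqcup$, $\sqcap$, $0$) to push connectives inward, to atomic hypersequents provable using \text{ID}, \text{S}, \text{M}, \text{T} and \text{INIT} — exactly in the style of the worked example in the excerpt. The compatibility axioms (phrased as inequalities $C \leq D$, i.e. $C = C \sqcap D$) are the most delicate but still routine. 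Next, the \emph{structural rules} of equational logic: \textbf{refl} is immediate from Lemma \ref{lem:modal_free_ext_ID_rule} applied to the empty hypersequent (giving $\vdash 1.A, 1.\negF{A}$ directly from \text{INIT}); \textbf{sym} swaps the two conclusions and is trivial; \textbf{subst} follows from Lemma \ref{lem:modal_free_subst} together with the fact that $\negF{A}[C/x] = \negF{A[C/x]}$ (the substitution lemma for negation stated in the excerpt).

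\textbf{The two substantial closure cases} are \textbf{trans} and \textbf{ctxt}. For \textbf{trans}, from derivations of $\vdash 1.A, 1.\negF{B}$, $\vdash 1.B, 1.\negF{A}$, $\vdash 1.B, 1.\negF{C}$, $\vdash 1.C, 1.\negF{B}$ I need $\vdash 1.A, 1.\negF{C}$ (and its mirror). This is a cut on $B$: combine $\vdash 1.A, 1.\negF{B}$ with $\vdash 1.B, 1.\negF{C}$ using the CUT rule (derivable from CAN and M as in Remark \ref{cutcomment1}) with the weight equality $1 = 1$. For \textbf{ctxt}, I proceed by induction on the context $C[\cdot]$: the base case $C[\cdot] = [\cdot]$ is the hypothesis, and for each way the hole sits under an operator ($+$, $\sqcup$, $\sqcap$, scalar $r\cdot$) I need a lemma of the form ``if $\vdash 1.A, 1.\negF{B}$ and $\vdash 1.B, 1.\negF{A}$ are provable, then so are $\vdash 1.(A+D), 1.\negF{B+D}$ and its variants''. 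Each such congruence step is a small derivation using the logical rules to decompose the connective, Lemma \ref{lem:modal_free_ext_ID_rule} to handle the passive part $D$, and the inductive hypothesis; for scalar multiplication one additionally needs Lemmas \ref{lem:modal_free_genT}--\ref{lem:modal_free_genT2} to rescale.

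\textbf{Main obstacle.} The hardest part is not any single case but the sheer verification burden: the compatibility axioms phrased as inequalities, together with the congruence (\textbf{ctxt}) cases for $\sqcup$ and $\sqcap$, require carefully chosen applications of \text{S}, \text{M} and \text{T} to reach atomic hypersequents that \text{ID} can close, and getting the multisets and scalar bookkeeping exactly right is where errors hide. In particular the $\sqcup$/$\sqcap$ congruence step must use the full strength of Lemma \ref{lem:modal_free_ext_ID_rule} (acting on several sequents of a hypersequent at once) to distribute the passive context across the branches produced by the $\sqcap$ rule; this is the technical heart and the reason that generalized-\text{ID} lemma was proved first. Everything else is a mechanical propagation of the inductive hypothesis through the equational-logic rules.
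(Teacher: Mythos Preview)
Your proposal is correct and follows essentially the same approach as the paper: induction on the equational-logic derivation, with \textbf{refl} via Lemma~\ref{lem:modal_free_ext_ID_rule}, \textbf{sym} trivial, \textbf{trans} by a cut on $B$ (the paper writes this as CAN over M, which is your CUT), \textbf{subst} via Lemma~\ref{lem:modal_free_subst}, \textbf{ctxt} by a secondary induction on the context, and the axioms handled by explicit derivations. One small overstatement: the $\sqcup$/$\sqcap$ congruence cases in \textbf{ctxt} do not actually require the multi-sequent generality of Lemma~\ref{lem:modal_free_ext_ID_rule}; after applying $\sqcap$ then $\sqcup$ and weakening away the mixed branch, the remaining obligations are single-sequent instances of the inductive hypothesis and of $\vdash 1.D, 1.\negF{D}$.
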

\begin{proof}
We prove this result by induction on the derivation, in equational logic (see Definition~\ref{eq_logic_rules}) of $\semProve A = B$.
  \begin{itemize}
  \item If the derivation finishes with 
    \[ \infer[\text{refl}]{\semProve A = A}{} \]
    we can conclude with Lemma~\ref{lem:modal_free_ext_ID_rule}.
  \item If the derivation finishes with 
    \[ \infer[\text{sym}]{\semProve A = B}{\semProve B = A} \]
    then the induction hypothesis allows us to conclude.
  \item If the derivation finishes with
    \[ \infer[\text{trans}]{\semProve A = B}{\semProve A = C & \semProve C =B} \]
    then the induction hypothesis is
    \[ \prove \vdash r.A, r.\negF{C} \]
    \[ \prove \vdash r.C, r.\negF{A} \]
    \[ \prove \vdash r'.C, r'.\negF{B} \]
    \[ \prove \vdash r'.B, r'.\negF{C} \]
    for all $r,r' >0$.
    We will show that $\prove\vdash  r.A, r.\negF{B}$ for all $r$, the other one is similar.
    \[ \infer[\text{CAN}]{\vdash r.A, r.\negF{B}}
      {\infer[\text{M}]{\vdash r.A, r.\negF{B},r.C,r.\negF{C}}{\vdash r.A, r.\negF{C} & \vdash r.C, r.\negF{B}}} \]
  \item If the derivation finishes with
    \[ \infer[\text{subst}]{\semProve A[C/x] = B[C/x]}{\semProve A = B} \]
    we conclude using the induction hypothesis and Lemma~\ref{lem:modal_free_subst}.
  \item If the derivation finishes with
    \[ \infer[\text{ctxt}]{\semProve C[A] = C[B]}{\semProve A = B} \]
    we prove the result by induction on C.
    For instance, if $C = sC'$ with $s > 0$, then the induction hypothesis is $\prove \vdash r.C'[A], r.\negF{C'[B]}$ and $\prove \vdash r.C'[B], r.\negF{C'[A]}$ for all $r > 0$ so
    \begin{multicols}{2}
      \setlength{\columnseprule}{0pt}
      \[ \infer[\times^*]{\vdash r.C[A], r.\negF{C[B]}}{\vdash rs.C'[A], rs.\negF{C'[B]}} \]
      
      \[ \infer[\times^*]{\vdash r.C[B], r.\negF{C[A]}}{\vdash rs.C'[B], rs.\negF{C'[A]}} \]
    \end{multicols}

\item It now remains to consider the cases when the derivation finishes with one of the axioms of Figure~\ref{axioms:of:riesz:spaces}. We only show the nontrivial cases.
  \begin{itemize}
  \item If the derivation finishes with
    \[ \infer[\text{ax}]{\semProve (r_1+r_2)x = r_1x + r_2x}{} \]
    then
    \[ \infer[+]{\vdash r.((r_1+r_2)x), r.(r_1\covar{x}+r_2\covar{x})}{\infer[\times^*]{\vdash r.((r_1+r_2)x), r.r_1\covar{x}, r.r_2\covar{x}}{\infer[\text{ID}]{\vdash (r_1+r_2)r.x, r_1r.\covar{x}, r_2r.\covar{x}}{\infer[\text{INIT}]{\vdash}{}}}} \]
    and
    \[ \infer[+]{\vdash r.(r_1x + r_2x), r.((r_1+r_2)\covar{x})}{\infer[\times^*]{\vdash r.r_1x, r.r_2x, r.((r_1+r_2)\covar{x})}{\infer[\text{ID}]{\vdash r_1r.x, r_2r.x, (r_1+r_2)r.\covar{x}}{\infer[\text{INIT}]{\vdash}{}}}} \]
  \item If the derivation finishes with
    \[ \infer[\text{ax}]{\semProve (s(x\sqcap y)) \sqcap sy = s(x\sqcap y)}{} \]
    then
    \[ \infer[\sqcap]{\vdash r .((s(x\sqcap y)) \sqcap sy), r.s(\covar{x} \sqcup \covar{y})}{\infer[\text{Lemma~\ref{lem:modal_free_ext_ID_rule}}]{\vdash r.(s(x\sqcap y)),r.s(\covar{x} \sqcup \covar{y})}{\infer[\text{INIT}]{\vdash}{}} & \infer[\times^*]{\vdash r.sy,r.s(\covar{x} \sqcup \covar{y})}{\infer[\sqcup]{\vdash rs.y, rs.(\covar{x} \sqcup \covar{y})}{\infer[\text{W}]{\vdash rs.y, rs.\covar{x} \sep \vdash rs.y, rs.\covar{y}}{\infer[\text{ID}]{\vdash rs.y, rs.\covar{y}}{\infer[\text{INIT}]{\vdash}{}}}}}} \]
    and
    \[ \infer[\sqcup - \text{W}]{\vdash r.(s(x \sqcap y)), r.((s(\covar{x}\sqcup \covar{y}))\sqcup(s\covar{y}))}{\infer[\text{Lemma~\ref{lem:modal_free_ext_ID_rule}}]{\vdash r.(s(x \sqcap y)), r.(s(\covar{x}\sqcup \covar{y}))}{\infer[\text{INIT}]{\vdash}{}}}\qedhere\]
\end{itemize}
\end{itemize}
\end{proof}

\begin{rem}
By inspecting the proof of Lemma~\ref{lem:modal_free_completeness_aux} it is possible to verify that the T rule is never used in the construction of $\prove G$. This, together with the similar Remark~\ref{rem:modal_free_inv_no_T_rule} regarding Lemma~\ref{lem:modal_free_can_full_invertibility}, implies that the T rule is never used in the proof of the completeness Theorem~\ref{thm:modal_free_completeness}. From this we get the following corollary.
\end{rem}

\begin{cor}
The T rule is admissible in the system \hr.
\end{cor}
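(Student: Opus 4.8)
The plan is to read off admissibility of T from the soundness and completeness theorems together with the bookkeeping observation recorded in the remark just above. Recall that a rule is \emph{admissible} exactly when deleting it leaves the set of derivable hypersequents unchanged; the inclusion ``T-free derivable $\Rightarrow$ derivable'' is trivial, so all that is needed is the converse: every hypersequent with an \hr\ derivation has one that does not use T.

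So suppose $\prove G$. By the soundness theorem (Theorem~\ref{thm:modal_free_soundness}) we get $\mathcal{A}_{\textnormal{Riesz}}\vdash\sem{G}\geq 0$. Now I would simply re-run the completeness argument (Theorem~\ref{thm:modal_free_completeness}) on this semantic judgement. That argument produces a derivation of $G$ in two stages: it first feeds an equational-logic derivation of $0 = \sem{G}\sqcap 0$ into Lemma~\ref{lem:modal_free_completeness_aux} to obtain a derivation of $\vdash 1.(0\sqcap\sem{G}),1.0$, and then it peels the logical connectives off that hypersequent by repeated use of the invertibility Lemma~\ref{lem:modal_free_can_full_invertibility}. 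The second stage never introduces a T rule, by Remark~\ref{rem:modal_free_inv_no_T_rule}; so if the first stage can be carried out without T, the derivation of $G$ obtained this way is T-free, and the corollary follows. Note that CAN may well be used in this derivation --- the invertibility lemma relies on it --- so the argument trades the elimination of T for the possible reintroduction of CAN; this is harmless, since admissibility only requires the existence of \emph{some} T-free derivation.

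The one thing that has to be checked, and hence the main obstacle, is that the proof of Lemma~\ref{lem:modal_free_completeness_aux} is itself T-free. This is a case inspection over the shape of the given equational derivation. The reflexivity, symmetry, transitivity and substitution cases, and all the axiom cases, visibly use only \textsf{INIT}, the logical rules, the structural rules other than T, and the auxiliary Lemmas~\ref{lem:modal_free_ext_ID_rule} and~\ref{lem:modal_free_subst}, all of which are T-free. The delicate case is the context rule applied to a context of the form $C=rC'$, where a scalar $r$ must be propagated through the sub-derivations supplied by the induction hypothesis: I would verify that this scaling can always be realised by applications of the $\times$ rule --- moving the scalar between a weighted term and its weight --- together with a suitably re-weighted instance of Lemma~\ref{lem:modal_free_ext_ID_rule} at the leaves, rather than by an application of T. Getting this scalar bookkeeping exactly right is really the only subtlety; everything else is routine. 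It is also worth contrasting the resulting statement with the Rational T--elimination theorem (Theorem~\ref{thm:modal_free_conservativity}), which removes T only from CAN-free derivations and only when all scalars are rational: here, at the price of possibly using CAN, T is shown to be dispensable with no restriction on the scalars.
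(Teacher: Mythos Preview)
Your approach is exactly the paper's: the corollary is obtained from the remark immediately preceding it, which says that the completeness argument (Lemma~\ref{lem:modal_free_completeness_aux} followed by the invertibility Lemma~\ref{lem:modal_free_can_full_invertibility}) never needs T, so composing soundness with this T-free completeness yields admissibility.

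You have in fact been more careful than the paper's remark. The displayed proof of Lemma~\ref{lem:modal_free_completeness_aux} \emph{does} invoke T in precisely the case you flagged, namely \textsf{ctxt} with $C=rC'$: from the inductive hypothesis $\vdash 1.C'[A],1.\negF{C'[B]}$ the paper passes to $\vdash r.C'[A],r.\negF{C'[B]}$ via T before applying $\times$. Your instinct that this is removable is right, but your stated mechanism (``$\times$ plus a re-weighted instance of Lemma~\ref{lem:modal_free_ext_ID_rule} at the leaves'') is too local to work as written: Lemma~\ref{lem:modal_free_ext_ID_rule} only inserts balanced pairs $\vec r.A,\vec s.\negF{A}$ and does not let you rescale an arbitrary already-built derivation. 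The clean fix is either to strengthen Lemma~\ref{lem:modal_free_completeness_aux} to conclude $\vdash r.A,r.\negF{B}$ for every $r>0$ (all cases go through unchanged, and the $C=rC'$ case then follows by $\times$ alone), or equivalently to note that any T-free derivation can be uniformly scaled by a positive real: every rule of \hr\ except T is preserved under multiplying all weights in every sequent by a fixed $r>0$, since the side conditions of \textsf{ID} and \textsf{CAN} are homogeneous equalities. Either formulation makes your ``scalar bookkeeping'' remark precise.
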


It turns out, however, that there is no hope of eliminating both the T rule and the CAN rule from the \hr\ system.

\begin{lem}
  \label{lem:modal_free_not_complete}
Let $r_1$ and $r_2$ be two irrational numbers that are incommensurable (so there is no $q\in\mathbb{Q}$ such that $qr_1 = r_2$). Then the atomic hypersequent $G$
$$\vdash r_1.x \sep \vdash r_2.\covar{x}$$ 
does not have a CAN-free and T-free derivation.
\end{lem}
\begin{proof}
This is a corollary of the next Lemma~\ref{lem:modal_free_int_lambda_prop}. The idea is that in the \hr\ system without the T rule and the CAN rule, the only way to derive $G$ is by applying the structural rules S, C, W, M and the ID rule. Each of these rules can be seen as adding up the sequents in $G$ or multiplying them up by a positive natural number scalar. Since $r_1$ and $r_2$ are incommensurable, it is not possible to construct a derivation.
\end{proof}

\begin{lem}
  \label{lem:modal_free_int_lambda_prop}
  For all atomic hypersequents $G$, built using the variables and negated variables $x_1, \covar{x_1}, \dots, x_{k}, \covar{x_{k}}$, of the form  
  $$\vdash \Gamma_1 \sep \dots \sep \vdash \Gamma_m$$
where $\Gamma_i =  \vec{r}_{i,1}.x_1,...,\vec{r}_{i,k}.x_k, \vec{s}_{i,1}.\covar{x_1},...,\vec{s}_{i,k}.\covar{x_{i,k}}$, the following are equivalent:
  
  \begin{enumerate}
  \item $G$ has a CAN-free and T-free derivation.
  \item there exist natural numbers $n_1,...,n_m \in \mathbb{N}$, one for each sequent  in $G$, such that:
  \begin{itemize}
  \item there exists $i\in [1..m]$ such that $n_i \neq 0$, i.e., the numbers are not all $0$'s, and
  \item for every variable and covariable $(x_j, \covar{x_j})$ pair, it holds that
  $$
  \sum^{m}_{i=1} n_i (\sum \vec{r}_{i,j}) =   \sum^{m}_{i=1} n_i (\sum \vec{s}_{i,j}) 
  $$
i.e., the scaled (by the numbers $n_1$ \dots $n_m$) sum of the coefficients in front of the variable $x_j$ is equal to the scaled sum of the coefficients in from of the covariable $\covar{x_j}$. 
  \end{itemize}
  \end{enumerate}
\end{lem}
\begin{proof}
  We prove $(1) \Rightarrow (2)$ by induction on the derivation of $G$. We show only the M case, the other cases being trivial:
  \begin{itemize}
  \item If the derivation finishes with
    \[ \infer[\text{M}]{\vdash \Gamma_1 \mid ... \mid \vdash \Gamma_m , \Gamma'_m}{\vdash \Gamma_1 \mid ... \mid \vdash \Gamma_m & \vdash \Gamma_1 \mid ... \mid \vdash \Gamma'_m} \]
    by induction hypothesis, there are $n_1,...,n_m \in \mathbb{N}$ and $n'_1,...,n'_m \in \mathbb{N}$ such that :
    \begin{itemize}
    \item there exists $i\in [1..m]$ such that $n_i \neq 0$.
    \item for every variable and covariable $(x_j,\covar{x_j})$ pair, it holds that $\sum_i n_i.\sum \vec{r}_{i,j} = \sum_i n_i.\sum \vec{s}_{i,j}$.
    \item there exists $i\in [1..m]$ such that $n'_i \neq 0$.
    \item for every variable and covariable $(x_j,\covar{x_j})$ pair, it holds that $$\sum_{i=0}^{m-1} n'_i.\sum \vec{r}_{i,j} + n'_m.\sum \vec{r'}_{m,j}= \sum_{i=0}^{m-1} n'_i.\sum \vec{s}_{i,j} + n'_m.\sum \vec{s'}_{m,j}$$
    \end{itemize}
    If $n_m=0$ then $n_1,...,n_{m-1}, 0$ satisfies the property.\\
    Otherwise if $n'_m = 0$ then $n'_1,...,n'_{m-1},0$ satisfies the property.\\
    Otherwise, $n_m.n'_1 + n'_m.n_1, n_m.n'_2 + n'_m.n_2,...,n_m.n'_{m-1} + n'_m.n_{m-1},n_m.n'_m$ satisfies the property.
  \end{itemize}
  The other way ($(2) \Rightarrow (1)$) is more straightforward. If there exist natural numbers $n_1,...,n_m \in \mathbb{N}$, one for each sequent  in $G$, such that:
  \begin{itemize}
  \item there exists $i\in [1..m]$ such that $n_i \neq 0$ and
  \item for every variable and covariable $(x_j, \covar{x_j})$ pair, it holds that
    $$
    \sum^{m}_{i=1} n_i (\sum \vec{r}_{i,j}) =   \sum^{m}_{i=1} n_i (\sum \vec{s}_{i,j}) 
    $$
  \end{itemize}
  then we can use the W rule to remove the sequents corresponding to the numbers $n_i = 0$, and use the C rule $n_i-1$ times then the S rule $n_i-1$ times on the $i$th sequent to multiply it by $n_i$. If we assume that there is a natural number $l$ such that $n_i = 0$ for all $i > l$ and $n_i \neq 0$ for all $i \leq l$, then the CAN-free T-free derivation is:
  \[ \infer[\text{W}^*]{\vdash \Gamma_1 \sep \dots \sep \Gamma_m}{\infer[\text{C-S}^*]{\vdash \Gamma_1 \sep \dots \sep \vdash \Gamma_l}{\infer[\text{S}^*]{\vdash {\Gamma_1}^{n_1} \sep \dots \sep \vdash {\Gamma_l}^{n_l}}{\infer[\text{ID}^*]{\vdash {\Gamma_1}^{n_1},\dots,{\Gamma_l}^{n_l}}{\infer[\text{INIT}]{\vdash}{}}}}} \]
  where $\Gamma ^n$ stands for $\underbrace{\Gamma,\dots,\Gamma}_{n}$.
\end{proof}

%%% Local Variables:
%%% mode: latex
%%% TeX-master: "main"
%%% End:

\subsection{CAN-free Invertibility -- Proof of Theorem~\ref{thm:modal_free_invertibility}}
\label{subsec:modal_free_invertibility}
In this section, we go through the details of the proof of Theorem~\ref{thm:modal_free_invertibility}.

It is technically convenient, in order to carry out the inductive argument, to prove a slightly stronger result, expressed as the invertibility of more general logical rules that can act on the same term on different sequents of the hypersequent, at the same time. The generalised rules are the following:

\begin{figure}[h!]
  \begin{center}
  \scalebox{0.85}{
     \begin{minipage}{13.5cm}
        \textbf{Logical rules:}
        \begin{center}
          \begin{tabular}{ccc}
            $\infer[0]
            {\left[\vdash \Gamma_i , \vec{r}_i. 0 \right]_{i=1}^n}
            {\left[\vdash \Gamma_i \right]_{i=1}^n}$
            &
              $\infer[+]
              {\left[\vdash \Gamma_i , \vec{r}_i.(A + B) \right]_{i=1}^n}
              {\left[\vdash \Gamma_i , \vec{r}_i.A , \vec{r}_i.B \right]_{i=1}^n}$
            &
              $\infer[\times]
              {\left[\vdash \Gamma_i, \vec{r}_i.(sA)\right]_{i=1}^n}
              {\left[\vdash \Gamma_i, (s\vec{r}_i).A\right]_{i=1}^n}$\\[0.4cm]
          \end{tabular}
          \begin{tabular}{cc}
            $\infer[\sqcup]
            {\left[\vdash \Gamma_i , \vec{r}_i.(A \sqcup B) \right]_{i=1}^n}
            {\left[\vdash \Gamma_i , \vec{r}_i.A \sep \vdash \Gamma_i , \vec{r}_i.B \right]_{i=1}^n}$
            &
              $\infer[\sqcap]
              {\left[\vdash \Gamma_i , \vec{r}_i.(A \sqcap B) \right]_{i=1}^n}
              {\left[\vdash \Gamma_i , \vec{r}_i.A \right]_{i=1}^n & \left[\vdash \Gamma_i , \vec{r}_i.B \right]_{i=1}^n}$
          \end{tabular}
        \end{center}
      \end{minipage}
    }
  \end{center}
  \caption{Generalised logical rules}
  \label{rules:generalised-logical rules}
\end{figure}

We conceptually divide the logical rules in three categories:
\begin{itemize}
\item The rules with only one premise and that do not change the number of sequents --- the $0,+,\times$ rules.
\item The rule with two premises --- the $\sqcap$ rule.
\item The rule with only one premise but that adds one sequent to the hypersequent --- the $\sqcup$ rule.
\end{itemize}
Because of the similarities of the rules in each of these categories,  we just prove the CAN-free invertibility of one rule in each category by means of a sequence of lemmas.

$ \ $ \\

\begin{lem}
  If $[ \vdash \Gamma_i , \vec{r}_i. (A \sqcup B)]_{i=1}^n$ has a CAN-free derivation then $[ \vdash \Gamma_i , \vec{r}_i. A \sep \vdash \Gamma_i , \vec{r}_i. B ]_{i=1}^n$ has a CAN-free derivation.
\end{lem}
\begin{proof}
  By induction on the derivation of $[ \vdash \Gamma_i , \vec{r}_i. (A \sqcup B)]_{i=1}^n$. Most cases are easy except the cases for when the derivation ends with a M rule or a $\sqcap$ rule so we will only show those cases.
  \begin{itemize}
  
  \item If the derivation finishes with
    \[\infer[\text{M}]{G \sep \vdash \Gamma_1 , \Gamma_2, \vec{r}_1.(A\sqcup B), \vec{r}_2. (A \sqcup B) }{G \sep \vdash \Gamma_1 , \vec{r}_1. (A \sqcup B) & G \sep \vdash \Gamma_2 , \vec{r}_2. (A \sqcup B)} \]
    with $G = [ \vdash \Gamma_i , \vec{r}_i. (A \sqcup B) ]_{i=3}^n$ and $G' = [ \vdash \Gamma_i , \vec{r}_i. A \sep \vdash \Gamma_i , \vec{r}_i. B ]_{i=3}^n$ then by induction hypothesis on the CAN-free derivations of the premises we have that
    \[ \proveNC G' \sep \vdash \Gamma_1 , \vec{r}_1. A \sep \vdash \Gamma_1 , \vec{r}_1. B \]
    and 
    \[ \proveNC G' \sep \vdash \Gamma_2 , \vec{r}_2. A \sep \vdash \Gamma_2 , \vec{r}_2. B \] are derivable by CAN-free derivations.   We want to prove that both
    \[ \proveNC G' \sep \vdash \Gamma_1 , \vec{r}_1. A \sep \vdash \Gamma_2 , \vec{r}_2. B \]
    and 
    \[ \proveNC G' \sep \vdash \Gamma_2 , \vec{r}_2. A \sep \vdash \Gamma_1 , \vec{r}_1. B \] are CAN-free derivable, as this will allow us to conclude by application of the M rule:
      \[
        \scalebox{0.75}{
          \infer[\text{M}]{G \sep \vdash \Gamma_1, \Gamma_2, \vec{r}_1.A,\vec{r_2}.A \sep \vdash \Gamma_1, \Gamma_2, \vec{r}_1.B,\vec{r_2}.B}{\infer[\text{M}]{G \sep \vdash \Gamma_1, \vec{r}_1.A \sep \vdash \Gamma_1, \Gamma_2, \vec{r}_1.B,\vec{r_2}.B}{G \sep \vdash \Gamma_1, \vec{r}_1.A \sep \vdash \Gamma_2,\vec{r_2}.B & G \sep \vdash \Gamma_1, \vec{r}_1.A \sep \vdash \Gamma_2,\vec{r_2}.B} & \infer[\text{M}]{G \sep \vdash \Gamma_2,\vec{r_2}.A \sep \vdash \Gamma_1, \Gamma_2, \vec{r}_1.B,\vec{r_2}.B}{G \sep \vdash \Gamma_2, \vec{r}_2.A \sep \vdash \Gamma_1,\vec{r_1}.B & G \sep \vdash \Gamma_2, \vec{r}_2.A \sep \vdash \Gamma_2,\vec{r_2}.B}}
        }\]
   
    If $\vec{r}_1 = \emptyset$ or $\vec{r}_2 = \emptyset$, those two hypersequents are derivable using the C rule then the W rule.\\
    Otherwise, by using the W rule, Lemma~\ref{lem:modal_free_genT2} and the M rule, we have
    \[ \proveNC G' \sep \vdash \Gamma_1 , \vec{r}_1. A \sep \vdash \Gamma_2 , \vec{r}_2. B \sep \vdash \vec{r}_2. \Gamma_1,\vec{r}_1. \Gamma_2, (\vec{r}_1\vec{r}_2) A, (\vec{r}_1\vec{r}_2) B \]
    and 
    \[ \proveNC G' \sep \vdash \Gamma_2 , \vec{r}_2. A \sep \vdash \Gamma_1 , \vec{r}_1. B \sep \vdash \vec{r}_2. \Gamma_1,\vec{r}_1. \Gamma_2, (\vec{r}_1\vec{r}_2) A, (\vec{r}_1\vec{r}_2) B \]
    We can then conclude using the S rule, Lemma~\ref{lem:modal_free_genT} and the C rule.
  \item If the derivation finishes with
    \[\infer[\sqcap]{G \sep \vdash \Gamma_1, \vec{r}_1. (A \sqcup B) , \vec{s}.(C \sqcap D)}{G \sep \vdash \Gamma_1, \vec{r}_1. (A \sqcup B) , \vec{s}.C & G \sep \vdash \Gamma_1, \vec{r}_1. (A \sqcup B) , \vec{s}.D} \]
    with $G = [ \vdash \Gamma_i , \vec{r}_i. (A \sqcup B)]_{i=2}^n $ and $G' =  [ \vdash \Gamma_i , \vec{r}_i. A \sep \vdash \Gamma_i ,\vec{r}_i. B]_{i=2}^n$, then by induction hypothesis on the CAN-free derivations of the premises we have that
    \[ \proveNC G' \sep \vdash \Gamma_1 ,\vec{r}_1. A, \vec{s}.C \sep \vdash \Gamma_1 , \vec{r}_1. B,\vec{s}.C\]
    and
    \[ \proveNC G' \sep \vdash \Gamma_1 ,\vec{r}_1. A, \vec{s}.D \sep \vdash \Gamma_1 , \vec{r}_1. B,\vec{s}.D\]
    so by using the W rule and the M rule, we can derive
    \[ \proveNC G' \sep \vdash \Gamma_1 ,\vec{r}_1. A, \vec{s}.C \sep \vdash \Gamma_1 , \vec{r}_1. B,\vec{s}.D \sep \vdash \Gamma_1, \Gamma_1 , \vec{r}_1. A,\vec{r}_1. B, \vec{s}.C , \vec{s}.D \]
    and
    \[ \proveNC G' \sep \vdash \Gamma_1 ,\vec{r}_1. A, \vec{s}.D \sep \vdash \Gamma_1 , \vec{r}_1. B,\vec{s}.C \sep \vdash \Gamma_1, \Gamma_1 , \vec{r}_1. A,\vec{r}_1. B, \vec{s}.C , \vec{s}.D\]
    and then with the S rule and the C rule
    \[ \proveNC G' \sep \vdash \Gamma_1 ,\vec{r}_1. A, \vec{s}.C \sep \vdash \Gamma_1 , \vec{r}_1. B,\vec{s}.D \]
    and
    \[ \proveNC G' \sep \vdash \Gamma_1 ,\vec{r}_1. A, \vec{s}.D \sep \vdash \Gamma_1 , \vec{r}_1. B,\vec{s}.C\]
    We can then conclude with the $\sqcap$ rule. \qedhere
\end{itemize}
\end{proof}

\begin{lem}
  If $[ \vdash \Gamma_i , \vec{r}_i.(A + B) ]_{i=1}^n$ has a CAN-free derivation then $[ \vdash \Gamma_i , \vec{r}_i.A,\vec{r}_i.B ]_{i=1}^n$ has a CAN-free derivation.
\end{lem}
\begin{proof}
  Straightforward induction on the derivation of $[ \vdash \Gamma_i , \vec{r}_i.(A + B) ]_{i=1}^n$. For instance if the derivation finishes with
    \[\infer[\text{M}]{G \sep \vdash \Gamma_1 , \Gamma_2, \vec{r}_1.(A+B), \vec{r}_2.(A + B)}{G \sep \vdash \Gamma_1 , \vec{r}_1. (A + B) & G \sep \vdash \Gamma_2 , \vec{r}_2. (A + B)} \]
    with $G = [ \vdash\Gamma_i , \vec{r}_i. (A + B) ]_{i=3}^n$ and $G' = [ \vdash\Gamma_i , \vec{r}_i. A,\vec{r}_i. B]_{i=3}^n$, 
    then by induction hypothesis on the CAN-free derivations of the premises we have that
    \[ \proveNC G' \sep \vdash \Gamma_1 ,\vec{r}_1. A, \vec{r}_1. B\]
    and
    \[ \proveNC G' \sep \vdash \Gamma_2 ,\vec{r}_2. A, \vec{r}_2. B\]
    so
    \[ \infer[\text{M}]{G' \sep \vdash \Gamma_1,\Gamma_2 ,\vec{r}_1.A, \vec{r}_2. A, \vec{r}_1.B, \vec{r}_2. B }{G' \sep \vdash \Gamma_1 ,\vec{r}_1. A, \vec{r}_1. B & G' \sep \vdash \Gamma_2 ,\vec{r}_2. A, \vec{r}_2. B}\qedhere\]
  
\end{proof}

\begin{lem}
  \label{lem:modal_free_mElim}
  If $[ \vdash \Gamma_i , \vec{r}_i. (A \sqcap B)  ]_{i=1}^n$ has a CAN-free derivation then $[ \vdash \Gamma_i , \vec{r}_i. A ]_{i=1}^n$ and $[\vdash \Gamma_i , \vec{r}_i. B ]_{i=1}^n$ have a CAN-free derivation.
\end{lem}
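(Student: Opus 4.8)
The plan is to prove both claims simultaneously by induction on the given CAN--free derivation of $[\vdash\Gamma_i,\vec{r}_i.(A\sqcap B)]_{i=1}^n$, exactly in the style of the two preceding lemmas (CAN--free invertibility of $\sqcup$ and of $+$). As there, I would state the induction hypothesis for arbitrary hypersequent contexts, arbitrary multisets $\Gamma_i$, arbitrary vectors $\vec{r}_i$ and arbitrary $n$, but for the \emph{fixed} pair of terms $A,B$ whose meet is being inverted. The base case, INIT, contains no occurrence of $(A\sqcap B)$ and is immediate.

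For the inductive step I would split on the last rule $\rho$. If $\rho$ is a structural rule ($\text{W},\text{C},\text{S},\text{T},\text{ID}$) or a logical rule acting on a formula other than the distinguished occurrences of $(A\sqcap B)$, then the distinguished occurrences are merely carried along — possibly duplicated, rescaled, moved between sequents, or merged by an M step — and I would apply the induction hypothesis to the premise(s), which are again hypersequents of the required shape, and then reapply $\rho$. The case $\rho=\text{M}$ is worth spelling out but is still routine: from premises $G\sep\vdash\Gamma_1,\vec{r}_1.(A\sqcap B)$ and $G\sep\vdash\Gamma_2,\vec{r}_2.(A\sqcap B)$ the induction hypothesis yields CAN--free derivations of $G'\sep\vdash\Gamma_1,\vec{r}_1.A$ and $G'\sep\vdash\Gamma_2,\vec{r}_2.A$ (with $G'$ the correspondingly transformed context), and a single application of M produces $G'\sep\vdash\Gamma_1,\Gamma_2,\vec{r}_1.A,\vec{r}_2.A$; the $B$ side is symmetric. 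It is worth emphasising that, unlike the CAN--free invertibility proof for $\sqcup$, here no appeal to Corollaries \ref{cor:modal_free_andAlphaBeta}--\ref{cor:modal_free_orAlphaBeta} (scalar multiplication of CAN--free derivations) is needed, precisely because the statement asks for two \emph{independent} derivations rather than a single derivation carrying two sequents, so the M step goes through without any cross--term juggling.

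The only case that requires genuine care is when $\rho$ is the $\sqcap$ rule applied to (some of) the distinguished occurrences themselves. In the generalised formulation of $\sqcap$, a conclusion sequent $\vdash\Gamma_i,\vec{r}_i.(A\sqcap B)$ may arise by writing $\vec{r}_i$ as a concatenation $\vec{r}_i',\vec{r}_i''$, with the rule acting only on the $\vec{r}_i''.(A\sqcap B)$ part, so that the two premises are $[\vdash\Gamma_i,\vec{r}_i'.(A\sqcap B),\vec{r}_i''.A]_{i=1}^n$ and $[\vdash\Gamma_i,\vec{r}_i'.(A\sqcap B),\vec{r}_i''.B]_{i=1}^n$. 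Applying the induction hypothesis to the first premise replaces each residual $\vec{r}_i'.(A\sqcap B)$ by $\vec{r}_i'.A$, producing a CAN--free derivation of $[\vdash\Gamma_i,\vec{r}_i'.A,\vec{r}_i''.A]_{i=1}^n$, which is literally $[\vdash\Gamma_i,\vec{r}_i.A]_{i=1}^n$ since $\vec{r}_i$ is the concatenation of $\vec{r}_i'$ and $\vec{r}_i''$ and the multiset notation $\vec{r}.A$ distributes over concatenation; symmetrically, the induction hypothesis on the second premise gives $[\vdash\Gamma_i,\vec{r}_i.B]_{i=1}^n$. The only bookkeeping point is thus this elementary identity, together with tracking which occurrences are residual and which are newly produced by the rule. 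I do not expect any deeper obstacle: the lemma is strictly easier than the corresponding statements for $\sqcup$ and $+$, the one mild subtlety being to formulate the induction hypothesis generally enough (acting simultaneously on several sequents, arbitrary contexts and vectors) that both the M case and the partial-$\sqcap$ case close mechanically.
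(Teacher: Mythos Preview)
Your overall plan is correct, and your observation that the M case is simpler here than for $\sqcup$ is right. But there is one sub-case you have not covered, and it is precisely the one that forces the appeal to Corollary~\ref{cor:modal_free_andAlphaBeta} in the paper's proof. When the last rule is $\sqcap$ acting on the active formula $A\sqcap B$, the weight vector it acts on need not be a sub-multiset of the distinguished vector $\vec{r}_1$: the context $\Gamma_1$ may itself contain occurrences $\vec{t}.(A\sqcap B)$, and the rule instance may consume some of these \emph{together with} some of the distinguished ones. Your decomposition $\vec{r}_1 = \vec{r}_1';\vec{r}_1''$ with the rule acting ``only on $\vec{r}_1''$'' is then impossible, and the rule cannot be filed under ``acting on another formula'' either. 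A minimal instance: $\Gamma_1 = 1.(A\sqcap B)$, $\vec{r}_1 = (2)$, last rule $\sqcap$ with weight vector $(1,2)$.

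In this mixed case (say, for simplicity, the rule consumes all of $\vec{t}$ and all of $\vec{r}_1$), the induction hypothesis for the $A$-side applied to the two premises yields $G'\sep\vdash\Gamma_1',\vec{t}.A,\vec{r}_1.A$ and $G'\sep\vdash\Gamma_1',\vec{t}.B,\vec{r}_1.B$, writing $\Gamma_1=\Gamma_1',\vec{t}.(A\sqcap B)$. The target is $G'\sep\vdash\Gamma_1',\vec{t}.(A\sqcap B),\vec{r}_1.A$. Rebuilding $\vec{t}.(A\sqcap B)$ via the $\sqcap$ rule requires the crossed hypersequent $G'\sep\vdash\Gamma_1',\vec{t}.B,\vec{r}_1.A$, which neither premise's induction hypothesis supplies directly; obtaining it from the two ``pure'' hypersequents above is exactly the content of Corollary~\ref{cor:modal_free_andAlphaBeta}. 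So your explicit claim that no appeal to that corollary is needed is incorrect: the active-$\sqcap$ case needs it for the same ``cross-term'' reason as in the $\sqcup$-invertibility proof, only in a single sequent rather than across two.
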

\begin{proof}
  A straightforward induction on the derivation of $[ \vdash \Gamma_i , \vec{r}_i. (A \sqcap B)  ]_{i=1}^n$. For instance if the derivation finishes with
    \[\infer[\text{M}]{G \sep \vdash \Gamma_1 , \Gamma_2, \vec{r}_1.(A\sqcap B), \vec{r}_2. (A \sqcap B)}{G \sep \vdash \Gamma_1 , \vec{r}_1. (A \sqcap B) & G \sep \vdash \Gamma_2 , \vec{r}_2. (A \sqcap B)} \]
    with $G = [ \vdash \Gamma_i , \vec{r}_i. (A \sqcap B)  ]_{i=3}^n$, 
    then by induction hypothesis on the CAN-free derivations of the premises we have that
    \[ \proveNC [ \vdash \Gamma_i , \vec{r}_i. A  ]_{i=3}^n \sep \vdash \Gamma_1 ,\vec{r}_1. A\]
    and
    \[ \proveNC [ \vdash \Gamma_i , \vec{r}_i. A  ]_{i=3}^n \sep \vdash \Gamma_2 ,\vec{r}_2. A\]
    so
    \[ \infer[\text{M}]{[ \vdash \Gamma_i , \vec{r}_i. A  ]_{i=3}^n \sep \vdash \Gamma_1,\Gamma_2 ,\vec{r}_1.A, \vec{r}_2. A}{[ \vdash \Gamma_i , \vec{r}_i. A  ]_{i=3}^n \sep \vdash \Gamma_1 ,\vec{r}_1. A & [ \vdash \Gamma_i , \vec{r}_i. A  ]_{i=3}^n \sep \vdash \Gamma_2 ,\vec{r}_2. A}\qedhere\]
\end{proof}

%%% Local Variables:
%%% mode: latex
%%% TeX-master: "main"
%%% End:

\subsection{M-elimination -- Proof of Theorem~\ref{thm:modal_free_m_elim}}
\label{subsec:modal_free_m_elim}

We need to show that for each hypersequent $G$ and sequents $\Gamma$ and $\Delta$, if there exist CAN-free and M-free derivations $d_1$ of $G \sep \vdash \Gamma$ and $d_2$ of $G \sep \vdash \Delta$, then there exists also a CAN-free and M-free derivation of  $G \sep \vdash \Gamma,\Delta$.

The idea behind the proof is to combine $d_1$ and $d_2$ step-by-step. First we  take the derivation $d_1$ and we modify it into a CAN-free and M-free prederivation (i.e., an open derivation) of $$G \sep G \sep \vdash \Gamma, \Delta$$
where all the leaves in the prederivation are either terminated (by the INIT axiom) or non-terminated and of the form:  
 $$G \sep \vdash \vec{r}.\Delta$$ for some vector $\vec{r}$ of scalars. Then we use the derivation $d_2$ to construct a CAN-free and M-free derivation of each $$G \sep \vdash \vec{r}.\Delta$$ hence completing the prederivation of $$G \sep G \sep \vdash \Gamma, \Delta$$ into a full derivation. From this it is possible to obtain the desired CAN-free and M-free derivation of $G\sep \vdash \Gamma,\Delta$ using several times the  C rule:
 
 $$\infer[\text{C}^*]{G \sep \vdash \Gamma, \Delta}
                                {G \sep G \sep \vdash \Gamma , \Delta}$$

In what follows, the first step is formalized as Lemma~\ref{lemma:modal_free_m-elim-first-step} and the second step as Lemma~\ref{lemma:modal_free_m-elim-second-step}.

\begin{lem}
\label{lemma:modal_free_m-elim-first-step}

Let $d_1$ be a CAN-free and M-free derivation of $G \sep\vdash  \Gamma$ and let $\Delta$ be a sequent. Then there exists a CAN-free M-free prederivation of
$$
G \sep G \sep \vdash \Gamma,\Delta.
$$
where all non-terminated leaves are of the form $G\sep \vdash \vec{r}.\Delta$ for some vector $\vec{r}$.
\end{lem}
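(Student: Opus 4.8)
The plan is to prove the lemma by induction on the CAN--free and M--free derivation $d_1$, after generalising the statement suitably. The convenient form of the induction hypothesis is: for every CAN--free M--free derivation $d$ of a hypersequent $H$, every hypersequent $\underline G$ (kept as purely inert context), every multiset $S$ of distinguished sequents of $H$, and every choice of a vector $\vec r_{\,\Xi}$ of positive scalars for each $\Xi\in S$, there is a CAN--free M--free prederivation of $\underline G\mid H_S$ --- where $H_S$ is $H$ with each distinguished $\vdash\Xi$ replaced by $\vdash\Xi,\vec r_{\,\Xi}.\Delta$ --- all of whose non--terminated leaves have the form $\underline G\mid\vdash\vec r.\Delta$. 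The lemma is then the instance $H=G\mid\vdash\Gamma$, $\underline G=G$, $S=\{\vdash\Gamma\}$, $\vec r_{\,\vdash\Gamma}=(1)$, which yields a prederivation of $G\mid G\mid\vdash\Gamma,\Delta$ of the required shape.

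The construction replays $d$ rule by rule, carrying the inert copy $\underline G$ alongside and letting each distinguished sequent drag its attached block $\vec r.\Delta$ of rescaled, possibly repeated, copies of $\Delta$. Since $d$ is M-- and CAN--free, each of its rules either modifies one sequent by a logical rule, by $0$, or by ID; rescales one sequent by T; splits one sequent in two by S; duplicates one sequent by C; or inserts a fresh sequent by W. These are replayed as follows: logical rules, $0$ and ID act verbatim on the non--$\Delta$ part of the sequent; a T step with scalar $t$ first rescales the attached block to $(t\vec r).\Delta$ in the premise and then applies T; an S step splitting $\vdash\Gamma_1,\Gamma_2$ into $\vdash\Gamma_1\mid\vdash\Gamma_2$ is obtained, in the replay, by partitioning the two attached blocks into a single block on the merged premise sequent and then applying S --- so the $\Delta$--blocks \emph{concatenate} when S merges two distinguished sequents going upward, which is the only mechanism producing genuinely vector--valued blocks; a C step duplicating $\vdash\Gamma$ gives both premise copies the same attached block and then applies C; and a W step inserting a distinguished sequent $\vdash\Xi$ is replayed by recursing on the premise with $\vdash\Xi$ removed from $S$ and re--inserting $\vdash\Xi,\vec r_{\,\Xi}.\Delta$ by W. In particular a distinguished sequent that $d$ weakens away is simply weakened back in with its block attached, so it creates no open leaf; if $d_1$'s last rule weakens $\vdash\Gamma$ in, the resulting prederivation is already a full derivation.

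The only leaves that cannot be closed this way are the INIT leaves of $d$ whose sole, empty sequent is distinguished: there the replayed hypersequent is exactly $\underline G\mid\vdash\vec r.\Delta$, which is recorded as a non--terminated leaf --- of the required form; every other INIT leaf of $d$ remains an INIT leaf. Assembling the branches gives a prederivation of $\underline G\mid H_S$ with all non--terminated leaves of the form $\underline G\mid\vdash\vec r.\Delta$, and the instance described above is the statement of the lemma. These open leaves are later closed, using $d_2$ together with Lemma \ref{lem:modal_free_genT2} to derive each $G\mid\vdash\vec r.\Delta$, in the companion Lemma \ref{lemma:modal_free_m-elim-second-step}.

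I expect the main obstacle to be organisational rather than conceptual: pinning down the invariant precisely (the never--disturbed parameter $\underline G$; a per--sequent, vector--weighted block of copies of $\Delta$) and then verifying, rule by rule, that the replay preserves it. The delicate cases are S and C --- which respectively concatenate and copy the $\Delta$--blocks --- and one must also check that at an open leaf nothing survives besides $\underline G$ and the block $\vec r.\Delta$, so that the leaf really has the stated form $\underline G\mid\vdash\vec r.\Delta$; the remaining cases are routine.
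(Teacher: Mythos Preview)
Your proposal is correct and follows essentially the same approach as the paper, which proves the generalised statement (Lemma~\ref{lem:modal_free_gen-m-elim-first-step}) by attaching a possibly-empty block $\vec{r_i}.\Delta$ to \emph{every} sequent of the derived hypersequent and then instantiating with $\vec{r_i}=\emptyset$ on the sequents of $G$ --- your ``distinguished set $S$'' is the same device. One small fix: at an INIT leaf whose empty sequent is not distinguished, the replayed leaf is $\underline G\mid\vdash$, which is not literally an INIT axiom but is closed by W$^*$ followed by INIT (or, as the paper does, simply recorded as an open leaf of the required form with $\vec r=\emptyset$).
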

\begin{proof}
  This is an instance of the slightly more general statement of Lemma~\ref{lem:modal_free_gen-m-elim-first-step} below where:
  \begin{itemize}
  \item $[ \vdash \Gamma_i ]_{i=1}^{n-1} = G$ and $\Gamma_n = \Gamma$.
  \item $\vec r_i = \emptyset$ for $1 \leq i < n$ and $\vec{r}_n = \vec r$.\qedhere
  \end{itemize}
\end{proof}

\begin{lem}
\label{lemma:modal_free_m-elim-second-step}

Let $d_2$ be CAN-free and M-free derivation of $G \sep \vdash \Delta$. Then, for every vector $\vec{r}$, there exists a CAN-free and M-free derivation of
$$
G \sep \vdash\vec{r}.\Delta
$$
\end{lem}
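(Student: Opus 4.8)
The plan is to prove the statement by induction on the CAN--free, M--free derivation $d_2$, after strengthening it so that the induction hypothesis survives the structural rules. The strengthened claim: if $d$ is a CAN--free, M--free derivation of a hypersequent $H$, and one marks in $H$ finitely many pairwise disjoint sub--multisets $\delta_1,\dots,\delta_p$ --- each $\delta_j$ contained in a single sequent of $H$ --- together with non--empty vectors $\vec s_1,\dots,\vec s_p$, then the hypersequent obtained from $H$ by replacing each $\delta_j$ by $\vec s_j.\delta_j$ admits a CAN--free, M--free derivation. Lemma \ref{lemma:modal_free_m-elim-second-step} is the instance $p=1$ with $\delta_1$ the whole sequent $\vdash\Delta$ (inside $G\sep\vdash\Delta$), $\vec s_1 = \vec r$, and nothing marked in $G$. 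The degenerate case $\vec r = \emptyset$ of the lemma is disposed of separately, since then $\vdash\vec r.\Delta$ is the empty sequent and $\infer[\text{W}^*]{G \sep \vdash}{\infer[\text{INIT}]{\vdash}{}}$ is the required CAN--free, M--free derivation; so we may assume throughout that the marked vectors are non--empty.

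I would then do a case analysis on the last rule $\rho$ of $d$. If $\rho$ is the INIT axiom the claim is vacuous, as the only sub--multiset of the empty sequent is the empty one and $\vec s_j.\emptyset = \emptyset$. If $\rho$ does not meet any marked sub--multiset one applies the induction hypothesis to the premise(s), with the same marks and vectors, and re--applies $\rho$. If $\rho$ is a logical rule, the $0$ rule, or the ID rule acting on a sequent meeting some $\delta_j$, one uses that scaling commutes with the term operations --- for instance $\vec s_j.\big(\Delta_0,\vec a.(A\,{\circ}\,B)\big) = (\vec s_j.\Delta_0),(\vec s_j\vec a).(A\,{\circ}\,B)$ --- applies the induction hypothesis to the premise(s), with $\delta_j$ replaced by the corresponding sub--multiset there, and re--applies the rule in its generalised form (Figure \ref{rules:generalised-logical rules}) with the suitably scaled vector. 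The rules W, C, T are routine: for W, if the weakened sequent meets a $\delta_j$, one weakens in its already--scaled shape; for C one marks the corresponding sub--multiset in both copies of the duplicated sequent in the premise (with the same vector) and merges them back afterwards; for T one observes that the T--step multiplies the whole sequent, hence the sub--multiset it meets, by a constant, so it suffices to run the induction hypothesis on the premise with that rescaled sub--multiset marked by the same vector and re--apply the original T unchanged.

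The one genuinely delicate case is S, and it is exactly what dictates the shape of the strengthening: the rule S turns a single sequent $\vdash\Sigma_1,\Sigma_2$ of the premise into two sequents $\vdash\Sigma_1 \sep \vdash\Sigma_2$ of the conclusion, so marked sub--multisets lying in $\vdash\Sigma_1$ and in $\vdash\Sigma_2$ in the conclusion all live inside the single sequent $\vdash\Sigma_1,\Sigma_2$ of the premise --- they cannot be tracked as whole sequents. One applies the induction hypothesis to the premise derivation with all of them marked there, obtaining a derivation of the correspondingly scaled hypersequent, and re--applies S; distributivity of scaling over concatenation, $\vec s.(\Sigma_1,\Sigma_2) = \vec s.\Sigma_1,\vec s.\Sigma_2$, ensures the scaled material still splits along $\Sigma_1,\Sigma_2$ as needed. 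I expect the only real work to be the multiset bookkeeping in the S and C cases --- keeping track of exactly which occurrences are marked and making sure all vectors stay non--empty so that the INIT base case remains vacuous --- while all the arithmetic reduces to the elementary identities for $\vec r\vec s$, $\vec r.\Gamma$ and $s.\Gamma$ fixed in Section \ref{modal_free_section}.
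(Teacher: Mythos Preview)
Your strengthened invariant is false, so the induction cannot go through. The problem is precisely the ID rule, which your proposal waves through together with the logical rules. Consider the CAN--free, M--free derivation
\[
  \infer[\text{ID},\ 1=1]{\vdash 1.x,\,1.\covar{x}}{\infer[\text{INIT}]{\vdash}{}}
\]
and mark the sub--multiset $\delta_1=[1.x]$ with the vector $\vec s_1=(2)$. Your strengthened claim would then say that $\vdash 2.x,\,1.\covar{x}$ is derivable; but its interpretation is $2x-x=x$, and $x\ge 0$ fails in the Riesz space $\mathbb{R}$, so by soundness this hypersequent is not derivable at all. More generally, when $\delta_j$ contains an \emph{unbalanced} portion of the block $\vec r.x,\vec s.\covar x$ introduced by an ID step, re--applying ID after scaling would require $(\sum\vec s_j-1)\sum\vec r'=(\sum\vec s_j-1)\sum\vec s'$, which is not guaranteed by the original side condition $\sum\vec r=\sum\vec s$.

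The paper sidesteps this by choosing a \emph{weaker} strengthening: it scales each \emph{whole} sequent of the hypersequent by its own vector (Lemma~\ref{lem:modal_free_copyProof}). With whole--sequent scaling the ID block is either entirely scaled or untouched, so the side condition is preserved (each side just gets multiplied by $\sum\vec r_i$). The price is that the S case no longer goes through ``for free'': when $\vdash\Delta_1,\Delta_2$ splits into $\vdash\Delta_1\sep\vdash\Delta_2$ with target vectors $\vec r_1,\vec r_2$, the paper applies the induction hypothesis to the premise with the product vector $\vec r_1\vec r_2$, splits, and then uses Lemma~\ref{lem:modal_free_genT} (which relies on the T rule, allowed here) to pass from $(\vec r_1\vec r_2).\Delta_i$ back to $\vec r_i.\Delta_i$. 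Your motivation for the sub--multiset strengthening was exactly to avoid this detour in the S case, but the detour is harmless (T is permitted) whereas the stronger invariant is simply not true.
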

\begin{proof}
This is an instance of the slightly more general statement of Lemma~\ref{lem:modal_free_copyProof} below where:
  \begin{itemize}
  \item $[ \vdash \Delta_i ]_{i=1}^{n-1} = G$ and $\Delta_n = \Delta$.
  \item $\vec r_i = 1$ for $1 \leq i < n$ and $\vec{r}_n = \vec r$.\qedhere
  \end{itemize}
\end{proof}

\begin{lem}
  \label{lem:modal_free_gen-m-elim-first-step}
  Let $d_1$ be a CAN-free and M-free derivation of $[\vdash \Gamma_i]_{i=1}^n$ and let $G$ be a hypersequent and $\Delta$ be a sequent. Then for every sequence of vectors $\vec{r_i}$, there exists a CAN-free M-free prederivation of
  \[ G \sep [\vdash \Gamma_i, \vec{r_i}.\Delta]_{i=1}^n \]
  where all non-terminated leaves are of the form $G \sep \vec{r}.\Delta$ for some vector $\vec{r}$.
\end{lem}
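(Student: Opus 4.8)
The plan is to argue by induction on the structure of the CAN--free M--free derivation $d_1$ of $[\vdash\Gamma_i]_{i=1}^n$, constructing the required prederivation of $G \sep [\vdash\Gamma_i,\vec{r_i}.\Delta]_{i=1}^n$ by mirroring the last rule of $d_1$. The guiding idea is that, because $d_1$ is M--free, each sequent $\vdash\Gamma_i$ of its conclusion is produced along a single thread of reasoning, so we may safely attach the decoration $\vec{r_i}.\Delta$ to it and push this decoration upwards through the thread; the vectors $\vec{r_i}$ serve exactly to record how large the grafted copy of $\Delta$ has to be at each point. As $d_1$ uses neither CAN nor M, the only rules that can occur as its last rule are INIT, W, C, S, T, ID and the logical rules $0,+,\times,\sqcup,\sqcap$, and the prederivation we build will use only these, so it is automatically CAN--free and M--free.

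In the base case $d_1$ is INIT, so $n=1$ and $\Gamma_1=\emptyset$; then the target $G\sep\vdash\vec{r_1}.\Delta$ is already an admissible non--terminated leaf of the form $G\sep\vdash\vec{r}.\Delta$, and the single--node prederivation does the job. For the rules that neither act on a formula nor rescale $\Delta$ --- namely W, C, ID and the one--premise logical rules $0,+,\times$ --- the plan is to apply the induction hypothesis to the immediate subderivation with the obvious vectors (for C, duplicating the vector of the contracted sequent; for ID, reusing the same scalars, which still satisfy the side condition) and then re--apply the same rule; the non--terminated leaves are inherited unchanged. The two--premise rule $\sqcap$ is treated in the same way, running the induction hypothesis on both subderivations with the vectors $\vec{r_1},\dots,\vec{r_n}$ and noting that the two premises still share the context they need (the original one extended uniformly by $\vec{r_n}.\Delta$). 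For the sequent--creating rule $\sqcup$ one gives both freshly created sequents the decoration $\vec{r_n}.\Delta$, i.e.\ invokes the induction hypothesis with vectors $\vec{r_1},\dots,\vec{r_n},\vec{r_n}$, and then re--applies $\sqcup$.

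The only cases requiring real bookkeeping --- and the place I expect the sole friction to lie --- are S and T, precisely because these are the rules under which the amount by which $\Delta$ is scaled changes. For S, when the last rule of $d_1$ splits $\vdash\Gamma_{n-1},\Gamma_n$ into $\vdash\Gamma_{n-1}$ and $\vdash\Gamma_n$, one applies the induction hypothesis to the premise attaching the \emph{concatenated} vector $\app{\vec{r_{n-1}}}{\vec{r_n}}$ to the merged sequent, uses the identity $(\app{\vec{r_{n-1}}}{\vec{r_n}}).\Delta = \vec{r_{n-1}}.\Delta,\vec{r_n}.\Delta$, and closes with one S step. For T, when the last rule passes from $\vdash s.\Gamma_n$ to $\vdash\Gamma_n$, one must decorate the premise sequent with $(s\vec{r_n}).\Delta$ instead of $\vec{r_n}.\Delta$, so that the induction hypothesis delivers $\vdash s.\Gamma_n,(s\vec{r_n}).\Delta$, which equals $\vdash s.(\Gamma_n,\vec{r_n}.\Delta)$ and is turned by a single T step into $\vdash\Gamma_n,\vec{r_n}.\Delta$. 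Throughout, the non--terminated leaves of the constructed prederivation are exactly those returned by the induction hypothesis, hence all of the prescribed shape $G\sep\vdash\vec{r}.\Delta$.
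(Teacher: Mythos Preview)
Your proposal is correct and follows exactly the same approach as the paper: a straightforward induction on the CAN--free M--free derivation $d_1$, attaching the decorations $\vec{r_i}.\Delta$ to each sequent and mirroring each rule, with the key bookkeeping being the concatenation $\app{\vec{r_{n-1}}}{\vec{r_n}}$ for S and the rescaling $s\vec{r_n}$ for T. The paper's proof is simply ``By straightforward induction on $d_1$'' with the case analysis suppressed, and your write-up expands precisely those cases in the intended way.
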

\begin{proof}
  By straightforward induction on $d_1$.  
\end{proof}

\begin{lem}
  \label{lem:modal_free_copyProof}
  If $\left [ \vdash \Delta_i \right ]_{i=1}^n$ has a CAN-free M-free derivation then for all $\vec{r_i}$, there is a CAN-free M-free derivation of $\left [\vdash \vec{r_i}.\Delta_i \right ] _{i=1}^n$. 
\end{lem}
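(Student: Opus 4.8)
The plan is to prove the lemma by induction on the given CAN--free and M--free derivation $d$ of $[\vdash \Delta_i]_{i=1}^n$, keeping the universal quantification over the vectors $\vec{r_i}$ inside the induction hypothesis. First I would dispose of a degenerate situation: if some $\vec{r_j}$ is the empty vector, then $\vec{r_j}.\Delta_j$ is the empty multiset, so the target hypersequent $[\vdash \vec{r_i}.\Delta_i]_{i=1}^n$ contains the empty sequent $\vdash$; any hypersequent containing $\vdash$ is derivable using only INIT and repeated W (start from the empty sequent and weaken in all the other sequents), and this derivation is trivially CAN--free and M--free. So from that point on I may assume every $\vec{r_i}$ is non--empty, and I will make sure this invariant is preserved by all recursive calls.

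Then I would do a case analysis on the last rule of $d$. The base case INIT is immediate: there $\Delta_1 = \emptyset$, so $\vec{r_1}.\Delta_1 = \emptyset$ and the goal is again $\vdash$. The structural rules W, C, T, ID and the logical rules $0$, $+$, $\times$, $\sqcup$, $\sqcap$ are all routine ``push the rescaling through the rule'' arguments: apply the induction hypothesis to the premise(s), using on the sequent acted upon the suitably rescaled vector (e.g. $\vec{r_1}\vec s$ when the rule displays a vector $\vec s$, and the same $\vec{r_1}$ for both premise copies of that sequent in the cases C and $\sqcup$), then reapply the rule; the only thing to check is that the side condition $\sum\vec r = \sum\vec s$ of ID survives, which it does because $\sum(\vec{r_1}\vec s) = (\sum\vec{r_1})(\sum\vec s)$. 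In each of these cases the vectors passed to the recursive calls remain non--empty.

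The only genuinely delicate case, and the main obstacle, is the S rule, whose conclusion we may take to be $H \sep \vdash \Delta_1 \sep \vdash \Delta_2$ with premise $H \sep \vdash \Delta_1, \Delta_2$, where $H$ collects the remaining sequents. The difficulty is that the two sequents produced by the split must be rescaled by possibly \emph{different} vectors $\vec{r_1}$ and $\vec{r_2}$, whereas the induction hypothesis, applied to the single premise sequent, can only rescale the whole multiset $\Delta_1, \Delta_2$ by one common vector. My plan is to apply the induction hypothesis to that sequent with the \emph{product} vector $\vec u = \vec{r_1}\vec{r_2}$ (and $\vec{r_3}, \dots, \vec{r_n}$ elsewhere), obtaining a derivation of $H' \sep \vdash \vec u.\Delta_1, \vec u.\Delta_2$ (where $H'$ is $H$ with the corresponding rescalings); then split with S into $H' \sep \vdash \vec u.\Delta_1 \sep \vdash \vec u.\Delta_2$; and finally use the multiset identities $\vec u.\Delta_1 = \vec{r_2}.(\vec{r_1}.\Delta_1)$ and $\vec u.\Delta_2 = \vec{r_1}.(\vec{r_2}.\Delta_2)$ together with Lemma~\ref{lem:modal_free_genT}, which ``unscales'' a sequent $\vec r.\Gamma$ back to $\Gamma$ using only the rules C, T and S and hence preserves M--freeness, to bring the two sequents down to $\vec{r_1}.\Delta_1$ and $\vec{r_2}.\Delta_2$ respectively. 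Since $\vec u$, $\vec{r_1}$ and $\vec{r_2}$ are all non--empty, every application of the induction hypothesis and of Lemma~\ref{lem:modal_free_genT} is legitimate, and the derivation built is CAN--free and M--free as required. I expect the bookkeeping in this S case --- verifying the identity $(\vec{r_1}\vec{r_2}).\Delta = \vec{r_2}.(\vec{r_1}.\Delta)$ of multisets and confirming that Lemma~\ref{lem:modal_free_genT} introduces no M rule --- to be the only point demanding real care.
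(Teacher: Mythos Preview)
Your proposal is correct and follows essentially the same approach as the paper: induction on the derivation, with the S rule as the only non-trivial case, handled by applying the induction hypothesis with the product vector $\vec{r_1}\vec{r_2}$ on the merged sequent, splitting with S, and then invoking Lemma~\ref{lem:modal_free_genT} to unscale each half to $\vec{r_1}.\Delta_1$ and $\vec{r_2}.\Delta_2$. The only cosmetic difference is that the paper treats the empty-vector case locally inside the S step rather than disposing of it globally upfront as you do.
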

\begin{proof}
  By induction on the derivation of $\left [ \vdash \Delta_i \right ]_{i=1}^n$. We show the only nontrivial case:
  \begin{itemize}  
  \item If the derivation finishes with
    \[ \infer[\text{S}]{\left [ \vdash \Delta_i   \right ]_{i=3}^n \sep \vdash \Delta_1   \sep \vdash \Delta_2  }{\left [ \vdash \Delta_i   \right ]_{i=3}^n \sep \vdash \Delta_1, \Delta_2} \]
    By induction hypothesis there is CAN-free derivation of \[ 
   \left [ \vdash \vec{r_i}.\Delta_i \right ]_{i=3}^n \sep  \vdash (\vec{r_1}\vec{r_2}).\Delta_1, (\vec{r_1}\vec{r_2}).\Delta_2 \]
    If $\vec{r_1}=\emptyset$ or $\vec{r_2}=\emptyset$, we have the empty sequent which is derivable. Otherwise,
    \[ \infer[\text{Lemma~\ref{lem:modal_free_genT}}]{[\vdash \vec{r_i}.\Delta_i]_{i=3}^n \sep \vdash \vec{r_1}.\Delta_1 \sep \vdash \vec{r_2}.\Delta_2}{\infer[\text{S}]{\left [ \vdash \vec{r_i}.\Delta_i \right ]_{i=3}^n \sep \vdash(\vec{r_1}\vec{r_2}).\Delta_1 \sep \vdash(\vec{r_1}\vec{r_2}).\Delta_2}{\left [ \vdash \vec{r_i}.\Delta_i \right ]_{i=3}^n \sep \vdash (\vec{r_1}\vec{r_2}).\Delta_1, (\vec{r_1}\vec{r_2}).\Delta_2}}\qedhere\]
  \end{itemize}
\end{proof}

%%% Local Variables:
%%% mode: latex
%%% TeX-master: "main"
%%% End:

\subsection{CAN-elimination -- Proof of Theorem~\ref{thm:modal_free_can_elim}}
\label{subsec:modal_free_can_elim}

The CAN rule has the following form:

$$\infer[\text{CAN}, \sum \vec{r} = \sum \vec{s}]{G \sep \vdash \Gamma}{G \sep \vdash \Gamma, \vec{r}.A , \vec{s}.\negF{A}} $$

We prove Theorem~\ref{thm:modal_free_can_elim} by showing that if the hypersequent $G \sep \vdash \Gamma, \vec{r}.A , \vec{s}.\negF{A}$ has a M-free CAN-free derivation then the hypersequent $G \sep \vdash \Gamma$ also has a M-free CAN-free derivation. 

Our proof proceeds by induction on the complexity of the term $A$. The base case is given by $A=x$ (or equivalently $A=\covar{x})$ for some variable $x$. The following lemma proves this base case.

\begin{lem}
  \label{lem:modal_free_atomic_can_elim-simpler}
  If there is a M-free CAN-free derivation of  $G \sep \vdash \Gamma, \vec{r}.x , \vec{s}.\negF{x}$, where $\sum \vec{r} = \sum \vec{s}$ then there exists a M-free CAN-free derivation of $G \sep \vdash \Gamma$.
  \end{lem}
  \begin{proof}
 The statement follows as a special case of Lemma~\ref{lem:modal_free_atomic_can_elim} below, a stronger version of Lemma~\ref{lem:modal_free_atomic_can_elim-simpler} that allows for a simpler proof by induction on the structure of the derivation of $G \sep \vdash \Gamma, \vec{r}.x , \vec{s}.\negF{x}$, where:
    \begin{itemize}
    \item $[\vdash \Gamma_i]_{i=1}^{n-1} = G$ and $\Gamma_n = \Gamma$.
    \item $\vec{r_i}=\vec{r'_i}=\vec{s_i}=\vec{s'_i}=\emptyset$ for $1 \leq i < n$.
    \item $\vec{r_n}=\vec{r}$, $\vec{s_n}=\vec{s}$ and $\vec{r'_n}=\vec{s'_n}=\emptyset$.\qedhere
    \end{itemize}
  \end{proof}

For complex terms $A$, we proceed by using the CAN-free invertibility Theorem~\ref{thm:modal_free_invertibility} as follows:

 \begin{itemize}
\item If $A = x$, we are in the base case of Lemma~\ref{lem:modal_free_atomic_can_elim-simpler}.
\item If $A = 0$, we can conclude with the CAN-free invertibility of the $0$ rule and the M-elimination theorem.
\item If $A = B + C$, since the $+$ rule is CAN-free invertible, $G \sep \vdash \Gamma, \vec{r}.B,\vec{r}.C, \vec{s}.\negF{B},\vec{s}.\negF{C}$ has a M-free CAN-free derivation. Therefore we can have a M-free CAN-free derivation of the hypersequent $G \sep \vdash \Gamma$ by invoking the induction hypothesis twice, since the complexity of $B$ and $C$ is lower than that of $B+C$.

\item If $A = r'B$, since the $\times$ rule is CAN-free invertible, $G \sep \vdash \Gamma, (r'\vec{r}).B, (r'\vec{s}).\negF{B}$ has a M-free CAN-free derivation. Therefore we can have a M-free CAN-free derivation of the hypersequent $G \sep \vdash \Gamma$ by invoking the inductive hypothesis on the simpler term $B$.

\item If $A = B \sqcup C$, since the $\sqcup$ rule is CAN-free invertible, $G \sep \vdash \Gamma, \vec{r}.B , \vec{s}.(\negF{B} \sqcap \negF{C})  \sep \vdash \Gamma, \vec{r}.C  , \vec{s}.(\negF{B} \sqcap \negF{C})$ has a M-free CAN-free derivation. Then, since the $\sqcap$ rule is CAN-free invertible, $G \sep \vdash \Gamma, \vec{r}.B , \vec{s}.\negF{B}  \sep \vdash \Gamma, \vec{r}.C  , \vec{s}.\negF{C}$ has a M-free CAN-free derivation. Therefore we can have a M-free CAN-free derivation of the hypersequent $G \sep \vdash \Gamma \sep \vdash \Gamma$ by invoking the induction hypothesis twice on the simpler terms $B$ and $C$.

We can then derive the hypersequent $G \sep \vdash \Gamma$ as: 
\begin{center}
$\infer[C]{G \sep \vdash \Gamma}{G \sep \vdash \Gamma \sep \vdash \Gamma}$
\end{center}
\item If $A = B \sqcap C$, since the $\sqcup$ rule is CAN-free invertible, $G \sep \vdash \Gamma, \vec{r}.(B \sqcap C), \vec{s}.\negF{B}  \sep \vdash \Gamma, \vec{r}.(B \sqcap C), \vec{s}.\negF{C}$ has a M-free CAN-free derivation. Then, since the $\sqcap$ rule is CAN-free invertible, $G \sep \vdash \Gamma, \vec{r}.B , \vec{s}.\negF{B}  \sep \vdash \Gamma, \vec{r}.C  , \vec{s}.\negF{C}$ has a M-free CAN-free derivation. Therefore we can have a M-free CAN-free derivation of the hypersequent $G \sep \vdash \Gamma \sep \vdash \Gamma$ by invoking the induction hypothesis twice on the simpler terms $B$ and $C$.

We can then derive the hypersequent $G \sep \vdash \Gamma$ as: 
\begin{center}
$\infer[C]{G \sep \vdash \Gamma}{G \sep \vdash \Gamma \sep \vdash \Gamma}$
\end{center}
\end{itemize}
 
This concludes the proof of Theorem~\ref{thm:modal_free_can_elim}. \qed \\

We now prove Lemma~\ref{lem:modal_free_atomic_can_elim}, the stronger version of Lemma~\ref{lem:modal_free_atomic_can_elim-simpler}.
\begin{lem}
  \label{lem:modal_free_atomic_can_elim}
  If there is a CAN-free and M-free derivation of the hypersequent $$\left[ \vdash \Gamma_i, \vec{r_i}.x, \vec{s_i}.\covar{x} \right]_{i=1}^n$$ then for all $\vec{r'_i}$ and $\vec{s'_i}$, with $1\leq i \leq n$, such that $\sum \vec{r_{i}} - \sum \vec{s_{i}} = \sum \vec{r'_{i}} - \sum \vec{s'_{i}}$, there is a CAN-free, M-free derivation of $$\left[ \vdash \Gamma_i, \vec{r'_i}.x, \vec{s'_i}.\covar{x} \right]_{i=1}^n$$
\end{lem}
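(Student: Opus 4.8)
The plan is to prove the statement by induction on the given CAN--free, M--free derivation $d$, applying the inductive hypothesis at each step with target vectors $\vec{r'_i}, \vec{s'_i}$ tailored to the last rule of $d$. The key to the whole argument is the per--sequent invariant built into the statement, namely that $\sum \vec{r_i} - \sum \vec{s_i} = \sum \vec{r'_i} - \sum \vec{s'_i}$ for each $i$. Intuitively, every rule of $\hr \setminus \{\text{CAN}, \text{M}\}$ only adds, copies, merges, splits, or uniformly rescales sequents, and each of these operations affects the ``net amount of $x$'' of a sequent --- the quantity $\sum \vec{r_i} - \sum \vec{s_i}$ --- in a way that is insensitive to the precise choice of $\vec{r_i}, \vec{s_i}$ among vectors with a fixed difference of sums. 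This is exactly why the naive formulation (``a pair $\vec{r}.x, \vec{s}.\covar{x}$ with $\sum \vec{r} = \sum \vec{s}$ can be deleted'') is not strong enough to prove directly: rules such as S, which merges the sequent containing the pair with another, and T, which rescales it, force one to allow arbitrary replacement pairs with the same net.

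For the base case $d$ is a single INIT, so $n = 1$, $\Gamma_1 = \emptyset$, $\vec{r_1} = \vec{s_1} = \emptyset$, the hypothesis gives $\sum \vec{r'_1} = \sum \vec{s'_1}$, and $\vdash \vec{r'_1}.x, \vec{s'_1}.\covar{x}$ follows by one application of the ID rule above INIT (or INIT alone if both vectors are empty). For the inductive step I would case on the last rule of $d$. For W, C, S, T, $0$, $+$, $\times$, $\sqcup$, $\sqcap$ the recipe is uniform: in each premise, identify the occurrences of $x$ and $\covar{x}$ that are the ancestors of the tracked occurrences $\vec{r_i}.x, \vec{s_i}.\covar{x}$ of the conclusion; apply the inductive hypothesis to the premise(s), replacing those ancestor occurrences by the desired new weights --- suitably copied for C, S and $\sqcup$, and rescaled by the T--scalar for T; check that the per--sequent net condition demanded of the premise is implied by the one assumed for the conclusion; and re--apply the last rule. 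For example, in the S case the single premise sequent carries the combined tracked weights $\app{\vec{r_{n-1}}}{\vec{r_n}}$ and $\app{\vec{s_{n-1}}}{\vec{s_n}}$ (for $x$ and $\covar{x}$ respectively), whose net is the sum of the two individual nets, so gluing the two desired pairs back together yields the right net; one applies the inductive hypothesis and then S to split.

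The one case worth spelling out separately is ID applied to the distinguished variable $x$: there the premise is obtained from the conclusion just by deleting a pair $\vec{a}.x, \vec{b}.\covar{x}$ with $\sum \vec{a} = \sum \vec{b}$, so the net of the affected sequent is unchanged, and the inductive hypothesis (with targets $\vec{r'_i}, \vec{s'_i}$) delivers a derivation of $\left[ \vdash \Gamma_i, \vec{r'_i}.x, \vec{s'_i}.\covar{x} \right]_{i=1}^n$ with no further ID needed. ID on any other variable, and all the remaining subcases, fall under the recipe above. I do not expect a deep obstacle: the content is essentially bookkeeping, and the only point requiring care is, for each rule, correctly tracing which occurrences of $x$ and $\covar{x}$ in the premises must be kept verbatim (so that the rule can be reinstated) and which are free to be modified; once that split is fixed, the net--difference invariant propagates automatically.
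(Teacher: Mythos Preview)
Your approach matches the paper's exactly: induction on the CAN--free, M--free derivation, driven by the per--sequent net invariant $\sum\vec{r_i}-\sum\vec{s_i}$, with the ID-on-$x$ case singled out as the interesting one. One refinement is needed in that case: the pair introduced by ID need not lie entirely among the tracked weights $\vec{r_1},\vec{s_1}$ --- part of it may sit inside $\Gamma_1$ --- so applying the inductive hypothesis literally ``with targets $\vec{r'_i},\vec{s'_i}$'' can produce a sequent with a smaller context than $\Gamma_1$. The paper handles this by splitting the ID-introduced occurrences as $\app{\vec a}{\vec b}$ (context part $\vec a$, tracked part $\vec b$) and invoking the inductive hypothesis with targets $\app{\vec a}{\vec{r'_1}}$ and $\app{\vec{a'}}{\vec{s'_1}}$, which restores $\Gamma_1$ without any further ID step; equivalently, you may assume without loss of generality that $\Gamma_i$ contains no $x$ or $\covar{x}$ (absorb all such occurrences into the tracked vectors), after which your description goes through verbatim.
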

\begin{proof}
  By induction on the derivation of $\left[ \vdash \Gamma_i, \vec{r_i}.x, \vec{s_i}.\covar{x} \right]_{i=1}^n$. Most cases are trivial, we just describe the most interesting one.
  \begin{itemize}  
    \item If the derivation finishes with:
      \[ \infer[\text{ID}, \sum (\app{\vec a}{\vec b}) = \sum(\app{\vec a'}{\vec b'})]{ \left [ \vdash \Gamma_i,\vec{r_i}. x  ,\vec{s_i}. \covar{x} \right ]_{i\geq2} \sep \vdash \Gamma_1,(\app{\vec{a}}{\app{\vec{b}}{\vec{c}}}).x  , (\app{\vec{a'}}{\app{\vec{b'}}{\vec{c'}}}).\covar{x}}{ \left [ \vdash \Gamma_i,\vec{r_i}. x  ,\vec{s_i}. \covar{x} \right ]_{i\geq2} \sep \vdash \Gamma_1,\vec{c}.x  , \vec{c'}.\covar{x}} \]
      with $\vec{r}_1 = \app{\vec{b}}{\vec{c}}$ and $\vec{s}_1 = \app{\vec{b'}}{\vec{c'}}$.
      We want to show that $$\proveNC \left [ \vdash \Gamma_i,\vec{r'_i}. x  ,\vec{s'_i}. \covar{x} \right ]_{i\geq 2} \sep \Gamma_1, (\app{\vec{a}}{\vec{r'_1}}) .x , (\app{\vec{a'}}{\vec{s'_1}}).\covar{x}$$
      We will now prove that $\sum \vec{c} - \sum \vec{c'} = \sum \vec{r'}_{1} + \sum \vec{a} - (\sum \vec{s'}_{1} + \sum \vec{a'})$ to be able to conclude with the induction hypothesis.
      \begin{eqnarray*}
        \sum \vec{c} - \sum \vec{c'} & = & (\sum \vec r_1 - \sum\vec b) - (\sum \vec s_1 - \sum\vec b') \\
                                     & = & (\sum \vec r_1 - \sum \vec s_1) + (\sum \vec b ' - \sum \vec b) \\
                                     & = & (\sum \vec {r'}_1 - \sum \vec {s'}_1) + (\sum \vec a - \sum \vec a') \\
                                     & = & \sum \vec{r'}_{1} + \sum \vec{a} - (\sum \vec{s'}_{1} + \sum \vec{a'})
      \end{eqnarray*} so by induction hypothesis, we have
      \[ \proveNC \left [ \vdash \Gamma_i,\vec{r'_i}. x  ,\vec{s'_i}. \covar{x} \right ]_{i\geq 2} \sep \Gamma_1, (\app{\vec{a}}{\vec{r'_1}}) .x , (\app{\vec{a'}}{\vec{s'_1}}).\covar{x} \]
      which is the result we want.\qedhere
  \end{itemize}
\end{proof}

%%% Local Variables:
%%% mode: latex
%%% TeX-master: "main"
%%% End:

\subsection{Rational T-elimination -- Proof of Theorem~\ref{thm:modal_free_conservativity}}
\label{subsec:modal_free_conservativity}
We need to prove that if a hypersequent sequent $G$, with all scalars in $\mathbb{Q}$, has a CAN-free derivation then it also has a CAN-free and T-free derivation.

Firstly, we observe that we can restrict to the case of $G$ being an atomic hypersequent. Indeed, if $G$ is not atomic, we can iteratively apply the logical rules (see Figure~\ref{fig:algo_to_inv} on page \pageref{fig:algo_to_inv}) and reduce $G$ to a number of atomic hypersequents $G_1, \dots, G_n$. By the CAN-free invertibility Theorem~\ref{thm:modal_free_invertibility}, $G$ is CAN-free derivable if and only if all $G_i$ are CAN-free derivable.

Secondly, assume $G$ is atomic and has a CAN-free derivation. Then, by application of Lemma~\ref{lem:modal_free_lambda_prop} below and using the same notation, there are $t_1,...,t_m$ in $\mathbb{R}_{\geq 0}$ such that
\begin{itemize}
  \item there exists $i\in [1..m]$ such that $t_i \neq 0$ and
  \item for every variable and covariable $(x_j, \covar{x_j})$ pair, it holds that
  $$
  \sum^{m}_{i=1} t_i (\sum \vec{r}_{i,j}) =   \sum^{m}_{i=1} t_i (\sum \vec{s}_{i,j}) 
  $$ 
\end{itemize} Since all coefficients are rational and the theory of linear arithmetic over $\mathbb{R}$ is an elementary extension of that of linear arithmetic over $\mathbb{Q}$~\cite{FR1975}, there are $q_1,...,q_m\in\mathbb{Q}_{\geq 0}$ satisfying the same property of $t_1,...,t_m$. By multiplying all $q_i$ by the least common multiple of their denominators, we get a solution $k_1,...,k_m$ in $\mathbb{N}$. So according to Lemma~\ref{lem:modal_free_int_lambda_prop}, $G$ has also a CAN-free and T-free derivation. This concludes the proof.

We now state a similar result to Lemma~\ref{lem:modal_free_int_lambda_prop} regarding derivations that use the T rule. The only difference is that since the T rule can multiply a sequent by any strictly positive real number, the coefficients in the statement are arbitrary positive real numbers instead of natural numbers.
\begin{lem}
  \label{lem:modal_free_lambda_prop}
  For all atomic hypersequents $G$, built using the variables and negated variables $x_1, \covar{x_1}, \dots, x_{k}, \covar{x_{k}}$, of the form  
  $$\vdash \Gamma_1 \sep \dots \sep \vdash \Gamma_m$$
where $\Gamma_i =  \vec{r}_{i,1}.x_1,...,\vec{r}_{i,k}.x_k, \vec{s}_{i,1}.\covar{x_1},...,\vec{s}_{i,k}.\covar{x_k}$, the following are equivalent:
  
  \begin{enumerate}
  \item $G$ has a derivation.
  \item there exist numbers $t_1,...,t_m \in \mathbb{R}_{\geq 0}$, one for each sequent  in $G$, such that:
  \begin{itemize}
  \item there exists $i\in [1..m]$ such that $t_i \neq 0$, i.e., the numbers are not all $0$'s, and
  \item for every variable and covariable $(x_j, \covar{x_j})$ pair, it holds that
  $$
  \sum^{m}_{i=1} t_i (\sum \vec{r}_{i,j}) =   \sum^{m}_{i=1} t_i (\sum \vec{s}_{i,j}) 
  $$
i.e., the scaled (by the numbers $t_1$ \dots $t_m$) sum of the coefficients in front of the variable $x_j$ is equal to the scaled sum of the coefficients in from of the covariable $\covar{x_j}$. 
  \end{itemize}
  \end{enumerate}
\end{lem}
\begin{proof}
  We prove $(1) \Rightarrow (2)$ by induction on the derivation of $G$. By using Theorem~\ref{thm:modal_free_can_elim}, we can assume that the derivation of $G$ is CAN-free. We will only deal with the case of T rule since every other cases are exactly the same as in Lemma~\ref{lem:modal_free_int_lambda_prop}. If the derivation finishes with
  \[ \infer[\text{T}]{\vdash \Gamma_1 \sep \dots \sep \vdash \Gamma_m}{\vdash \Gamma_1 \sep \dots \sep \vdash r.\Gamma_m} \]
  then by induction hypothesis there are $t_1,...,t_m \in \mathbb{R}$ such that :
  \begin{itemize}
  \item there exists $i\in [1..m]$ such that $t_i \neq 0$.
  \item for every variable and covariable $(x_j,\covar{x_j})$ pair, it holds that $\sum_{i=0}^{m-1} t_i.\sum \vec{r}_{i,j} + t_m.\sum r\vec{r}_{m,j}= \sum_{i=0}^{m-1} t_i.\sum \vec{s}_{i,j} + t_m.\sum r\vec{s}_{m,j}$.
  \end{itemize}
  so $t_1,\dots,t_{m-1},rt_m$ satisfies the property.
  
  The other way ($(2) \Rightarrow (1)$) is very similar to Lemma~\ref{lem:modal_free_int_lambda_prop}, only using the T rule instead of the C and S rules. If there exist numbers $t_1,...,t_m \in \mathbb{R}$, one for each sequent  in $G$, such that:
  \begin{itemize}
  \item there exists $i\in [1..m]$ such that $t_i \neq 0$ and
  \item for every variable and covariable $(x_j, \covar{x_j})$ pair, it holds that
    $$
    \sum^{m}_{i=1} t_i (\sum \vec{r}_{i,j}) =   \sum^{m}_{i=1} t_i (\sum \vec{s}_{i,j}) 
    $$
  \end{itemize}
  then we can use the W rule to remove the sequents corresponding to the numbers $t_i = 0$, and use the T rule on the $i$th sequent to multiply it by $t_i$. If we assume that there is a natural number $l$ such that $t_i = 0$ for all $i > l$ and $t_i \neq 0$ for all $i \leq l$, then the CAN-free derivation is:
  \[ \infer[\text{W}^*]{\vdash \Gamma_1 \sep \dots \sep \Gamma_m}{\infer[\text{T}^*]{\vdash \Gamma_1 \sep \dots \sep \vdash \Gamma_l}{\infer[\text{S}^*]{\vdash t_1.\Gamma_1 \sep \dots \sep \vdash t_l.\Gamma_l}{\infer[\text{ID}^*]{\vdash t_1.\Gamma_1,\dots,t_l.\Gamma_l}{\infer[\text{INIT}]{\vdash}{}}}}} \qedhere\]
\end{proof}

%%% Local Variables:
%%% mode: latex
%%% TeX-master: "main"
%%% End:

\subsection{Decidability -- Proof of Theorem~\ref{thm:modal_free_decidability}}
\label{subsec:modal_free_decidability}

The previous results give us a simple algorithm for deciding if a hypersequent $G$ is derivable in the system \hr . 
We do not claim that this algorithm is optimal, and we merely prove that it has elementary complexity. It is valuable, however, because it will be adaptable to the context of the more complex system \hmr.

The algorithm works in two steps:
\begin{enumerate}
\item the problem of deciding if $G$ is derivable is reduced to the problem of deciding if a finite number of atomic hypersequents $G_1,\dots, G_n$ are derivable.
\item A decision procedure for atomic hypersequents is executed and it verifies if all hypersequents computed at the first step are derivable.
\end{enumerate}

The first step consists in applying recursively all possible logical rules to $G$ until atomic premises $G_1,...,G_n$ are obtained (see Figure~\ref{fig:algo_to_inv} on page \pageref{fig:algo_to_inv}). Indeed, the CAN-free invertibility Theorem~\ref{thm:modal_free_invertibility} guarantees that $G$ is derivable if and only if all the atomic hypersequents obtained in this way are derivable.

The second step can be performed using Lemma~\ref{lem:modal_free_lambda_prop} which states that the hypersequent $G_{i}$ is derivable if and only if there exists a sequence of real numbers $\vec{t}\in\mathbb{R}_{\geq 0}$ satisfying the system of (in)equations of Lemma~\ref{lem:modal_free_lambda_prop}. This can be expressed directly by a (existentially quantified) formula in the first order theory of the real-closed field $FO(\mathbb{R},+,\times,\leq)$. It is well known that this theory is decidable and admits quantifier elimination~\cite{tarski1951,GRIGOREV198865}. Thus it is possible to decide if this formula is satifiable or not, that is, if the atomic hypersequent $G_i$ is derivable or not.

The idea behind the above algorithm, reducing the problem of derivability to the problem of verifying the satisfiability of formulas in the first order theory of the real-closed field, can in fact be pushed forward. Not only we can decide if $G$ is derivable or not, but we can return a formula $\phi\in  FO(\mathbb{R},+,\times,\leq)$ which describes the set of real-values assigned to the scalars in $G$ that admits a derivation. For example, as explained in Section~\ref{subsec:modal_free_main_results}, consider the following simple hypersequent
$$
\vdash r.x, r.\covar{x}
$$ 
Not only this hypersequent is derivable for a fixed scalar $r\in\mathbb{R}_{>0}$, but the hypersequent
$$
\vdash \alpha.x, \alpha.\covar{x}
$$
is derivable for any assignment of concrete scalars in $r\in\mathbb{R}$ to the scalar-variable $\alpha$ such that $r > 0$.

Similarly, the hypersequent containing the scalar-variable $\alpha$ and two concrete scalars $s$ and $t$
$$
\vdash \alpha.x, s.\covar{x}, t.\covar{x}
$$ 
is derivable for all concrete $r\in\mathbb{R}$
 assignments to $\alpha$ such that $r > 0$ and $r = s + t$.

 Lastly, the hypersequent containing two scalar-variables $\alpha,\beta$ and two concrete scalars $s$ and $t$
 
 $$
\vdash (\alpha^2 - \beta).x, s.\covar{x}, t.\covar{x}
$$ 
is derivable for any  assignment of concrete assignments $r_1,r_2\in \mathbb{R}$ to $\alpha$ and $\beta$ such that $(r_1)^2-r_2 > 0$ and $(r_1)^2-r_2 = s+t$. 

Hence we can generally consider hypersequents having polynomials (over a set $\alpha_1, \dots, \alpha_l$ of scalar-variables) in place of concrete scalars.

We now describe an algorithm that takes a hypersequent $G$ as input, having polynomials $R_1,\dots, R_k\in\mathbb{R}[\alpha_1,...,\alpha_l]$ over scalar-variables $\alpha_1, \dots, \alpha_l$ as coefficients in weighted terms and returns a formula $\phi_G(\alpha_1,\dots, \alpha_l) \in   FO(\mathbb{R},+,\times,\leq)$ with $l$ variables $\alpha_1,\dots, \alpha_l$, such that for all $(s_1,...,s_l)\in\mathbb{R}$ such that for all $i\in[1...k], R_i(s_1,...,s_l) > 0$, we have:
$$
\phi_G(s_1,\dots,s_l)\text{ holds in }\mathbb{R}$$
$$ \Leftrightarrow$$ 
$$G[s_j/\alpha_{j}] \textnormal{ is derivable.}
$$
\noindent where $ G[s_j/\alpha_{j}]$ denotes the concrete hypersequent obtained by instantiating the scalar-variable $\alpha_j$ with the real number $s_j$.

The algorithm takes as input $G$ and proceeds, again, in two steps: 
\begin{enumerate}
\item The algorithm returns 
  \begin{equation}
    \phi_G = \bigwedge^{n}_{i=1} \phi_{G_i}
  \end{equation}
where $G_1,\dots, G_n$ are the atomic hypersequents obtained by iteratively applying the logical rules, and $\phi_{G_i}$ is the formula recursively computed by the algorithm on input $G_i$.

\item if $G$ is atomic then $G$ has the shape \[\vdash \Gamma_1 \sep ... \sep \vdash \Gamma_m\] where $\Gamma_i =  \vec{R}_{i,1}.x_1,...,\vec{R}_{i,k}.x_k, \vec{S}_{i,1}.\covar{x_1},...,\vec{S}_{i,k}.\covar{x_{k}}$ (recall that the hypersequent $\vdash$ is a specific instance of an atomic hypersequent where $m = 1$ and all vectors are empty). For all $I \subsetneq [1...m]$, we then define
  \begin{itemize}
  \item A formula $Z_I(\beta_1,...,\beta_m)$ that states that for all $i \in I$, $\beta_i = 0$.
    \[ Z_I(\beta_1,...,\beta_m) = \bigwedge_{i\in I} (\beta_i = 0) \]
  \item A formula $NZ_I(\beta_1,...,\beta_m)$ that states that for all $i \notin I$, $0 < \beta_i$.
    \[ NZ_I(\beta_1,...,\beta_m) = \bigwedge_{i \notin I} (0 \leq \beta_i) \wedge \neg(\beta_i = 0) \]
  \item A formula $A_I(\beta_1,...,\beta_m)$ that states that all the atoms cancel each other.
    \[ A_I(\beta_1,...,\beta_m) =  \bigwedge_{j=0}^k (\sum_{i=1}^m \beta_i \sum \vec{R}_{i,j} = \sum_{i=1}^m \beta_i \sum \vec{S}_{i,j})\]
     
    \item A formula $\phi_{G,I}$ that corresponds to $\phi_{G'}$ where $G'$ is the hypersequent obtained on using the W rule on all $i$-th sequents for $i \in I$, i.e. the leaf of the following prederivation:
      \[ \infer[\text{W}^*]{\vdash \Gamma_{1} \sep ... \sep \vdash \Gamma_{m}}{\vdash \Gamma_{k_1} \sep ... \sep \vdash \Gamma_{k_l}} \] with $\{k_1,...,k_l\} =[1..m]\backslash I $. Then
    \[ \phi_{G,I}=\exists \beta_1,...,\beta_m, Z_I(\beta_1,...,\beta_m) \wedge NZ_I(\beta_1,...,\beta_m) \wedge A_I(\beta_1,...,\beta_m) \]
  \end{itemize}
  The formula $\phi_G$ is then constructed as follow:
    \[ \phi_G = \bigvee_{I \subsetneq [1...m]} \phi_{G,I} \]
\end{enumerate}

The following theorem states the correctness of the above described algorithm.
\begin{thm}
  \label{thm:modal_free_to_FO}
Let $G$ be a hypersequent having polynomials $R_1,\dots, R_k\in\mathbb{R}[\alpha_1,...,\alpha_l]$ over scalar-variables $\alpha_1, \dots, \alpha_l$. Let $\phi_G(\alpha_1,\dots,\alpha_l)$ be the formula returned by the algorithm described above on input $G$. Then, for all $s_1,\dots, s_l\in\mathbb{R}$ such that for all $i\in[1..k], R_i(s_1,...,s_l) > 0$, the following are equivalent:
\begin{enumerate}
\item $\phi_G(s_1,\dots,s_l)$ holds in $\mathbb{R}$,
\item $G[s_j/\alpha_{j}]$ is derivable in \hr.
\end{enumerate}
\end{thm}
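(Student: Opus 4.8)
The plan is to prove both directions of the equivalence, reducing the general case (hypersequents with polynomial coefficients) to the atomic case, just as the algorithm itself does.

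First, I would handle the reduction to atomic hypersequents. The formula $\phi_G$ is defined as $\bigwedge_{i=1}^n \phi_{G_i}$ where $G_1,\dots,G_n$ are the atomic hypersequents obtained by saturating $G$ under the logical rules. I would argue by induction on the number of logical connectives in $G$. The key observation is that each logical rule is \emph{syntactically uniform} in the scalar coefficients: if $G$ has shape $G' \sep \vdash \Gamma, \vec{R}.(A+B)$ and its premise under the $(+)$ rule is $G' \sep \vdash \Gamma, \vec{R}.A, \vec{R}.B$, then for any concrete assignment $s_j/\alpha_j$ (satisfying the positivity constraints $R_i(\vec s) > 0$), the instantiated premise is exactly the $(+)$-premise of the instantiated conclusion. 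Hence by CAN--free invertibility (Theorem \ref{thm:modal_free_invertibility}) together with the logical rules themselves, $G[s_j/\alpha_j]$ is derivable iff all premises $G_i[s_j/\alpha_j]$ are derivable; the induction hypothesis then gives $G_i[s_j/\alpha_j]$ derivable $\Leftrightarrow \phi_{G_i}(\vec s)$ holds, and conjoining yields the claim for $G$. A minor point to check: the saturation process terminates and the $G_i$ are genuinely atomic, but this is routine (each logical rule strictly decreases total connective count).

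Second, for the base case where $G$ is atomic of the form $\vdash \Gamma_1 \sep \dots \sep \vdash \Gamma_m$ with $\Gamma_i = \vec R_{i,1}.x_1,\dots,\vec S_{i,k}.\covar{x_k}$, I would invoke Lemma \ref{lem:modal_free_lambda_prop}. That lemma states that for a concrete atomic hypersequent, derivability is equivalent to the existence of reals $t_1,\dots,t_m \in \mathbb{R}_{\geq 0}$, not all zero, with $\sum_i t_i(\sum \vec r_{i,j}) = \sum_i t_i(\sum \vec s_{i,j})$ for every variable pair $(x_j,\covar{x_j})$. Now fix an assignment $\vec s$ with $R_i(\vec s) > 0$ for all $i$. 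Then $G[s_j/\alpha_j]$ is atomic with concrete coefficients $\vec R_{i,j}(\vec s)$, $\vec S_{i,j}(\vec s)$, and Lemma \ref{lem:modal_free_lambda_prop} says it is derivable iff such $t_1,\dots,t_m$ exist. I would then check that the disjunction $\phi_G = \bigvee_{I \subsetneq [1..m]} \phi_{G,I}$ expresses exactly this: $\phi_{G,I}$ asserts existence of $\beta_1,\dots,\beta_m \geq 0$ with $\beta_i = 0$ precisely for $i \in I$ (so the nonzero $\beta_i$ are those outside $I$, and since $I \subsetneq [1..m]$ at least one $\beta_i$ is nonzero) satisfying the same linear balance equations. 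Ranging over all proper subsets $I$ captures every possible "zero-pattern" of a nontrivial nonnegative solution, so $\phi_G(\vec s)$ holds in $\mathbb{R}$ iff a nontrivial nonnegative solution to the balance system exists iff (by Lemma \ref{lem:modal_free_lambda_prop}) $G[s_j/\alpha_j]$ is derivable. One must note that the coefficients appearing in $\phi_{G,I}$, namely the $\sum \vec R_{i,j}$, are themselves polynomials in the $\alpha$'s, so $\phi_{G,I}$ is genuinely a formula of $FO(\mathbb{R},+,\times,\leq)$ with free variables $\alpha_1,\dots,\alpha_l$; upon substituting $\vec s$ these become concrete reals and the above equivalence applies.

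I do not anticipate a serious obstacle here — the theorem is essentially a bookkeeping statement assembling Theorem \ref{thm:modal_free_invertibility} and Lemma \ref{lem:modal_free_lambda_prop}. The one place requiring care is making precise that substitution of reals for scalar-variables commutes with the rule applications used in the reduction step: one needs that the logical-rule saturation tree for $G$ specializes, under $[s_j/\alpha_j]$, to the saturation tree for $G[s_j/\alpha_j]$, which holds because the logical rules never inspect or constrain scalar values (unlike $\textnormal{ID}$ and $\textnormal{CAN}$, which carry side conditions $\sum \vec r = \sum \vec s$ — but those are atomic rules, not logical ones, and appear only inside Lemma \ref{lem:modal_free_lambda_prop}). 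Thus the proof is by induction on the logical complexity of $G$, with the atomic base case discharged by Lemma \ref{lem:modal_free_lambda_prop} and the inductive step by CAN--free invertibility together with the uniformity of logical rules in their scalar parameters.
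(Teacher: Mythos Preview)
Your proposal is correct and follows essentially the same approach as the paper: reduce to atomic hypersequents via the CAN--free invertibility theorem (Theorem~\ref{thm:modal_free_invertibility}), then appeal to Lemma~\ref{lem:modal_free_lambda_prop} for the atomic base case. Your write-up is in fact more detailed than the paper's own proof, which dispatches the atomic case as ``a direct corollary of Lemma~\ref{lem:modal_free_lambda_prop}'' and the non-atomic case in one sentence; your explicit check that the disjunction over proper subsets $I$ exactly captures the existence of a nontrivial nonnegative solution, and your remark on substitution commuting with logical-rule saturation, are both useful clarifications that the paper leaves implicit.
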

\begin{proof}
  If $G$ is atomic, the theorem is a direct corollary of Lemma~\ref{lem:modal_free_lambda_prop}.  
  So assume $G$ is not atomic, i.e., the terms in $G$ contain some logical connective.   Given any vector of scalars $s_1,\dots, s_l\in\mathbb{R}$,  by using the CAN-free invertibility Theorem~\ref{thm:modal_free_invertibility}, $G[s_j/\alpha_{j}]$ is derivable if and only if all $G_i[s_j/\alpha_{j}]$ are derivable, where the hypersequents $G_i$ are the atomic hypersequents obtainable from $G$ by repeated applications of the logical rules, as show in Figure~\ref{fig:algo_to_inv}. Hence, the set of scalars $s_1,\dots, s_l\in\mathbb{R}$ that allows for a derivation of $G$ is exactly the intersection of the scalars that allow derivations of each $G_i$. This is precisely the semantics of:
     \[\phi_G = \bigwedge^{n}_{i=1} \phi_{G_i}\qedhere\]
       
\end{proof}

The size of the formula $\phi_G$ can be bounded by a double exponential in the number of sequents and operators in $G$, so the algorithm described previously is elementary.
\begin{clm}
  \label{cml:complexity}
  Let $G$ be a hypersequent having polynomials $R_1,...,R_k\in\mathbb{R}[\alpha_1,...,\alpha_l]$ of degree at most $d$. Let $p$ be the number of different variables appearing in the terms of $G$, $q$ the number of sequents in $G$ and $o$ the total number of operators appearing in $G$. Then $\phi_G$ is equivalent to a formula
  \[ \exists \vec{\beta}, P(\alpha_1,...,\alpha_l,\vec{\beta}) \]
  where $P$ is a quantifier-free formula with at most $2^{4\times 2^{(q+o)}}(1 + p)$ polynomials of degree at most $d + 1$ and where the size of $\vec{\beta}$ is at most $2^{3\times 2^{(q+o)}}$.
\end{clm}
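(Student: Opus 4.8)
The plan is to follow the two phases of the algorithm separately: first the decomposition of $G$ into atomic hypersequents $G_1,\dots,G_n$ by the logical rules, then the assembly of $\phi_G=\bigwedge_{i=1}^n\phi_{G_i}$. It suffices to work with the variant of \hr\ in which the logical rules act on a single weighted term at a time (see the remark following Figure~\ref{rules:GA_modal_free}), so that $\sqcup$ replaces one sequent by two, $\sqcap$ forks the derivation into two branches without changing the number of sequents, and $0,+,\times$ change neither the number of sequents nor the number of branches. For a sequent $\vdash\Gamma$ write $o(\Gamma)$ for the number of non--atomic subterm occurrences in it (logical connectives, scalar multiplications, and occurrences of $0$), and set
$$
\Phi(H)\;=\;\sum_{\vdash\Gamma\,\in\,H}2^{o(\Gamma)}.
$$
The crucial observation is that $\Phi$ is non--increasing at every decomposition step, and that each application of the branching rule $\sqcap$ decreases $\Phi$ by at least $1$ in each of its two premises. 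This is a short case analysis: for $\sqcup$ on $\vdash\Gamma',\vec r.(A\sqcup B)$ one uses $2^{a+b}+2^{a+c}\leq 2^{a+b+c+1}$ with $a=o(\Gamma'),b=o(A),c=o(B)$; for $\sqcap$ the surviving sequent in each branch loses the $c+1$ (resp.\ $b+1$) operators of the other conjunct together with the connective; and $0,+,\times$ each strip one operator from a single sequent.

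These two facts give the bounds on the first phase. An atomic $G_i$ has $o(\Gamma)=0$ for every sequent, so $\Phi(G_i)$ is exactly its number $m_i$ of sequents; monotonicity of $\Phi$ then yields $m_i\leq\Phi(G)\leq q\cdot 2^{o}\leq 2^{q+o}$. Along any root--to--leaf path of the decomposition tree the number of $\sqcap$--steps is at most $\Phi(G)\leq 2^{q+o}$, and since only $\sqcap$ creates branches, $n\leq 2^{2^{q+o}}$. (Termination of the decomposition is clear from the strictly decreasing potential $\sum_{\Gamma}3^{o(\Gamma)}$.) A routine induction on the decomposition also shows that no logical rule introduces new variables and that the scalar coefficients appearing in the $G_i$ are obtained from $R_1,\dots,R_k$ only by addition of coefficients and by multiplication with the scalars already occurring inside the terms of $G$, hence remain of degree $\leq d$; consequently each $G_i$ carries at most $p$ variable/covariable pairs and at most $p$ balance equations, each of degree $\leq d+1$.

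It remains to combine. Each $\phi_{G_i}=\exists\beta^{(i)}_1\cdots\beta^{(i)}_{m_i}\,\bigvee_{I\subsetneq[1..m_i]}\phi_{G_i,I}$, where the existential block is shared by all disjuncts and $\phi_{G_i,I}$ conjoins the linear constraints $\beta^{(i)}_j=0$ or $0\leq\beta^{(i)}_j\wedge\beta^{(i)}_j\neq0$ with the balance equations of $G_i$. Pulling the pairwise--disjoint existential blocks to the front, $\phi_G\equiv\exists\vec\beta\,P$ with $\vec\beta=(\beta^{(1)},\dots,\beta^{(n)})$, so
$$
|\vec\beta|=\sum_{i=1}^n m_i\leq n\cdot\max_i m_i\leq 2^{2^{q+o}}\cdot 2^{q+o}\leq 2^{2\times2^{q+o}}\leq 2^{3\times2^{q+o}}.
$$
The polynomials appearing in $P$ are the $\sum_i m_i$ single variables $\beta^{(i)}_j$ (degree $1$) together with, for each $i$, the $\leq p$ balance polynomials of $G_i$ (degree $\leq d+1$); hence $P$ contains at most $\sum_{i=1}^n(m_i+p)\leq 2^{2\times2^{q+o}}+2^{2^{q+o}}p\leq 2^{2\times2^{q+o}}(1+p)\leq 2^{4\times2^{q+o}}(1+p)$ polynomials, all of degree $\leq d+1$, which is exactly the claim.

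The main obstacle is the combinatorial bound in the first phase: counting rule applications directly only gives a bound of the wrong order, because the $\sqcup$ rule duplicates the context $\Gamma'$ and can therefore increase the raw operator count (and $\Phi$ itself can even stay constant over several steps). The potential $\Phi$ is engineered precisely to be insensitive to this duplication — it remains non--increasing under $\sqcup$ — while still being paid down by the branching rule $\sqcap$; pinning down this potential and verifying its behaviour on each logical rule is the substance of the argument, after which the remaining estimates are straightforward bookkeeping.
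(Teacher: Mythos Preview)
The paper states Claim~\ref{cml:complexity} without proof, so there is no argument to compare against directly; your proposal supplies what the paper omits.

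Your argument is correct. The potential $\Phi(H)=\sum_{\vdash\Gamma\in H}2^{o(\Gamma)}$ does exactly what is needed: it is non--increasing under every logical rule (the $\sqcup$ case is the only delicate one, and your inequality $2^{a+b}+2^{a+c}\le 2^{a+b+c+1}$ handles it), and each $\sqcap$--step strictly decreases it in every branch. From this the two key bounds $m_i\le\Phi(G)\le q\,2^{o}\le 2^{q+o}$ and $n\le 2^{\Phi(G)}\le 2^{2^{q+o}}$ follow cleanly. The remaining bookkeeping---pulling the disjoint existential blocks across the conjunction, counting the distinct polynomials as the $\beta^{(i)}_j$ together with at most $p$ balance polynomials per $G_i$, and observing that the coefficients in the $G_i$ arise from the original $R_j$ only by multiplication with real constants (the $s$ in the $\times$--rule) so that degree is preserved until the single extra factor $\beta$ raises it to $d+1$---is routine and your estimates are comfortably inside the stated bounds. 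Two minor remarks: the phrase ``addition of coefficients'' is a slight overstatement (no logical rule sums weights; only real--scalar multiplication via $\times$ changes them), and your count of polynomials is the count of \emph{distinct} polynomials, which is indeed the parameter $k$ relevant to Grigor'ev's bound used in the subsequent corollary.
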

\begin{proof}[Idea of the proof]
  If $G$ is atomic, we have $$\phi_G = \bigvee_{I \subsetneq [1...q]} \exists \beta_1,...,\beta_q, Z_I(\beta_1,...,\beta_q) \wedge NZ_I(\beta_1,...,\beta_q) \wedge A_I(\beta_1,...,\beta_q)$$ which, using basic identities of first order logic, is equivalent to
  \[ \exists \beta_1,...,\beta_q, \bigvee_{I \subsetneq [1..m]}  Z_I(\beta_1,...,\beta_q) \wedge NZ_I(\beta_1,...,\beta_q) \wedge A_I(\beta_1,...,\beta_q) \]
  so the upper bounds are satisfied.

  If $G$ is not atomic, then
  \[ \phi_G = \bigwedge^{n}_{i=1} \phi_{G_i} \]
  where the $G_i$ are the atomic hypersequents obtainable from $G$ by repeated applications of the logical rules (see Figure~\ref{fig:algo_to_inv}). Since the $G_i$ are atomic, we have just shown that each $\phi_{G_i}$ is equivalent to
  \[ \exists \vec{\beta}_i, P_i(\alpha_1,...,\alpha_l,\vec{\beta_i}) \]
  so $\phi_G$ is equivalent to
  \[ \exists \vec{\beta}_1,...,\vec{\beta}_n, \bigwedge^n_{i=1} P_i(\alpha_1,...,\alpha_l,\vec{\beta_i}) \]

  and we need to obtain an upper bound on $n$ and the size of all $G_i$.

  Each application of the $\sqcap$ rule done to compute the $G_i$ increases $n$, and each application of the $\sqcup$ rule increases the size of the atomic hypersequents $G_i$ so we have to compute the maximum number of times those rules are used.

  Applications of the $\sqcup$ rule can, potentially, duplicate the number of $\sqcap$ connectives in the hypersequent, and thus duplicate the steps needed. Moreover, since the $\sqcap$ rule has two premises on which the procedure is recursively iterated, each $\sqcap$ operator can also double the number of steps, and so we can bound the number of times the $\sqcap$ rule is used by $2^{(\#\sqcap\text{ operators})2^{\#\sqcup\text{ operators}}}-1$, thus the double exponential.

  Some simplifications are then done to obtain the upper bounds given in the claim, which are easier to manipulate.
\end{proof}
\begin{cor}
  The algorithm is in $3$-ExpTime.
\end{cor}
\begin{proof}
  The reduction of $$\phi_G = \bigwedge^{n}_{i=1} \phi_{G_i}$$ into $$\exists \vec{\beta}_1,...,\vec{\beta}_n, \bigwedge^n_{i=1} P_i(\alpha_1,...,\alpha_l,\vec{\beta_i})$$ has a time complexity linear in $n$ and the sum of the sizes of $\vec{\beta_i}$, and thus is in $2$-ExpTime since those two values can be bounded by a double exponential in the size of $G$.
  
  The algorithm decribed in~\cite{GRIGOREV198865} to decide the existential fragment of the first order theory of the real-closed field has a complexity of $M(kd)^{(O(n))^2}$ where $k$ is the number of polynomials in the formula, $d$ the maximal degree of those polynomials, $2^M$ is an upper bound for the absolute value of every coefficients in those polynomials and $n$ the number of variables of those polynomials.

  By using this algorithm, the time complexity to decide whether or not the reduced $\phi_G$ is provable is smaller than $M(2^{4\times 2^{(q+o)}}(1 + p)(d + 1))^{(O(2^{3\times 2^{(q+o)}}))^2}$ with the notations of Claim~\ref{cml:complexity}, and so is in $3$-ExpTime. Thus our algorithm is in $3$-ExpTime.
\end{proof}

%%% Local Variables:
%%% mode: latex
%%% TeX-master: "main"
%%% End:

\section{Hypersequent Calculus for Modal Riesz Spaces} \label{modal_section}

In this section we extend the system \hr\ into the hypersequent calculus \hmr\ for the equational theory of modal Riesz spaces. For that purpose we introduce to the system the two new rules of Figure~\ref{rules:hmr} each dealing with the additional operators (the $1$ constant and the  unary $\Diamond$ modality) available in the syntax of modal Riesz spaces.

In the $(\Diamond)$ rule, and in the rest of this section, the notation $\Diamond\Gamma$ stands for the sequence $r^\prime_1.\Diamond A_1,...,r^\prime_n.\Diamond A_n$ when $\Gamma = r^\prime_1.A_1, ...,r^\prime_n.A_n$.

\begin{figure}[h!]
  \begin{center}
  \scalebox{0.85}{
      \begin{minipage}{12.5cm}
      \begin{center}
          \begin{tabular}{cc}
            $\infer[1,\ \  \sum \vec{r} \geq \sum \vec{s}]
            {G \sep\vdash \Gamma , \vec{r}.1, \vec{s}.\covar{1}}
            {G \sep\vdash \Gamma}$
            & \ \ \ \ \ \ \ \ \
            $\infer[\Diamond, \ \ \sum \vec{r} \geq \sum \vec{s}]
            {\vdash \Diamond\Gamma , \vec{r}.1, \vec{s}.\covar{1}}
            {\vdash \Gamma , \vec{r}.1, \vec{s}.\covar{1}}$ \\[0.2cm]
          \end{tabular}
          \end{center}	

            \end{minipage}
    }
  \end{center}
  \caption{Additional rules of \hmr.}
  \label{rules:hmr}
\end{figure}

\begin{defi}[System \hmr] The hypersequent calculus proof system \hmr\ consists of the rules of Figure~\ref{rules:hr} (i.e., those of the hypersequent calculus \hr) plus the rules of Figure~\ref{rules:hmr}.
\end{defi}

The ($1$) rule is quite similar to the ID rule but it reflects the axiom $0 \leq 1$ (see Figure~\ref{axioms:of:modal_riesz:spaces}) of modal Riesz spaces, and thus the side condition expresses an inequality, rather than an equality. Note that the seemlingly simpler variant of the $1$-rule
\[ \infer{G \sep\vdash \Gamma , \vec{r}.1}{G \sep\vdash \Gamma} \]
would not allow us to prove the CAN-elimination theorem. For instance the hypersequent $\vdash 2.1, 1.\covar{1}$ would not have a CAN-free derivation with this rule. The $\Diamond$ rule, as we will show in the soundness and completeness theorems below (Theorem~\ref{thm:soundness} and Theorem~\ref{thm:completeness}), is remarkably capturing in one single rule all three axioms regarding the ($\Diamond$) modality  (see Figure~\ref{axioms:of:modal_riesz:spaces}).

\begin{rem}
  \label{rem:unsound_rules}
Note how the $\Diamond$ rule imposes strong constraints on the shape of its (single) premise and conclusion. First, both the conclusion and the premise are required to be hypersequents consisting of exactly one sequent. Furthermore, in the conclusion, all terms, except those of the form $1$ and $\covar{1}$ need to be of the form $\Diamond A$ for some term $A$. These constraints determine the main difficulties when trying to adapt the proofs of Section~\ref{modal_free_section} for the system \hmr, but they are necessary. Indeed, for example, the following two alternative relaxed rules, while more natural looking, are in fact not sound:
\begin{multicols}{2}
  \setlength{\columnseprule}{0pt}

  \[ \infer[\Diamond_1]{G \sep \vdash \Diamond \Gamma, \vec{r}.1,\vec{s}.\covar 1}{G \sep \vdash \Gamma, \vec{r}.1,\vec{s}.\covar 1}\]
  
  \[ \infer[\Diamond_2]{\vdash \Gamma,\vec{r'}.\Diamond A}{\vdash \Gamma,\vec{r'}.A}  \]
\end{multicols}

\noindent as Remark~\ref{remark:notsound} below shows. Moreover, if we replace the $\Diamond$ rule with the rule:
\[ \infer[\Diamond_3]{\vdash \Diamond \Gamma}{ \vdash \Gamma}\]
the system is not complete as the axiom $\Diamond 1 \leq 1$ is no longer derivable.
\end{rem}

The interpretation of \hmr\ weighted terms,  sequents and hypersequents is defined exactly as in Definition~\ref{defi:modal_free_interpretation} for the system \hr. That is, a weighted term is the term scalar-multiplied by the weight, a sequent is the sum of its weighted terms and a hypersequent is the join of its sequents. Throughout this section we adopt similar notation to that introduced in Section~\ref{modal_free_section} for the system \hr\ and we write $\proveM G$ if $G$ is derivable using the rules of the system \hmr.

The $\Diamond$ rule has a very peculiar place in the system \hmr: a \hmr\ derivation can be seen as a sequence of \hr\ derivations separated by a $\Diamond$ rule. Some results on the system \hmr\ can then be proven by induction on the number of $\Diamond$ rules appearing in a branch of the derivations, which we call the \emph{modal depth} of the derivation: the basic case is very similar to a \hr\ derivation --- we just add the $1$ rule and the proofs for the system \hr\ can be easily adapted to deal with this additionnal rule.
\begin{defi}[Modal depth]
  The \emph{modal depth} of a derivation is the maximal number of $\Diamond$ rules used in a branch of the derivation.
\end{defi}
\begin{rem}
Note that the modal depth of a derivation is not necessarily the same as the modal depth of the end hypersequent. Indeed, the derivation could introduce terms with $\Diamond$ operators by using the CAN rule, and thus can make the modal depth of the derivation greater than the modal depth of the end hypersequent.
\end{rem}

%%% Local Variables:
%%% mode: latex
%%% TeX-master: "main"
%%% End:
\subsection{Main results regarding the system \hmr}
\label{subsec:main_results}

This section presents our main results regarding the hypersequent calculus \hmr\ and has the same pattern of Section~\ref{subsec:modal_free_main_results} as we have tried to follow the same lines of presentation and reasoning and, whenever possible, to adapt the same proof techniques.  We have been able to obtain variants of all the results proved for the system $\hr$ with the notable exception of the Rational T-elimination Theorem (Theorem~\ref{thm:modal_free_conservativity}) which remains an open problem in the context of the system \hmr (see Section~\ref{open:problem}).

Our first two technical results about \hmr{}, the soundness and completeness theorems, state that the system \hmr{} can derive all and only those hypersequents $G$ such that $\mathcal{A}^\Diamond_{\textnormal{Riesz}}\vdash \sem{G} \geq 0$.
\begin{thm}[Soundness]
\label{thm:soundness}
For every hypersequent $G$, 
$$
\proveM G\ \ \ \Longrightarrow\ \ \ \mathcal{A}^\Diamond_{\textnormal{Riesz}}\vdash \sem{G} \geq 0.
$$
\end{thm}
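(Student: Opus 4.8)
The plan is to prove soundness by structural induction on the \hmr{} derivation of $G$, showing that each inference rule of the system is sound with respect to the semantics, i.e. preserves the property that the interpretation of a hypersequent is provably $\geq 0$ in $\mathcal{A}^\Diamond_{\textnormal{Riesz}}$. Since \hmr{} consists of the rules of \hr{} together with the two new rules $(1)$ and $(\Diamond)$ of Figure~\ref{rules:GA}, the bulk of the argument is already contained in the soundness proof for \hr{} (Theorem~\ref{thm:modal_free_soundness}, Section~\ref{subsec:modal_free_soundness}): every rule of \hr{} is sound, and that argument only uses the Riesz-space axioms $\mathcal{A}_{\textnormal{Riesz}}\subseteq \mathcal{A}^\Diamond_{\textnormal{Riesz}}$, hence it carries over verbatim. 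Therefore I would first note this and reduce the task to verifying soundness of the two genuinely new rules.

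For the $(1)$ rule
\[ \infer[1,\ \sum \vec{r}\geq \sum\vec{s}]{G \sep \vdash \Gamma,\vec{r}.1,\vec{s}.\covar{1}}{G\sep\vdash\Gamma} \]
the inductive hypothesis gives $\mathcal{A}^\Diamond_{\textnormal{Riesz}}\vdash \sem{G\sep\vdash\Gamma}\geq 0$. Writing $n=\sum\vec{r}-\sum\vec{s}\geq 0$, we have $\sem{\vdash\Gamma,\vec{r}.1,\vec{s}.\covar{1}} = \sem{\vdash\Gamma} + n\cdot 1$, and since $0\leq 1$ (positivity of $1$) and $n\geq 0$, $n\cdot 1\geq 0$, so $\sem{\vdash\Gamma,\vec{r}.1,\vec{s}.\covar{1}} \geq \sem{\vdash\Gamma}$. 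Following the same pattern as the $(\mathrm{W})$ case, using the distributivity of $\sqcap$ over $+$ and Lemma~\ref{lem:condMaxPos} exactly as in the $(\mathrm{S})$ and $(\mathrm{M})$ cases of Section~\ref{subsec:modal_free_soundness}, one concludes $\sem{G\sep\vdash\Gamma,\vec{r}.1,\vec{s}.\covar{1}}\geq 0$.

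The main obstacle is the $(\Diamond)$ rule
\[ \infer[\Diamond,\ \sum\vec{r}\geq\sum\vec{s}]{\vdash\Diamond\Gamma,\vec{r}.1,\vec{s}.\covar{1}}{\vdash\Gamma,\vec{r}.1,\vec{s}.\covar{1}} \]
which, as the paper remarks, packs all three $\Diamond$-axioms into one step. From the inductive hypothesis $\mathcal{A}^\Diamond_{\textnormal{Riesz}}\vdash \sem{\vdash\Gamma,\vec{r}.1,\vec{s}.\covar{1}}\geq 0$, i.e. $\sem{\vdash\Gamma} + n\cdot 1\geq 0$ with $n=\sum\vec{r}-\sum\vec{s}\geq 0$, I must deduce $\sem{\vdash\Diamond\Gamma} + n\cdot 1\geq 0$, where $\sem{\vdash\Diamond\Gamma}=\Diamond(\sem{\vdash\Gamma})$ by linearity of $\Diamond$. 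The key computation: let $t=\sem{\vdash\Gamma}$, so $t\geq -n\cdot 1 = -(n\cdot 1)$, hence $t + n\cdot 1\geq 0$, i.e. $(t+n\cdot 1)^{+} = t+n\cdot 1$ and $(t+n\cdot 1)\sqcup 0 = t+n\cdot 1$. Apply $\Diamond$ to $0\sqcup(t+n\cdot1)$: positivity of $\Diamond$ gives $\Diamond(0\sqcup(t+n\cdot1))\geq 0$, and since $0\sqcup(t+n\cdot1) = t+n\cdot1$, linearity gives $\Diamond(t+n\cdot1) = \Diamond t + n\cdot\Diamond 1 \geq 0$. Finally, by the $1$-decreasing property $\Diamond 1\leq 1$ and $n\geq 0$ we get $n\cdot\Diamond 1\leq n\cdot 1$, whence $\Diamond t + n\cdot 1 \geq \Diamond t + n\cdot\Diamond 1 \geq 0$, which is exactly $\sem{\vdash\Diamond\Gamma,\vec{r}.1,\vec{s}.\covar{1}}\geq 0$. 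All these manipulations — the order-reversal under negation, compatibility of scalar multiplication with order, monotonicity of $+$ — are instances of basic Riesz-space facts (Lemma~\ref{lem:equalities} and the compatibility axioms), so I would just cite them. The conclusion of the theorem then follows by the standard induction on the derivation of $G$.
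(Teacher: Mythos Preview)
Your proof is correct and follows essentially the same approach as the paper: induction on the derivation, citing the \hr{} soundness proof for all old rules, and verifying the $(1)$ and $(\Diamond)$ rules directly via the modal axioms. The only cosmetic difference is that the paper handles the $(1)$ rule by a one-line monotonicity argument $\sem{G\sep\vdash\Gamma,\vec{r}.1,\vec{s}.\covar{1}}\geq\sem{G\sep\vdash\Gamma}\geq 0$ (no need for Lemma~\ref{lem:condMaxPos}), and for $(\Diamond)$ it cites ``monotonicity of $\Diamond$'' where you unpack it via the positivity axiom---but the underlying computations are identical.
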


\begin{thm}[Completeness]
\label{thm:completeness}
For every hypersequent $G$, 
$$
\mathcal{A}^\Diamond_{\textnormal{Riesz}} \vdash\sem{G} \geq 0 \ \ \ \Longrightarrow\ \ \  \proveM G.
$$
\end{thm}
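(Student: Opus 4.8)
The plan is to mirror the structure of the completeness proof for \hr{} (Theorem \ref{thm:modal_free_completeness}), reducing the statement to an auxiliary lemma about equational derivations. First I would establish a modal analogue of Lemma \ref{lem:modal_free_completeness_aux}: if $\semProveM A = B$, then both $\vdash 1.A, 1.\negF{B}$ and $\vdash 1.B, 1.\negF{A}$ are derivable in \hmr{}. Granting this, Theorem \ref{thm:completeness} follows exactly as before: from $\semProveM \sem{G}\geq 0$, i.e. $\semProveM 0 = \sem{G}\sqcap 0$, the auxiliary lemma gives $\proveM \vdash 1.(0\sqcap\sem{G}), 1.0$, and then repeated use of the invertibility of the logical rules $0, \sqcap, \sqcup, +, \times$ (which still hold in \hmr{} — this is the modal version of Lemma \ref{lem:modal_free_can_full_invertibility}, proved using the CAN rule just as in the modal-free case) unfolds this into $\proveM G$. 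For this step it suffices that the $0/\sqcap/\sqcup/+/\times$ rules are CAN-invertible in \hmr{}; no new $\Diamond$ or $1$ reasoning is needed here, since $\sem{G}$ already uses only Riesz connectives at the top once we strip the weights.

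The core work is the auxiliary lemma, proved by induction on the equational-logic derivation of $\semProveM A = B$ (Definition \ref{eq_logic_rules}). The structural cases — \text{refl}, \text{sym}, \text{trans}, \text{subst}, \text{ctxt} — go through verbatim as in the proof of Lemma \ref{lem:modal_free_completeness_aux}: \text{refl} uses the extended ID rule (the modal version of Lemma \ref{lem:modal_free_ext_ID_rule}, which must be restated to cover $\Diamond$-terms and the constant $1$, $\negF{1}$ — easy since $\overline{\Diamond A}=\Diamond\negF{A}$ and we can apply the $\Diamond$ and $1$ rules); \text{trans} uses a CAN followed by an M; \text{subst} uses the substitution lemma for \hmr{}; \text{ctxt} is an inner induction on the context, with the new case $C=\Diamond C'$ handled by: from $\proveM \vdash 1.C'[A], 1.\negF{C'[B]}$ apply the $\Diamond$ rule (with $\vec r=\vec s=\emptyset$, so $\sum\vec r=\sum\vec s=0$) to get $\proveM \vdash 1.\Diamond C'[A], 1.\Diamond\negF{C'[B]}$, and recall $\Diamond\negF{C'[B]}=\negF{\Diamond C'[B]}=\negF{C[B]}$. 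The remaining cases are the axioms of Figure \ref{axioms:of:modal_riesz:spaces}: the Riesz axioms are already handled in Lemma \ref{lem:modal_free_completeness_aux}, so only the four modal axioms are new.

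For the four modal axioms I would give explicit derivations. \emph{Positivity of $1$}, $0\leq 1$, i.e. $0 = 0\sqcap 1$: here $\negF{0\sqcap 1} = 0\sqcup\negF{1}$, and the $1$ rule (with $\sum\vec r\geq\sum\vec s$, taking $\vec r=(1)$, $\vec s=\emptyset$) plus the $0$, $\sqcup$ and W rules suffice. \emph{Linearity of $\Diamond$}, $\Diamond(r_1A+r_2B)=r_1\Diamond A + r_2\Diamond B$: one builds the derivation by pushing $+$ and $\times$ down to expose $\vdash \Diamond(\cdot),\dots$, then applying the $\Diamond$ rule once to a sequent of the form $\vdash (r_1).A, (r_2).B, (r_1).\negF A, (r_2).\negF B$ which closes via the extended ID rule — here the fact that the $\Diamond$ rule carries the whole sequent across at once, together with $\sum\vec r=\sum\vec s=0$, is exactly what's needed. \emph{$1$-decreasing}, $\Diamond 1\leq 1$, i.e. $\Diamond 1 = \Diamond 1 \sqcap 1$, with negation $\Diamond\negF 1 \sqcup \negF 1$: apply $\sqcap$, then on one branch close with the extended ID on $\Diamond 1,\Diamond\negF1$, and on the other use the $1$ rule followed by the $\Diamond$ rule, exploiting the inequality side-condition $\sum\vec r\geq\sum\vec s$ (one instance of $1$ on the left, one $\covar 1$ from the $\Diamond$ across $\Diamond\negF1=\Diamond\covar1$). \emph{Positivity of $\Diamond$}, $\Diamond(0\sqcup A)\geq 0$, i.e. $0 = 0\sqcap\Diamond(0\sqcup A)$, with negation $0\sqcup\Diamond(\negF0\sqcap\negF A)=0\sqcup\Diamond(0\sqcap\negF A)$: apply $\sqcup$–W to reduce to $\vdash 1.\Diamond(0\sqcup A), 1.\Diamond(0\sqcap\negF A)$, apply the $\Diamond$ rule to reduce to $\vdash 1.(0\sqcup A), 1.(0\sqcap\negF A)$, then $\sqcup$ to split, weaken away the $A$ branch, use $\sqcap$ and weaken away the $\negF A$ branch, arriving at $\vdash 1.0,1.0$, closed by the $0$ rule and INIT.

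The main obstacle I anticipate is twofold: first, carefully restating and reproving the routine \hmr{} infrastructure lemmas — the extended ID rule, the substitution lemma, and CAN-invertibility of the logical rules — so that they genuinely cover $\Diamond$-terms and the constants $1,\covar 1$ (the $\Diamond$ rule's rigid one-sequent shape and its $\geq$ side-condition make these slightly delicate, though not deep). Second, getting the bookkeeping of the inequality side-conditions in the $1$ and $\Diamond$ rules right in the modal-axiom derivations, particularly in the $1$-decreasing case where the asymmetry between $1$ and $\covar 1$ occurrences is what makes the rule fire; this is where a naive attempt to reuse the \hr{} pattern would break, since in \hr{} all side-conditions were equalities.
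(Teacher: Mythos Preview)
Your plan is essentially identical to the paper's: prove the auxiliary lemma (if $\semProveM A=B$ then $\proveM \vdash 1.A,1.\negF{B}$ and $\proveM \vdash 1.B,1.\negF{A}$) by induction on the equational derivation, handling the new modal axioms explicitly and the \text{ctxt} case for $\Diamond$ via one application of the $\Diamond$ rule, then recover the theorem via CAN-invertibility of $0,\sqcap,\sqcup,+,\times$ exactly as in the \hr\ proof. The infrastructure you flag (extended ID, substitution, CAN-invertibility for \hmr) is precisely what the paper sets up in Lemmas~\ref{lem:ext_ID_rule}, \ref{lem:subst}, \ref{lem:can_full_invertibility}.

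One local slip: your sketch for the positivity-of-$\Diamond$ axiom is off. The two sequents you must derive are $\vdash 1.(0\sqcap\Diamond(0\sqcup x)),1.0$ and $\vdash 1.0,1.(0\sqcup\Diamond(0\sqcap\covar{x}))$; neither contains both $\Diamond(0\sqcup x)$ and $\Diamond(0\sqcap\covar{x})$ simultaneously, so you cannot ``reduce to $\vdash 1.\Diamond(0\sqcup A),1.\Diamond(0\sqcap\negF A)$'' as you describe, and the subsequent ``weaken away the $A$ branch'' step would not work anyway (W adds sequents, and the remaining single sequent is not sound). The actual derivations are simpler: for the first sequent, drop the outer $0$, apply $\sqcap$, and on the $\Diamond(0\sqcup x)$ branch use $\Diamond$ then $\sqcup$--W--$0$--INIT; for the second, drop the $0$, apply $\sqcup$, and W directly to the $\vdash 1.0$ sequent. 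This is a bookkeeping error, not a gap in the method.
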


\begin{rem}\label{remark:notsound}
  The alternative rules considered in Remark~\ref{rem:unsound_rules} are unsound since the hypersequent $\vdash 1.\Diamond(\covar{x}\sqcap \covar{y}), 1.\Diamond(x) \sqcup \Diamond(y)$ would be derivable (see below) while the hypersequent $\mathcal{A}^\Diamond_{\textnormal{Riesz}}\not\vdash \Diamond(-x \sqcap -y) + \big(\Diamond(x) \sqcup \Diamond(y)\big) \geq 0$ or, equivalently,  $\mathcal{A}^\Diamond_{\textnormal{Riesz}}\not\vdash\Diamond(x \sqcup y) \leq \Diamond(x) \sqcup \Diamond(y)$ would not (see Example~\ref{example:false-equality}). 
  \[ \scalebox{0.85}{\infer[\sqcup]{\vdash 1.\Diamond(\covar{x}\sqcap \covar{y}), 1.\Diamond(x) \sqcup \Diamond(y)}{\infer[\Diamond_1]{\vdash 1.\Diamond(\covar{x}\sqcap \covar{y}), 1.\Diamond(x) \sep \vdash 1.\Diamond(\covar{x}\sqcap \covar{y}), 1.\Diamond(y)}{\infer[\Diamond_1]{\vdash 1.\covar{x}\sqcap \covar{y}, 1.x \sep \vdash 1.\Diamond(\covar{x}\sqcap \covar{y}), 1.\Diamond(y)}{\infer[\sqcap]{\vdash 1.\covar{x}\sqcap \covar{y}, 1.x \sep \vdash 1.\covar{x}\sqcap \covar{y}, 1.y}{\infer[\text{W}]{\vdash 1.\covar{x}, 1.x \sep \vdash 1.\covar{x}\sqcap \covar{y}, 1.y}{\infer[\text{ID}]{\vdash 1.\covar{x},1.x}{\infer[\text{INIT}]{\vdash}{}}} & \infer[\sqcap]{\vdash 1.\covar{y}, 1.x \sep \vdash 1.\covar{x}\sqcap \covar{y}, 1.y}{\infer[\text{S}]{\vdash 1.\covar{y}, 1.x \sep \vdash 1.\covar{x}, 1.y}{\infer[\text{ID}]{\vdash 1.\covar{x},1.\covar{y},1.x,1.y}{\infer[\text{ID}]{\vdash 1.\covar{y},1.y}{\infer[\text{INIT}]{\vdash}{}}}} & \infer[\text{W}]{\vdash 1.\covar{y}, 1.x \sep \vdash 1.\covar{y}, 1.y}{\infer[\text{ID}]{\vdash 1.\covar{y},1.y}{\infer[\text{INIT}]{\vdash}{}}}}}}}}} \]
  \[ \scalebox{0.85}{\infer[\Diamond_2]{\vdash 1.\Diamond(\covar{x}\sqcap \covar{y}), 1.\Diamond(x) \sqcup \Diamond(y)}{\infer[\sqcup]{\vdash 1.(\covar{x}\sqcap\covar{y}),1.\Diamond(x)\sqcup\Diamond(y)}{\infer[\sqcap]{\vdash 1.\covar{x}\sqcap \covar{y}, 1.\Diamond(x) \sep \vdash 1.\covar{x}\sqcap \covar{y}, 1.\Diamond(y)}{\infer[\text{W}]{\vdash 1.\covar{x}, 1.\Diamond(x) \sep \vdash 1.\covar{x}\sqcap \covar{y}, 1.\Diamond(y)}{\infer[\Diamond_2]{\vdash 1.\covar{x},1.\Diamond(x)}{\infer[\text{ID}]{\vdash 1.\covar{x},1.x}{\infer[\text{INIT}]{\vdash}{}}}} & \infer[\sqcap]{\vdash 1.\covar{y}, 1.\Diamond(x) \sep \vdash 1.\covar{x}\sqcap \covar{y}, 1.\Diamond(y)}{\infer[\text{S}]{\vdash 1.\covar{y}, 1.\Diamond(x) \sep \vdash 1.\covar{x}, 1.\Diamond(y)}{\infer[\Diamond_2]{\vdash 1.\covar{x},1.\covar{y},1.\Diamond(x),1.\Diamond(y)}{\infer[\Diamond_2]{\vdash 1.\covar{x},1.\covar{y},1.x, 1.\Diamond(y)}{\infer[\text{ID}]{\vdash 1.\covar{x},1.\covar{y},1.x,1.y}{\infer[\text{ID}]{\vdash 1.\covar{y},1.y}{\infer[\text{INIT}]{\vdash}{}}}}}} & \infer[\text{W}]{\vdash 1.\covar{y}, 1.\Diamond(x) \sep \vdash 1.\covar{y}, 1.\Diamond(y)}{\infer[\Diamond_2]{\vdash 1.\covar{y},1.\Diamond(y)}{\infer[\text{ID}]{\vdash 1.\covar{y},1.y}{\infer[\text{INIT}]{\vdash}{}}}}}}}}} \]
\end{rem}

Our next theorem states that all logical rules already present in the system $\hr$ and the $\Diamond$ rule are CAN-free invertible.

\begin{thm}[CAN-free Invertibility]
\label{thm:invertibility}
All the logical rules $\{ 0, +, \times, \sqcup, \sqcap, \Diamond \}$ are CAN-free invertible in \hmr.
\end{thm}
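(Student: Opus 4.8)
The plan is to prove Theorem~\ref{thm:invertibility} following exactly the pattern of Theorem~\ref{thm:modal_free_invertibility}: for each logical rule we show that if the conclusion has a CAN--free derivation then each premise has a CAN--free derivation, by induction on the CAN--free derivation of the conclusion. The rules $\{0,+,\times,\sqcup,\sqcap\}$ are already handled in Section~\ref{subsec:modal_free_invertibility}, and those arguments carry over essentially verbatim to \hmr{} — the only new rules that can appear at the bottom of a derivation are the $(1)$ rule and the $(\Diamond)$ rule, so for each of the already-treated logical connectives we only need to add two new cases to the induction (conclusion ends with $(1)$, conclusion ends with $(\Diamond)$). The genuinely new work is proving CAN--free invertibility of the $(\Diamond)$ rule itself.

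For the $(1)$-rule case in the invertibility proofs of $\{0,+,\times,\sqcup,\sqcap\}$: the $(1)$ rule only adds weighted copies of $1$ and $\covar 1$ to a single sequent, so it commutes trivially with a logical rule acting on some $\Diamond$-free or other formula (apply the induction hypothesis to the premise of the $(1)$ rule, then reapply $(1)$). The $(\Diamond)$-rule case is more delicate because the $(\Diamond)$ rule has a single-sequent conclusion in which every non-$1$ formula is boxed: a logical rule whose principal formula is some $\Diamond A$ cannot act directly, since $\Diamond A$ is atomic from the point of view of the \hr-logical rules; and a logical rule acting on a subformula \emph{inside} a box is not available either. Concretely, if the conclusion of, say, the $\sqcup$ rule is the conclusion of a $(\Diamond)$ rule, the principal formula $B\sqcup C$ must be of the form $\Diamond A$, which is impossible; so this configuration simply does not arise, and there is nothing to check. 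Thus the $(\Diamond)$ case in the invertibility of the \hr-logical rules is vacuous, which is the key simplifying observation.

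It remains to prove that $(\Diamond)$ is CAN--free invertible: if $\vdash \Diamond\Gamma, \vec r.1, \vec s.\covar 1$ (with $\sum\vec r\geq\sum\vec s$) has a CAN--free derivation, then $\vdash \Gamma, \vec r.1, \vec s.\covar 1$ has a CAN--free derivation. Here I would again argue by induction on the CAN--free derivation of $\vdash\Diamond\Gamma,\vec r.1,\vec s.\covar1$. The subtlety is that this hypersequent is a single sequent all of whose formulas are boxed or are $1/\covar1$, and the only rules that can produce such a hypersequent with a single-sequent conclusion are: INIT (only if $\Gamma=\emptyset$ and $\vec r,\vec s$ make it the empty sequent — handled directly, possibly via the $(1)$ rule and Lemma~\ref{lem:modal_free_ext_ID_rule}), the ID rule applied to a boxed ``variable'' — but $\Diamond A$ is not a variable, so ID does not apply unless $\Gamma$ already contains a genuine covariable pair, which it cannot since everything is boxed; the $(1)$ rule adding more $1/\covar1$; the $\times$, $+$, $0$ rules acting on one of the boxed formulas (e.g.\ $\Diamond(rA) = r.(\Diamond A)$ is \emph{not} how $\times$ works — $\times$ needs the outer operator to be a scalar, and the outer operator of $\Diamond A$ is $\Diamond$, so these do not apply either); and finally the $(\Diamond)$ rule itself. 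So in fact the \emph{only} nontrivial rule that can end such a derivation is a nested $(\Diamond)$ rule (or a $(1)$ rule, or a W/C/S/T/M structural rule — but W, C, S, M would force the hypersequent to have more than one sequent at some point above, and T multiplies the single sequent by a scalar). In the $(\Diamond)$ case the premise is $\vdash\Gamma,\vec r.1,\vec s.\covar1$ directly, giving the result immediately; in the $(1)$ case we peel off the added $1/\covar1$, apply the induction hypothesis, and reapply $(1)$ (checking the inequality side condition still holds); in the T case we apply the induction hypothesis to the scaled premise and reapply T.

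I expect the main obstacle to be the careful bookkeeping of \emph{which} rules can legitimately appear at the bottom of a derivation of a ``boxed'' hypersequent $\vdash\Diamond\Gamma,\vec r.1,\vec s.\covar1$, and in particular ruling out the structural rules W, C, S, M and handling the ID rule correctly — one must be precise that the single-sequent, fully-boxed shape is an invariant that propagates upward through exactly the rules $(\Diamond)$, $(1)$, and T (and terminates at INIT), since any of W/C/S/M/$\sqcup$ would have had to create a second sequent, and $\sqcup,\sqcap$ acting on a boxed formula is impossible. Once this shape analysis is pinned down, each surviving case is routine: apply the induction hypothesis to the premise(s) of the same shape and reapply the corresponding rule, verifying the $\sum\vec r\geq\sum\vec s$ side conditions are preserved. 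Care is also needed with the M rule: although M's conclusion can be a single sequent $\vdash\Gamma_1,\Gamma_2$, its two premises split the multiset, so both premises are again single boxed sequents, and the induction hypothesis plus a reapplication of M closes this case too — so M does survive as a case and must be treated, contrary to what I suggested a moment ago; I would include it explicitly alongside $(\Diamond)$, $(1)$, T.
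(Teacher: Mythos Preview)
Your treatment of the non-modal rules $\{0,+,\times,\sqcup,\sqcap\}$ is fine and matches the paper: one adds the $(1)$ case (commute and reapply) and the $(\Diamond)$ case (where the principal-formula vector must be empty, so the case is trivial) to the proofs of Section~\ref{subsec:modal_free_invertibility}.

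The gap is in your proof of $(\Diamond)$-invertibility. Your case analysis of ``which rules can end a derivation of a single-sequent boxed hypersequent'' is internally inconsistent. You first dismiss W, C, S, M on the grounds that they ``force more than one sequent at some point above,'' then correctly retract this for M (whose two premises are again single sequents). But the same retraction is needed for C: with empty context $G$, the C rule has single-sequent conclusion $\vdash \Diamond\Gamma,\vec r.1,\vec s.\covar 1$ and \emph{two-sequent} premise $\vdash \Diamond\Gamma,\vec r.1,\vec s.\covar 1 \sep \vdash \Diamond\Gamma,\vec r.1,\vec s.\covar 1$. Your single-sequent induction hypothesis does not apply to this premise, so the induction breaks. (If instead you insist that $G$ must be nonempty in every rule instance, then M is ruled out for exactly the same reason and your late addition of the M case is spurious; either way the analysis does not hold together.)

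The paper resolves this exactly as in Section~\ref{subsec:modal_free_invertibility}: it strengthens the statement to a \emph{generalised} $\Diamond$ rule acting on all sequents simultaneously,
\[
\infer{[\vdash \Diamond \Gamma_i, \vec{r}_i.1, \vec{s}_i.\covar{1}]_{i=1}^n}{[\vdash \Gamma_i, \vec{r}_i.1, \vec{s}_i.\covar{1}]_{i=1}^n},
\]
and proves CAN--free invertibility of \emph{this} rule by induction on the derivation of the conclusion. Now W, C, S, M, T, $(1)$ all preserve the ``every sequent is boxed plus $1/\covar 1$'' shape and the induction hypothesis applies directly to their premises; the actual $(\Diamond)$ rule (necessarily with $n=1$) gives the base step. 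The paper also remarks that this generalised rule is \emph{unsound} in the forward direction, yet still CAN--free invertible---so no side condition $\sum\vec r_i\geq\sum\vec s_i$ is needed or used in the invertibility argument.
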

The proof of this result is obtained by induction of the structure of derivations and is essentially identical to the one provided in Section~\ref{subsec:modal_free_invertibility}. The new case represented by the $\Diamond$ rule presents no specific difficulties.

\begin{rem}
Note that, while the proof of CAN-free invertibility of the $\Diamond$ rule (Theorem~\ref{thm:invertibility} above) is not particularly difficult, the general invertibility property, i.e., that if the conclusion of a $\Diamond$ rule is derivable (possibly using CAN rules) then also its premise is derivable, appears to be quite nontrivial. We are able to prove this general form of invertibility only as a corollary of the CAN elimination theorem (Theorem~\ref{thm:can_elim} below) for \hmr. Note how this contrasts with the case of the other logical rules already present in the system $\hr$ ($ 0, +, \times, \sqcup, \sqcap\}$) whose general invertibility is straightforward to prove (see Lemma~\ref{lem:modal_free_can_full_invertibility} and~\ref{lem:can_full_invertibility} below) without invoking the CAN elimination theorem.
\end{rem}

Note, instead, that the rule for the constant ($1$) is not CAN-free invertible in \hmr{}. For example, the conclusion of the following valid instance of the rule:
\[ \infer[1,\ \ \ \frac{2}{3} \geq 0]{\vdash \frac{2}{3}.1,\frac{1}{3}.\covar{1}}{\vdash \frac{1}{3}.\covar{1}} \] is derivable but its premise (whose semantics is $\sem{ \vdash \frac{1}{3}. \covar{1}} = \frac{1}{3}(-1)$) is, by the Soundness theorem, not derivable.

The importance of the invertibility theorem, in the context of the \hr\ system (and also similar systems which inspired our work, such as the system GA of~\cite{MOGbook,MOG2005}), stems from the fact that it allows us to reduce the logical complexity of terms in a given hypersequent. This has allowed us to structure the proof of the CAN elimination theorem (see Section~\ref{subsec:modal_free_can_elim}) as follows:

 \begin{itemize}
 \item first prove the atomic CAN elimination result (Lemma~\ref{lem:modal_free_atomic_can_elim-simpler}),
\item then use the CAN-free invertibility of the logical rules to reduce the complexity of arbitrary hypersequents and CAN terms to atomic hypersequents and atomic terms. 
\end{itemize}

This general proof technique is, however, not applicable in the context of the system $\hmr$. This is because it is not possible to just invoke the CAN-free invertibility of the $\Diamond$-rule to reduce the complexity of a term of the form $\Diamond A$ in an arbitrary hypersequent due to the very constrained shape of the $\Diamond$-rule  (see Figure~\ref{rules:hmr}) which requires the hypersequent to consist of only one sequent, and forces that only sequent to contain only $1$ terms, $\covar{1}$ terms and $\Diamond$ terms (i.e., terms whose outermost connective is a $\Diamond$)

It is still possible, however, to reduce the logical complexity of terms in arbitrary hypersequents when the outermost connective of these terms is in $\{ 0, +, \times, \sqcup, \sqcap \}$ by applying the invertibility of the associated rule. By systematically applying these simplification steps to a complex hypersequent it is possible to obtain hypersequents having only atoms, $1$ terms, $\covar{1}$ terms or $\Diamond$ terms. These simplified hypersequents are called \emph{basic}.

\begin{defi}[Basic Hypersequent]
A hypersequent $G$ is \emph{basic} if it contains only atoms, $1$ terms, $\covar{1}$ terms or $\Diamond$ terms.
\end{defi}

The following technical result is of key importance.

\begin{thm}[M elimination]
\label{thm:m_elim}
If a hypersequent has a CAN-free derivation, then it has a CAN-free and M-free derivation.
\end{thm}

In the context of the system $\hr$, the M elimination theorem allows for a very simple proof of CAN elimination for atomic CAN terms (Lemma~\ref{lem:modal_free_atomic_can_elim-simpler}). Similarly, in the context of $\hmr$, it will allow for a simple proof of a similar result regarding atomic CAN terms (see Lemma~\ref{lem:atomic_can_elim-simpler}).

 However, compared to $\hr$, where after being useful in proving Lemma~\ref{lem:modal_free_atomic_can_elim-simpler}, the M elimination theorem is not really needed to complete the proof of CAN elimination, in the context of $\hmr$ it appears to be of crucial importance.  As already discussed above,  in the context of $\hmr$, it is not possible to simplify the complexity of CAN terms of the form $\Diamond A$ simply by invoking the CAN-free invertibility of the $\Diamond$ rule. To address this limitation, it is possible to deal with the case of CAN terms being $\Diamond$-terms in a different way, by induction on the structure of the derivation (in the style of the classic inductive proof techniques for eliminating CUT applications in sequent calculi, see, e.g.,~\cite{Buss98}).  In this inductive proof, however,  there is a critically difficult case when the derivation ends with an M rule, as this rule breaks the proviso $\sum\vec{r} = \sum \vec{s}$ of the CAN rule.  For instance, we do not know how to deal with the following instance of the M rule:
\[ \infer[\text{CAN}, 2 + 3 = 2 + 3]{G \sep \vdash \Gamma_1, \Gamma_2}{\infer[\text{M}]{G \sep \vdash \Gamma_1, \Gamma_2, 2.\Diamond A, 3.\Diamond A, 2.\Diamond(\negF{A}), 3.\Diamond (\negF{A})}{G \sep \vdash \Gamma_1, 2.\Diamond A, 3.\Diamond(\negF{A}) & G \sep \vdash \Gamma_2, 3.\Diamond A, 2.\Diamond(\negF{A})}} \]
\noindent
since we cannot use the induction hypothesis on the two premises (because $2\neq 3$). 

The M elimination Theorem~\ref{thm:m_elim} is crucially important in eliminating this difficult case. The rest of the CAN elimination proof can then be carried out without serious technical difficulties. This is our main motivation for proving the M elimination theorem.

We can now state our main theorem regarding the system \hmr{}.

\begin{thm}[CAN elimination]
\label{thm:can_elim}
If a hypersequent $G$ has a derivation, then it has a CAN-free derivation.
\end{thm}

\begin{proof}[Proof sketch] The full proof appears in Section~\ref{subsec:can_elim}. The CAN rule has the following form:
$$\infer[\text{CAN}, \sum \vec{r} = \sum \vec{s}]{G \sep \vdash \Gamma}{G \sep \vdash \Gamma, \vec{s}.A , \vec{r}.\negF{A}} $$

Following the same proof structure as in Theorem~\ref{thm:modal_free_can_elim}, we show how to eliminate one application of the CAN rule. Namely, we prove that if the premise $G \sep \vdash \Gamma, \vec{s}.A , \vec{r}.\negF{A}$, with  $\sum\vec{r} = \sum \vec{s}$, has a CAN-free derivation then the conclusion $G \sep \vdash \Gamma$ also has a CAN-free derivation. This of course implies the statement of the CAN-elimination theorem by using a simple inductive argument on the number of applications of the CAN rule in a derivation.

As in Theorem~\ref{thm:modal_free_can_elim}, it is useful to first invoke the M-elimination Theorem~\ref{thm:m_elim} on the derivation of $G \sep \vdash \Gamma, \vec{s}.A , \vec{r}.\negF{A}$ to remove possible occurrences of the  M rule. This is critical since the M rule is problematic to deal with in our inductive proof because its two premises can generally break the condition $\sum\vec{r} = \sum \vec{s}$.

Hence, in what follows we assume that the derivation of $G \sep \vdash \Gamma, \vec{s}.A , \vec{r}.\negF{A}$ is M-free and the proof proceeds by double induction on the structure of $A$ and the M-free derivation of $G \sep \vdash \Gamma, \vec{s}.A , \vec{r}.\negF{A}$.

The base cases are when $A=x$, i.e., when $A$ is atomic, and when $A=1$. Proving those cases is relatively straightforward, once the critical case regarding the M rule can be ignored.

For the inductive case, when $A$ is a complex term which is not a $\Diamond$ term we invoke the CAN-free invertibility theorem. For example, if $A=B+C$, the invertibility theorem states that $G \sep \vdash \Gamma, \vec{s}.B,  \vec{s}.C, \vec{r}.\negF{B},\vec{r}.\negF{B}$ must also have a CAN-free derivation (and also M-free by application of the M-elimination Theorem~\ref{thm:m_elim}). We then note that, since $B$ and $C$ both have lower complexity than $A$, it follows from two applications of the inductive hypothesis that $G \sep \vdash \Gamma$ has a CAN-free derivation, as desired.

Finally, when $A = \Diamond B$ for some $B$ we prove the result by decreasing the complexity of the derivation while keeping $\Diamond B$ as the CAN term: we use the induction hypothesis on the premises of the last rule used in the derivation --- all cases are straightforward under the hypothesis that the derivation is M-free. This simplification process is repeated until we reach an application of the $\Diamond$ rule, necessarily (due to the constraints of the $\Diamond$ rule) of the form:
\[ \infer[\Diamond]{\vdash \Diamond\Gamma, \vec{r}.\Diamond B,\vec{s}.\Diamond \negF{B}, \vec{r'}.1,\vec{s'}.\covar{1}}{\vdash \Gamma, \vec{r}.B,\vec{s}.\negF{B}, \vec{r'}.1,\vec{s'}.\covar{1}} \]
We can then use the induction hypothesis on the simpler term $B$.
\end{proof}

Finally the algorithm introduced in the proof of Theorem~\ref{thm:modal_free_decidability} can be adapted to the \hmr{} system to prove the following theorem.
\begin{thm}[Decidability]
\label{thm:decidability}
There is an algorithm to decide whether or not a hypersequent has a derivation.
\end{thm}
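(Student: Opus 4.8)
The plan is to adapt the algorithm designed for \hr\ in the proof of Theorem~\ref{thm:modal_free_decidability}, and more precisely the refined Theorem~\ref{thm:modal_free_to_FO}, which reduces a derivability question to the satisfiability of an effectively computable formula in the decidable first--order theory $FO(\mathbb{R},+,\times,\leq)$ of the real--closed field \cite{tarski1951,GRIGOREV198865}. By the CAN elimination Theorem~\ref{thm:can_elim} it suffices to decide CAN--free derivability, and by the CAN--free invertibility of the rules $\{0,+,\times,\sqcup,\sqcap\}$ (Theorem~\ref{thm:invertibility}) any hypersequent $G$ can be reduced, by iterated application of these rules, to a finite family $G_1,\dots,G_n$ of \emph{basic} hypersequents such that $G$ is derivable if and only if every $G_i$ is. This reduction terminates since the logical complexity of the formulas strictly decreases while the $\Diamond$ terms are left untouched, so it is enough to decide derivability of a basic hypersequent.

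For basic hypersequents we would prove, by induction on the modal complexity $k=\max\{\#_\Diamond(A)\mid A\text{ occurs in }G\}$ (and secondarily on logical complexity), an analogue of Theorem~\ref{thm:modal_free_to_FO} that computes the formula $\phi_G$. The recursion is well founded: the $\Diamond$ rule strictly decreases the maximal $\Diamond$--nesting of the formulas in a hypersequent, while all other CAN--free rules never increase it, so every CAN--free derivation of $G$ has modal depth at most $k$ and each recursive call lowers $k$. When $k=0$, a basic hypersequent $\vdash\Gamma_1\mid\dots\mid\vdash\Gamma_m$ contains only atoms, $1$ and $\covar{1}$, and a variant of Lemma~\ref{lem:modal_free_lambda_prop} --- in which the $1$ rule contributes an \emph{inequality} rather than an equality --- characterises derivability by the existence of non--negative reals $t_1,\dots,t_m$, not all zero, that balance the $t$--weighted coefficients of $x_j$ and $\covar{x_j}$ for each variable and satisfy $\sum_i t_i\sigma_i^{1}\geq\sum_i t_i\sigma_i^{\covar 1}$, where $\sigma_i^{1}$ and $\sigma_i^{\covar 1}$ denote the total $1$-- and $\covar 1$--weights of $\Gamma_i$; this is directly an existential $FO(\mathbb{R},+,\times,\leq)$ formula. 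When $k>0$, the structural rules again allow the sequents of $G$ to be combined with non--negative real weights $t_i$, and (because of the rigid shape of the $\Diamond$ rule) the only way to consume a $\Diamond$ term is, after closing the atomic part with the ID rule, to reach the single sequent carrying the $t$--weighted combination of all $\Diamond$ terms together with $R.1$ and $S.\covar{1}$ --- where $R=\sum_i t_i\sigma_i^{1}$ and $S=\sum_i t_i\sigma_i^{\covar 1}$ --- to apply the $\Diamond$ rule (which forces $R\geq S$), and to continue with its premise, obtained by stripping one $\Diamond$ from each such term: a hypersequent of modal complexity $k-1$ whose coefficients are \emph{linear} in the $t_i$. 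Following the pattern of Theorem~\ref{thm:modal_free_to_FO}, we would case--split over the subsets $I\subsetneq[1..m]$ of sequents receiving weight $0$ (which fixes the shape of the $\Diamond$ premise), apply the logical reduction to that premise, invoke the induction hypothesis to obtain a formula in its coefficient variables, instantiate those variables by the linear expressions in the $t_i$, and existentially quantify the $t_i$; the disjunction over all such $I$ is the required formula $\phi_G$. As satisfiability of such formulas over $\mathbb{R}$ is decidable \cite{tarski1951,GRIGOREV198865}, this gives the decision algorithm.

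The main obstacle, as anticipated in Section~\ref{subsec:main_results}, is precisely that the $\Diamond$ rule cannot be exploited through CAN--free invertibility inside an arbitrary hypersequent, so the modal step has to be performed by this dedicated linear--algebraic analysis of how the structural rules can reorganise a basic hypersequent into a single $\Diamond$--applicable sequent. Proving the correctness of the $k>0$ characterisation is the technical heart: the forward direction is an induction on the CAN--free derivation --- the T rule rescales a weight, the M rule combines the weight data of its two premises exactly as in the proof of Lemma~\ref{lem:modal_free_int_lambda_prop}, and a bottom-most $\Diamond$ rule directly exhibits the passage to the stripped, lower--modal--complexity hypersequent --- whereas the backward direction builds the derivation explicitly, using one ID per variable to assemble the combined sequent, then one $\Diamond$ rule, then the S, T and W rules to recover $G$. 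A final point of care is the book-keeping that keeps the subset case--split finite, so that $\phi_G$ is of elementary size rather than an unbounded disjunction.
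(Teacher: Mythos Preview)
Your proposal is correct and follows essentially the same approach as the paper: reduce to basic hypersequents via CAN elimination and the CAN--free invertibility of $\{0,+,\times,\sqcup,\sqcap\}$, then characterise derivability of a basic hypersequent by the existence of non--negative weights $t_i$ satisfying linear constraints together with derivability of the single $\Diamond$--stripped sequent (this is exactly the paper's Lemma~\ref{lem:lambda_prop}), recurse on modal complexity, and express the whole thing as a formula in $FO(\mathbb{R},+,\times,\leq)$.

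One correction: your final sentence claims that careful book--keeping yields a $\phi_G$ of \emph{elementary} size. This is false. The paper shows explicitly that the construction is non--elementary: for the hypersequents $G_i=\ \vdash 1.A_i$ with $A_{n+1}=\Diamond A_n\sqcup\Diamond A_n$, the formula $\phi_{G_i}$ has at least a tower of $i$ exponentials many quantifiers. The reason is that at each modal level the logical reduction of $\sqcup$ doubles the number of sequents, and then the subset case--split and the new existential variables $\beta_j$ feed into the next recursive call, so the blow--up iterates. This does not affect decidability, but your complexity remark should be dropped.
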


\subsection{Some technical lemmas}
\label{subsec:tech_lemmas}
All the lemmas presented in Section~\ref{subsec:modal_free_tech_lemmas} in the context of the system \hr\ need to be adapted to the new system \hmr. In most cases, as in for instance Lemma~\ref{lem:modal_free_genT}, the proof is essentially identical and, for this reason, we omit it. In some cases, however,  like Lemma~\ref{lem:modal_free_ext_ID_rule}, the $\Diamond$ rule makes the proof different and more complicated and, for this reason, we discuss how to prove the new difficult aspects of the proof.

We first adapt Lemma~\ref{lem:modal_free_ext_ID_rule} to the system \hmr.

\begin{lem}
\label{lem:ext_ID_rule}
For all $A,\vec{r_i},\vec{s_i}$ such that $\sum \vec{r}_{i} = \sum \vec{s}_{i}$, it holds that:
\begin{center}
  if $\proveM \left [ \vdash \Gamma_i \right ] _{i = 1}^n$ then $\proveM \left [ \vdash \Gamma_i, \vec{r_i}.A,\vec{s_i}.\negF{A} \right] _{i = 1}^n$.
\end{center}
\end{lem}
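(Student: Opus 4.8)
The plan is to prove the statement by induction on the structure of the term $A$, closely following the proof of Lemma \ref{lem:modal_free_ext_ID_rule} but adding the two new cases needed for the extended signature: $A = 1$ and $A = \Diamond B$. The cases where $A$ is a variable, $A = 0$, $A = sB$, $A = B + C$, $A = B \sqcap C$ or $A = B \sqcup C$ are handled \emph{verbatim} as in Lemma \ref{lem:modal_free_ext_ID_rule}, since none of these rules interact with the $\Diamond$ or $1$ connectives; for the variable case we use the ID rule $n$ times, and the remaining cases push the problem to subterms via the corresponding logical rules and then invoke the induction hypothesis.

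The first genuinely new case is $A = 1$, hence $\negF{A} = \covar{1}$. Here we want, from $\proveM [\vdash \Gamma_i]_{i=1}^n$ with $\sum \vec{r}_i = \sum \vec{s}_i$, to derive $[\vdash \Gamma_i, \vec{r_i}.1, \vec{s_i}.\covar{1}]_{i=1}^n$. The $(1)$ rule of Figure \ref{rules:GA} only requires the \emph{inequality} $\sum\vec{r}_i \geq \sum\vec{s}_i$, which is implied by the hypothesised equality, so we may simply apply the $(1)$ rule $n$ times, once for each sequent, exactly mirroring how the variable case uses ID $n$ times. The case $A = \covar{1}$ is symmetric (swap the roles of $\vec r_i$ and $\vec s_i$, using $\sum\vec s_i \ge \sum\vec r_i$).

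The second new case, $A = \Diamond B$, is the one I expect to be the main obstacle, because the $\Diamond$ rule is extremely constrained: its premise and conclusion must each be a \emph{single} sequent consisting only of $\Diamond$-terms, $1$-terms and $\covar 1$-terms. So one cannot apply it directly ``$n$ times'' to an arbitrary hypersequent $[\vdash \Gamma_i, \vec{r_i}.\Diamond B, \vec{s_i}.\Diamond\negF{B}]_{i=1}^n$. The idea is instead to first \emph{isolate} one target sequent: using repeated S rules (split) we peel off the contents of a sequent so that the weighted terms $\vec{r_i}.\Diamond B$ and $\vec{s_i}.\Diamond\negF{B}$ sit alone in a fresh sequent; then that sequent has the shape $\vdash \vec{r_i}.\Diamond B, \vec{s_i}.\Diamond\negF{B}$ to which the $\Diamond$ rule applies (its side condition $\sum\vec r_i \ge \sum\vec s_i$ following from the equality), reducing it to $\vdash \vec{r_i}.B, \vec{s_i}.\negF{B}$; this is discharged by the induction hypothesis on the simpler term $B$ (applied with all but one of the vectors empty). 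Re-merging via M, then contracting the duplicated copies of $G$ via C, rebuilds the original hypersequent with the extra weighted terms added. One must do this for each $i$ in turn, so the construction is a nested iteration of S, $\Diamond$, IH, M and C steps; writing it out carefully — and checking that the interleaving with the already-present sequents $\Gamma_j$ does not violate the single-sequent proviso of $\Diamond$ — is where the real work lies. Since all scalar side-conditions reduce to the hypothesised equalities $\sum\vec r_i = \sum\vec s_i$, no new arithmetic constraints arise, and the induction goes through.
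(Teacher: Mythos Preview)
Your handling of the non-modal connectives and of $A=1$ is correct (the latter is in fact more explicit than the paper, which simply lumps $1$ into the ``not a $\Diamond$-formula'' clause). The $A=\Diamond B$ case, however, has a genuine gap. Read bottom-up, the S rule \emph{merges} two sequents of the conclusion into one premise-sequent; it does not let you ``peel off'' $\vec{r_i}.\Diamond B,\vec{s_i}.\Diamond\negF{B}$ into a fresh sequent while keeping the rest around. More importantly, even if you had isolated such a sequent inside a larger hypersequent, the $\Diamond$ rule still could not fire: its conclusion must be a hypersequent consisting of \emph{exactly one} sequent, so the mere presence of the ambient sequents $\Gamma_j$ blocks it outright. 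Your sketch never explains how those ambient sequents disappear before the $\Diamond$ step, and the ``contracting duplicated copies of $G$ via C'' step has no clear role. A correct execution of your single-induction idea is: for each $j$, derive the standalone hypersequent $\vdash \vec{r_j}.\Diamond B,\vec{s_j}.\Diamond\negF{B}$ from INIT via the inductive hypothesis on $B$ (taken with $n{=}1$ and $\Gamma_1{=}\emptyset$) followed by one $\Diamond$ step, then W it up to the ambient context and use M to fuse it into the $j$-th sequent of the given $[\vdash\Gamma_i]_i$; iterate over $j$.

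This repaired route is genuinely different from the paper's. The paper proves the lemma by a \emph{double} induction on the pair $(A,d)$, where $d$ is the given derivation of $[\vdash\Gamma_i]_i$: for $A=\Diamond B$ it does not build $\vdash \vec{r_j}.\Diamond B,\vec{s_j}.\Diamond\negF{B}$ separately, but instead inducts on $d$ and threads the extra weighted terms $\vec{r_i}.\Diamond B,\vec{s_i}.\Diamond\negF{B}$ through every rule application of $d$, invoking the hypothesis on $B$ only at the INIT and $\Diamond$ leaves of $d$---precisely the places where the hypersequent is already a single sequent and the $\Diamond$ rule is available. Your approach treats $d$ as a black box and is arguably more economical, at the cost of introducing fresh M and W steps; the paper's approach mirrors $d$ rule for rule and avoids manufacturing auxiliary derivations.
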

\begin{proof}
  Let $d$ be a derivation of $\proveM \left [ \vdash \Gamma_i \right ] _{i = 1}^n$. We prove the result by double induction on $(A,d)$. If $A$ is not a $\Diamond$ term, we prove the result as in Lemma~\ref{lem:modal_free_ext_ID_rule} - which decreases the complexity of the term each time. Otherwise $A = \Diamond B$. We prove the result by induction on the derivation $d$. We will only show three cases: the other cases are similar to the $+$ case.
  \begin{itemize}
    \item If $d$ finishes with
      \[\infer[\text{INIT}]{\vdash}{} \]
      then by induction hypothesis on $B$, $\proveM\ \vdash \vec{r_1}.B,\vec{s_1}.\negF{B}$
      so \[\infer[\Diamond]{\vdash \vec{r_1}.\Diamond(B),\vec{s_1}.\Diamond(\negF{B})}{\vdash \vec{r_1}.B,\vec{s_1}.\negF{B}}\]
    \item If $d$ finishes with
      \[\infer[+]{[\vdash \Gamma_i ]_{i=2}^n \sep \vdash \Gamma_1, \vec{s}.(C + D) }{[ \vdash \Gamma_i ]_{i=2}^n \sep \vdash \Gamma_1 , \vec{s}.C , \vec{s}.D} \]
      then by induction hypothesis on the subderivation
      \[ \proveM[\vdash \Gamma_i , \vec{r_i}.\Diamond(B),\vec{s_i}.\Diamond(\negF{B}) ]_{i=2}^n \sep \vdash \Gamma_1, \vec{s}.C,\vec{s}.D,\vec{r_1}.\Diamond(B),\vec{s_1}.\Diamond(\negF{B})\]
      so
      \[\infer[+]{[ \vdash \Gamma_i ,\vec{r_i}.\Diamond(B),\vec{s_i}.\Diamond(\negF{B}) ]_{i=2}^n \sep\vdash \Gamma_1, \vec{s}.(C + D),\vec{r_1}.\Diamond(B),\vec{s_1}.\Diamond(\negF{B}) }{[ \vdash \Gamma_i , \vec{r_i}.\Diamond(B),\vec{s_i}.\Diamond(\negF{B}) ]_{i=2}^n \sep \vdash \Gamma_1 , \vec{s}.C , \vec{s}.D,\vec{r_1}.\Diamond(B),\vec{s_1}.\Diamond(\negF{B})} \]
    \item If $d$ finishes with
      \[ \infer[\Diamond]{\vdash \Diamond \Gamma_1, \vec{r}.1,\vec{s}.\covar{1}}{\vdash \Gamma_1,\vec{r}.1,\vec{s}.\covar{1}} \]
      then by induction hypothesis on $B$
      \[ \proveM\ \vdash \Gamma_1, \vec{r_1}.B,\vec{s_1}.\negF{B},\vec{r}.1,\vec{s}.\covar{1} \]
      so
      \[ \infer[\Diamond]{\vdash \Diamond(\Gamma_1), \vec{r_1}.\Diamond(B),\vec{s_1}.\Diamond(\negF{B}),\vec{r}.1,\vec{s}.\covar{1}}{\vdash \Gamma_1, \vec{r_1}.B,\vec{s_1}.\negF{B},\vec{r}.1,\vec{s}.\covar{1}} \qedhere\]
    \end{itemize}
      
  \end{proof}

The next lemma states that if $G$ is provable then the hypersequent obtained by substituting an atom for a term in $G$ is also provable. 
\begin{lem}
  \label{lem:subst}
  If $\proveM G$ then for all terms $A$, $\proveM G[A/x]$.
\end{lem}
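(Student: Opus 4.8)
\textbf{Proof plan for Lemma~\ref{lem:subst}.}
The plan is to proceed by induction on a derivation $d$ of $\proveM G$, showing that each rule of \hmr\ can be replayed after the substitution $[A/x]$ is applied everywhere. For all rules except $\textnormal{ID}$ (and except the base axiom $\textnormal{INIT}$, which is trivial since $\vdash$ contains no variables), the substitution commutes with the rule: the side conditions involve only scalars, and the shape constraints (for instance that the premise and conclusion of the $\Diamond$ rule consist of a single sequent whose formulas are all $\Diamond$-formulas, $1$, or $\covar 1$) are preserved by substitution, because $(\Diamond B)[A/x] = \Diamond(B[A/x])$, $1[A/x]=1$ and $\covar 1[A/x]=\covar 1$. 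So in those cases one applies the induction hypothesis to the immediate subderivation(s), obtaining derivations of the substituted premise(s), and then reapplies the same rule; this is exactly the argument used in Lemma~\ref{lem:modal_free_subst} for \hr, now with the extra but harmless cases for the $1$ and $\Diamond$ rules.

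The only genuinely non-routine case is when $d$ ends with an application of $\textnormal{ID}$ acting on the very variable $x$ being substituted, i.e.
\[ \infer[\text{ID},\ \sum\vec r=\sum\vec s]{G'\sep\vdash\Gamma,\vec r.x,\vec s.\covar x}{G'\sep\vdash\Gamma} . \]
After substitution the conclusion becomes $G'[A/x]\sep\vdash\Gamma[A/x],\vec r.A,\vec s.\covar A$ (using $\covar x[A/x]=\covar A$, recalling $A$ is in NNF so $\covar A$ makes sense and $\covar{x}[A/x] = \negF{A}$ by the Proposition on substitution into NNF negations). The induction hypothesis gives $\proveM G'[A/x]\sep\vdash\Gamma[A/x]$, and then one invokes the generalised extended-ID Lemma~\ref{lem:ext_ID_rule} with the single non-empty pair of vectors $(\vec r,\vec s)$ on the last sequent and empty vectors elsewhere, which (since $\sum\vec r=\sum\vec s$) yields $\proveM G'[A/x]\sep\vdash\Gamma[A/x],\vec r.A,\vec s.\covar A$, as required. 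If instead $\textnormal{ID}$ acts on a variable $y\neq x$, substitution leaves that $\textnormal{ID}$ instance untouched and the rule is simply reapplied. The same remark applies to the $1$ rule: it never mentions a variable, so substitution commutes with it directly.

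The main obstacle is thus entirely concentrated in the $\textnormal{ID}$-on-$x$ case, and it is resolved precisely because we proved Lemma~\ref{lem:ext_ID_rule} (the \hmr\ analogue of Lemma~\ref{lem:modal_free_ext_ID_rule}) beforehand; that lemma internalises the work of pushing an introduced $\vec r.A,\vec s.\covar A$ pair through the logical structure of $A$, including through $\Diamond$-formulas. Everything else is a mechanical rule-by-rule replay, so the write-up can be kept short: state the induction on $d$, dispatch $\textnormal{INIT}$ and all rules other than $\textnormal{ID}$ by "apply the IH to the premises and reapply the rule" (noting substitution commutes with the scalar side conditions and with the $\Diamond$ shape constraint), and give the $\textnormal{ID}$-on-$x$ case in the two lines above citing Lemma~\ref{lem:ext_ID_rule}.
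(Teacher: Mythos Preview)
Your proposal is correct and follows essentially the same approach as the paper: the paper's proof simply says ``Similar to the proof of Lemma~\ref{lem:modal_free_subst}'', which is exactly the induction on the derivation you describe, with the only non-trivial case being $\textnormal{ID}$ on the variable $x$, handled via Lemma~\ref{lem:ext_ID_rule}. Your additional remarks about the $\Diamond$ shape constraint and the $1$ rule being preserved by substitution are correct and make explicit what the paper leaves implicit.
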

\begin{proof}
  Similar to the proof of Lemma~\ref{lem:modal_free_subst}.
\end{proof}

The following lemma, which will be useful in the proof of the completeness theorem, states that the rules $\{ 0, +, \times, \sqcup, \sqcap \}$, are invertible in \hmr, in the sense that if the conclusion of one of these rules is derivable (possibly using CAN rules) then its premises are also derivable (possibly using CAN rules). 

\begin{lem}
  \label{lem:can_full_invertibility}
  All the logical rules $\{ 0, +, \times, \sqcup, \sqcap \}$ are invertible.
\end{lem}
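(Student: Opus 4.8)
The plan is to follow exactly the strategy used for the corresponding result in the system \hr\ (Lemma \ref{lem:modal_free_can_full_invertibility}): for each of the five rules we construct a derivation of the premise(s) from a given derivation of the conclusion by using the CAN rule to \emph{cancel} the connective that was just introduced against its negation. For instance, for the $\sqcap$ rule, from a derivation of $G \sep \vdash \Gamma, \vec{r}.(A \sqcap B)$ one derives $G \sep \vdash \Gamma, \vec{r}.A$ by an application of CAN to a hypersequent of the shape $G \sep \vdash \Gamma, \vec{r}.A, \vec{r}.(A\sqcap B), \vec{r}.(\negF{A} \sqcup \negF{B})$, which is itself obtained by an M rule from the given derivation and from a short derivation of $\vdash \vec{r}.A, \vec{r}.(\negF{A} \sqcup \negF{B})$ assembled from the $\sqcup$ rule, the W rule, and the extended ID rule. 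The cases of $0$, $+$, $\times$ and $\sqcup$ are treated by exactly the same CAN/M patterns that appear in the proof of Lemma \ref{lem:modal_free_can_full_invertibility}.

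The only adaptation required for \hmr\ is that every appeal to the extended ID rule (Lemma \ref{lem:modal_free_ext_ID_rule}) must be replaced by its \hmr\ counterpart, Lemma \ref{lem:ext_ID_rule}, so that the construction remains valid when the cancelled formula $A$ (or a subformula of it) is a $\Diamond$-formula; indeed Lemma \ref{lem:ext_ID_rule} is precisely what guarantees that $\vdash 1.A, 1.\negF{A}$ is derivable in \hmr\ for every term $A$. No other rule of \hmr\ (neither $\Diamond$ nor $1$) is used anywhere in these derivations, so no genuinely new difficulty arises.

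I expect no serious obstacle here: all the new content is absorbed into Lemma \ref{lem:ext_ID_rule}, and the bookkeeping of the CAN/M combinations is routine, just as in the modal-free case. The one point worth stating explicitly is that this lemma deliberately does \emph{not} cover the $1$ rule — which is not invertible even with CAN, as witnessed by the derivable hypersequent $\vdash \frac{2}{3}.1, \frac{1}{3}.\covar{1}$ whose premise $\vdash \frac{1}{3}.\covar{1}$ is not derivable by soundness — nor the $\Diamond$ rule, whose general invertibility we can only obtain later as a corollary of the CAN elimination Theorem \ref{thm:can_elim}. The statement is therefore restricted to the five connective rules already present in \hr.
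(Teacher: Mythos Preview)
Your proposal is correct and matches the paper's own proof, which simply says ``Similar to Lemma \ref{lem:modal_free_can_full_invertibility}'': you reuse the CAN/M constructions from the \hr\ proof verbatim, replacing each appeal to Lemma \ref{lem:modal_free_ext_ID_rule} by its \hmr\ counterpart Lemma \ref{lem:ext_ID_rule}. Your side remarks about the non-invertibility of the $1$ rule and the deferred treatment of $\Diamond$ are also in line with the paper's discussion surrounding this lemma.
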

\begin{proof}
Similar to Lemma~\ref{lem:modal_free_can_full_invertibility}.
\end{proof}

\begin{rem}
  \label{rem:can_full_invertibility_no_T}
  The proof of Lemma~\ref{lem:can_full_invertibility} does not introduce any new T rule, so if the conclusion of one of the logical rules $\{ 0, +, \times, \sqcup, \sqcap \}$ has a T-free derivation, then the premises also have T-free derivations.
\end{rem}

The next lemmas state that CAN-free derivability in the \hmr{} system is preserved by scalar multiplication.
      
\begin{lem}
  \label{lem:genT}
  Let $\vec{r}\in\mathbb{R}_{>0}$ be a non-empty vector and $G$ a hypersequent.
  If $\proveMNC G \sep \vdash \vec{r}.\Gamma$ then $\proveMNC G \sep \vdash \Gamma$.
\end{lem}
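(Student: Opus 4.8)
\textbf{Proof plan for Lemma \ref{lem:genT}.}

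The statement is the exact analogue, for the system \hmr, of Lemma \ref{lem:modal_free_genT}, and the plan is to reuse the same argument verbatim. Given a non-empty vector $\vec{r}=(r_1,\dots,r_n)$ and a CAN--free (and, a fortiori, $\Diamond$-allowing but here irrelevant) derivation of $G \sep \vdash \vec{r}.\Gamma$, I would first apply the $\text{S}$ rule $n-1$ times to split the single sequent $\vdash \vec{r}.\Gamma = \vdash r_1.\Gamma,\dots,r_n.\Gamma$ into $n$ separate sequents $\vdash r_1.\Gamma \sep \dots \sep \vdash r_n.\Gamma$. Then I would apply the $\text{T}$ rule to each of these sequents, multiplying the $i$-th one by $\tfrac{1}{r_i}>0$, turning $\vdash r_i.\Gamma$ into $\vdash \Gamma$. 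Finally, $n-1$ applications of the $\text{C}$ rule collapse the resulting $n$ identical copies of $\vdash \Gamma$ into one, yielding $G \sep \vdash \Gamma$. Schematically:
\[ \infer[\text{C}^*]{G \sep \vdash \Gamma}{\infer[\text{T}^*]{G \sep \vdash \Gamma \sep \dots \sep \vdash \Gamma}{\infer[\text{S}^*]{G \sep \vdash r_1.\Gamma \sep \dots \sep \vdash r_n.\Gamma}{G \sep \vdash \vec{r}.\Gamma}}} \]

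The only points to check are bookkeeping: that $\vec{r}$ being non-empty guarantees $n\geq 1$ so the construction makes sense (when $n=1$ the $\text{S}^*$ and $\text{C}^*$ steps are vacuous and only a single $\text{T}$ application is used), and that none of the rules $\text{S}$, $\text{T}$, $\text{C}$ is the CAN rule, so CAN--free-ness is preserved throughout. In particular the $\Diamond$ rule and the new $1$ rule of \hmr\ play no role here, which is exactly why the proof of Lemma \ref{lem:modal_free_genT} carries over unchanged to the modal setting; as the surrounding text notes, such lemmas are adapted "essentially identically" and their proofs are omitted.

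There is essentially no obstacle in this lemma — it is a routine structural manipulation. The only mild subtlety, shared with the modal-free version, is that one must not try to prove it for a possibly-empty $\vec{r}$: with $\vec{r}=\emptyset$ the hypothesis would be $G \sep \vdash$ (an empty sequent) and one cannot in general recover $G \sep \vdash \Gamma$, which is why the hypothesis explicitly requires $\vec{r}$ non-empty. Hence I would simply remark that the proof is identical to that of Lemma \ref{lem:modal_free_genT}.
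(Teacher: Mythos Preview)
Your proposal is correct and matches the paper's approach exactly: the paper's proof of Lemma~\ref{lem:genT} simply says ``Similar to Lemma~\ref{lem:modal_free_genT}'', and the latter is precisely the $\text{S}^*$--$\text{T}^*$--$\text{C}^*$ derivation you wrote down.
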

\begin{proof}
  Similar to Lemma~\ref{lem:modal_free_genT}.
\end{proof}

\begin{lem}
  \label{lem:genT2}
  Let $\vec{r}\in\mathbb{R}_{>0}$ be a vector and $G$ a hypersequent.
  If $\proveMNC G \sep \vdash \Gamma$ then $\proveMNC G \sep \vdash \vec{r}.\Gamma$.
\end{lem}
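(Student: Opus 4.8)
The plan is to mimic the proof of the modal--free analogue, Lemma \ref{lem:modal_free_genT2}, by induction on the length of the vector $\vec{r}$, observing that none of the rules invoked belong to the modal fragment (neither the $\Diamond$ rule nor the $1$ rule is needed), so the argument carries over essentially unchanged.

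First I would dispose of the two base cases. If $\vec{r}$ is empty, then $\vdash \vec{r}.\Gamma$ is the empty sequent $\vdash$, and I obtain $\proveMNC G \sep \vdash$ by applying the W rule to $G$ as many times as needed, the derivation of $G$ being ultimately rooted in the INIT axiom. If $\vec{r}=(r_1)$ has length one, then since $(\frac{1}{r_1}r_1).\Gamma = 1.\Gamma = \Gamma$, a single application of the T rule
\[ \infer[\text{T}]{G \sep \vdash r_1.\Gamma}{G \sep \vdash (\frac{1}{r_1}r_1).\Gamma} \]
reduces the goal to the hypothesis $\proveMNC G \sep \vdash \Gamma$.

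For the inductive step I would write $\vec{r}=(r_1,\dots,r_n,r_{n+1})$ with $n\geq 1$, invoke the inductive hypothesis on $(r_1,\dots,r_n)$ to get $\proveMNC G \sep \vdash r_1.\Gamma,\dots,r_n.\Gamma$, use the length--one case to get $\proveMNC G \sep \vdash r_{n+1}.\Gamma$, and then combine the two with the M rule:
\[ \infer[\text{M}]{G \sep \vdash r_1.\Gamma,\dots,r_n.\Gamma,r_{n+1}.\Gamma}{G \sep \vdash r_1.\Gamma,\dots,r_n.\Gamma & G \sep \vdash r_{n+1}.\Gamma} \]
which gives $\proveMNC G \sep \vdash \vec{r}.\Gamma$. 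Throughout, CAN--freeness is preserved because the only rules used are INIT, W, T and M.

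I do not expect any real obstacle here: the proof is a routine transfer of the \hr{} argument, the only things to double--check being that the added rules of \hmr{} play no role in the construction and that no CAN rule is ever introduced.
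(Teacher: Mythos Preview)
Your proposal is correct and follows exactly the approach the paper intends: the paper's proof simply reads ``Similar to Lemma \ref{lem:modal_free_genT2}'', and your argument reproduces that lemma's induction on the length of $\vec{r}$ (empty case via W and INIT, length-one case via T, inductive step via M) verbatim, noting that only rules already present in \hr\ are used.
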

\begin{proof}
  Similar to Lemma~\ref{lem:modal_free_genT2}.
\end{proof}

The above lemmas have two useful corollaries.

\begin{cor}
  \label{cor:andAlphaBeta}
    If $\proveMNC G \sep \vdash \Gamma,\vec{r}.A, \vec{s}.A$ and $\proveMNC G \sep \vdash \Gamma, \vec{r}.B, \vec{s}.B$ then $\proveMNC G \sep \vdash \Gamma, \vec{r}.A, \vec{s}.B$.
\end{cor}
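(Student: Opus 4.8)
The plan is to reuse, essentially verbatim, the derivation that establishes the modal--free analog Corollary \ref{cor:modal_free_andAlphaBeta}. First I would dispose of the degenerate case: if $\vec{r} = \emptyset$ or $\vec{s} = \emptyset$, then $G \sep \vdash \Gamma, \vec{r}.A, \vec{s}.B$ coincides (up to the empty multiset) with one of the two hypotheses, so there is nothing to prove. Hence I would assume that both $\vec{r}$ and $\vec{s}$ are non--empty.

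Next I would construct the same proof tree as in the \hr\ case, only replacing $\proveNC$ by $\proveMNC$ and the modal--free scalar--multiplication lemmas by their \hmr\ counterparts. Starting from $\proveMNC G \sep \vdash \Gamma, \vec{r}.A, \vec{s}.A$ and $\proveMNC G \sep \vdash \Gamma, \vec{r}.B, \vec{s}.B$, apply Lemma \ref{lem:genT2} (the \hmr\ version of Lemma \ref{lem:modal_free_genT2}) to obtain $G \sep \vdash \vec{r}.\Gamma, (\vec{r}\vec{r}).A, (\vec{r}\vec{s}).A$ and $G \sep \vdash \vec{s}.\Gamma, (\vec{s}\vec{r}).B, (\vec{s}\vec{s}).B$; merge these with an M rule into a single sequent; split that sequent with an S rule; then apply Lemma \ref{lem:genT} (the \hmr\ version of Lemma \ref{lem:modal_free_genT}) twice to contract both resulting sequents down to $\vdash \Gamma, \vec{r}.A, \vec{s}.B$; and finish with a C rule. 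The only rules involved — M, S, C, plus the T and S rules hidden inside Lemmas \ref{lem:genT} and \ref{lem:genT2} — are structural rules of \hmr\ that are already present in \hr\ and act purely on the multiset structure of the hypersequent; none of them interacts with the new rules $1$ or $\Diamond$.

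I do not anticipate a genuine obstacle: the argument is literally ``same derivation, new system''. The only points to verify are that the two scalar--multiplication lemmas have been re--established for \hmr\ (they have: Lemmas \ref{lem:genT} and \ref{lem:genT2} above) and that, as just observed, the $1$ and $\Diamond$ rules are untouched. Consequently the proof can simply be stated as ``similar to Corollary \ref{cor:modal_free_andAlphaBeta}'', with the displayed derivation carried over unchanged.
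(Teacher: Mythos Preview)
Your proposal is correct and matches the paper's own proof, which simply reads ``Similar to Corollary \ref{cor:modal_free_andAlphaBeta}.'' The derivation tree carries over unchanged because it uses only the structural rules M, S, C, T (directly or via Lemmas \ref{lem:genT} and \ref{lem:genT2}), all of which are inherited by \hmr\ without modification.
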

\begin{proof}
  Similar to Corollary~\ref{cor:modal_free_andAlphaBeta}
\end{proof}

\begin{cor}
  \label{cor:orAlphaBeta}
If $\proveMNC G \sep \vdash \vec{r}.A,\vec{s}.A, \Gamma \sep \vdash \vec{r}.B,\vec{s}.B , \Gamma \sep \vdash \vec{r}.A,\vec{s}.B,\Gamma$, then $\proveMNC G \sep \vdash \vec{r}.A,\vec{s}.A, \Gamma \sep \vdash \vec{r}.B,\vec{s}.B, \Gamma$.
\end{cor}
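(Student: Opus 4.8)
The plan is to transport the proof of Corollary~\ref{cor:modal_free_orAlphaBeta} to the modal setting with essentially no change. The crucial observation is that that argument is purely structural: it uses only the rules $\mathrm{C}$, $\mathrm{S}$ (and, in the degenerate case, $\mathrm{W}$), together with Lemma~\ref{lem:modal_free_genT} and Lemma~\ref{lem:modal_free_genT2}, whose \hmr{} counterparts are Lemma~\ref{lem:genT} and Lemma~\ref{lem:genT2}. None of these ingredients interacts with the $\Diamond$ rule, so the new modality causes no difficulty here; in particular, every hypersequent appearing in the derivation we build will be reached by a CAN--free derivation, as required by $\proveMNC$.

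Concretely, I would first dispose of the degenerate cases: if $\vec{r}=\emptyset$ or $\vec{s}=\emptyset$, then one of the two target sequents literally coincides with one of the hypotheses, so the goal follows from the assumed derivation by the $\mathrm{W}$ rule followed by the $\mathrm{C}$ rule, exactly as in the modal-free case. For the main case, with both $\vec{r}$ and $\vec{s}$ non-empty, I would construct the derivation reading bottom-up as follows. Start from the goal $G \sep \vdash \Gamma, \vec{r}.A,\vec{s}.A \sep \vdash \Gamma, \vec{r}.B,\vec{s}.B$ and apply the $\mathrm{C}$ rule twice to duplicate these two sequents. Then apply Lemma~\ref{lem:genT} twice to replace the two fresh copies by their scaled versions $\vdash \vec{r}.\Gamma,(\vec{r}\vec{r}).A,(\vec{r}\vec{s}).A$ and $\vdash \vec{s}.\Gamma,(\vec{s}\vec{r}).B,(\vec{s}\vec{s}).B$. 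Merge these two scaled sequents with the $\mathrm{S}$ rule into the single sequent $\vdash \vec{r}.\Gamma,\vec{s}.\Gamma,(\vec{r}\vec{r}).A,(\vec{r}\vec{s}).A,(\vec{s}\vec{r}).B,(\vec{s}\vec{s}).B$. Finally observe that this hypersequent is obtained from the hypothesis $G \sep \vdash \Gamma, \vec{r}.A,\vec{s}.A \sep \vdash \Gamma, \vec{r}.B,\vec{s}.B \sep \vdash \Gamma,\vec{r}.A,\vec{s}.B$ by multiplying the last sequent by the vector $\app{\vec{r}}{\vec{s}}$, so it is CAN--free derivable by Lemma~\ref{lem:genT2}. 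The only bookkeeping to verify is that $(\app{\vec{r}}{\vec{s}}).(\Gamma,\vec{r}.A,\vec{s}.B)$ equals that multiset, using $\vec{r}.(\vec{r}.A)=(\vec{r}\vec{r}).A$, the equality of $(\vec{r}\vec{s}).A$ and $(\vec{s}\vec{r}).A$ as multisets of weighted terms, and similar identities — the very same computation as in the proof of Corollary~\ref{cor:modal_free_orAlphaBeta}.

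I do not expect any genuine obstacle: the derivation never introduces a $\Diamond$ rule and the structural rules $\mathrm{C},\mathrm{S},\mathrm{W}$ manifestly preserve CAN--freeness, so the only point of attention is the routine multiset bookkeeping mentioned above and the care to invoke Lemma~\ref{lem:genT} and Lemma~\ref{lem:genT2} in their CAN--free formulations. This is why the statement can simply be established by observing that the argument is ``similar to'' that of Corollary~\ref{cor:modal_free_orAlphaBeta}.
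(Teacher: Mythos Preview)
Your proposal is correct and follows exactly the approach the paper has in mind: it simply says ``Similar to Corollary~\ref{cor:modal_free_orAlphaBeta}'', and your reconstruction reproduces that derivation verbatim, replacing Lemmas~\ref{lem:modal_free_genT} and~\ref{lem:modal_free_genT2} by their \hmr{} counterparts~\ref{lem:genT} and~\ref{lem:genT2}. One tiny slip: in the degenerate case ($\vec{r}=\emptyset$ or $\vec{s}=\emptyset$) the $\mathrm{W}$ rule is not needed---the hypothesis already contains a duplicate of one of the target sequents, so a single application of $\mathrm{C}$ suffices.
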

\begin{proof}
  Similar to Corollary~\ref{cor:modal_free_orAlphaBeta}.
\end{proof}

%%% Local Variables:
%%% mode: latex
%%% TeX-master: "main"
%%% End:

\subsection{Soundness -- Proof of Theorem~\ref{thm:soundness}}
  The proof is similar to the one for Theorem~\ref{thm:modal_free_soundness}: we prove that every rules is sound and then conclude by induction on the complexity of derivations. Since the proofs of soundness for the rules already in \hr{} are exactly the same as for the system \hr\ done in Section~\ref{subsec:modal_free_soundness}, we only prove the soundness of the new rules.
  \begin{itemize}
  \item For the rule
  \[ \infer[\Diamond, \sum \vec{r} \geq \sum \vec{s}]{\vdash \Diamond\Gamma, \vec{r}.1,\vec{s}.\covar{1}}{\vdash \Gamma, \vec{r}.1,\vec{s}.\covar{1}} \]
  the hypothesis is $\sem{ \vdash \Gamma, \vec{r}.1,\vec{s}.\covar{1} } \geq 0$ so
  \begin{eqnarray*}
  \sem{\vdash \Diamond\Gamma, \vec{r}.1,\vec{s}.\covar{1}} & = & \sem{ \vdash \Diamond\Gamma, (\sum \vec{r} - \sum \vec{s}).1 } \text{ by distributivity} \\
  & \geq & \sem{ \vdash \Diamond\Gamma, (\sum \vec{r} - \sum \vec{s}).\Diamond 1 } \text{ since } \Diamond 1 \leq 1 \\
  & = & \Diamond(\sem{ \vdash \Gamma, (\sum \vec{r} - \sum \vec{s}).1 }) \text{ by linearity of }\Diamond \\
  & = & \Diamond(\sem{ \vdash \Gamma, \vec{r}.1, \vec{s}.\covar{1})} \text{ by distributivity} \\
  & \geq & 0 \text{ by the hypothesis and the monotonicity of }\Diamond .
  \end{eqnarray*}
  \item For the rule
  \[ \infer[1, \sum \vec{r} \geq \sum \vec{s}]{G \sep \vdash \Gamma, \vec{r}.1, \vec{s}.\covar{1}}{G \sep \vdash \Gamma} \]
  the hypothesis is $\sem{ G \sep \vdash \Gamma } \geq 0$ so
  \begin{eqnarray*}
  \sem{ G \sep \vdash \Gamma, \vec{r}.1, \vec{s}.\covar{1} } & \geq & \sem{ G \sep \vdash \Gamma } \text{ since } \sum \vec{r} \geq \sum \vec{s} \text{ and } 0 \leq 1\\
  & \geq & 0
  \end{eqnarray*}
  \end{itemize}

  \subsection{Completeness -- Proof of Theorem~\ref{thm:completeness}}
  The proof follows the same pattern as in Section~\ref{subsec:modal_free_completeness}: we first prove a similar result that admits a simple proof by induction on the derivation of $\mathcal{A}_{\text{Riesz}}^\Diamond \vdash A = B$ and then we use it and the invertibility of the logical rules (Lemma~\ref{lem:can_full_invertibility}) to prove Theorem~\ref{thm:completeness}, as shown in Section~\ref{subsec:modal_free_completeness}.

  \begin{lem}
    \label{lem:completeness_aux}
    If $\mathcal{A}_{\text{Riesz}}^\Diamond \vdash A = B$ then $\vdash r.A, r.\negF{B}$ and $\vdash r.B, r.\negF{A}$ are provable in $\hmr$ for all $r > 0$.
  \end{lem}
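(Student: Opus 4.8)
The plan is to mimic the proof of Lemma~\ref{lem:modal_free_completeness_aux} exactly, proceeding by induction on the derivation of $\mathcal{A}_{\text{Riesz}}^\Diamond \vdash A = B$ in equational logic (Definition~\ref{eq_logic_rules}). The structural cases (\text{refl}, \text{sym}, \text{trans}, \text{subst}, \text{ctxt}) are handled verbatim as before: for \text{refl} we invoke the extended ID rule, now in its modal form Lemma~\ref{lem:ext_ID_rule}; for \text{sym} and \text{subst} we use the induction hypothesis together with Lemma~\ref{lem:subst}; for \text{trans} we use the CAN rule together with M exactly as in the non-modal proof; and for \text{ctxt} we do an inner induction on the shape of the context $C$, where the only genuinely new context case is $C = \Diamond C'$. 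In that case, from the inner induction hypothesis $\prove 1.C'[A], 1.\negF{C'[B]}$ (and symmetrically) we apply the $\Diamond$ rule with $\vec{r} = \vec{s} = \emptyset$ (so the side condition $\sum\vec r \geq \sum\vec s$ holds trivially), using that $\negF{\Diamond C'[A]} = \Diamond\negF{C'[A]}$ and $\Diamond\Gamma$-notation, to obtain $\vdash 1.\Diamond C'[A], 1.\Diamond\negF{C'[B]} = \vdash 1.C[A], 1.\negF{C[B]}$.

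The remaining cases are the axioms. All axioms of Riesz spaces are handled exactly as in Lemma~\ref{lem:modal_free_completeness_aux}, since those proofs never touch $1$ or $\Diamond$. So I only need to treat the four new axioms of Figure~\ref{axioms:of:modal_riesz:spaces}. For \emph{Positivity of $1$} ($0 \leq 1$, i.e.\ the equality $0 = 0 \sqcap 1$) I would build a small derivation using the $1$ rule (with side condition $1 \geq 0$) applied to a leaf closed by INIT, and the $0$ rule; the $\sqcap$ rule then splits the goal. For \emph{Linearity of $\Diamond$} ($\Diamond(r_1 A + r_2 B) = r_1\Diamond A + r_2\Diamond B$) I would, on each of the two required hypersequents, first peel off the outer $+$ and $\times$ rules (on both the positive and the negated side, using $\negF{\Diamond(r_1 A + r_2 B)} = \Diamond(r_1\negF A + r_2\negF B)$ and $\negF{r_1\Diamond A + r_2\Diamond B} = r_1\Diamond\negF A + r_2\Diamond\negF B$), reducing to a sequent of the form $\vdash \vec{r}.\Diamond A,\ \vec{r}.\Diamond\negF A,\ \vec{s}.\Diamond B,\ \vec{s}.\Diamond\negF B$ (with appropriate vectors whose positive and negative weights match), then apply the $\Diamond$ rule once (side condition $0 \geq 0$), then close with Lemma~\ref{lem:ext_ID_rule}. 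For \emph{Positivity of $\Diamond$} ($\Diamond(0 \sqcup A) \geq 0$, i.e.\ $0 = 0 \sqcap \Diamond(0\sqcup A)$): after the $\sqcap$ and $0$ rules one branch is $\vdash 1.\Diamond(0 \sqcup A), 1.\negF 0$, reduce the $0$, apply $\Diamond$ (again $0 \geq 0$) to reach $\vdash 1.(0\sqcup A)$, then apply $\sqcup$ and on the $0$-branch use the $0$ rule — this is the analogue of the pattern used for $\Diamond_1$ in Remark~\ref{remark:notsound}, but now legitimately because the hypersequent is a single sequent. For the \emph{$1$-decreasing property} ($\Diamond 1 \leq 1$, i.e.\ $\Diamond 1 = \Diamond 1 \sqcap 1$): one branch reduces via $\sqcap$ to $\vdash 1.\Diamond 1, 1.\negF{\Diamond 1}$ closed by Lemma~\ref{lem:ext_ID_rule}; the other branch is $\vdash 1.\negF{\Diamond 1}, 1.1$, i.e.\ $\vdash 1.\Diamond\negF 1, 1.1$, which we close by applying the $\Diamond$ rule with $\vec r = [1]$ on $1$ and $\vec s = [1]$ on $\covar 1$ — the side condition is $1 \geq 1$, and the premise $\vdash 1.\negF 1, 1.1 = \vdash 1.\covar 1, 1.1$ is derivable by the $1$ rule ($1 \geq 1$) from INIT. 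The symmetric hypersequents in each case are built the same way.

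I expect the main obstacle to be bookkeeping the side conditions on the $\Diamond$ and $1$ rules, since these are \emph{inequalities} ($\sum\vec r \geq \sum\vec s$) rather than equalities, so I must be careful that whenever I route a derivation through a $\Diamond$ or $1$ rule the accumulated $1$- and $\covar 1$-weights satisfy $\geq$ in the correct direction. Concretely, the delicate case is the $1$-decreasing axiom and, more generally, any sub-derivation in which $1$ and $\covar 1$ do not cancel exactly; there the choice between the $\Diamond$ rule and the $1$ rule, and the exact vectors $\vec r, \vec s$ fed to them, must be made so that the residual sequent is still derivable (by INIT, by the $1$ rule, or by Lemma~\ref{lem:ext_ID_rule}). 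A secondary technical point is that in the $\Diamond$-rule applications I rely on the identities $\negF{\Diamond A} = \Diamond\negF A$, $\negF 1 = -1$, $\negF{-1} = 1$ together with $\negF{A}[B/x] = \negF{A[B/x]}$, all of which are recorded in the Propositions following Definition~\ref{nnf:modal_definition}; everything else is a routine transcription of the non-modal completeness proof with Lemma~\ref{lem:modal_free_ext_ID_rule}, Lemma~\ref{lem:modal_free_subst} replaced by their modal counterparts Lemma~\ref{lem:ext_ID_rule}, Lemma~\ref{lem:subst}. As in the non-modal case, I would also note that the T rule is never used, so that the corresponding admissibility-of-T corollary carries over (modulo the open problem flagged in Section~\ref{open:problem}, the point here is only about the completeness construction).
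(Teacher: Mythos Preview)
Your proposal is correct and follows essentially the same approach as the paper: induction on the equational-logic derivation, reusing the non-modal cases verbatim (with Lemmas~\ref{lem:ext_ID_rule} and~\ref{lem:subst} in place of their modal-free analogues), adding the $\Diamond$ case to the context induction, and supplying explicit \hmr\ derivations for the four new axioms. One small descriptive slip: in the linearity axiom you cannot reach a sequent of the form $\vdash \vec{r}.\Diamond A, \vec{r}.\Diamond\negF A, \vec{s}.\Diamond B, \vec{s}.\Diamond\negF B$ before applying $\Diamond$, because $\Diamond(r_1 A + r_2 B)$ has no outer $+$ or $\times$ to peel; the actual intermediate sequent is $\vdash 1.\Diamond(r_1 A + r_2 B), r_1.\Diamond\negF A, r_2.\Diamond\negF B$, and the $+$/$\times$ on the compound side are applied \emph{after} the $\Diamond$ rule, exactly as in the paper's displayed derivation.
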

  \begin{proof}
    Since the other cases are proven in the exact same way as in Theorem~\ref{thm:modal_free_completeness}, we will only derive the new axioms.
    \begin{itemize}
    \item For the axiom $0 \leq 1$.
      \[ \infer[0]{\vdash r.(0\sqcap 1), r.0}{\infer[\sqcap]{\vdash r.(0\sqcap 1)}{\infer[0]{\vdash r.0}{\infer[\text{INIT}]{\vdash}{}} & \infer[1, r \geq 0]{\vdash r.1}{\infer[\text{INIT}]{\vdash}{}}}} \]
      and
      \[ \infer[0]{\vdash r.0, r.(0\sqcup\covar{1})}{\infer[\sqcup]{\vdash r.(0\sqcup\covar{1})}{\infer[\text{W}]{\vdash r.0 \sep \vdash r.\covar{1}}{\infer[0]{\vdash r.0}{\infer[\text{INIT}]{\vdash}{}}}}}  \]
    \item For the axiom $\Diamond(1) \leq 1$.
      \[ \infer[\sqcap]{\vdash r.(\Diamond(1)\sqcap 1),r.\Diamond(\covar{1})}{\infer[\Diamond]{\vdash r.\Diamond(1), r.\Diamond(\covar{1})}{\infer[1]{\vdash r.1, r.\covar{1}}{\infer[\text{INIT}]{\vdash}{}}} & \infer[\Diamond]{\vdash r.1, r.\Diamond(\covar{1})}{\infer[1]{\vdash r.1, r.\covar{1}}{\infer[\text{INIT}]{\vdash}{}}}} \]
      and
      \[ \infer[\sqcup]{\vdash r.\Diamond(1), r.(\Diamond(\covar{1}) \sqcup \covar{1})}{\infer[\text{W}]{\vdash r.\Diamond(1), r.\Diamond(\covar{1}) \sep \vdash r.\Diamond(1), r.\covar{1}}{\infer[\Diamond]{\vdash r.\Diamond(1), r.\Diamond(\covar{1})}{\infer[1]{\vdash r.1, r.\covar{1}}{\infer[\text{INIT}]{\vdash}{}}}}} \]
  \item For the axiom $\Diamond(r_1x + r_2y) = r_1\Diamond(x) + r_2\Diamond(y)$.
    \[ \infer[+]{\vdash r.\Diamond(r_1x + r_2y), r.(r_1\Diamond(\covar{x}) + r_2\Diamond(\covar{y}))}{\infer[\times^2]{\vdash r.\Diamond(r_1x + r_2y), r.r_1\Diamond(\covar{x}), r.r_2\Diamond(\covar{y})}{\infer[\Diamond]{\vdash r.\Diamond(r_1x + r_2y), r_1r.\Diamond(\covar{x}), r_2r.\Diamond(\covar{y})}{\infer[+]{\vdash r.(r_1x + r_2y), r_1r.\covar{x}, r_2r.\covar{y}}{\infer[\times^2]{\vdash r.r_1x, r.r_2y, r_1r.\covar{x}, r_2r.\covar{y}}{\infer[\text{ID}^2]{\vdash r_1r.x, r_2r.y, r_1r.\covar{x}, r_2r.\covar{y}}{\infer[\text{INIT}]{\vdash}{}}}}}}} \]
    and
    \[ \infer[+]{\vdash r.(r_1\Diamond(x) + r_2\Diamond(y)), r.\Diamond(r_1\covar{x} + r_2\covar{y})}{\infer[\times^2]{\vdash r.r_1\Diamond(x) , r.r_2\Diamond(y), r.\Diamond(r_1\covar{x} + r_2\covar{y})}{\infer[\Diamond]{\vdash r_1r.\Diamond(x) , r_2r.\Diamond(y), r.\Diamond(r_1\covar{x} + r_2\covar{y})}{\infer[+]{\vdash r_1r.x, r_2r.y, r.(r_1\covar{x} + r_2\covar{y})}{\infer[\times^2]{\vdash r_1r.x, r_2r.y, r.r_1\covar{x}, r.r_2\covar{y}}{\infer[\text{ID}^2]{\vdash r_1r.x, r_2r.y, r_1r.\covar{x}, r_2r.\covar{y}}{\infer[\text{INIT}]{\vdash}{}}}}}}} \]
  \item For the axiom $0 \leq \Diamond (0 \sqcup x)$.
    \[ \infer[0]{\vdash r.(0\sqcap \Diamond(0 \sqcup x)), r.0}{\infer[\sqcap]{\vdash r.(0\sqcap \Diamond(0 \sqcup x))}{\infer[0]{\vdash r.0}{\infer[\text{INIT}]{\vdash}{}} & \infer[\Diamond]{\vdash r.\Diamond(0\sqcup x)}{\infer[\sqcup]{\vdash r.(0 \sqcup x)}{\infer[\text{W}]{\vdash r.0 \sep \vdash r.x}{\infer[0]{\vdash r.0}{\infer[\text{INIT}]{\vdash}{}}}}}}} \]
  and
  \[ \infer[0]{\vdash r.0 , r.(0 \sqcup \Diamond (0 \sqcap \covar{x}))}{\infer[\sqcup]{\vdash r.(0\sqcup \Diamond (0\sqcap \covar{x}))}{\infer[\text{W}]{\vdash r.0 \sep \vdash r.\Diamond(0 \sqcap \covar{x})}{\infer[0]{\vdash r.0}{\infer[\text{INIT}]{\vdash}{}}}}} \qedhere\]
\end{itemize}
\end{proof}

\begin{rem}
By inspecting the proof of Lemma~\ref{lem:completeness_aux} it is possible to verify that the T rule is never used in the construction of $\proveM G$. This, together with the similar Remark~\ref{rem:can_full_invertibility_no_T} regarding the Lemma~\ref{lem:can_full_invertibility}, implies that the T rule is never used in the proof of the completeness Theorem~\ref{thm:completeness}. From this we get the following corollary.
\end{rem}

\begin{cor}
The T rule is admissible in the system \hmr.
\end{cor}

As in the case of the system \hr\ (see Lemma~\ref{lem:modal_free_not_complete}) there is no hope of eliminating both the T rule and the CAN rule from the \hmr\ system. 
\begin{lem}
  \label{lem:not_complete}
Let $r_1$ and $r_2$ be two irrational numbers that are incommensurable over $\mathbb{Q}$ (so there is no $q\in\mathbb{Q}$ such that $qr_1 = r_2$). Then the atomic hypersequent $G$
$$\vdash r_1.x \sep \vdash r_2.\covar{x}$$ 
does not have a CAN-free and T-free derivation.
\end{lem}
\begin{proof}
The proof is similar to that of Lemma~\ref{lem:modal_free_not_complete} but Lemma~\ref{lem:int_lambda_prop} below takes the place of Lemma ~\ref{lem:modal_free_int_lambda_prop}.
\end{proof}

\begin{lem}
  \label{lem:int_lambda_prop}
  For all basic hypersequents $G$, built using the variables and negated variables $x_1, \covar{x_1}, \dots, x_{k}, \covar{x_{k}}$, of the form
  \[\vdash \Gamma_1, \Diamond \Delta_1,\vec{r'}_1.1,\vec{s'}_1.\covar{1} \sep ... \sep \vdash \Gamma_m, \Diamond \Delta_m,\vec{r'}_m.1,\vec{s'}_m.\covar{1}\]
where $\Gamma_i =  \vec{r}_{i,1}.x_1,...,\vec{r}_{i,k}.x_k, \vec{s}_{i,1}.\covar{x_1},...,\vec{s}_{i,k}.\covar{x_k}$, the following are equivalent:
  
  \begin{enumerate}
  \item $G$ has a CAN-free and T-free derivation.
  \item there exist natural numbers $n_1,...,n_m \in \mathbb{N}$, one for each sequent  in $G$, such that:
  \begin{itemize}
  \item there exists $i\in [1..m]$ such that $n_i \neq 0$, i.e., the numbers are not all $0$'s, and
  \item for every variable and covariable $(x_j, \covar{x_j})$ pair, it holds that
    $$
    \sum^{m}_{i=1} n_i (\sum \vec{r}_{i,j}) =   \sum^{m}_{i=1} n_i (\sum \vec{s}_{i,j}) 
    $$
    i.e., the scaled (by the numbers $n_1$ \dots $n_m$) sum of the coefficients in front of the variable $x_j$ is equal to the scaled sum of the coefficients in from of the covariable $\covar{x_j}$, and
  \item $\sum_{i=1}^m n_i \sum \vec{s'}_i\leq \sum_{i=1}^m n_i \sum \vec{r'}_i$, i.e., there are more $1$ than $\covar{1}$, and
  \item the hypersequent consisting of only one sequent $$\vdash \Delta_1^{n_1},...,\Delta_m^{n_m},(\vec{r'}_1.1)^{n_1},...,(\vec{r'}_m.1)^{n_m}, (\vec{s'}_1.\covar{1})^{n_1},...,(\vec{s'}_m.\covar{1})^{n_m}$$
has a CAN-free and T-free derivation,  where the notation $\Gamma^n$ means $\underbrace{\Gamma,...,\Gamma}_{n\text{ times}}$.  \end{itemize}
  \end{enumerate}
\end{lem}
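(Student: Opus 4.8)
The plan is to mirror the proof of Lemma~\ref{lem:modal_free_int_lambda_prop}: establish $(1)\Rightarrow(2)$ by induction on a CAN--free and T--free derivation of $G$, and $(2)\Rightarrow(1)$ by an explicit construction of such a derivation. The only genuinely new phenomena, compared with the non--modal case, are the presence of the $1$ rule and the $\Diamond$ rule among the rules that may occur, and the extra fourth clause of~(2) (derivability of a single sequent), which must be tracked through every inductive step.

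For $(1)\Rightarrow(2)$ I would argue by induction on the derivation. Since $G$ is basic, the last rule applied is necessarily one of $\{\text{INIT},\text{W},\text{C},\text{S},\text{M},\text{ID}\}$, the $1$ rule, or the $\Diamond$ rule, because no rule among $\{0,+,\times,\sqcup,\sqcap\}$ has a principal formula of the form admitted in a basic hypersequent. For the purely structural rules the witnesses $n_1,\dots,n_m$ obtained from the premise(s) can be reused essentially unchanged: for $\text{C}$ one adds the two relevant witnesses, for $\text{S}$ one duplicates a witness, for $\text{W}$ one appends a~$0$, and for $\text{ID}$ the same witnesses work; in each case the atom equalities and the $1$/$\covar{1}$ inequality are immediate, and the single--sequent clause is immediate too, since the components $\Delta_i$, $\vec{r'}_i.1$, $\vec{s'}_i.\covar{1}$ are merely redistributed among sequents (for $\text{ID}$ the single sequent of the conclusion is obtained from that of the premise by finitely many further applications of $\text{ID}$). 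The $1$ rule, which introduces $\vec{a}.1,\vec{b}.\covar{1}$ with $\sum\vec{a}\ge\sum\vec{b}$, is absorbed with the same witnesses: the atom equalities are untouched, the inequality clause survives because $n_m(\sum\vec{a}-\sum\vec{b})\ge0$, and the new single sequent is re--derived from the old by one extra $1$ rule. The new ingredient is the $\Diamond$ case: there the shape of the rule forces $m=1$ and $\Gamma_1=\emptyset$, so we take $n_1=1$; the atom equalities are vacuous, the inequality clause is exactly the side condition $\sum\vec{r'}_1\ge\sum\vec{s'}_1$ of the rule, and the single sequent $\vdash\Delta_1,\vec{r'}_1.1,\vec{s'}_1.\covar{1}$ required by clause~4 is precisely the premise of the $\Diamond$ rule, hence CAN--free and T--free derivable.

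The delicate case is $\text{M}$. As in Lemma~\ref{lem:modal_free_int_lambda_prop}, if the two premises give witnesses $n_1,\dots,n_m$ and $n'_1,\dots,n'_m$, one combines them bilinearly, setting $c_i=n_mn'_i+n'_mn_i$ for $i<m$ and $c_m=n_mn'_m$ (and degenerating to $n_1,\dots,n_{m-1},0$ or $n'_1,\dots,n'_{m-1},0$ when $n_m=0$ or $n'_m=0$). The atom equalities and the $1$/$\covar{1}$ inequality are preserved because they are closed under nonnegative linear combinations. The hard point is the single--sequent clause: one checks that the single sequent associated with the $c_i$'s is exactly the multiset union of $n'_m$ copies of the single sequent associated with the first premise's witnesses and $n_m$ copies of the one associated with the second premise's; hence $n'_m+n_m-1$ applications of the $\text{M}$ rule (which is CAN-- and T--free) to $n'_m$ copies of the first premise derivation and $n_m$ copies of the second produce exactly the needed CAN--free and T--free derivation. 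I expect this to be the main obstacle, since, unlike the other three clauses, it is not a closure property but requires exhibiting a concrete derivation.

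For $(2)\Rightarrow(1)$, given witnesses with (after reordering) $n_i\ne0$ for $i\le l$ and $n_i=0$ otherwise, I would build the derivation from the bottom up: apply $\text{W}$ to discard the sequents with $n_i=0$, then $\text{C}$ and $\text{S}$ to reduce to deriving the single sequent $\vdash\Gamma_1^{n_1},\dots,\Gamma_l^{n_l},(\Diamond\Delta_1)^{n_1},\dots,(\Diamond\Delta_l)^{n_l},(\vec{r'}_1.1)^{n_1},\dots,(\vec{s'}_l.\covar{1})^{n_l}$ (using the notation $\Gamma^n$ of the statement). This single sequent is then obtained by: starting from the CAN--free and T--free derivation of $\vdash\Delta_1^{n_1},\dots,\Delta_l^{n_l},(\vec{r'}_1.1)^{n_1},\dots,(\vec{s'}_l.\covar{1})^{n_l}$ furnished by clause~4 (the indices $i>l$ contribute nothing, as $n_i=0$); applying the $\Diamond$ rule, whose side condition $\sum_i n_i\sum\vec{r'}_i\ge\sum_i n_i\sum\vec{s'}_i$ is precisely clause~3; and finally restoring the atom parts $\Gamma_1^{n_1},\dots,\Gamma_l^{n_l}$ by finitely many applications of the $\text{ID}$ rule, whose side conditions --- equality of the coefficient sums of $x_j$ and $\covar{x_j}$, for each $j$ --- are precisely clause~2. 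No $\text{CAN}$ or $\text{T}$ rule is used anywhere, as required.
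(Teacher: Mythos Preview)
Your proposal is correct and follows essentially the same approach as the paper's proof: both directions are handled exactly as you describe, with the same bilinear combination $c_i=n_mn'_i+n'_mn_i$ in the M case and the same $\text{W}/\text{C}/\text{S}/\text{ID}/\Diamond$ construction for $(2)\Rightarrow(1)$. In fact you are more explicit than the paper on the one genuinely new point, namely how the fourth clause (CAN--free T--free derivability of the single sequent) survives the M case: the paper simply asserts that the combined witnesses ``satisfy the property'', whereas you correctly observe that the required single sequent is the multiset union of $n'_m$ copies of the first and $n_m$ copies of the second, and hence is derivable by $n'_m+n_m-1$ applications of M.
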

\begin{proof}
  We prove $(1) \Rightarrow (2)$ by induction on the derivation of $G$. We show only the M case, the other cases being simple. We write $\Gamma'_i$ for $\Gamma_i, \Diamond \Delta_i,\vec{r'}_i.1,\vec{s'}_i.\covar{1}$.
  \begin{itemize}  
  \item If the derivation finishes with
    \[ \infer[\text{M}]{\vdash \Gamma'_1 \sep ... \sep \vdash \Gamma'_m , \Gamma'_{m+1}}{\vdash \Gamma'_1 \sep ... \sep \vdash \Gamma'_m & \vdash \Gamma'_1 \sep ... \sep \vdash \Gamma'_{m+1}} \]
    by induction hypothesis, there are $n_1,...,n_m \in \mathbb{N}$ such that :
    \begin{itemize}
    \item there exists $i\in [1..m]$ such that $n_i \neq 0$.
    \item for every variable and covariable $(x_j,\covar{x_j})$ pair, it holds that $\sum_i n_i.\sum \vec{r}_{i,j} = \sum_i n_i.\sum \vec{s}_{i,j}$.
    \item $\sum_{i=1}^m n_i\sum \vec{s'}_i\leq \sum_{i=1}^m n_i\sum \vec{r'}_i$.
    \item $\vdash \Delta_1^{n_1},...,\Delta_m^{n_m},(\vec{r'}_1.1)^{n_1},...,(\vec{r'}_m.1)^{n_m}, (\vec{s'}_1.\covar{1})^{n_1},...,(\vec{s'}_m.\covar{1})^{n_m}$ has a CAN-free and T-free derivation.
    \end{itemize}
    and $n'_1,...,n'_m \in \mathbb{N}$ such that :
    \begin{itemize}
    \item there exists $i\in [1..m]$ such that $n'_i \neq 0$.
    \item for every variable and covariable $(x_j,\covar{x_j})$ pair, it holds that $$\sum_{i=0}^{m-1} n'_i.\sum \vec{r}_{i,j} + n'_m.\sum \vec{r}_{m+1,j}= \sum_{i=0}^{m-1} n'_i.\sum \vec{s}_{i,j} + n'_m.\sum \vec{s}_{m+1,j}$$
    \item $\sum_{i=1}^{m-1} n'_i\sum \vec{s'}_i + n'_m\sum\vec{s'}_{m+1}\leq \sum_{i=1}^{m-1} n'_i\sum \vec{r'}_i + n'_m\sum\vec{r'}_{m+1}$.
    \item $\vdash \Delta_1^{n'_1},...,\Delta_{m+1}^{n'_m},(\vec{r'}_1.1)^{n'_1},...,(\vec{r'}_{m+1}.1)^{n'_m}, (\vec{s'}_1.\covar{1})^{n'_1},...,(\vec{s'}_{m+1}.\covar{1})^{n_m}$ has a CAN-free and T-free derivation.
    \end{itemize}
    If $n_m=0$ then $n_1,...,n_{m-1}, 0$ satisfies the property.\\
    Otherwise if $n'_m = 0$ then $n'_1,...,n'_{m-1},0$ satisfies the property.\\
    Otherwise, $n_m.n'_1 + n'_m.n_1, n_m.n'_2 + n'_m.n_2,...,n_m.n'_{m-1} + n'_m.n_{m-1},n_m.n'_m$ satisfies the property.
  \end{itemize}
  The other way ($(2) \Rightarrow (1)$) is more straightforward. If there exist natural numbers $n_1,...,n_m \in \mathbb{N}$, one for each sequent  in $G$, such that:
  \begin{itemize}
  \item there exists $i\in [1..m]$ such that $n_i \neq 0$ and
  \item for every variable and covariable $(x_j, \covar{x_j})$ pair, it holds that
    $$
    \sum^{m}_{i=1} n_i (\sum \vec{r}_{i,j}) =   \sum^{m}_{i=1} n_i (\sum \vec{s}_{i,j}) 
    $$
    and
    \item $\sum_{i=1}^m n_i\sum \vec{s'}_i\leq \sum_{i=1}^m n_i\sum \vec{r'}_i$.and
    \item $\vdash \Delta_1^{n_1},...,\Delta_m^{n_m},(\vec{r'}_1.1)^{n_1},...,(\vec{r'}_m.1)^{n_m}, (\vec{s'}_1.\covar{1})^{n_1},...,(\vec{s'}_m.\covar{1})^{n_m}$ has a CAN-free and T-free derivation.
  \end{itemize}
  then we can use the W rule to remove the sequents corresponding to the numbers $n_i = 0$, and use the C rule $n_i-1$ times then the S rule $n_i-1$ times on the $i$th sequent to multiply it by $n_i$. If we assume that there is a natural number $l$ such that $n_i = 0$ for all $i > l$ and $n_i \neq 0$ for all $i \leq l$, then the CAN-free T-free derivation is:
  \[ \infer[\text{W}^*]{\vdash \Gamma_1, \Diamond \Delta_1,\vec{r'}_1.1,\vec{s'}_1.\covar{1} \sep ... \sep \vdash \Gamma_m, \Diamond \Delta_m,\vec{r'}_m.1,\vec{s'}_m.\covar{1}}{\infer[\text{C-S}^*]{\vdash \Gamma_1, \Diamond \Delta_1,\vec{r'}_1.1,\vec{s'}_1.\covar{1} \sep ... \sep \vdash \Gamma_l, \Diamond \Delta_l,\vec{r'}_l.1,\vec{s'}_l.\covar{1}}{\infer[\text{S}^*]{\vdash (\Gamma_1, \Diamond\Delta_1, \vec{r'}_1.1, \vec{s'}_1.\covar{1})^{n_1} \sep \dots \sep \vdash (\Gamma_l, \Diamond \Delta_l, \vec{r'}_l.1, \vec{s'}_l.\covar{1})^{n_l}}{\infer[\text{ID}^*]{\vdash (\Gamma_1, \Diamond\Delta_1, \vec{r'}_1.1, \vec{s'}_1.\covar{1})^{n_1} ,\dots, (\Gamma_l, \Diamond \Delta_l, \vec{r'}_l.1, \vec{s'}_l.\covar{1})^{n_l}}{\infer[\Diamond]{\vdash (\Diamond\Delta_1, \vec{r'}_1.1, \vec{s'}_1.\covar{1})^{n_1} ,\dots,(\Diamond \Delta_l, \vec{r'}_l.1, \vec{s'}_l.\covar{1})^{n_l}}{\vdash (\Delta_1, \vec{r'}_1.1, \vec{s'}_1.\covar{1})^{n_1} ,\dots,(\Delta_l, \vec{r'}_l.1, \vec{s'}_l.\covar{1})^{n_l}}}}}} \]
  and we can conclude using the last condition that states that $$\vdash (\Delta_1, \vec{r'}_1.1, \vec{s'}_1.\covar{1})^{n_1} ,\dots,(\Delta_l, \vec{r'}_l.1, \vec{s'}_l.\covar{1})^{n_l}$$ has a CAN-free and T-free derivation.
\end{proof}

%%% Local Variables:
%%% mode: latex
%%% TeX-master: "main"
%%% End:

\subsection{CAN-free Invertibility -- Proof of Theorem~\ref{thm:invertibility} }
\label{subsec:invertibility}

The proofs presented in this section follow the same pattern of those in Section~\ref{subsec:modal_free_invertibility}: we will prove the CAN-free invertibility of more general rules. The generalised non-modal rules are the same as those in Figure~\ref{rules:generalised-logical rules} from Section~\ref{subsec:modal_free_invertibility} and the generalised $\Diamond$ rule  has the following shape:
\[ \infer{[\vdash \Diamond \Gamma_i, \vec{r}_i.1, \vec{s}_i.\covar{1}]_{i=1}^n}{[\vdash \Gamma_i, \vec{r}_i.1, \vec{s}_i.\covar{1}]_{i=1}^n} \]

\begin{rem}
  The generalized $\Diamond$ rule is unsound, the hypersequent $\vdash 1.\Diamond( \covar{x} \sqcap \covar{y}), 1.\Diamond(x) \sqcup \Diamond(y)$ is derivable using this rule (see Remark~\ref{remark:notsound}, a similar derivation can be used to derive the hypersequent). Yet, even if the generalized $\Diamond$ rule is not sound, it still enjoys CAN-free invertibility.
\end{rem}

We will prove that those rules are CAN-free invertible by induction on the derivation of the conclusion. The proof steps dealing with the rules already present in \hr\ are the same as in Section~\ref{subsec:modal_free_invertibility}. In what follows we just show the details of the proof steps associated with the new cases associated with the $\Diamond$-rule and $1$-rule of \hmr.

\begin{lem}
  If $[ \vdash \Gamma_i , \vec{r}_i. (A \sqcup B)]_{i=1}^n$ has a CAN-free derivation then $[ \vdash \Gamma_i , \vec{r}_i. A \sep \vdash \Gamma_i , \vec{r}_i. B ]_{i=1}^n$ has a CAN-free derivation.
\end{lem}
\begin{proof}
  By induction on the derivation.
  \begin{itemize}
  \item If the derivation finishes with
    \[\infer[1]{[\vdash \Gamma_i , \vec{r}_i. (A \sqcup B) ]_{i=2}^n \sep \vdash \Gamma_1 , \vec{r}_1. (A \sqcup B) , \vec{r}.1,\vec{s}.\covar{1}}{[ \vdash \Gamma_i , \vec{r}_i. (A \sqcup B) ]_{i=2}^n \sep \vdash \Gamma_1, \vec{r}_1. (A \sqcup B)} \]
    then by induction hypothesis on the CAN-free derivation of the premise we have that
    \[\proveMNC [ \vdash \Gamma_i , \vec{r}_i. A \sep \vdash \Gamma_i,\vec{r}_i. B ]_{i=2}^n \sep \vdash \Gamma_1,\vec{r}_1. A \sep \vdash \Gamma_1,\vec{r}_1. B \]
    so
    \[\infer[1^*]{ G' \sep \vdash \Gamma_1,\vec{r}_1. A,\vec{r}.1,\vec{s}.\covar{1} \sep \vdash \Gamma_1,\vec{r}_1. B,\vec{r}.1,\vec{s}.\covar{1}}{ G' \sep \vdash \Gamma_1,\vec{r}_1. A \sep \vdash \Gamma_1,\vec{r}_1. B} \]
    with $G' = [ \vdash \Gamma_i , \vec{r}_i. A \sep \vdash \Gamma_i,\vec{r}_i. B ]_{i=2}^n$
  \item If the derivation finishes with an application of the $\Diamond$ rule, the shape of the conclusion is
    \[ \vdash \Diamond \Gamma_1, \vec{r}.1,\vec{s}.\covar{1} \]
    with $\vec{r_1} = \emptyset$ so the hypersequent \[ \vdash \Diamond \Gamma_1,\vec{r_1}.A, \vec{r}.1,\vec{s}.\covar{1} \sep \vdash \Diamond \Gamma_1,\vec{r_1}.B, \vec{r}.1,\vec{s}.\covar{1} =\ \vdash \Diamond \Gamma_1, \vec{r}.1,\vec{s}.\covar{1} \sep \vdash \Diamond \Gamma_1, \vec{r}.1,\vec{s}.\covar{1} \] is CAN-free derivable using the C rule.\qedhere
  \end{itemize}
\end{proof}

\begin{lem}
  If $[ \vdash \Gamma_i , \vec{r}_i (A + B) ]_{i=1}^n$ has a CAN-free derivation then $[ \vdash \Gamma_i , \vec{r}_i A,\vec{r}_i B ]_{i=1}^n$ has a CAN-free derivation.
\end{lem}
\begin{proof}
  By induction on the derivation.
  \begin{itemize}
  \item If the derivation finishes with
    \[\infer[1]{[\vdash \Gamma_i , \vec{r}_i. (A + B) ]_{i=2}^n \sep \vdash \Gamma_1 , \vec{r}_1. (A + B) , \vec{r}.1,\vec{s}.\covar{1}}{[ \vdash \Gamma_i , \vec{r}_i. (A + B) ]_{i=2}^n \sep \vdash \Gamma_1, \vec{r}_1. (A + B)} \]
    then by induction hypothesis on the CAN-free derivation of the premise we have that
    \[\proveMNC [ \vdash \Gamma_i , \vec{r}_i. A,\vec{r}_i. B ]_{i=2}^n \sep \vdash \Gamma_1,\vec{r}_1. A,\vec{r}_1. B \]
    so
    \[\infer[1]{[ \vdash \Gamma_i , \vec{r}_i. A,\vec{r}_i. B]_{i=2}^n \sep \vdash \Gamma_1,\vec{r}_1. A,\vec{r}_1. B ,\vec{r}.1,\vec{s}.\covar{1}}{[ \vdash \Gamma_i , \vec{r}_i. A,\vec{r}_i. B ]_{i=2}^n \sep \vdash \Gamma_1,\vec{r}_1. A,\vec{r}_1. B} \]
  \item If the derivation finishes with an application of the $\Diamond$ rule, the shape of the conclusion is
    \[ \vdash \Diamond \Gamma_1, \vec{r}.1,\vec{s}.\covar{1} \]
    with $\vec{r_1} = \emptyset$ so the hypersequent $ \vdash \Diamond \Gamma_1,\vec{r_1}.A,\vec{r_1}.B, \vec{r}.1,\vec{s}.\covar{1} =\ \vdash \Diamond \Gamma_1, \vec{r}.1,\vec{s}.\covar{1} $ is derivable.\qedhere
  \end{itemize}
\end{proof}

\begin{lem}
  If $\vdash \Gamma_i , \vec{r}_i. (A \sqcap B)  ]_{i=1}^n$ has a CAN-free derivation then $[ \vdash \Gamma_i , \vec{r}_i. A ]_{i=1}^n$ and $[\vdash \Gamma_i , \vec{r}_i. B ]_{i=1}^n$ have CAN-free derivations.
\end{lem}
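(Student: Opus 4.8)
The plan is to prove CAN--free invertibility of the (generalised) $\sqcap$ rule by induction on the CAN--free derivation of the conclusion $[\vdash \Gamma_i, \vec{r}_i.(A\sqcap B)]_{i=1}^n$, exactly mirroring the structure of Lemma \ref{lem:modal_free_mElim} from the modal--free setting. All cases for the rules already present in \hr\ are handled precisely as in Section \ref{subsec:modal_free_invertibility}, so I would only write out the genuinely new cases, namely when the last rule applied is the $1$ rule or the $\Diamond$ rule (and remark that the $0,+,\times,\sqcup,\sqcap$ cases, including the subcase where $\sqcap$ acts on the active formula $A\sqcap B$ itself, are verbatim copies of the earlier proof, the latter using Corollary \ref{cor:andAlphaBeta}).

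For the $1$ rule case, the derivation ends with
\[ \infer[1]{[\vdash \Gamma_i, \vec{r}_i.(A\sqcap B)]_{i=2}^n \sep \vdash \Gamma_1, \vec{r}_1.(A\sqcap B), \vec{r}.1, \vec{s}.\covar{1}}{[\vdash \Gamma_i, \vec{r}_i.(A\sqcap B)]_{i=2}^n \sep \vdash \Gamma_1, \vec{r}_1.(A\sqcap B)} \]
and since the $1$ rule does not touch the active formula, the induction hypothesis applied to the premise gives CAN--free derivations of $[\vdash \Gamma_i, \vec{r}_i.A]_{i=2}^n \sep \vdash \Gamma_1, \vec{r}_1.A$ and of the analogous hypersequent with $B$; reapplying the $1$ rule to each yields the two desired conclusions. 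For the $\Diamond$ rule case, I would use the same observation as in the $\sqcup$ lemma above: the conclusion of a $\Diamond$ rule is a single sequent of the form $\vdash \Diamond\Gamma_1, \vec{r}.1, \vec{s}.\covar{1}$ in which \emph{no} weighted term has the shape $\vec{r}_1.(A\sqcap B)$ with $\vec{r}_1$ nonempty — indeed the only formulas allowed are $\Diamond$--formulas, $1$ and $\covar{1}$ — so necessarily $\vec{r}_1=\emptyset$, and the target hypersequents $[\ldots \sep \vdash \Gamma_1, \vec{r}_1.A]$ and $[\ldots \sep \vdash \Gamma_1, \vec{r}_1.B]$ are literally the conclusion itself repeated, hence derivable by the $C$ rule (or are already the conclusion when $n=1$).

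The main obstacle, as in the modal--free case, is not the new rules but the $M$ rule case, where from CAN--free derivations of $G' \sep \vdash \Gamma_1,\vec{r}_1.A$, $G'\sep\vdash\Gamma_1,\vec{r}_1.B$, $G'\sep\vdash\Gamma_2,\vec{r}_2.A$, $G'\sep\vdash\Gamma_2,\vec{r}_2.B$ one must produce the ``crossed'' hypersequents needed before reapplying $M$. This is resolved, exactly as in Lemma \ref{lem:modal_free_mElim} and the $\sqcup$--invertibility lemma for \hmr, by combining the $M$ rule with Lemma \ref{lem:genT2} and the $W$ rule to add an auxiliary sequent $\vdash \vec{r}_2.\Gamma_1, \vec{r}_1.\Gamma_2, (\vec{r}_1\vec{r}_2).A, (\vec{r}_1\vec{r}_2).B$, and then eliminating it via the $C$ rule, Lemma \ref{lem:genT} and the $S$ rule; the cases $\vec{r}_1=\emptyset$ or $\vec{r}_2=\emptyset$ are handled directly with $W$ and $C$. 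Since all the auxiliary lemmas (\ref{lem:genT}, \ref{lem:genT2}, \ref{cor:andAlphaBeta}) have already been adapted to \hmr\ in Section \ref{subsec:tech_lemmas}, the argument goes through with no further new ingredients.
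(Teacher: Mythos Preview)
Your overall approach---induction on the CAN--free derivation, carrying over the \hr\ cases verbatim and treating only the $1$ and $\Diamond$ rules as new---is exactly what the paper does, and your handling of the $1$ case is correct.

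However, you have conflated the $\sqcap$ lemma with the $\sqcup$ lemma in two places. First, in the $\Diamond$ case: since the $\Diamond$ rule forces $n=1$ and $\vec{r}_1=\emptyset$, each target hypersequent $\vdash \Gamma_1,\vec{r}_1.A$ (and likewise for $B$) is \emph{equal} to the conclusion $\vdash \Diamond\Gamma_1,\vec{r}.1,\vec{s}.\covar{1}$; no $C$ rule is involved. The doubling you describe only arises in the $\sqcup$ lemma, where the single target has two sequents per index. Second, and more importantly, the M case for $\sqcap$ is \emph{not} an obstacle at all: from the induction hypothesis on the two M--premises you obtain $G'\sep\vdash\Gamma_1,\vec{r}_1.A$ and $G'\sep\vdash\Gamma_2,\vec{r}_2.A$, and a single application of M yields $G'\sep\vdash\Gamma_1,\Gamma_2,\vec{r}_1.A,\vec{r}_2.A$ directly (symmetrically for $B$). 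There are no ``crossed'' hypersequents to build, and Lemmas~\ref{lem:genT}, \ref{lem:genT2} play no role here. That machinery is needed only for the $\sqcup$ invertibility lemma, where the target hypersequent has the form $[\vdash\Gamma_i,\vec{r}_i.A\sep\vdash\Gamma_i,\vec{r}_i.B]$ and the IH on each M--premise produces sequent pairs that do not line up. So your proof would go through, but it is considerably more complicated than necessary for this particular lemma.
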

\begin{proof}
  By induction on the derivation. We will only show that $\proveMNC [ \vdash \Gamma_i , \vec{r}_i. A ]_{i=1}^n$, the other case is similar.
  \begin{itemize}
  \item If the derivation finishes with
    \[\infer[1]{[\vdash \Gamma_i , \vec{r}_i. (A \sqcap B) ]_{i=2}^n \sep \vdash \Gamma_1 , \vec{r}_1. (A \sqcap B) , \vec{r}.1,\vec{s}.\covar{1}}{[ \vdash \Gamma_i , \vec{r}_i. (A \sqcap B) ]_{i=2}^n \sep \vdash \Gamma_1, \vec{r}_1. (A \sqcap B)} \]
    then by induction hypothesis on the CAN-free derivation of the premise we have that
    \[\proveMNC [ \vdash \Gamma_i , \vec{r}_i. A ]_{i=2}^n \sep \vdash \Gamma_1,\vec{r}_1. A \]
    so
    \[\infer[1]{[ \vdash \Gamma_i , \vec{r}_i. A]_{i=2}^n \sep \vdash \Gamma_1,\vec{r}_1. A ,\vec{r}.1,\vec{s}.\covar{1}}{[ \vdash \Gamma_i , \vec{r}_i. A ]_{i=2}^n \sep \vdash \Gamma_1,\vec{r}_1. A} \]
  \item If the derivation finishes with an application of the $\Diamond$ rule, the shape of the conclusion is
    \[ \vdash \Diamond \Gamma_1, \vec{r}.1,\vec{s}.\covar{1} \]
    with $\vec{r_1} = \emptyset$ so the hypersequent $ \vdash \Diamond \Gamma_1,\vec{r_1}.A, \vec{r}.1,\vec{s}.\covar{1} =\ \vdash \Diamond \Gamma_1, \vec{r}.1,\vec{s}.\covar{1} $ is derivable.\qedhere
  \end{itemize}
\end{proof}

\begin{lem}
  If $[\vdash \Diamond \Gamma_i, \vec{r}_i.1, \vec{s}_i.\covar{1}]_{i=1}^n$ has a CAN-free derivation then $[\vdash \Gamma_i, \vec{r}_i.1, \vec{s}_i.\covar{1}]_{i=1}^n$ has a CAN-free derivation.
\end{lem}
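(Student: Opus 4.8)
The proof will be by induction on the CAN--free derivation $d$ of $[\vdash \Diamond\Gamma_i,\vec{r}_i.1,\vec{s}_i.\covar{1}]_{i=1}^n$, following the same pattern as the other CAN--free invertibility lemmas in this section. For every rule already present in \hr\ the argument is identical to the one in Section \ref{subsec:modal_free_invertibility}, so the only genuinely new cases are the three rules of Figure \ref{rules:GA}: the $1$--rule, the $\Diamond$--rule acting on the active formulas of the conclusion, and the $\Diamond$--rule acting on a formula $\Diamond A$ that is already present as $\Diamond$ of something inside some $\Diamond\Gamma_i$.

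First I would handle the $1$--rule: if $d$ ends with an application of $1$ introducing some $\vec{a}.1,\vec{b}.\covar{1}$ (with $\sum\vec a\geq\sum\vec b$) into the $i$th sequent, then stripping a layer of $\Diamond$ does not interact with this at all, so by the induction hypothesis on the premise we get $\proveMNC$ of $[\vdash\Gamma_j,\vec{r}_j.1,\vec{s}_j.\covar{1}]_j$ where the $i$th sequent has the extra $\vec a.1,\vec b.\covar 1$ removed, and we reapply the $1$--rule. Second, the key case: if $d$ ends with the $\Diamond$--rule, then necessarily $n=1$ and the conclusion is literally $\vdash\Diamond\Gamma_1,\vec{r}_1.1,\vec{s}_1.\covar{1}$, obtained from the premise $\vdash\Gamma_1,\vec{r}_1.1,\vec{s}_1.\covar{1}$ --- which is exactly the hypersequent we need a CAN--free derivation of, and we have one. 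The remaining subtlety is when $d$ ends with $\Diamond$ but the ``$\Diamond\Gamma_1$'' on which we are inverting is a sub-multiset of a larger multiset of $\Diamond$-formulas: here we simply observe that the premise of that $\Diamond$-application, with the $\Diamond$ stripped off the appropriate formulas, gives what we want directly (no induction hypothesis needed), since stripping $\Diamond$ from $\Diamond\Gamma_1$ and from the rest of the $\Diamond$-formulas in the premise coincides.

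I would then note that, as in the previous lemmas of this section, the cases for $\sqcap$ and $\sqcup$ and $\mathrm{M}$ require the auxiliary combinatorial manipulations via Corollaries \ref{cor:andAlphaBeta} and \ref{cor:orAlphaBeta} and Lemmas \ref{lem:genT}--\ref{lem:genT2}; but since here the conclusion $[\vdash\Diamond\Gamma_i,\vec r_i.1,\vec s_i.\covar 1]_i$ is already \emph{basic}, the $\sqcap$ and $\sqcup$ rules can only act on formulas occurring \emph{inside} some $\Diamond\Gamma_i$, i.e.\ they never act on the ``active'' $1$, $\covar 1$ formulas being inverted. Consequently these cases reduce to a plain application of the induction hypothesis followed by re-application of the same rule, with no combinatorial argument at all --- this is in fact simpler than the corresponding cases in Section \ref{subsec:modal_free_invertibility}.

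The main obstacle, then, is not the induction itself but correctly setting up the statement so that the $\Diamond$--rule base case goes through cleanly: the $\Diamond$--rule forces its conclusion to be a single sequent whose non-$1$/$\covar 1$ content is a multiset of $\Diamond$-formulas, so when the derivation ends with $\Diamond$ the multiset $\Diamond\Gamma_1$ we invert must be matched up against the multiset of $\Diamond$-formulas introduced by that rule. One has to be careful that the decomposition of the conclusion's $\Diamond$-formulas into ``$\Diamond\Gamma_1$ being inverted'' versus ``the rest'' is arbitrary, and that in either sub-case the premise already supplies (after removing exactly one $\Diamond$ layer from the relevant formulas) a CAN--free derivation of $[\vdash\Gamma_i,\vec r_i.1,\vec s_i.\covar 1]_i$. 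Once this bookkeeping is done, the lemma follows, and together with the analogous lemmas for $0,+,\times,\sqcup,\sqcap$ it establishes Theorem \ref{thm:invertibility}.
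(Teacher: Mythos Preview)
Your approach is the same as the paper's --- induction on the CAN--free derivation --- but you miss the one observation that makes this lemma much easier than the other invertibility lemmas: the conclusion hypersequent $[\vdash \Diamond\Gamma_i,\vec r_i.1,\vec s_i.\covar 1]_{i=1}^n$ is \emph{basic}. Every weighted formula in it is either $r.\Diamond A$, $r.1$, or $r.\covar 1$; there are no atoms and no formulas whose outermost connective is $0$, $+$, $\times$, $\sqcup$ or $\sqcap$. Consequently the last rule of the derivation cannot be ID or any of the logical rules $0,+,\times,\sqcup,\sqcap$ (these act on the outermost connective, not on subformulas under a $\Diamond$). Only INIT, the structural rules W, C, S, M, T, the $1$--rule and the $\Diamond$--rule remain, and each of these is handled by a direct application of the induction hypothesis followed by reapplying the same rule, with no need for Corollaries~\ref{cor:andAlphaBeta}--\ref{cor:orAlphaBeta} or Lemmas~\ref{lem:genT}--\ref{lem:genT2}.

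Your third paragraph is therefore confused: the phrase ``the $\sqcap$ and $\sqcup$ rules can only act on formulas occurring inside some $\Diamond\Gamma_i$'' is not right --- they cannot act at all, so there is no case to handle, not even a trivial one. Similarly, your distinction between two sub-cases of the $\Diamond$ rule is unnecessary. When the last rule is $\Diamond$, its conclusion is a single sequent $\vdash \Diamond\Delta,\vec a.1,\vec b.\covar 1$, and matching this against $\vdash \Diamond\Gamma_1,\vec r_1.1,\vec s_1.\covar 1$ forces $\Diamond\Delta=\Diamond\Gamma_1$ as multisets (the $\Diamond$--terms, $1$--terms and $\covar 1$--terms are syntactically distinguishable), hence $\Delta=\Gamma_1$ and the premise is exactly $\vdash\Gamma_1,\vec r_1.1,\vec s_1.\covar 1$. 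There is no ``sub-multiset'' bookkeeping to do.
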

\begin{proof}
  By induction on the derivation. Since the hypersequent under consideration is basic, we do not need to deal with any logical rule beside the $\Diamond$-rule, which leads immediately to the desired result, and the $1$-rule. The cases regarding the structural rules and the $1$-rule  are very simple. For instance, if the derivation finishes with the W rule:
      \[ \infer[\text{W}]{ [\vdash \Diamond\Gamma_i, \vec{r}_i.1,\vec{s}_i.\covar{1}]_{i=2}^n \sep \vdash \Diamond\Gamma_1, \vec{r}_1.1,\vec{s}_1.\covar{1}}{ [\vdash \Diamond\Gamma_i, \vec{r}_i.1,\vec{s}_i.\covar{1}]_{i=2}^n} \]
      then by induction hypothesis
      \[ \proveMNC  [\vdash \Gamma_i, \vec{r}_i.1,\vec{s}_i.\covar{1}]_{i=2}^n \]
      so
      \[ \infer[\text{W}]{ [\vdash \Gamma_i, \vec{r}_i.1,\vec{s}_i.\covar{1}]_{i=2}^n \sep \vdash \Gamma_1, \vec{r}_1.1,\vec{s}_1.\covar{1}}{ [\vdash \Gamma_i, \vec{r}_i.1,\vec{s}_i.\covar{1}]_{i=2}^n} \]
\end{proof}

%%% Local Variables:
%%% mode: latex
%%% TeX-master: "main"
%%% End:

\subsection{M-elimination -- Proof of Theorem~\ref{thm:m_elim}}\label{subsec:m_elim}

Following the same pattern of Section~\ref{subsec:modal_free_m_elim}, we need to show that for each hypersequent $G$ and sequents $\Gamma$ and $\Delta$, if there exist CAN-free and M-free derivations $d_1$ of $G \sep \vdash \Gamma$ and $d_2$ of $G \sep \vdash \Delta$, then there also exists a CAN-free and M-free derivation of  $G \sep \vdash \Gamma,\Delta$.

The general idea presented in Section~\ref{subsec:modal_free_m_elim} is to combine the derivations $d_1$ and $d_2$ in a sequential way,  first constructing a prederivation $d_1^\prime$ of $G \sep G \sep \vdash \Gamma, \Delta$ (using $d_1$) whose leaves are either axioms or hypersequents of the form $G\sep \vdash \vec{r}.\Delta$, and then by completing this prederivation into a derivation (using $d_2$). Finally, $G \sep G \sep \vdash \Gamma, \Delta$ can be easily turned into a derivation of $G \sep \vdash \Gamma,\Delta$ as desired.

 However, this technique cannot be directly applied in the context of the system \hmr\ due to the constraints imposed on the shape of the hypersequent by the $\Diamond$ rule. Indeed an application of the $\Diamond$ rule in $d_1$ acting on some hypersequent of the form  \[ \vdash \Diamond \Gamma', \vec{s}.1, \vec{t}.\covar{1}\] cannot turned into an application of the $\Diamond$ rule on
\[ G \sep \vdash \vec{r}.\Delta,  \Diamond \Gamma', \vec{s}.1, \vec{t}.\covar{1}\] because this hypersequent cannot be the conclusion of a $\Diamond$ rule as it does not satisfy the constraints. To deal with the $\Diamond$ rule, we will expand the construction of Section~\ref{subsec:modal_free_m_elim} by induction on the modal depth of the derivation $d_1$.

Indeed, when constructing the prederivation  $d'_1$ inductively from $d_1$, we stop at the applications of the $\Diamond$-rule. Hence, the inductive procedure takes the derivation $d_1$ and produces a CAN-free and M-free prederivation $d_1^\prime$ of
\[ G \sep G \sep \vdash \Gamma, \Delta \]
where all the leaves in the prederivation are either:

\begin{enumerate}
\item terminated, or
\item non-terminated and having the shape
\[ G\sep \vdash \vec{r}.\Delta \]
which can then be completed using the derivation $d_2$ in the exact same way explained in Section~\ref{subsec:modal_free_m_elim}, or
\item non-terminated and having the shape:
\[ G\sep \vdash \Diamond \Gamma', \vec{r}.\Delta, \vec{s}.1,\vec{t}.\covar{1} \]
for some sequent $\Gamma'$ and vectors $\vec{r}, \vec{s},\vec{t}$. For each of these leaves there is a corresponding derivation of

\begin{equation}
\label{eq_subproof_1}
  \vdash \Gamma', \vec{s}.1, \vec{t}.\covar{1}
  \end{equation}
\end{enumerate}

\noindent
To obtain derivations of the leaves of $d_1^\prime$ of the third type, and thus complete the derivation, we proceed as follows. First, we use the derivation $d_2$ to construct a CAN-free and M-free derivation $d_{2,\vec{r}}$ 
\[ G \sep \vec{r}.\Delta \]
for each vector of scalars $\vec{r}$ in the leaves. We then modify each derivation $d_{2,\vec{r}}$ into a prederivation $d_2^\prime$ of
\[ G \sep \vdash \Diamond \Gamma', \vec{r}.\Delta, \vec{s}.1,\vec{t}.\covar{1}\]
using the exact same inductive procedure (which stops when reaching applications of $\Diamond$ terms) introduced above for producing $d_1^\prime$ from $d_1$.  Note that in this case, the leaves of the third kind in $d_2^\prime$  are of the form:
\[ \vdash \vec{r'}.(\Diamond \Gamma',\vec{s}.1,\vec{t}.\covar{1}), \Diamond \Delta', \vec{s'}.1,\vec{t'}.\covar{1} \]
and have associated derivations of 
\begin{equation}
\label{eq_subproof_2}
  \vdash \Delta', \vec{s'}.1, \vec{t'}.\covar{1}
  \end{equation}
Therefore, we can legitimately apply the $\Diamond$ rule (Lemma~\ref{lemma:m-elim-first-step} below ensures that the proviso of the rule is respected) and reduce these leaves to leaves of the form 
\[ \vdash \vec{r'}.(\Gamma',\vec{s}.1,\vec{t}.\covar{1}),  \Delta', \vec{s'}.1,\vec{t'}.\covar{1} \]
which, importantly, have a lower modal depth compared to the conclusion $G \sep \vdash \Gamma$ of the derivation $d_1$  we started with above. 

In order the produce a derivation for the leaves $\vdash \vec{r'}.(\Gamma',\vec{s}.1,\vec{t}.\covar{1}),  \Delta', \vec{s'}.1,\vec{t'}.\covar{1} $, and thus conclude the completion of $d_1^\prime$ into a full derivation,  it is sufficient to re-apply the whole process using the derivations of Equation~\ref{eq_subproof_1} and Equation~\ref{eq_subproof_2} above. This process is well founded and eventually terminates because the modal depth is decreasing.

We now proceeds with the technical statements.

\begin{lem}
\label{lemma:m-elim-first-step}

Let $d_1$ be a CAN-free and M-free derivation of $G \sep\vdash \Gamma$ using the $\Diamond$ rule and let $\Delta$ be a sequent. Then there exists a prederivation of
$$
G \sep G \sep \vdash \Gamma,\Delta.
$$
where all non-terminated leaves are either of the form $G\sep \vdash \vec{r}.\Delta$ or of the form $G\sep \vdash \Diamond \Gamma', \vec{r}.\Delta, \vec{s}.1,\vec{t}.\covar{1}$ for some sequent $\Gamma'$ and vectors $\vec{r},\vec{s},\vec{t}$ such that
\begin{itemize}
\item $\sum\vec{s} \geq \sum\vec{t}$ and
\item $\vdash \Gamma', \vec{s}.1,\vec{t}.\covar{1}$ has a derivation $d_1'$ with a strictly lower modal depth than $d_1$.
\end{itemize}
\end{lem}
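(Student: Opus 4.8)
The plan is to prove this by induction on the structure of the derivation $d_1$, carefully propagating the sequent $\Delta$ into every sequent of the running hypersequent — exactly as in Lemma~\ref{lem:modal_free_gen-m-elim-first-step}, but now stopping the induction whenever we hit an application of the $\Diamond$ rule. As in the modal-free case, I would first state and prove a more general version that tracks a vector $\vec{r_i}$ of multiplicities for each sequent $\Gamma_i$ of the derived hypersequent, so that the structural rules (S, C, W, ID, T) and the one-premise logical rules ($0,+,\times$) go through by a routine ``do the same rule, with $\Delta$ carried along'' argument, and so that S, in particular, can split the accumulated copy of $\Delta$ correctly. The present Lemma~\ref{lemma:m-elim-first-step} is then the instance with $[\vdash\Gamma_i]_{i=1}^{n-1}=G$, $\Gamma_n=\Gamma$, $\vec r_i=\emptyset$ for $i<n$ and $\vec r_n = \vec r$.

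First I would handle the cases already present in $\hr$: these are essentially verbatim copies of the corresponding cases in Lemma~\ref{lem:modal_free_gen-m-elim-first-step}, and they produce leaves of the second kind, $G\sep\vdash\vec r.\Delta$, exactly when they hit the INIT axiom. The new structural/logical rule $1$ is immediate (apply $1$ again, $\Delta$ is untouched). The genuinely new case is when $d_1$ ends with the $\Diamond$ rule:
\[
\infer[\Diamond,\ \sum\vec s\geq\sum\vec t]{\vdash\Diamond\Gamma',\vec s.1,\vec t.\covar 1}{\vdash\Gamma',\vec s.1,\vec t.\covar 1}.
\]
Here the conclusion of $d_1$ is a one-sequent hypersequent, so in the running prederivation this sequent carries $\vec r.\Delta$ appended to it, i.e.\ we must produce a (pre)derivation of $G\sep\vdash\Diamond\Gamma',\vec r.\Delta,\vec s.1,\vec t.\covar 1$. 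We \emph{cannot} continue by applying the $\Diamond$ rule, since this hypersequent violates the shape constraints of that rule (it contains $G$, and $\vec r.\Delta$ need not be made of $\Diamond$-terms). So we simply \emph{stop}: we declare $G\sep\vdash\Diamond\Gamma',\vec r.\Delta,\vec s.1,\vec t.\covar 1$ to be a non-terminated leaf of the third kind. The two bulleted side-conditions are then met by construction: $\sum\vec s\geq\sum\vec t$ is the proviso of the $\Diamond$ rule we stopped at, and the subderivation of $\vdash\Gamma',\vec s.1,\vec t.\covar 1$ is precisely the premise of that $\Diamond$ application, whose modal depth is strictly smaller than that of $d_1$ (we removed one $\Diamond$).

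The main obstacle is bookkeeping rather than conceptual: one must be scrupulous about the multiplicity vectors $\vec r_i$ in the general statement so that the S and C cases interact correctly with $\Delta$ (this is the reason for preferring the $\vec r_i$-indexed formulation over the bare statement), and one must verify that ``stopping early'' at $\Diamond$ still yields a legitimate \emph{prederivation} — that is, that every non-stopped leaf really is either terminated or of one of the two advertised forms. Since the $\Diamond$ rule can only ever appear at a point where the current hypersequent is a single sequent, and since our inductive invariant is exactly that we are building a prederivation of $G\sep[\vdash\Gamma_i,\vec r_i.\Delta]_{i=1}^n$, the single-sequent case forces $n=1$ and the leaf has the claimed shape $G\sep\vdash\Diamond\Gamma',\vec r.\Delta,\vec s.1,\vec t.\covar 1$. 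With that invariant in place, the induction closes. Finally, note that no new T rule is introduced in this construction, so the resulting (pre)derivation remains M--free and CAN--free, as required.
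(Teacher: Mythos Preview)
Your proposal is correct and follows essentially the same approach as the paper: the paper reduces the statement to a more general Lemma~\ref{lem:gen-m-elim-aux} tracking multiplicity vectors $\vec r_i$ for each sequent, instantiated exactly as you describe, and proves that lemma by induction on $d_1$, handling the $\hr$ rules as in Lemma~\ref{lem:modal_free_gen-m-elim-first-step}, passing the $1$ rule through directly, and stopping at the $\Diamond$ rule to declare the current node a non-terminated leaf with the required side conditions read off from the $\Diamond$ proviso and its premise subderivation. Your bookkeeping remarks about why the generalized $\vec r_i$-formulation is needed and why the $\Diamond$-stopped leaf has the right shape are exactly on point.
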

\begin{proof}
  This is an instance of the slightly more general statement of Lemma~\ref{lem:gen-m-elim-aux} below where:
  \begin{itemize}
  \item $[ \vdash \Gamma_i ]_{i=1}^{n-1} = G$ and $\Gamma_n = \Gamma$.
  \item $\vec r_i = \emptyset$ for $1 \leq i < n$ and $\vec{r}_n = \vec r$.\qedhere
  \end{itemize}
\end{proof}

\begin{lem}
\label{lemma:m-elim-second-step}
Let $d_2$ be CAN-free and M-free derivation of $G \sep \vdash \Delta$. Then, for every vector $\vec{r}$, there exists a CAN-free and M-free derivation of
$$
G \sep \vdash\vec{r}.\Delta
$$
with a modal depth lower or equal than $d_2$.
\end{lem}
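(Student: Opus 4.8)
The statement to prove is Lemma \ref{lemma:m-elim-second-step}: given a CAN--free and M--free derivation $d_2$ of $G \sep \vdash \Delta$, for every vector $\vec r$ there is a CAN--free and M--free derivation of $G \sep \vdash \vec r.\Delta$ whose modal depth is less than or equal to that of $d_2$.

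My plan is to derive this as an instance of a slightly more general lemma, mirroring exactly the structure used in Section \ref{subsec:modal_free_m_elim} for the non-modal case (where Lemma \ref{lemma:modal_free_m-elim-second-step} was obtained from Lemma \ref{lem:modal_free_copyProof}). That is, I would first state and prove a ``copying'' lemma: if $[\vdash \Delta_i]_{i=1}^n$ has a CAN--free and M--free derivation, then for all vectors $\vec{r_i}$ there is a CAN--free and M--free derivation of $[\vdash \vec{r_i}.\Delta_i]_{i=1}^n$ with modal depth less than or equal to the original. The present lemma follows by taking $[\vdash \Delta_i]_{i=1}^{n-1} = G$, $\Delta_n = \Delta$, $\vec{r_i} = 1$ for $i<n$, and $\vec{r_n} = \vec r$, exactly as the proof sketch for Lemma \ref{lemma:m-elim-second-step} already indicates (the $n-1$th paragraph in the text says ``This is an instance of the slightly more general statement of Lemma \ref{lem:modal_free_copyProof}'' — I would replace that reference with the modal analogue).

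The proof of the copying lemma proceeds by induction on the derivation of $[\vdash \Delta_i]_{i=1}^n$. Most cases are already handled identically to Lemma \ref{lem:modal_free_copyProof}: the \textbf{INIT}, \textbf{W}, \textbf{C} cases are immediate; the \textbf{S} case uses Lemma \ref{lem:genT} after an application of \textbf{S} on the doubled vectors $(\vec{r_1}\vec{r_2})$; the \textbf{T} and \textbf{ID} cases rescale by the relevant vectors; and the logical rules $0, +, \times, \sqcup, \sqcap$ commute with the scalar multiplication, replacing $\vec{r}$ by $(\vec{r_1}\vec{r})$ in the weighted terms. The only genuinely new case is the $\Diamond$ rule and the $1$ rule. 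For the $1$ rule, the argument is routine: the proviso $\sum\vec r \geq \sum\vec s$ is preserved under multiplication by a positive scalar, so one re-applies $1$ with the scaled vectors. For the $\Diamond$ rule, the hypersequent consists of a single sequent of the very special shape $\vdash \Diamond\Gamma_1, \vec{r}.1, \vec{s}.\covar 1$; applying the induction hypothesis to the premise $\vdash \Gamma_1, \vec r.1, \vec s.\covar 1$ with the scaling vector $\vec{r_1}$ produces a derivation of $\vdash (\vec{r_1}\text{-scaled version})$, and one then re-applies the $\Diamond$ rule, noting that the proviso $\sum\vec r \geq \sum\vec s$ is again preserved (it gets multiplied through by $\sum\vec{r_1}$ on both sides), and the modal depth increases by exactly one on both the original and the new derivation, so the inequality on modal depths is maintained.

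The main obstacle I anticipate is bookkeeping with the $\Diamond$ rule's rigid shape constraint: since the $\Diamond$ rule forces the hypersequent to be a single sequent, the ``copying'' in that case is really just scalar multiplication of one sequent rather than the more flexible multi-sequent copying allowed elsewhere, so the statement of the general lemma must be arranged so that this restricted case still goes through — concretely, in the $\Diamond$ case the vectors $\vec{r_i}$ for $i<n$ must all be empty (there is only one sequent), which is automatic since the conclusion of $\Diamond$ is a single-sequent hypersequent. A secondary point to be careful about is tracking the modal depth: I must verify that no case \emph{increases} modal depth — the $\Diamond$ case adds one to both sides symmetrically, and every other case leaves the number of $\Diamond$ applications along each branch unchanged (the appeals to Lemma \ref{lem:genT} in the \textbf{S} case only introduce $\mathbf C, \mathbf T, \mathbf S$ rules, which are $\Diamond$-free), so the modal depth bound is preserved throughout. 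Beyond that, the proof is a routine induction entirely parallel to the non-modal Lemma \ref{lem:modal_free_copyProof}, and I would state it as Lemma \ref{lem:copyProof} (the modal analogue) and simply remark that the cases not involving $\Diamond$ or $1$ are as in Lemma \ref{lem:modal_free_copyProof}.
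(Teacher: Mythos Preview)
Your proposal is correct and matches the paper's approach essentially line for line: the paper reduces Lemma \ref{lemma:m-elim-second-step} to the more general copying lemma (Lemma \ref{lem:copyProof}) via exactly the instantiation you describe, and proves that lemma by induction on the derivation, handling the non-modal cases as in Lemma \ref{lem:modal_free_copyProof} and treating only the $1$ and $\Diamond$ cases explicitly, with the same observations about preservation of the side conditions under scaling and the modal-depth bookkeeping. Your anticipated obstacles are precisely the points the paper singles out, and your resolution of them coincides with the paper's.
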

\begin{proof}
This is an instance of the slightly more general statement of Lemma~\ref{lem:copyProof} below where:
  \begin{itemize}
  \item $[ \vdash \Delta_i ]_{i=1}^{n-1} = G$ and $\Delta_n = \Delta$.
  \item $\vec r_i = 1$ for $1 \leq i < n$ and $\vec{r}_n = \vec r$.\qedhere
  \end{itemize}
\end{proof}

\begin{lem}
  \label{lem:m-elim-last-step}
  Let $d_1$ be a CAN-free and M-free derivation of $G \sep\vdash \Gamma$ without any $\Diamond$ rule and let $\Delta$ be a sequent. Then there exists a prederivation of
$$
G \sep G \sep \vdash \Gamma,\Delta.
$$
where all non-terminated leaves are of the form $G\sep \vdash \vec{r}.\Delta$ for some vector $\vec{r}$.
\end{lem}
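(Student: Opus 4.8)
The plan is to prove, by induction on the structure of $d_1$, a slightly more general statement in the style of Lemma~\ref{lem:modal_free_gen-m-elim-first-step}: if $d_1$ is a CAN--free, M--free and $\Diamond$--free derivation of $[\vdash \Gamma_i]_{i=1}^n$, then for every hypersequent $G$, every sequent $\Delta$, and every sequence of vectors $\vec{r_i}$ (one for each sequent of the conclusion), there exists a CAN--free and M--free prederivation of $G \sep [\vdash \Gamma_i, \vec{r_i}.\Delta]_{i=1}^n$ all of whose non--terminated leaves have the shape $G \sep \vdash \vec{r}.\Delta$ for some vector $\vec{r}$. Lemma~\ref{lem:m-elim-last-step} is then obtained as the instance where $[\vdash \Gamma_i]_{i=1}^{n-1} = G$, $\Gamma_n = \Gamma$, $\vec{r_i} = \emptyset$ for $1 \leq i < n$, and $\vec{r_n}$ is the singleton vector $(1)$, so that $\vec{r_n}.\Delta = \Delta$.

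First I would replay, one inference at a time, the derivation $d_1$ on the ``$\Delta$--enriched'' hypersequents, exactly as in the proof of Lemma~\ref{lem:modal_free_gen-m-elim-first-step}. In the base case INIT (so $n = 1$ and $\Gamma_1 = \emptyset$), the enriched conclusion consists of the single leaf $G \sep \vdash \vec{r_1}.\Delta$, which is left non--terminated and has the required shape. For each of the rules W, C, S, T, ID, $0$, $+$, $\times$, $\sqcap$ and $\sqcup$ one applies the induction hypothesis to the premise(s), choosing the vectors suitably — e.g.\ the concatenation $\app{\vec{r_1}}{\vec{r_2}}$ for the merged sequent in the S case, the scaling of $\vec{r_1}$ by the T--rule's scalar in the T case, and two copies of $\vec{r_1}$ for the two new sequents in the $\sqcup$ case — and then reapplies the same rule at the root; none of these steps introduces an M, CAN or $\Diamond$ inference, so the resulting object is again a CAN--free and M--free prederivation whose non--terminated leaves still have the shape $G \sep \vdash \vec{r}.\Delta$.

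The only case not already present in the \hr\ argument is the $1$ rule, and it is handled exactly like the $0$ and ID rules: it has a single premise and does not change the number of sequents, so from $d_1$ ending with an instance of $1$ with conclusion $[\vdash \Gamma_i]_{i\geq 2} \sep \vdash \Gamma_1, \vec{a}.1, \vec{b}.\covar{1}$ and premise $[\vdash \Gamma_i]_{i\geq 2} \sep \vdash \Gamma_1$, the induction hypothesis yields a prederivation of $G \sep [\vdash \Gamma_i, \vec{r_i}.\Delta]_{i\geq 2} \sep \vdash \Gamma_1, \vec{r_1}.\Delta$, and extending it with an application of the $1$ rule (whose side condition $\sum \vec{a} \geq \sum \vec{b}$ is insensitive to the extra weighted formulas $\vec{r_1}.\Delta$) gives the desired prederivation. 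Since $d_1$ is $\Diamond$--free by hypothesis, no $\Diamond$ inference ever occurs, and this is precisely why the argument goes through verbatim: the $\Diamond$ rule is the single inference whose rigid shape constraints obstruct this ``replay'' technique, and it is treated separately (through Lemma~\ref{lem:gen-m-elim-aux}, Lemma~\ref{lemma:m-elim-first-step} and an induction on modal depth). Accordingly there is no genuine obstacle in proving this lemma; the only point requiring attention is keeping the induction statement general enough — arbitrary per--sequent vectors $\vec{r_i}$ together with an arbitrary context $G$ — for the structural rules, particularly S and $\sqcup$, to be handled cleanly.
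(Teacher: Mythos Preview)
Your proposal is correct and follows essentially the same approach as the paper: the paper simply invokes the general Lemma~\ref{lem:gen-m-elim-aux} (which already handles all rules including the $1$ rule, and whose induction you have spelled out) and then observes that, since the $\Diamond$-type leaves are produced only when $d_1$ uses the $\Diamond$ rule, the $\Diamond$-freeness of $d_1$ forces all non--terminated leaves to be of the form $G\sep\vdash\vec{r}.\Delta$. Your explicit induction is just an unfolded version of this same argument, and your instantiation $\vec{r_n}=(1)$ is the intended one.
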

\begin{proof}
  This is an other intance of Lemma $\ref{lem:gen-m-elim-aux}$ where:
  \begin{itemize}
  \item $[ \vdash \Gamma_i ]_{i=1}^{n-1} = G$ and $\Gamma_n = \Gamma$.
  \item $\vec r_i = \emptyset$ for $1 \leq i < n$ and $\vec{r}_n = \vec r$.
  \end{itemize}

  \noindent Since the leaves of the form $G\sep \vdash \Diamond \Gamma', \vec{r}.\Delta, \vec{s}.1,\vec{t}.\covar{1}$ are generated only by the $\Diamond$ rule, and there is no $\Diamond$ rule in $d_1$, then all non-terminated leaves are of the form $G\sep \vdash \vec{r}.\Delta$ for some vector $\vec{r}$.
\end{proof}

\begin{lem}
  \label{lem:gen-m-elim-aux}
  Let $d_1$ be a CAN-free and M-free derivation of $[\vdash \Gamma_i]_{i=1}^n$ and let $G$ be a hypersequent and $\Delta$ be a sequent. Then for every sequence of vectors $\vec{r_i}$, there exists a prederivation of
  \[ G \sep [\vdash \Gamma_i, \vec{r_i}.\Delta]_{i=1}^n \]
  where all non-terminated leaves are of the form $G\sep \vdash \Diamond \Gamma', \vec{r}.\Delta, \vec{s}.1,\vec{t}.\covar{1}$ for some sequent $\Gamma'$ and vectors $\vec{r},\vec{s},\vec{t}$ such that
  \begin{itemize}
  \item $\sum\vec{s} \geq \sum\vec{t}$ and
  \item $\vdash \Gamma', \vec{s}.1,\vec{t}.\covar{1}$ has a derivation $d_1'$ with a strictly lower modal depth than $d_1$.
  \end{itemize}
\end{lem}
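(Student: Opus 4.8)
The plan is to prove Lemma~\ref{lem:gen-m-elim-aux} by induction on the derivation $d_1$ of $[\vdash \Gamma_i]_{i=1}^n$, closely following the structure of the proof of Lemma~\ref{lem:modal_free_gen-m-elim-first-step}, but now handling the two new rules ($1$ and $\Diamond$) of \hmr. For every rule of \hmr\ except the $\Diamond$ rule, the argument is exactly as in the modal-free case: the rule acts within a single sequent (or splits/merges sequents without the one-sequent constraint), so we simply apply the same rule in the prederivation being built, carrying along the extra weighted copies $\vec{r_i}.\Delta$. For the $1$ rule, which introduces $\vec{r}.1,\vec{s}.\covar{1}$ into one sequent with proviso $\sum\vec{r}\geq\sum\vec{s}$, we just re-apply the $1$ rule after the inductive hypothesis has supplied a prederivation of the premise augmented with $\Delta$'s; the proviso is untouched, so this is routine. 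For the structural rules W, C, S, T, ID and the logical rules $0,+,\times,\sqcup,\sqcap$, the cases are verbatim copies of those in Lemma~\ref{lem:modal_free_gen-m-elim-first-step}.

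The only genuinely new case is when $d_1$ ends with the $\Diamond$ rule:
\[
\infer[\Diamond,\ \sum\vec{s}\geq\sum\vec{t}]{\vdash \Diamond\Gamma_1, \vec{s}.1,\vec{t}.\covar{1}}{\vdash \Gamma_1, \vec{s}.1,\vec{t}.\covar{1}}
\]
Here $n=1$ and $\Gamma_1 = \Diamond\Gamma', \vec{s}.1,\vec{t}.\covar{1}$ has exactly the special shape. We \emph{do not} recurse into the premise. Instead, the prederivation of $G \sep \vdash \Gamma_1, \vec{r_1}.\Delta = G \sep \vdash \Diamond\Gamma', \vec{r_1}.\Delta, \vec{s}.1,\vec{t}.\covar{1}$ is taken to be the single non-terminated leaf $G \sep \vdash \Diamond\Gamma', \vec{r_1}.\Delta, \vec{s}.1,\vec{t}.\covar{1}$ itself (no rule applied at this node, beyond possibly a weakening to restore the $G$ component). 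This leaf has precisely the advertised form with $\Gamma'$, $\vec{r}=\vec{r_1}$, $\vec{s}$, $\vec{t}$; its proviso $\sum\vec{s}\geq\sum\vec{t}$ is inherited from the side condition of the $\Diamond$ rule in $d_1$; and the premise $\vdash \Gamma', \vec{s}.1,\vec{t}.\covar{1}$ of that $\Diamond$ rule is exactly a derivation $d_1'$ of strictly smaller modal depth (one fewer $\Diamond$ rule on its longest modal branch), as required. One must also track that every other leaf produced in the non-$\Diamond$ cases is either terminated (via INIT) or, if it arose from a $\Diamond$ rule deeper in $d_1$, already has the required form with a strictly smaller-modal-depth witness derivation — this is exactly what the induction hypothesis delivers, so the bookkeeping is immediate.

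The main obstacle, such as it is, is the bookkeeping of the $G$ component and of the multiset of $\vec{r_i}.\Delta$ copies as the derivation branches (for $\sqcup$, $\sqcap$) and as sequents are split and merged (for S, and implicitly in the hypothesis about leaf shapes): one must be careful that after an S rule the $\Delta$-copies attached to the two resulting sequents are the appropriate sub-vectors $\vec{r_1}.\Delta$ and $\vec{r_2}.\Delta$ of the concatenation, matching the conventions fixed in Section~\ref{modal_free_section}. None of this is conceptually hard — it is the same indexing discipline already used in Lemma~\ref{lem:modal_free_gen-m-elim-first-step} — but it is where errors could creep in, so the proof proper will spell out the S, $\sqcup$, $\sqcap$ and $\Diamond$ cases explicitly and dispatch the remaining rules by reference to the modal-free proof. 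Since $n=1$ is forced whenever the $\Diamond$ rule is used, and the generalised rule otherwise never interacts with the one-sequent constraint, the induction goes through without circularity, and Lemmas~\ref{lemma:m-elim-first-step}, \ref{lemma:m-elim-second-step} and \ref{lem:m-elim-last-step} follow by specialising the vectors $\vec{r_i}$ as indicated in their respective proofs.
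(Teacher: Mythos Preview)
Your proposal is correct and follows essentially the same approach as the paper: induction on $d_1$, with all non-$\Diamond$ cases handled exactly as in Lemma~\ref{lem:modal_free_gen-m-elim-first-step}, the $1$ rule handled by applying the induction hypothesis and then re-applying the $1$ rule, and the $\Diamond$ rule handled by stopping and taking the target hypersequent itself as the single non-terminated leaf. The parenthetical about ``possibly a weakening to restore the $G$ component'' in the $\Diamond$ case is unnecessary---the leaf is simply $G \sep \vdash \Diamond\Gamma', \vec{r_1}.\Delta, \vec{s}.1, \vec{t}.\covar{1}$ itself, with no rule applied---but this does not affect correctness.
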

\begin{proof}
  We prove the result by induction on $d_1$. We will only show the $\Diamond$ and the $1$ rules, since all other cases are done in the same way as in Lemma~\ref{lem:modal_free_gen-m-elim-first-step}.
  \begin{itemize}
  \item if $d_1$ finishes with:
    \[ \infer[1 , \sum \vec{s} \geq \sum\vec{t}]{[\vdash \Gamma_i]_{i=2}^n \sep \vdash \Gamma_1, \vec{s}.1,\vec{t}.\covar{1}}{[\vdash \Gamma_i]_{i=2}^n \sep \vdash \Gamma_1} \]
    then by induction hypothesis, there is a prederivation of $G\sep[\vdash \Gamma_i,\vec{r}_i.\Delta]_{i=2}^n \sep \vdash \Gamma_1,\vec{r}_1.\Delta$ where all non-terminated leaves are of the form $G\sep \vdash \Diamond \Gamma', \vec{r}.\Delta, \vec{s}.1,\vec{t}.\covar{1}$ for some sequent $\Gamma'$ and vectors $\vec{r},\vec{s},\vec{t}$ such that
    \begin{itemize}
    \item $\sum\vec{s} \geq \sum\vec{t}$ and
    \item $\vdash \Gamma', \vec{s}.1,\vec{t}.\covar{1}$ has a derivation $d_1'$ with a strictly lower modal depth than $d_1$.
    \end{itemize}
    We continue the prederivation with
    \[ \infer[1 , \sum \vec{s} \geq \sum\vec{t}]{G \sep [\vdash \Gamma_i, \vec{r}_i.\Delta]_{i=2}^n \sep \vdash \Gamma_1,\vec{r}_1.\Delta, \vec{s}.1,\vec{t}.\covar{1}}{G \sep [\vdash \Gamma_i,\vec{r}_i.\Delta]_{i=2}^n \sep \vdash \Gamma_1,\vec{r}_1.\Delta} \]
  \item If $d_1$ finishes with:
    \[ \infer[\Diamond , \sum \vec{s} \geq \sum\vec{t}]{ \vdash \Diamond\Gamma_1, \vec{s}.1,\vec{t}.\covar{1}}{ \vdash \Gamma_1, \vec{s}.1,\vec{t}.\covar{1}} \]
    then the prederivation is simply the leaf $G \sep \vdash \Diamond\Gamma_1, \vec{r}_1.\Delta, \vec{s}.1,\vec{t}.\covar{1}$ which satisfies both
    \begin{itemize}
      \item $\sum\vec{s} \geq \sum\vec{t}$ and
      \item $\vdash \Gamma_1, \vec{s}.1,\vec{t}.\covar{1}$ is derivable using stricly less $\Diamond$ rule than in $d_1$.\qedhere
    \end{itemize}
  \end{itemize}
\end{proof}

\begin{lem}
  \label{lem:copyProof}
  If $d_2$ is a CAN-free M-free derivation of $\left [ \vdash \Delta_i \right ]_{i=1}^n$ then for all $\vec{r_i}$, there is a CAN-free M-free derivation of $\left [\vdash \vec{r_i}.\Delta_i \right ] _{i=1}^n$ with a modal depth lower or equal than $d_2$.
\end{lem}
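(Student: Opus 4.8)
\textbf{Proof plan for Lemma \ref{lem:copyProof}.}

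The plan is to proceed by induction on the derivation $d_2$ of $\left[\vdash \Delta_i\right]_{i=1}^n$, exactly mirroring the structure of the proof of Lemma \ref{lem:modal_free_copyProof}, and then to handle the one genuinely new case, namely the $\Diamond$ rule. First I would dispose of the cases that do not involve the modality: for \textbf{INIT}, \textbf{W}, \textbf{C}, \textbf{ID}, \textbf{T}, and the logical rules $\{0, +, \times, \sqcup, \sqcap\}$, the argument is copied verbatim from Lemma \ref{lem:modal_free_copyProof} — apply the induction hypothesis to the premise(s) with the appropriate adjusted vectors (e.g. $(\vec{r_1}\vec{r})$ in place of $\vec{r}$ for logical rules, and $(\vec{r_1}\vec{r_2})$ for the \textbf{S} rule, where one uses Lemma \ref{lem:genT} afterwards to contract back down to $\vec{r_1}.\Delta_1 \sep \vec{r_2}.\Delta_2$). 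Since none of these rules introduces a $\Diamond$, the modal depth is unchanged, so the bound ``modal depth lower or equal than $d_2$'' is preserved. The \textbf{1} rule is also routine: if $d_2$ ends with an application of \textbf{1} introducing $\vec{a}.1, \vec{b}.\covar{1}$ to the sequent $\Delta_1$ with side condition $\sum\vec{a} \geq \sum\vec{b}$, then by the induction hypothesis $\left[\vdash \vec{r_i}.\Delta_i\right]_{i=2}^n \sep \vdash \vec{r_1}.\Delta_1$ is derivable with modal depth $\leq d_2$, and one reapplies \textbf{1} to introduce $(\vec{r_1}\vec{a}).1, (\vec{r_1}\vec{b}).\covar{1}$, noting $\sum(\vec{r_1}\vec{a}) = (\sum\vec{r_1})(\sum\vec{a}) \geq (\sum\vec{r_1})(\sum\vec{b}) = \sum(\vec{r_1}\vec{b})$ since all scalars are positive.

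The only interesting case is when $d_2$ ends with the $\Diamond$ rule. By the constraints of the rule, the conclusion is a single sequent $\vdash \Diamond\Gamma, \vec{a}.1, \vec{b}.\covar{1}$ (so $n=1$ and $\Delta_1 = \Diamond\Gamma, \vec{a}.1, \vec{b}.\covar{1}$), with premise $\vdash \Gamma, \vec{a}.1, \vec{b}.\covar{1}$ and side condition $\sum\vec{a} \geq \sum\vec{b}$. I need to derive $\vdash \vec{r_1}.(\Diamond\Gamma, \vec{a}.1, \vec{b}.\covar{1}) = \vdash \vec{r_1}.\Diamond\Gamma, (\vec{r_1}\vec{a}).1, (\vec{r_1}\vec{b}).\covar{1}$. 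If $\vec{r_1} = \emptyset$ this is the empty sequent, derivable by \textbf{INIT}. Otherwise, observe that $\vec{r_1}.\Diamond\Gamma = \Diamond(\vec{r_1}.\Gamma)$ by the definition of $\Diamond\Gamma$ extended by scalar multiplication, so the target is exactly the conclusion of a $\Diamond$ rule applied to $\vdash \vec{r_1}.\Gamma, (\vec{r_1}\vec{a}).1, (\vec{r_1}\vec{b}).\covar{1}$, and the proviso $\sum(\vec{r_1}\vec{a}) \geq \sum(\vec{r_1}\vec{b})$ holds as above. This latter hypersequent is a single-sequent hypersequent of the form $\left[\vdash \vec{r_1}.\Gamma', \cdots\right]_{i=1}^{1}$ obtained from the premise $\vdash \Gamma, \vec{a}.1, \vec{b}.\covar{1}$ of $d_2$, so by the induction hypothesis (applied to the subderivation of the premise, which has strictly smaller size and modal depth one less) it has a derivation of modal depth $\leq \text{(modal depth of premise)}$. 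Adding back the one $\Diamond$ rule gives modal depth $\leq \text{(modal depth of premise)} + 1 \leq d_2$, as required.

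The main obstacle I anticipate is purely bookkeeping: making sure the vector arithmetic in the $\Diamond$ case is stated cleanly, in particular that $\vec{r_1}.\Diamond\Gamma$ genuinely equals $\Diamond(\vec{r_1}.\Gamma)$ as multisets of weighted terms (this is immediate from the conventions, since scaling a weighted term $r'.A$ by $s$ gives $(sr').A$, and $\Diamond\Gamma$ replaces each $r'.A$ by $r'.\Diamond A$, so the two operations commute), and that the modal-depth bound is threaded through the induction correctly — the induction must be on the derivation $d_2$ (not on the hypersequent), and in every non-$\Diamond$ case the modal depth is literally unchanged while in the $\Diamond$ case it is the premise's depth plus one. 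Once these invariants are pinned down, every case is a one-line application of the induction hypothesis followed by reapplying the same rule (possibly with rescaled vectors, possibly followed by an application of Lemma \ref{lem:genT} in the \textbf{S} case), so the proof is short.
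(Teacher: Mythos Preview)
Your proposal is correct and follows essentially the same approach as the paper: induction on $d_2$, with the non-modal cases carried over from Lemma~\ref{lem:modal_free_copyProof} and the $1$ and $\Diamond$ cases handled by rescaling the side-condition vectors and reapplying the same rule. The paper treats the $\Diamond$ case uniformly without splitting off $\vec{r_1}=\emptyset$, but your extra case does no harm.
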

\begin{proof}
  We will only show the $\Diamond$ and $1$ rules, the other cases being similar to Lemma~\ref{lem:modal_free_copyProof} --- and so do not introduce any new $\Diamond$ rule.
  \begin{itemize}
  \item if $d_2$ finishes with:
    \[ \infer[1 , \sum \vec{s} \geq \sum\vec{t}]{[\vdash \Delta_i]_{i=2}^n \sep \vdash \Delta_1, \vec{s}.1,\vec{t}.\covar{1}}{[\vdash \Delta_i]_{i=2}^n \sep \vdash \Delta_1} \]
    then by induction hypothesis, there is a CAN-free M-free derivation of  $[\vdash \vec{r}_i.\Delta_i]_{i=2}^n \sep \vdash \vec{r}_1.\Delta_1$ with a modal depth lower or equal  than $d_2$.
    We continue the derivation with 
    \[ \infer[1 , \sum \vec{r}_1\vec{s} \geq \sum\vec{r}_1\vec{t}]{[\vdash \vec{r}_i.\Delta_i]_{i=2}^n \sep \vdash \vec{r}_1.\Delta_1, (\vec{r}_1\vec{s}).1,(\vec{r}_1\vec{t}).\covar{1}}{[\vdash \vec{r}_i.\Delta_i]_{i=2}^n \sep \vdash \vec{r}_1.\Delta_1} \]
    which does not increase the modal depth of the derivation.
  \item If $d_2$ finishes with:
    \[ \infer[\Diamond , \sum \vec{s} \geq \sum\vec{t}]{ \vdash \Diamond\Delta_1, \vec{s}.1,\vec{t}.\covar{1}}{ \vdash \Delta_1, \vec{s}.1,\vec{t}.\covar{1}} \]
    by induction hypothesis, there is a derivation of $\vdash \vec{r}_1.\Delta_1, (\vec{r}_1\vec{s}).1,(\vec{r}_1\vec{t}).\covar{1}$ with a modal depth strictly less than $d_2$.
    We continue the derivation with
    \[ \infer[\Diamond , \sum \vec{r}_1\vec{s} \geq \sum\vec{r}_1\vec{t}]{ \vdash \vec{r}_1.\Diamond\Delta_1, (\vec{r}_1\vec{s}).1,(\vec{r_1}\vec{t}).\covar{1}}{ \vdash \vec{r}_1.\Delta_1, (\vec{r}_1\vec{s}).1,(\vec{r_1}\vec{t}).\covar{1}} \]
      which gives a derivation with a modal depth less or equal than $d_2$.\qedhere
  \end{itemize}
\end{proof}

%%% Local Variables:
%%% mode: latex
%%% TeX-master: "main"
%%% End:
\subsection{CAN elimination -- Proof of Theorem~\ref{thm:can_elim}}
\label{subsec:can_elim}
Recall that the CAN rule has the following form:

$$\infer[\text{CAN}, \sum \vec{r} = \sum \vec{s}]{G \sep \vdash \Gamma}{G \sep \vdash \Gamma, \vec{s}.A , \vec{r}.\negF{A}} $$

As in Section~\ref{subsec:modal_free_can_elim}, we prove Theorem~\ref{thm:can_elim} by showing that if the hypersequent $G \sep \vdash \Gamma, \vec{s}.A , \vec{r}.\negF{A}$ has a M-free CAN-free derivation, then so does the hypersequent $G \sep \vdash \Gamma$.

As explained in the discussion before Theorem~\ref{thm:can_elim} and in its proof sketch, the proof cannot just invoke the CAN-free invertibility Theorem~\ref{thm:invertibility} to simplify the logical complexity of the CAN term, due to the constraints imposed by the $\Diamond$-rule (the $1$-rule is dealt with in a similar fashion to the ID-rule).

To circumvent this issue, we prove the slightly more general Lemma~\ref{lem:gen_can_elim} by double induction on both the term $A$ and the derivation of $G \sep \vdash \Gamma, \vec{r}.A , \vec{s}.\negF{A}$.

We first prove the two basic cases where $A = x$ (or equivalently $ A = \covar{x})$ in Lemma~\ref{lem:atomic_can_elim-simpler} and $A = 1$ (or equivalently $A = \covar{1}$) in Lemma~\ref{lem:1_can_elim-simpler}, and the general case in Lemma~\ref{lem:gen_can_elim}.

\begin{lem}
  \label{lem:atomic_can_elim-simpler}
  If there is a M-free CAN-free derivation of  $G \sep \vdash \Gamma, \vec{r}.x , \vec{s}.\negF{x}$, where $\sum \vec{r} = \sum \vec{s}$, then there exists a M-free CAN-free derivation of $G \sep \vdash \Gamma$.
  \end{lem}
  \begin{proof}
    The statement follows as a special case of Lemma~\ref{lem:atomic_can_elim} below, a stronger version of Lemma~\ref{lem:atomic_can_elim-simpler} that allows for a simpler proof by induction on the structure of the derivation of $G \sep \vdash \Gamma, \vec{r}.x , \vec{s}.\negF{x}$, where:
    \begin{itemize}
    \item $[\vdash \Gamma_i]_{i=1}^{n-1} = G$ and $\Gamma_n = \Gamma$.
    \item $\vec{r_i}=\vec{r'_i}=\vec{s_i}=\vec{s'_i}=\emptyset$ for $1 \leq i < n$.
    \item $\vec{r_n}=\vec{r}$, $\vec{s_n}=\vec{s}$ and $\vec{r'_n}=\vec{s'_n}=\emptyset$.\qedhere
    \end{itemize} 
\end{proof}

\begin{lem}
  \label{lem:1_can_elim-simpler}
  If there is a M-free CAN-free derivation of  $G \sep \vdash \Gamma, \vec{r}.1 , \vec{s}.\negF{1}$, where $\sum \vec{r} = \sum \vec{s}$ then there exists a M-free CAN-free derivation of $G \sep \vdash \Gamma$.
  \end{lem}
  \begin{proof}
    The statement follows as a special case of Lemma~\ref{lem:1_can_elim} below, a stronger version of Lemma~\ref{lem:1_can_elim-simpler} that allows for a simpler proof by induction on the structure of the derivation of $G \sep \vdash \Gamma, \vec{r}.1 , \vec{s}.\negF{1}$, where:
    \begin{itemize}
    \item $[\vdash \Gamma_i]_{i=1}^{n-1} = G$ and $\Gamma_n = \Gamma$.
    \item $\vec{r_i}=\vec{r'_i}=\vec{s_i}=\vec{s'_i}=\emptyset$ for $1 \leq i < n$.
    \item $\vec{r_n}=\vec{r}$, $\vec{s_n}=\vec{s}$ and $\vec{r'_n}=\vec{s'_n}=\emptyset$.\qedhere
    \end{itemize} 
\end{proof}

We are now ready to prove the general case.

\begin{lem}
  \label{lem:gen_can_elim}
  For all terms $A$ and numbers $n > 0$ and for all sequents $\Gamma_i$ and vectors $\vec{r}_i,\vec{s}_i$ such that $\sum \vec{r}_i = \sum \vec{s}_i$,  for $1 \leq i \leq n$, 
  \begin{center}
    if $[\vdash \Gamma_i, \vec{r}_i.A,\vec{s}_i.\negF{A} ]_{i=1}^n$ has a M-free CAN-free derivation, then so does $[\vdash \Gamma_i ]_{i=1}^n$.
  \end{center}
\end{lem}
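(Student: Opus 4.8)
The statement generalizes the CAN--elimination theorem to hypersequents where the cancelled formula $A$ appears (with its negation) in \emph{several} sequents at once, with a balancing condition $\sum\vec{r}_i = \sum\vec{s}_i$ in each sequent. The plan is to proceed by a double induction: the outer induction is on the structure of the formula $A$, and for each fixed $A$ the inner induction is on the given CAN--free and M--free derivation $d$ of $[\vdash \Gamma_i, \vec{r}_i.A,\vec{s}_i.\negF{A}]_{i=1}^n$. First I would dispatch the base cases: when $A = x$ (or $A = \covar{x}$) the result is exactly Lemma \ref{lem:atomic_can_elim} (invoked via Lemma \ref{lem:atomic_can_elim-simpler}), and when $A = 1$ (or $A = \covar{1}$) it is Lemma \ref{lem:1_can_elim}. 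Note the appeal to the M--elimination Theorem \ref{thm:m_elim} is harmless here since we are assuming $d$ is already M--free and all constructions below preserve M--freeness (the only delicate rule, $\Diamond$, does not reintroduce M).

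\textbf{Inductive cases for non-$\Diamond$ connectives.} When the outermost connective of $A$ is in $\{0, +, \times, \sqcup, \sqcap\}$, I would \emph{not} recurse on the derivation but instead invoke the CAN--free invertibility Theorem \ref{thm:invertibility} (in its generalized multi-sequent form from Section \ref{subsec:invertibility}) to push the connective through. For instance, if $A = B + C$, then $\negF{A} = \negF{B} + \negF{C}$, and applying CAN--free invertibility of the $(+)$ rule twice to $[\vdash \Gamma_i, \vec{r}_i.(B+C), \vec{s}_i.(\negF{B}+\negF{C})]_{i=1}^n$ yields a CAN--free M--free derivation of $[\vdash \Gamma_i, \vec{r}_i.B, \vec{r}_i.C, \vec{s}_i.\negF{B}, \vec{s}_i.\negF{C}]_{i=1}^n$; since $\sum\vec{r}_i = \sum\vec{s}_i$, applying the outer induction hypothesis first on $B$ (absorbing $\vec{r}_i.C$ and $\vec{s}_i.\negF{C}$ into a larger context $\Gamma_i$) and then on $C$ gives the result. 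The cases $A = 0$, $A = r'B$, $A = B\sqcup C$, $A = B\sqcap C$ are entirely analogous to the corresponding cases in the proof of Theorem \ref{thm:modal_free_can_elim} in Section \ref{subsec:modal_free_can_elim}; for $\sqcup$ and $\sqcap$ one additionally uses that $\negF{B\sqcup C} = \negF{B}\sqcap\negF{C}$ so the CAN--free invertibility of both $\sqcup$ and $\sqcap$ is needed, and a final contraction ($C$) merges the duplicated sequents.

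\textbf{The $\Diamond$ case --- the main obstacle.} When $A = \Diamond B$ the connective cannot be pushed through by invertibility, since the $\Diamond$ rule has the rigid shape requiring a single-sequent hypersequent all of whose formulas are $1$, $\covar 1$, or $\Diamond$-formulas. Here I would switch to the inner induction on the derivation $d$. If the last rule of $d$ acts on $[\vdash \Gamma_i, \vec{r}_i.\Diamond B, \vec{s}_i.\Diamond\negF{B}]_{i=1}^n$ and is any rule other than $\Diamond$ or M, I apply the inner induction hypothesis to the premise(s) --- each premise still has the cancelled $\Diamond B/\Diamond\negF{B}$ pairs with a balanced condition, possibly redistributed across sequents (the side conditions of structural rules like S and ID preserve the per-sequent balance appropriately) --- and then reapply the last rule. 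The M rule is the critical difficulty, exactly as flagged in the proof sketch of Theorem \ref{thm:can_elim}: an M rule may split the two copies of a $\Diamond B$ across the two premises with unequal weights, breaking the balance; but since $d$ is M--free by assumption (Theorem \ref{thm:m_elim}), this case simply does not arise. The remaining case is when $d$ ends with a $\Diamond$ rule; by the constraints of that rule the hypersequent is a single sequent of the form $\vdash \Diamond\Gamma, \vec{r}.\Diamond B, \vec{s}.\Diamond\negF{B}, \vec{r'}.1, \vec{s'}.\covar{1}$, and the premise is $\vdash \Gamma, \vec{r}.B, \vec{s}.\negF{B}, \vec{r'}.1, \vec{s'}.\covar{1}$ with $\sum\vec{r} + \sum\vec{r'} \geq \sum\vec{s} + \sum\vec{s'}$ and $\sum\vec{r} = \sum\vec{s}$ (so the $1/\covar 1$ proviso already holds for the premise). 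Now $B$ has strictly smaller complexity than $\Diamond B$, so the \emph{outer} induction hypothesis applies to the premise, yielding a CAN--free M--free derivation of $\vdash \Gamma, \vec{r'}.1, \vec{s'}.\covar{1}$; one final application of the $\Diamond$ rule (using $\sum\vec{r'} \geq \sum\vec{s'}$, which follows from the two inequalities above) produces $\vdash \Diamond\Gamma, \vec{r'}.1, \vec{s'}.\covar{1}$, i.e. the desired $[\vdash \Gamma_i]_{i=1}^n$ in this single-sequent case. This completes both inductions and hence the proof; the CAN elimination Theorem \ref{thm:can_elim} is the instance of Lemma \ref{lem:gen_can_elim} with $n=1$ combined with the trivial absorption of a context hypersequent $G$ as a disjoint block of sequents with empty $\vec{r}_i,\vec{s}_i$.
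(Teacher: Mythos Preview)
Your proposal is correct and follows essentially the same approach as the paper: double induction on $(A,d)$, with the non-$\Diamond$ connectives handled via CAN--free invertibility (outer induction on $A$) and the $\Diamond$ case handled by the inner induction on $d$, terminating when the last rule is itself $\Diamond$. One small slip: in the terminal $\Diamond$ case the side condition of the $\Diamond$ rule involves only the $1/\covar{1}$ weights, so the proviso is directly $\sum\vec{r'}\geq\sum\vec{s'}$ (the $\Diamond B$ and $\Diamond\negF{B}$ weights $\vec{r},\vec{s}$ do not enter it), making your subsequent derivation of that inequality unnecessary, though harmless.
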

\begin{proof}
  For the basic cases $A = x$, $A = \covar{x}$, $A = 1$ and $A = \covar{1}$, we use Lemmas~\ref{lem:atomic_can_elim-simpler} and~\ref{lem:1_can_elim-simpler}.
  For complex terms $A$ which are not $\Diamond$ terms, we proceed by invoking the CAN-free invertibility Theorem~\ref{thm:invertibility} as follows:
  \begin{itemize}
  \item If $A = 0$, we can conclude with the CAN-free invertibility of the rule $0$.
  \item If $A = B + C$, since the $+$ rule is CAN-free invertible, $\left [ \vdash \Gamma_i, \vec{r_i}.B,\vec{r_i}.C, \vec{s_i}.\negF{B},\vec{s_i}.\negF{C} \right ]$ has a CAN-free, M-free derivation. Therefore we can have a CAN-free derivation of the hypersequent $[\vdash \Gamma_i ]_{i=1}^n$ by invoking the induction hypothesis twice, since the complexity of $B$ and $C$ is lower than that of $B + C$.
  \item If $A = r'B$, since the $\times$ rule is CAN-free invertible, $\left [ \vdash \Gamma_i, (r'\vec{r_i}.).B, (r'\vec{s_i}.).\negF{B} \right ]$ has a CAN-free, M-free derivation. Therefore we can have a CAN-free derivation of the hypersequent $[\vdash \Gamma_i ]_{i=1}^n$ by invoking the induction hypothesis on the simpler term $B$.
  \item If $A = B \sqcup C$, since the $\sqcup$ rule is CAN-free invertible, $$\left [ \vdash \Gamma_i, \vec{r_i}.B  , \vec{s_i}.(\negF{B} \sqcap \negF{C}) \right ] \sep \left [ \vdash \Gamma_i, \vec{r_i}.C  , \vec{s_i}.(\negF{B} \sqcap \negF{C}) \right ] $$ has a CAN-free, M-free derivation. Then since the $\sqcap$ is CAN-free invertible, $$\left [ \vdash \Gamma_i, \vec{r_i}.B  , \vec{s_i}.\negF{B} \right ] \sep \left [ \vdash \Gamma_i, \vec{r_i}.C  , \vec{s_i}.\negF{C} \right ] $$  has a CAN-free, M-free derivation. Therefore we can obtain a CAN-free derivation of the hypersequent $[\vdash \Gamma_i ]_{i=1}^n$ by invoking the induction hypothesis twice on the simpler terms $B$ and $C$.
  \item If $A = B \sqcap C$, we proceed in a similar way as for the case $A = B \sqcup C$.
  \item Finally, if $A = \Diamond B$, we distinguish two cases:
  \begin{enumerate}
  \item the derivation ends with an application of the $\Diamond$ rule which simplifies $A=\Diamond B$ to $B$. In this case we can simply conclude by invoking the induction hypothesis on $B$.
  \item The derivation ends with some other rule (recall that no CAN rules and no M rules appear in the derivation). In this case we decrease the complexity of the derivation, keeping $\Diamond B$ as the CAN term, and then invoke the induction hypothesis on the derivation having reduced complexity. This proof step is rather long to prove, as it requires analysing all possible cases. We just illustrate the two cases when the derivation ends with a logical rule $(+)$ and a structural rule (C)  to illustrate the general method.
    \begin{itemize}
    \item if the derivation finishes with
      \[ \infer[+]{[\vdash \Gamma_i, \vec{r}_i.\Diamond B,\vec{s}_i.\Diamond\negF{B} ]_{i=2}^n \sep \vdash \Gamma_1, \vec{r}_1.\Diamond B, \vec{s}_1.\Diamond\negF{B}, \vec{r'}.(C + D)}{[\vdash \Gamma_i, \vec{r}_i.\Diamond B,\vec{s}_i.\Diamond\negF{B} ]_{i=2}^n \sep \vdash \Gamma_1, \vec{r}_1.\Diamond B, \vec{s}_1.\Diamond\negF{B}, \vec{r'}.C, \vec{r'}.D} \]
      by induction hypothesis, there is a CAN-free M-free derivation of
      \[ [\vdash \Gamma_i]_{i=2}^n \sep \vdash \Gamma_1, \vec{r'}.C, \vec{r'}.D \]
      We continue the derivation with
      \[ \infer[+]{[\vdash \Gamma_i ]_{i=2}^n \sep \vdash \Gamma_1,\vec{r'}.(C + D)}{[\vdash \Gamma_i]_{i=2}^n \sep \vdash \Gamma_1, \vec{r'}.C, \vec{r'}.D} \]
    \item if the derivation finishes with
      \[ \infer[\text{C}]{[\vdash \Gamma_i, \vec{r}_i.\Diamond B,\vec{s}_i.\Diamond\negF{B} ]_{i=2}^n \sep \vdash \Gamma_1, \vec{r}_1.\Diamond B, \vec{s}_1.\Diamond\negF{B}}{[\vdash \Gamma_i, \vec{r}_i.\Diamond B,\vec{s}_i.\Diamond\negF{B} ]_{i=2}^n \sep \vdash \Gamma_1, \vec{r}_1.\Diamond B, \vec{s}_1.\Diamond\negF{B} \sep \vdash \Gamma_1, \vec{r}_1.\Diamond B, \vec{s}_1.\Diamond\negF{B}} \]
      by induction hypothesis, there is a CAN-free M-free derivation of
      \[ [\vdash \Gamma_i]_{i=2}^n \sep \vdash \Gamma_1 \sep \vdash \Gamma_1 \]
      We continue the derivation with
      \[ \infer[\text{C}]{[\vdash \Gamma_i ]_{i=2}^n \sep \vdash \Gamma_1}{[\vdash \Gamma_i]_{i=2}^n \sep \vdash \Gamma_1 \sep \vdash \Gamma_1} \qedhere\]
    \end{itemize}
  \end{enumerate} 
  \end{itemize}
\end{proof}

We now have all necessary tools to prove the CAN-elimination theorem.
\begin{proof}[Proof of Theorem~\ref{thm:can_elim}]
  We want to prove that if $G$ has a derivation, then $G$ has a CAN-free derivation. We prove this result by induction on the derivation of $G$:
  \begin{itemize}
  \item If the derivation finishes with an application of a rule that is not the CAN-rule, then by induction, the premises have CAN-free derivations and we can conclude by using the exact same rule to obtain a CAN-free derivation of $G$. For
  \item If the derivation finishes with
    $$\infer[\text{CAN}, \sum \vec{r} = \sum \vec{s}]{G \sep \vdash \Gamma}{G \sep \vdash \Gamma, \vec{s}.A , \vec{r}.\negF{A}} $$
    then by induction $G \sep \vdash \Gamma, \vec{s}.A , \vec{r}.\negF{A}$ has a CAN-free derivation. By invoking the M-elimination Theorem~\ref{thm:m_elim}, $G \sep \vdash \Gamma, \vec{s}.A , \vec{r}.\negF{A}$ has a CAN-free M-free derivation and we can conclude by using Lemma~\ref{lem:gen_can_elim}.\qedhere
  \end{itemize}
\end{proof}

Finally, we prove Lemma~\ref{lem:atomic_can_elim} and Lemma~\ref{lem:1_can_elim}, the stronger versions of Lemma~\ref{lem:atomic_can_elim-simpler} and Lemma~\ref{lem:1_can_elim-simpler}.
\begin{lem}
  \label{lem:atomic_can_elim}
  If there is a CAN-free, M-free derivation of $\left[ \vdash \Gamma_i, \vec{r_i}.x, \vec{s_i}.\covar{x} \right]_{i=1}^n$ then for all $\vec{r'_i}.$ and $\vec{s'_i}.$ such that for all $i$,$\sum \vec{r}_i - \sum \vec{s}_i = \sum \vec{r'}_i - \sum \vec{s'}_i$, there is a CAN-free, M-free derivation of $\left[ \vdash \Gamma_i, \vec{r'_i}.x, \vec{s'_i}.\covar{x} \right]_{i=1}^n$.
\end{lem}

\begin{proof}
  The proof is done by induction on the derivation and is similar to the proof of Lemma~\ref{lem:modal_free_atomic_can_elim}.
\end{proof}

\begin{lem}
  \label{lem:1_can_elim}
  If there is a CAN-free, M-free derivation of $\left[ \vdash \Gamma_i, \vec{r_i}.1, \vec{s_i}.\covar{1} \right]_{i=1}^n$ then for all $\vec{r'_i}.$ and $\vec{s'_i}.$ such that for all $i$,$\sum \vec{r}_i - \sum \vec{s}_i \leq \sum \vec{r'}_i - \sum \vec{s'}_i$, there is a CAN-free, M-free derivation of $\left[ \vdash \Gamma_i, \vec{r'_i}.1, \vec{s'_i}.\covar{1} \right]_{i=1}^n$.
\end{lem}

\begin{proof}
  By induction on derivation. We show only the non-trivial case.
  \begin{itemize}
  \item If the derivation finishes with:
    \[ \infer[1, \sum \vec{a} + \sum \vec{b} \geq \sum \vec{a'} + \vec{b'}]{ \left [ \vdash \Gamma_i,\vec{r_i}. 1 ,\vec{s_i}. \covar{1} \right ]_{i\geq2} \sep \vdash \Gamma_1,(\app{\vec{a}}{\app{\vec{b}}{\vec{c}}}).1,(\app{\vec{a'}}{\app{\vec{b'}}{\vec{c'}}}).\covar{1}}{ \left [ \vdash \Gamma_i,\vec{r_i}. 1  ,\vec{s_i}. \covar{1} \right ]_{i\geq2} \sep \vdash \Gamma_1,\vec{c} 1 , \vec{c'} \covar{1}} \]
    with $\vec{r_1} = \app{\vec{b}}{\vec{c}}$ and $\vec{s_1} = \app{\vec{b'}}{\vec{c'}}$.
    We want to show that $$\proveM \left [ \vdash \Gamma_i,\vec{r'_i}. 1  ,\vec{s'_i}. \covar{1} \right ]_{i\geq2} \sep \vdash \Gamma_1,(\app{\vec{r'_1}}{\vec{a}}).1  ,(\app{\vec{s'_1}}{\vec{a'}}). \covar{1}$$
    We will now prove that $\sum \vec{c} - \sum \vec{c'} \leq \sum \vec{r'}_{1} + \sum \vec{a} - (\sum \vec{s'}_{1} + \sum \vec{a'})$ to be able to conclude with the induction hypothesis.
    \begin{eqnarray*}
      \sum \vec{c} - \sum \vec{c'} & = & (\sum \vec r_1 - \sum\vec b) - (\sum \vec s_1 - \sum\vec b') \\
                                   & = & (\sum \vec r_1 - \sum \vec s_1) + (\sum \vec b ' - \sum \vec b) \\
                                   & \leq & (\sum \vec {r'}_1 - \sum \vec {s'}_1) + (\sum \vec a - \sum \vec a') \\
                                   & = & \sum \vec{r'}_{1} + \sum \vec{a} - (\sum \vec{s'}_{1} + \sum \vec{a'})
    \end{eqnarray*} so by induction hypothesis \[\proveM \left [ \vdash \Gamma_i,\vec{r'_i}. 1  ,\vec{s'_i}. \covar{1} \right ]_{i\geq2} \sep \vdash \Gamma_1,(\app{\vec{r'_1}}{\vec{a}}).1  ,(\app{\vec{s'_1}}{\vec{a'}}). \covar{1} \]
    which is the result we want.\qedhere
  \end{itemize}
\end{proof}

%%% Local Variables:
%%% mode: latex
%%% TeX-master: "main"
%%% End:

\subsection{Decidability -- Proof of Theorem~\ref{thm:decidability}}
\label{subsec:decidability}

In this section we adapt the algorithm presented in Section~\ref{subsec:modal_free_decidability} and prove the decidability of the  \hmr\ system.  

The procedure takes  a hypersequent $G$, where scalars are polynomials over scalar-variables $\vec{\alpha}$ as coefficients in weighted terms, and construct a formula $\phi_G(\vec{\alpha}) \in FO(\mathbb{R},+,\times,\leq)$ in the language of the first order theory of the reals. The procedure is recursive and terminates because each recursive call decreases the logical complexity and the modal complexity (i.e., the maximal modal depth of any terms) of its input $G$. The key property is that a sequence of scalars $\vec{s}\in\mathbb{R}$ satisfies $\phi_G$ if and only if the hypersequent  $G[s_j/\alpha_j]$ is derivable in the system \hmr. The decidability then follows from the well-known fact that the theory $FO(\mathbb{R},+,\times,\leq)$ admits quantifier elimination and is decidable~\cite{tarski1951,GRIGOREV198865}.

The algorithm to construct $\phi_G$ takes as input $G$ and proceeds as follows: 
\begin{enumerate}
\item if $G$ is not a basic hypersequent (i.e., if it contains any complex term whose outermost connective is not $\Diamond$ or $1$ or $\overline{1}$), then the algorithm returns 
$$\phi_G = \bigwedge^{n}_{i=1} \phi_{G_i}$$
where $G_1,\dots, G_n$ are the basic hypersequents obtained by iteratively applying the logical rules, and $\phi_{G_i}$ is the formula recursively computed by the algorithm on input $G_i$.
\item if $G$ has the shape $\vdash$ then $\phi_G = \top$.
\item if $G$ is a basic hypersequent which is not $\vdash$ then $G$ has the shape \[\vdash \Gamma_1, \Diamond \Delta_1,\vec{R'}_1.1,\vec{S'}_1.\covar{1} \sep ... \sep \vdash \Gamma_m, \Diamond \Delta_m,\vec{R'}_m.1,\vec{S'}_m.\covar{1}\] where $\Gamma_i =  \vec{R}_{i,1}.x_1,...,\vec{R}_{i,k}.x_k, \vec{S}_{i,1}.\covar{x_1},...,\vec{S}_{i,k}.\covar{x_{k}}$. For all $I \subsetneq [1...m]$, we define:
  \begin{itemize}
  \item A formula $Z_I(\beta_1,...,\beta_m)$ that states that for all $i \in I$, $\beta_i = 0$.
    \[ Z_I(\beta_1,...,\beta_m) = \bigwedge_{i\in I} (\beta_i = 0) \]
  \item A formula $NZ_I(\beta_1,...,\beta_m)$ that states that for all $i \notin I$, $0 < \beta_i$.
    \[ NZ_I(\beta_1,...,\beta_m) = \bigwedge_{i \notin I} (0 \leq \beta_i) \wedge \neg(\beta_i = 0) \]
  \item A formula $A_I(\beta_1,...,\beta_m)$ that states that all the atoms cancel each other.
    \[ A_I(\beta_1,...,\beta_m) =  \bigwedge_{j=0}^k (\sum_{i=1}^m \beta_i \sum \vec{R}_{i,j} = \sum_{i=1}^m \beta_i \sum \vec{S}_{i,j})\]
  \item A formula $O_I(\beta_1,...,\beta_m)$ that states that there are more $1$ than $\covar{1}$, 
    \[ O_I(\beta_1,...,\beta_m) = \sum_{i=1}^m \beta_i \sum \vec{S'}_{i} \leq \sum_{i=1}^m \beta_i \sum \vec{R'}_{i} \]
  \item A hypersequent $H_I(\beta_1,...,\beta_m)$ which is the result of cancelling the atoms using $\beta_1,...,\beta_m$ and then using the $\Diamond$ rule, i.e. is the leaf of the following prederivation:
    \[ \scalebox{0.85}{\infer[\text{T}^*]{\vdash \Gamma_{k_1}, \Diamond \Delta_{k_1},\vec{R'}_{k_1}.1,\vec{S'}_{k_1}.\covar{1} \sep ... \sep \vdash \Gamma_{k_l}, \Diamond \Delta_{k_l},\vec{R'}_{k_l}.1,\vec{S'}_{k_l}.\covar{1}}{\infer[\text{S}^*]{\vdash \beta_{k_1}.\Gamma_{k_1}, \beta_{k_1}.\Diamond \Delta_{k_1},(\beta_{k_1}\vec{R'}_{k_1}).1,(\beta_{k_1}\vec{S'}_{k_1}).\covar{1} \sep ... \sep \vdash  \beta_{k_l}.\Gamma_{k_l}, \beta_{k_l}.\Diamond \Delta_{k_l},(\beta_{k_l}\vec{R'}_{k_l}).1,(\beta_{k_l}\vec{S'}_{k_l}).\covar{1}}{\infer[\text{ID}^*]{\vdash \beta_{k_1}.\Gamma_{k_1}, \beta_{k_1}.\Diamond \Delta_{k_1},(\beta_{k_1}\vec{R'}_{k_1}).1,(\beta_{k_1}\vec{S'}_{k_1}).\covar{1} ,..., \beta_{k_l}.\Gamma_{k_l}, \beta_{k_l}.\Diamond \Delta_{k_l},(\beta_{k_l}\vec{R'}_{k_l}).1,(\beta_{k_l}\vec{S'}_{k_l}).\covar{1}}{\infer[\Diamond]{\vdash \beta_{k_1}.\Diamond \Delta_{k_1},(\beta_{k_1}\vec{R'}_{k_1}).1,(\beta_{k_1}\vec{S'}_{k_1}).\covar{1} ,..., \beta_{k_l}.\Diamond \Delta_{k_l},(\beta_{k_l}\vec{R'}_{k_l}).1,(\beta_{k_l}\vec{S'}_{k_l}).\covar{1}}{\vdash \beta_{k_1}. \Delta_{k_1},(\beta_{k_1}\vec{R'}_{k_1}).1,(\beta_{k_1}\vec{S'}_{k_1}).\covar{1},..., \beta_{k_l}. \Delta_{k_l},(\beta_{k_l}\vec{R'}_{k_l}).1,(\beta_{k_l}\vec{S'}_{k_l}).\covar{1}}}}} } \]
    where $\{k_1,...,k_l\}=[1..m]\backslash I$.    
    \item The formula $\phi_{H_I(\beta_1,...,\beta_m)}$ computed recursively from $H_I(\beta_1,...,\beta_m)$ above.
     
    \item A formula $\phi_{G,I}$ that corresponds to $\phi_{G'}$ where $G'$ is the hypersequent obtained on using the W rule on all $i$-th sequents for $i \in I$, i.e. the leaf of the following prederivation:
      \[ \infer[\text{W}^*]{\vdash \Gamma_{1}, \Diamond \Delta_{1},\vec{R'}_{1}.1,\vec{S'}_{1}.\covar{1} \sep ... \sep \vdash \Gamma_{m}, \Diamond \Delta_{m},\vec{R'}_{m}.1,\vec{S'}_{m}.\covar{1}}{\vdash \Gamma_{k_1}, \Diamond \Delta_{k_1},\vec{R'}_{k_1}.1,\vec{S'}_{k_1}.\covar{1} \sep ... \sep \vdash \Gamma_{k_l}, \Diamond \Delta_{k_l},\vec{R'}_{k_l}.1,\vec{S'}_{k_l}.\covar{1}} \] with $\{k_1,...,k_l\} =[1..m]\backslash I $. Then $\phi_{G,I}=$
    \[ \exists \beta_1,...,\beta_m, Z_I(\beta_1,...,\beta_m) \wedge NZ_I(\beta_1,...,\beta_m) \wedge A_I(\beta_1,...,\beta_m)\wedge O_I(\beta_1,...,\beta_m) \wedge \phi_{H_I(\beta_1,...,\beta_m)}  \]
  \end{itemize}
  Finally, we return $\phi_G$ defined as follows:
    \[ \phi_G = \bigvee_{I \subsetneq [1...m]} \phi_{G,I} \]
\end{enumerate}

The following theorem states the correctness of the above described algorithm.
\begin{thm}
  \label{thm:to_FO}
Let $G$ be a hypersequent having polynomials $R_1,\dots, R_k\in\mathbb{R}[\vec{\alpha}]$ over scalar-variables $\vec{\alpha}$. Let $\phi_G(\vec{\alpha})$ be the formula returned by the algorithm described above on input $G$. Then, for all $\vec{s}\in\mathbb{R}$ such that for all $i\in[1..k], R_i(\vec{s}) > 0$, the following are equivalent:
\begin{enumerate}
\item $\phi_G(\vec{s})$ holds in $\mathbb{R}$,
\item $G[s_j/\alpha_{j}]$ is derivable in \hmr.
\end{enumerate}
\end{thm}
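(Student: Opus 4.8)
\textbf{Proof plan for Theorem \ref{thm:to_FO}.}

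The plan is to mirror, as closely as possible, the structure of the proof of Theorem \ref{thm:modal_free_to_FO} for the system \hr, but now using a \emph{double induction}: the outer induction is on the modal complexity of $G$ (the maximal modal depth of any formula occurring in $G$), and the inner induction is on the logical complexity of $G$ (the number of non-modal connectives). This is forced by the fact that, unlike the \hr\ case, the recursive calls in the algorithm are of two different kinds: calls of the first kind (step (1)) strip away a non-modal connective, decreasing logical complexity while leaving modal complexity unchanged; calls of the second kind (the $\phi_{H_I(\vec\beta)}$ subformulas in step (3)) arise after an application of the $\Diamond$ rule, hence strictly decrease modal complexity. So both recursion measures are well-founded and the argument terminates.

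First I would handle the non-basic case (step (1)), which is essentially identical to the \hr\ case: given $\vec s$ with all $R_i(\vec s)>0$, the CAN--free invertibility Theorem \ref{thm:invertibility} (for the rules $\{0,+,\times,\sqcup,\sqcap\}$) guarantees that $G[s_j/\alpha_j]$ is derivable in \hmr\ if and only if all $G_i[s_j/\alpha_j]$ are, where $G_1,\dots,G_n$ are the basic hypersequents obtained by exhaustively applying the non-modal logical rules (Figure \ref{fig:algo_to_inv}). By the inner induction hypothesis each $G_i$ satisfies the theorem, and since $\phi_G = \bigwedge_{i=1}^n \phi_{G_i}$, the equivalence transfers. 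The trivial case $G = {\vdash}$ (step (2)) is immediate since that hypersequent is derivable by INIT and $\phi_G = \top$.

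The main work is the basic, non-trivial case (step (3)), and this is where I expect the real obstacle to lie. The key structural fact is the CAN--free, T--free characterization of derivability of basic hypersequents given by Lemma \ref{lem:int_lambda_prop}: $G$ has a CAN--free T--free derivation iff there are natural numbers $n_1,\dots,n_m$, not all zero, balancing every variable/covariable pair, with $\sum n_i\sum\vec{s'}_i \le \sum n_i\sum\vec{r'}_i$, and such that the single sequent $\vdash \Delta_1^{n_1},\dots,(\vec{s'}_m.\covar 1)^{n_m}$ is itself CAN--free T--free derivable. By CAN elimination (Theorem \ref{thm:can_elim}), derivability in \hmr\ coincides with CAN--free derivability; and since T is admissible (the corollary to Lemma \ref{lem:completeness_aux}, noting also that the rational-coefficient situation does not arise here because we work over $\mathbb R$), I would need the real-coefficient analogue of Lemma \ref{lem:int_lambda_prop} in which the $n_i$ are replaced by nonnegative reals $\beta_i$ — this is the natural modal counterpart of Lemma \ref{lem:modal_free_lambda_prop} and should be provable by the same induction (adding the $1$ and $\Diamond$ cases, exactly as in Lemma \ref{lem:int_lambda_prop}). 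Granting that, the formula $\phi_{G,I}$ precisely encodes: ``choosing the zero-coefficients to be exactly those indexed by $I$, there exist $\beta_1,\dots,\beta_m \ge 0$ (strictly positive off $I$) satisfying the atom-cancellation equations $A_I$ and the $1$-vs-$\covar 1$ inequality $O_I$, such that the residual single-sequent hypersequent $H_I(\vec\beta)$ — obtained by the T/S/ID/$\Diamond$ prederivation shown — is derivable,'' the last conjunct being $\phi_{H_I(\vec\beta)}$ which is available by the outer (modal-complexity-decreasing) induction hypothesis since $H_I(\vec\beta)$ has strictly smaller modal complexity. Taking the disjunction over all proper $I \subsetneq [1..m]$ (properness ensuring ``not all $\beta_i$ zero'') then yields that $\phi_G(\vec s)$ holds iff $G[s_j/\alpha_j]$ is derivable. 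The delicate points I anticipate are: (i) carefully checking that the T rule applications in the $H_I$ prederivation are legitimate when some $\beta_{k_j}$ is an arbitrary positive real (handled as in Lemma \ref{lem:modal_free_genT2}), and that the $\Diamond$ rule proviso $\sum(\beta_{k_j}\vec{R'}_{k_j}) \ge \sum(\beta_{k_j}\vec{S'}_{k_j})$ is exactly what $O_I$ asserts; (ii) substitution commutes correctly with the algorithm, i.e. $\phi_{G[s_j/\alpha_j]}$ and $\phi_G(\vec s)$ define the same set — this is a routine but tedious bookkeeping point; and (iii) making the real-coefficient version of Lemma \ref{lem:int_lambda_prop} watertight in the $\Diamond$ base case, where one must verify that the residual-sequent condition is preserved under scaling by reals rather than integers.
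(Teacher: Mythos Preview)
Your proposal is essentially correct and follows the same approach as the paper: reduce to basic hypersequents via CAN--free invertibility, handle the trivial case $\vdash$, and in the basic case invoke the real--coefficient characterization of derivability (the paper states this as Lemma \ref{lem:lambda_prop}, exactly the ``real--coefficient analogue of Lemma \ref{lem:int_lambda_prop}'' you anticipate), with the recursive call on $H_I$ justified by the decrease in modal complexity. One remark: your digression through T--admissibility is a red herring and in fact touches the open problem of Section~\ref{open:problem} (CAN--free plus T--free elimination is \emph{not} known for \hmr); the correct route, which you do land on, is to prove Lemma \ref{lem:lambda_prop} directly for arbitrary (CAN--free, via Theorem \ref{thm:can_elim}) derivations with real coefficients $t_i$, where the T rule is used freely in the $(2)\Rightarrow(1)$ direction rather than eliminated.
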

\begin{proof}
  As in Theorem~\ref{thm:modal_free_to_FO}, by using the CAN-free invertibility Theorem~\ref{thm:invertibility}, we can assume that $G$ is a basic hypersequent. If $G$ has the shape $\vdash$, the result is trivial. Otherwise, the result is a direct corollary of Lemma~\ref{lem:lambda_prop} below since the formula $NZ_I$ corresponds to the first property, the formula $A_I$ corresponds to the second property, the formula $O_I$ corresponds to the third one and the formula $\phi_{H_I}$ corresponds to the last one.
\end{proof}

Even though the problem is decidable, the algorithm described previously is non elementary since the size of the formula $\phi_G$ can not be bound by a finite tower of exponentials.
\begin{lem}
  Let $A_n$ be defined by induction on $n$ as follows:
  \begin{itemize}
  \item $A_0 = x$ for some variable $x$
  \item $A_{n+1} = \Diamond A_n \sqcup \Diamond A_n$
  \end{itemize}
  For $i \in \mathbb{N}$, let $G_i=\ \vdash 1.A_i$. Then for all $i$, $\phi_{G_i}$ has at least $\underbrace{2^{2^{\iddots^{2}}}}_{i\text{ times}}$ existentials (with the convention $\underbrace{2^{2^{\iddots^{2}}}}_{0\text{ times}} = 1$), i.e., each use of the $\Diamond$ rule will add one exponential to the number of existentials.
\end{lem}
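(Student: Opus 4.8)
The plan is to show, by induction on $i$, that the formula $\phi_{G_i}$ produced by the algorithm described above (on input the one--sequent hypersequent $G_i = \ \vdash 1.A_i$) has a number of existential quantifiers at least $\underbrace{2^{2^{\iddots^{2}}}}_{i\text{ times}}$. The crucial observation is that $A_{n+1} = \Diamond A_n \sqcup \Diamond A_n$, so the first step of the algorithm applied to $G_{n+1}$ will eliminate the outermost $\sqcup$ via the (CAN--free invertible) $\sqcup$ rule, producing a hypersequent with \emph{two} copies of the sequent $\vdash 1.\Diamond A_n$, that is, the basic hypersequent $H = \ \vdash 1.\Diamond A_n \sep \vdash 1.\Diamond A_n$, and by construction $\phi_{G_{n+1}} = \phi_{H}$. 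The point is then to analyse $\phi_H$: $H$ is a basic hypersequent with $m = 2$ sequents, each of the form $\vdash 1.\Diamond A_n$, no atoms, no $1$ or $\covar 1$ terms. Hence for each $I \subsetneq [1..2]$ the conjuncts $Z_I, NZ_I, A_I, O_I$ are trivial to satisfy, and the interesting conjunct is $\phi_{H_I(\beta_1,\beta_2)}$, where $H_I$ collects the $\Diamond$--bodies of the non--zeroed sequents. Taking $I = \emptyset$, we get $H_\emptyset(\beta_1,\beta_2) = \ \vdash \beta_1.A_n, \beta_2.A_n$ (a single sequent after applying the $\Diamond$ rule), so $\phi_{H,\emptyset}$ contains $\phi_{H_\emptyset(\beta_1,\beta_2)}$ which is itself the formula the algorithm computes on a hypersequent whose ``$A_n$--content'' is doubled. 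Iterating this doubling through the $n$ nested $\Diamond$'s is what produces the tower.

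The key steps, in order, are as follows. First, I would make precise the claim that for the specific family $G_i$ the algorithm's recursion essentially proceeds by stripping one $\sqcup$ (doubling the number of sequents) and then one $\Diamond$ (merging the bodies into one sequent with a sum of two scaled copies of $A_{i-1}$), and that at each such step $\phi_G$ is syntactically (at least) $\phi_{G,\emptyset}$, which contains $\phi_{H_\emptyset(\beta_1,\dots,\beta_m)}$ as a subformula under an $\exists\beta_1,\dots,\beta_m$ prefix. Second, I would set up a suitable induction hypothesis that tracks not just $G_i$ but, more generally, hypersequents of the form $\vdash r_1.A_i \sep \dots \sep \vdash r_p.A_i$ (or even $\vdash$ a single sequent containing several scaled copies of $A_i$), recording a lower bound on the number of existentials of the associated $\phi$ as a function of $i$ and the number $p$ of copies/sequents. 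The recursion relates the bound at level $i+1$ with $p$ copies to the bound at level $i$ with $2p$ copies (because each $\sqcup$--elimination doubles and the $\Diamond$--rule is applied once, introducing $m = 2p$ fresh $\beta$--existentials before the recursive call). Third, I would unfold this recurrence: starting from $p = 1$ at level $i$, after $i$ steps one reaches $2^i$ copies at level $0$, and the fresh existentials introduced at step $j$ number roughly $2^j$, but — and this is the source of the tower rather than a mere exponential — the \emph{recursive} call at each step is on a formula that already has a tower--sized existential prefix, so the counts compose multiplicatively/exponentially, yielding $\underbrace{2^{2^{\iddots^2}}}_{i}$. Finally I would check the base case $i = 0$: $G_0 = \ \vdash 1.x$ is basic with one sequent and one atom, so $\phi_{G_0}$ is (equivalent to) a quantifier--free formula, consistent with the convention $\underbrace{2^{2^{\iddots^2}}}_{0} = 1$; and $i=1$: $G_1 = \ \vdash 1.(\Diamond x \sqcup \Diamond x)$ reduces to $\vdash 1.\Diamond x \sep \vdash 1.\Diamond x$, $m = 2$, giving at least $2 = 2^{2^0}$... wait, more carefully $\underbrace{2}_{1\text{ time}} = 2$, matching the $m=2$ existentials $\beta_1,\beta_2$.

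The main obstacle I anticipate is \emph{bookkeeping the recursion faithfully}: the algorithm as written does not simply ``double then recurse'', because the $\Diamond$--rule application in $H_I$ merges all the non--zeroed sequents into a single sequent (with many scaled copies of the $\Diamond$--bodies), and whether the subsequent recursion on that single sequent again exhibits the doubling behaviour depends on how the $\sqcup$'s inside $\beta_1.A_{n} + \dots + \beta_p.A_n$ are pulled out by the logical--rule phase — each weighted copy $\beta_j.A_n = \beta_j.(\Diamond A_{n-1} \sqcup \Diamond A_{n-1})$ contributes its own $\sqcup$, and the generalised $\sqcup$ rule acting simultaneously on $p$ copies produces $2^p$ sequents, not $2p$. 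So the correct recurrence is that $p$ copies at level $i+1$ leads, after the logical phase, to $2^p$ sequents at the ``$\Diamond$ level'', hence $2^p$ existentials $\beta_1,\dots,\beta_{2^p}$, and then a recursive call with $2^p$ copies at level $i$. This is precisely why iterating $i$ times from $p=1$ gives the height--$i$ tower: $p_0 = 1$, $p_{j+1} = 2^{p_j}$, and the number of existentials at level $i$ is at least $p_i = \underbrace{2^{2^{\iddots^2}}}_{i}$. Getting this exponent--versus--product distinction exactly right, and formalising it via a clean strengthened induction on $i$ (with $p$ as a parameter and the recursion $p \mapsto 2^p$), is the heart of the argument; everything else is routine verification that the trivial conjuncts $Z_I, NZ_I, A_I, O_I$ are satisfiable and that $\phi_{G,\emptyset}$ is indeed a subformula of $\phi_G$ contributing all those quantifiers.
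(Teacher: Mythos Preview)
Your proposal is correct and follows essentially the same route as the paper. The paper makes the strengthened induction hypothesis explicit by defining $H_{i,j}(R_1,\dots,R_j)=\ \vdash R_1.A_i,\dots,R_j.A_i$ and proving that $\phi_{H_{i,j}}$ has at least $\underbrace{2^{2^{\iddots^{2^j}}}}_{i}$ existentials; the inductive step observes that the logical phase on $H_{i+1,j}$ yields the basic hypersequent $[\vdash R_1.\Diamond A_i,\dots,R_j.\Diamond A_i]^{2^j}$ and that taking $I=\emptyset$ produces $\phi_{H_{i,\,j\cdot 2^j}}$ as a subformula. One small slip in your write-up: after the $\Diamond$ step the single sequent $H_\emptyset$ carries $j\cdot 2^j$ copies of $A_i$ (each of the $2^j$ sequents contributes $j$ bodies), not $2^j$; but since $j\cdot 2^j\geq 2^j$ your recurrence $p\mapsto 2^p$ is a valid lower bound and still yields the tower, so the conclusion is unaffected.
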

\begin{proof}
  For $i,j,k \in \mathbb{N}$ and $R_1,...,R_j\in\mathbb{R}[\alpha_1,...,\alpha_k]$, we define $$H_{i,j}(R_1,...,R_j) = \ \vdash R_1.A_i,...,R_j.A_i$$ We will show by induction that for all $0 < i$ and $0 < j$, $\phi_{H_{i,j}}$ has at least $\underbrace{2^{2^{\iddots^{2^j}}}}_{i\text{ times}}$ existentials.

  For all $j\in\mathbb{N}_{>0}$, $k\in\mathbb{N}$ and $R_1,...,R_j\in\mathbb{R}[\alpha_1,...,\alpha_k]$, $H_{0,j}[R_1,...,R_j]$ is an atomic hypersequent with only one sequent, so $\phi_{H_{0,j}}$ has at least one existential.

  Let $j\in\mathbb{N}_{>0}$, $k\in\mathbb{N}$ and $R_1,...,R_j\in\mathbb{R}[\alpha_1,...,\alpha_k]$. By applying iteratively the logical rules on $H_{1,j}$, we obtain the basic hypersequent $H^b = [\vdash R_1.\Diamond x,...,R_j.\Diamond x]^{2^j}$ and $\phi_{H^b}$ has $(2^{2^j}-1)(2^j + 1)$ existentials. So $\phi_{H_{1,j}}(R_1,...,R_j)$ has $(2^{2^j}-1)(2^j+1) \geq 2^j$ existentials.

  Let us now analyse $\phi_{H_{i+1,j}}(R_1,...,R_j)$ for $i > 0$. By applying iteratively the logical rules, we obtain only one basic hypersequent : $H^b_{i+1,j}(R_1,...,R_j)=\ [\vdash R_1.\Diamond A_i,...,R_j.\Diamond A_i]^{2^j}$. We notice that one of the subformulae of $\phi_{H^b_{i+1,j}}(R_1,...,R_j)$ is $\phi_{H_{i,j\times 2^j}}(\vec{R'})$ for some $\vec{R'}\in(\mathbb{R}[\alpha_1,....,\alpha_k,\beta_1,...,\beta_{2^j}])^{j\times 2^j}$, which has at least $\underbrace{2^{2^{\iddots^{2^{j\times 2^j}}}}}_{i\text{ times}} \geq \underbrace{2^{2^{\iddots^{2^j}}}}_{i+1\text{ times}}$ existentials.

  So $\phi_{H_{i+1,j}}(R_1,...,R_j)$ has at least $\underbrace{2^{2^{\iddots^{2^j}}}}_{i+1\text{ times}}$ existentials. Since $G_i = H_{i,1}(1)$, $G_i$ has at least $\underbrace{2^{2^{\iddots^{2}}}}_{i\text{ times}}$ existentials.
\end{proof}

The following result is similar to Lemma~\ref{lem:int_lambda_prop}, the only difference is that since the T rule can multiply a sequent by any strictly positive real number, the coefficients in the property are real numbers instead of natural numbers.
\begin{lem}
  \label{lem:lambda_prop}
  For all basic hypersequents $G$, built using the variables and negated variables $x_1, \covar{x_1}, \dots, x_{k}, \covar{x_{k}}$, of the form  
  \[\vdash \Gamma_1, \Diamond \Delta_1,\vec{r'}_1.1,\vec{s'}_1.\covar{1} \sep ... \sep \vdash \Gamma_m, \Diamond \Delta_m,\vec{r'}_m.1,\vec{s'}_m.\covar{1}\]
where $\Gamma_i =  \vec{r}_{i,1}.x_1,...,\vec{r}_{i,k}.x_k, \vec{s}_{i,1}.\covar{x_1},...,\vec{s}_{i,k}.\covar{x_k}$, the following are equivalent:
  
  \begin{enumerate}
  \item $G$ has a derivation.
  \item there exist numbers $t_1,...,t_m \in \mathbb{R}_{\geq 0}$, one for each sequent  in $G$, such that:
  \begin{itemize}
  \item there exists $i\in [1..m]$ such that $t_i \neq 0$, i.e., the numbers are not all $0$'s, and
  \item for every variable and covariable $(x_j, \covar{x_j})$ pair, it holds that
  $$
  \sum^{m}_{i=1} t_i (\sum \vec{r}_{i,j}) =   \sum^{m}_{i=1} t_i (\sum \vec{s}_{i,j}) 
  $$
i.e., the scaled (by the numbers $t_1$ \dots $t_m$) sum of the coefficients in front of the variable $x_j$ is equal to the scaled sum of the coefficients in from of the covariable $\covar{x_j}$. 
  \item $\sum_{i=1}^n t_i \sum \vec{s'}_i\leq \sum_{i=1}^n t_i \sum \vec{r'}_i$, i.e, there are more $1$ than $\covar{1}$ and,
  \item the hypersequent consisting of only one sequent
    $$\vdash t_1.\Delta_1,...,t_m.\Delta_m,(t_1\vec{r'}_1).1,...,(t_m\vec{r'}_m).1, (t_1\vec{s'}_1).\covar{1},...,(t_m\vec{s'}_m).\covar{1}$$ has a derivation, where the notation $0.\Gamma$ means $\emptyset$.
  \end{itemize}
  \end{enumerate}
\end{lem}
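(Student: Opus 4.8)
\textbf{Proof proposal for Lemma \ref{lem:lambda_prop}.}

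The plan is to follow closely the structure of the proof of Lemma \ref{lem:int_lambda_prop}, adapting it to allow real (rather than natural) coefficients, exactly as Lemma \ref{lem:modal_free_lambda_prop} adapted Lemma \ref{lem:modal_free_int_lambda_prop} in the modal--free setting. For the direction $(1)\Rightarrow(2)$, I would first invoke the CAN elimination Theorem \ref{thm:can_elim} so that the derivation of $G$ may be assumed CAN--free, and then argue by induction on that derivation. All cases except the $\mathrm{T}$ rule are handled verbatim as in Lemma \ref{lem:int_lambda_prop}: the $\mathrm{INIT}$, $\mathrm{W}$, $\mathrm{C}$, $\mathrm{S}$, $\mathrm{ID}$, $1$, $\Diamond$ and $\mathrm{M}$ cases produce the witnessing coefficients $t_1,\dots,t_m\in\mathbb{R}_{\geq 0}$ satisfying the four bullet conditions in the same way (with the only change that the combining operation in the $\mathrm{M}$ case, $t_m t'_i + t'_m t_i$, and the contraction/split arithmetic stay in $\mathbb{R}_{\geq 0}$ rather than $\mathbb{N}$, which is harmless). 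The genuinely new case is the $\mathrm{T}$ rule,
\[
  \infer[\mathrm{T}]{\vdash \Gamma_1 \sep \dots \sep \vdash \Gamma_m}{\vdash \Gamma_1 \sep \dots \sep \vdash r.\Gamma_m},
\]
where, as in the proof of Lemma \ref{lem:modal_free_lambda_prop}, one takes the coefficients $t_1,\dots,t_{m-1}, r t_m$ obtained by scaling the last witness from the induction hypothesis by $r$; since all the conditions (the linear equalities on variable/covariable sums, the inequality $\sum t_i\sum\vec{s'}_i\leq\sum t_i\sum\vec{r'}_i$, and the derivability of the single collapsed $\Diamond$--sequent with the appropriately rescaled multiplicities) are homogeneous of degree one in the $t_i$'s of the last component, they are preserved under this scaling. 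Here one must be slightly careful that the ``collapsed sequent'' condition now carries real multiplicities $t_i$ in place of natural $n_i$, so the statement must be phrased with the convention $0.\Gamma = \emptyset$ and with weighted multisets $t_i.\Delta_i$ etc., as in condition (2) of the lemma.

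For the direction $(2)\Rightarrow(1)$, I would mirror the $(2)\Rightarrow(1)$ argument of Lemma \ref{lem:int_lambda_prop} but use the $\mathrm{T}$ rule in place of the $\mathrm{C}$ and $\mathrm{S}$ rules to realise the real scaling factors. Given witnesses $t_1,\dots,t_m$, assume w.l.o.g.\ that $t_i\neq 0$ exactly for $i\leq l$; apply $\mathrm{W}$ to discard the sequents with $t_i=0$, and then the $\mathrm{T}$ rule on the $i$-th sequent to multiply it by $\tfrac{1}{t_i}$ (reading the derivation bottom--up), reducing to
\[
  \vdash t_1.\Gamma_1, t_1.\Diamond\Delta_1, (t_1\vec{r'}_1).1, (t_1\vec{s'}_1).\covar 1 \sep \dots \sep \vdash t_l.\Gamma_l, t_l.\Diamond\Delta_l, (t_l\vec{r'}_l).1, (t_l\vec{s'}_l).\covar 1.
\]
Apply the $\mathrm{S}$ rule repeatedly to merge all sequents into one, then the $\mathrm{ID}$ rule repeatedly to cancel all the atom/coatom pairs $x_j,\covar{x_j}$ (which is possible precisely because of the linear equalities in the second bullet), leaving
\[
  \vdash t_1.\Diamond\Delta_1, (t_1\vec{r'}_1).1, (t_1\vec{s'}_1).\covar 1, \dots, t_l.\Diamond\Delta_l, (t_l\vec{r'}_l).1, (t_l\vec{s'}_l).\covar 1.
\]
Now apply the $\Diamond$ rule, whose side condition $\sum \geq \sum$ on the $1$/$\covar 1$ weights is exactly the third bullet; this reduces to
\[
  \vdash t_1.\Delta_1, (t_1\vec{r'}_1).1, (t_1\vec{s'}_1).\covar 1, \dots, t_l.\Delta_l, (t_l\vec{r'}_l).1, (t_l\vec{s'}_l).\covar 1,
\]
which is derivable by the fourth bullet. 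This completes the derivation of $G$.

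The main obstacle I anticipate is not conceptual but bookkeeping: in the real--coefficient setting one can no longer literally write $\Gamma^n$ for an $n$-fold concatenation, so the ``collapsed sequent'' condition (fourth bullet) and the $\mathrm{T}$-based reconstruction must be reformulated using scalar--multiplied weighted multisets $t.\Gamma$ with the convention $0.\Gamma = \emptyset$, and one has to check that the $\mathrm{S}$/$\mathrm{ID}$/$\Diamond$ steps compose correctly with these real weights (in particular that the $\mathrm{ID}$ rule can still be applied since its side condition $\sum\vec{r}_i = \sum\vec{s}_i$ is a consequence of the aggregated linear equalities after weighting, and that the $\mathrm{T}$ rule's requirement of a strictly positive multiplier is met because we only scale the nonzero $t_i$). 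A second minor point is to double--check the $\mathrm{M}$ case of $(1)\Rightarrow(2)$: the bilinear combination of two real witness vectors must be shown to again satisfy the $\Diamond$-sequent derivability condition, which follows because that condition is, after the $\Diamond$ rule, again a derivability statement of the same homogeneous form to which the induction hypothesis (applied to the sub--derivations of the $\Diamond$ rule occurrences, of strictly smaller modal depth) can be fed — exactly as in Lemma \ref{lem:int_lambda_prop}, only with $\mathbb{N}$ replaced by $\mathbb{R}_{\geq 0}$.
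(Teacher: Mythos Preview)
Your proposal is correct and follows essentially the same approach as the paper, which simply says the proof is similar to Lemma \ref{lem:modal_free_lambda_prop}: adapt the integer-coefficient argument of Lemma \ref{lem:int_lambda_prop} to real coefficients by invoking CAN elimination, handling the new $\mathrm{T}$ case by scaling the last witness, and reconstructing $(2)\Rightarrow(1)$ with $\mathrm{W}$/$\mathrm{T}$/$\mathrm{S}$/$\mathrm{ID}$/$\Diamond$. Your extra care about the fourth bullet (the collapsed sequent) in the $\mathrm{T}$ and $\mathrm{M}$ cases is warranted and correctly resolved.
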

\begin{proof}
 Similar to Lemma~\ref{lem:modal_free_lambda_prop}.
\end{proof}

%%% Local Variables:
%%% mode: latex
%%% TeX-master: "main"
%%% End:

\subsection{An open problem on \hmr}
\label{open:problem}

We have not been able to prove or disprove the equivalent of Theorem~\ref{thm:modal_free_conservativity} in the context of the system \hmr. We leave this as an open problem.

\paragraph{Question} Let $G$ be a hypersequent whose scalars are all  rational numbers. Is it true that, if $G$ has a  CAN-free derivation in \hmr\ then $G$ also has a CAN-free and T-free derivation in \hmr?

This is a question of practical importance. Indeed the T rule is the only non-analytical rule (beside the CAN rule which, however, can be eliminated) of the system \hmr\ and, as a consequence, it makes the proof search endeavour more difficult.

\section{Conclusions}\label{conclusion:sec}
We have introduced structural proof systems, based on the machinery of hypersequent calculus, for the theories of Riesz spaces and modal Riesz spaces. Remarkably, the system \hmr{} has allowed us to obtain new results regarding modal Riesz spaces such as the decidability of their equational theory. A main direction for further research is to investigate additional applications of \hr{} and \hmr{}. One example is the recent work \cite{LM21} where the authors used results regarding the system \hmr{} to prove a novel result (previously left open in \cite[\S 6.3]{FMM2020}): free modal Riesz spaces are Archimedean.

\noindent 
\paragraph{\textbf{Acknowledgments.}} The authors are very grateful to the anonymous reviewers for their useful comments and suggestions. 

\bibliography{biblio} 
\bibliographystyle{alphaurl}

\newpage
\appendix
\section{The Hypersequent Calculus \gasystem}\label{ga:section}

For reference, the following table lists the rules of the hypersequent calculus GA of \cite{MOG2005,MOGbook} for the theory of lattice-ordered abelian groups, which provides the basis of the  systems \hr\ and \hmr\ introduced in this paper.

\begin{figure}[h!]
  \begin{center}
  \scalebox{0.85}{
      \begin{minipage}{12.5cm}
        \textbf{Axioms:}
        \begin{center}
          \begin{tabular}{cc}
            $\infer[\Delta\textnormal{--ax}]
            {\vdash}
            {}$ & $ \ \ \ \ \ \ \ \ \ \ $
                  
                  $\infer[\textnormal{ID--ax}]
                  {A\vdash A}
                  {}$\\
          \end{tabular}
        \end{center}
        \textbf{Structural rules:}
        \begin{center}
          \begin{tabular}{ccc}
            $\infer[$Weakening (W)$]
            {G | \Gamma \vdash \Delta}
            {G}$ & &
                                          $\infer[$Contraction (C)$]
                                          {G | \Gamma \vdash \Delta}
                                          {G | \Gamma \vdash \Delta | \Gamma \vdash \Delta}$\\[0.2cm]
            $\infer[$Split (S)$]
            {G | \Gamma_1 \vdash \Delta_1 | \Gamma_2 \vdash \Delta_2}
            {G | \Gamma_1,\Gamma_2 \vdash \Delta_1, \Delta_2}$ & \hspace{0.4cm} &
                                                                 $\infer[$Mix (M)$]
                                                                 {G | \Gamma_1 , \Gamma_2 \vdash \Delta_1 , \Delta_2}
                                                                 {G | \Gamma_1 \vdash \Delta_1 & G | \Gamma_2 \vdash \Delta_2}$
          \end{tabular}
        \end{center}
        \textbf{Logical rules:}
        \begin{center}
          \begin{tabular}{cc}
            $\infer[0_L]
            {G | \Gamma , 0 \vdash \Delta}
            {G | \Gamma \vdash \Delta}$ &
                                          $\infer[0_R]
                                          {G | \Gamma \vdash \Delta , 0}
                                          {G | \Gamma \vdash \Delta}$\\[0.2cm]
            $\infer[+_L]
            {G | \Gamma , A + B \vdash \Delta}
            {G | \Gamma , A , B \vdash \Delta}$ &
                                                  $\infer[+_R]
                                                  {G | \Gamma \vdash \Delta, A + B}
                                                  {G | \Gamma \vdash \Delta , A , B}$\\[0.2cm]
            $\infer[-_L]
            {G | \Gamma, -A \vdash \Delta}
            {G | \Gamma \vdash \Delta , A}$ &
                                              $\infer[-_R]
                                              {G | \Gamma \vdash \Delta, - A}
                                              {G | \Gamma , A \vdash \Delta}$\\[0.2cm]
            $\infer[\sqcup_L]
            {G | \Gamma , A \sqcup B \vdash \Delta}
            {G | \Gamma , A \vdash \Delta & G | \Gamma , B \vdash \Delta}$ &
                                                                             $\infer[\sqcup_R]
                                                                             {G | \Gamma \vdash \Delta , A \sqcup B}
                                                                             {G | \Gamma \vdash \Delta , A | \Gamma \vdash \Delta , B}$\\[0.2cm]
            $\infer[\sqcap_L]
            {G | \Gamma , A \sqcap B \vdash \Delta}
            {G | \Gamma , A \vdash \Delta | \Gamma , B \vdash \Delta}$ &
                                                                         $\infer[\sqcap_R]
                                                                         {G | \Gamma \vdash \Delta , A \sqcap B}
                                                                         {G | \Gamma \vdash \Delta , A & G | \Gamma \vdash \Delta , B}$
          \end{tabular}
                  \end{center}
                  
                   \textbf{CUT rule:}
        \begin{center}
           $ \infer[Cut]
            {G \mid \Gamma_1,\Gamma_2 \vdash \Delta_1, \Delta_2}
            {G \mid \Gamma_1 \vdash \Delta_1, A & G \mid \Gamma_2 , A \vdash \Delta_2}$ 
        \end{center}

            \end{minipage}
    }
  \end{center}
  \caption{Inference rules of the hypersequent system GA of  \cite{MOG2005}.}
  \label{rules:GA}
\end{figure}

%%% Local Variables:
%%% mode: latex
%%% TeX-master: "main"
%%% End:

\end{document}